\documentclass[10pt]{article}
\usepackage{subcaption}

\usepackage[T1]{fontenc}
\usepackage{fix-cm}

\usepackage[
    backend=bibtex,
    style=alphabetic,
    maxcitenames=1,
    uniquelist=true,
    sorting=nyt,
    doi=true,
    url=false,
    isbn=false
]{biblatex}
\DeclareCiteCommand{\parencite}[\mkbibparens]
  {\usebibmacro{prenote}}
  {\printtext[bibhyperref]{\usebibmacro{citeindex}\usebibmacro{cite}}}
  {\multicitedelim}
  {\usebibmacro{postnote}}
\usepackage{booktabs,longtable,array}

\usepackage[explicit]{titlesec}
\titlespacing*{\paragraph}{0pt}{3.25ex plus 1ex minus .2ex}{0.5em}
\titleformat{\paragraph}[runin]
  {\normalfont\itshape} % Format: Normal weight + Italics
  {\theparagraph}       % Label: Numbering (leave empty if unnumbered)
  {1em}                 % Space between number and title
  {#1\ ---}             % Title + space + em-dash

\usepackage[titletoc,toc,title,page]{appendix}
\usepackage{etoc}

\usepackage{titling}
\usepackage{authblk}

\pretitle{\begin{center}\Large\bfseries}
\posttitle{\par\end{center}\vskip 0.6em}

\pretitle{\begin{center}\Large\bfseries} % pick \large / \normalsize / \Large
\posttitle{\par\end{center}\vskip 0.5em}
\makeatletter
\renewenvironment{abstract}{%
  \par\small
  \noindent\ignorespaces
}{%
  \par
}
\makeatother

\usepackage{mathpazo}

\usepackage{csquotes}
\usepackage[italian,english]{babel}

\usepackage{microtype} 
\usepackage{geometry}
\usepackage{parskip}

\usepackage{bm} % provides more bold symbols
\usepackage{dsfont} % provides more \mathbb \smallskipymbols
\usepackage{amsmath,amssymb,amsthm,thmtools}
\newtheorem{lemma}{Lemma}[section]
\newtheorem{proposition}[lemma]{Proposition}
\newtheorem{theorem}{Theorem}

\usepackage{mathtools}
\usepackage{cases}

\usepackage{mathrsfs} % provides the nice \matchsrc font
\usepackage{soul} % provides \st and \ul
\usepackage[normalem]{ulem} %provides \sout command for striking through text
\usepackage{graphicx}
\usepackage[dvipsnames,table]{xcolor}
\usepackage{nameref}
\usepackage[colorlinks=true]{hyperref}
\usepackage[nameinlink]{cleveref}
\crefname{appsec}{Appendix}{Appendices}
\crefname{box}{Box}{Box}
\hypersetup{
  colorlinks   = true, %Colours links instead of ugly boxes
  urlcolor     = green!80!black, %Colour for external hyperlinks
  linkcolor    = blue, %Colour of internal links 	q
  citecolor    = red!80!black %Colour of citations
}
\usepackage{orcidlink}

\usepackage{physics} % provides Dirac notation and lots of other things

\usepackage{float}
\usepackage{placeins}
\usepackage{multirow,tabularx,booktabs}

\usepackage{textcomp}

\graphicspath{{./figures/}}

\newcommand{\RR}{\mathbb{R}}
\newcommand{\CC}{\mathbb{C}}

\DeclareMathOperator*{\EE}{\mathbb{E}}
\newcommand{\PP}{\mathbb{P}}

\newcommand{\rmd}{\mathrm{d}}

\newcommand{\calA}{\mathcal{A}}
\newcommand{\calB}{\mathcal{B}}

\newcommand{\calE}{\mathcal{E}}
\newcommand{\calF}{\mathcal{F}}
\newcommand{\calG}{\mathcal{G}}
\newcommand{\calH}{\mathcal{H}}
\newcommand{\calK}{\mathcal{K}}

\newcommand{\calN}{\mathcal{N}}

\newcommand{\calO}{\mathcal{O}}
\newcommand{\calP}{\mathcal{P}}
\newcommand{\calR}{\mathcal{R}}

\newcommand{\calZ}{\mathcal{Z}}

\newcommand{\FT}{{\mathsf{F}}}
\newcommand{\intdalphapi}{\int\frac{\rmd^2\alpha}{\pi}}
\newcommand{\intdalphapis}{\int\frac{\rmd^2\alpha}{\pi^2}}

\newcommand{\frameOp}{\bs\calF}
\newcommand{\frameOpSigma}{\frameOp_\sigma}
\newcommand{\frameOpInv}{\frameOp^{-1}}
\newcommand{\frameOpInvSigma}{\frameOp_\sigma^{-1}}

\newcommand{\frameOpHom}{\frameOp_{\rm hom}}
\newcommand{\frameOpInvHom}{\frameOp_{\rm hom}^{-1}}
\newcommand{\frameOpHomInv}{\frameOpInvHom}

\newcommand{\bs}[1]{\boldsymbol{#1}}
\newcommand{\on}[1]{\operatorname{#1}}
\newcommand{\parTitle}[1]{\noindent\emph{#1} ---}

\newcommand{\Var}{\on{Var}}
\newcommand{\Range}{\on{Ran}}
\newcommand{\bias}{\on{bias}}

\newcommand{\hatbfr}{\hat{\mathbf r}}

\newcommand{\hs}{\calH}
\newcommand{\boundedOps}{\calB(\hs)}
\newcommand{\hsOps}{\calB_2(\hs)}
\newcommand{\hshOps}{\calB_{2,h}(\hs)}

\newcommand{\nout}{n_{\rm out}}
\newcommand{\outcomesSet}{\mathscr{X}}
\newcommand{\opFrameA}{\{A(\alpha)\}_{\alpha\in\outcomesSet}}
\newcommand{\scrLcompletion}{\mathscr L_h^2(\sigma)}

\newcommand*{\Scale}[2][4]{\scalebox{#1}{$#2$}}%

\DeclareMathOperator{\Herm}{Herm}

\newcommand{\MSE}{\text{MSE}}

\title{A general estimation framework for continuous-variable systems}

\author[1]{%
  Luca Innocenti\,\orcidlink{0000-0002-7678-1128}%
  \thanks{\texttt{luca.innocenti@unipa.it}}%
}

\author[1]{%
  Simone Artini\,\orcidlink{0009-0002-9116-3042}%
}

\author[4]{%
  Diana A. Chisholm\,\orcidlink{0000-0003-0496-888X}%
}

\author[1]{%
  Salvatore Lorenzo\,\orcidlink{0000-0002-0827-5549}%
}

\author[2]{%
  Alessandro Ferraro\,\orcidlink{0000-0002-7579-6336}%
}

\author[1]{%
  G.~Massimo~Palma\,\orcidlink{0000-0001-7009-4573}%
}

\author[1,3]{%
  Mauro Paternostro\,\orcidlink{0000-0001-8870-9134}%
}

\author[1]{%
  Gabriele Lo Monaco\,\orcidlink{0000-0002-3594-3477}%
  \thanks{\texttt{gabriele.lomonaco@unipa.it}}%
}

\affil[1]{%
  Universit\`a degli Studi di Palermo,
  Dipartimento di Fisica e Chimica -- Emilio Segr\`e,
  via Archirafi 36,
  I-90123 Palermo, Italy%
}

\affil[2]{%
  Quantum Technology Lab,
  Dipartimento di Fisica Aldo Pontremoli,
  Universit\`a degli Studi di Milano,
  I-20133 Milano, Italy%
}

\affil[3]{%
  Centre for Quantum Materials and Technologies,
  School of Mathematics and Physics,
  Queen's University Belfast,
  BT7 1NN, United Kingdom%
}

\affil[4]{%
  School of Physics, University College Dublin, Belfield Dublin 4, Ireland
}

\date{}
\begin{document}

\twocolumn[

\maketitle

\begin{center}
\begin{minipage}{0.85\textwidth}
\begin{abstract}
We show that informational completeness, while sufficient to have a bijection between ideal measurement probabilities and quantum states, does not guarantee statistically stable reconstruction from finite measurement data. To address this problem, we develop a general estimation theory for continuous-variable systems in which stable reconstructibility is characterized by the POVM effects forming a measurement frame. Informational completeness is therefore necessary, but not sufficient, for stable reconstruction.
Our framework is based on measurement frames in a $\sigma$-regularized operator geometry, where the reference state $\sigma$ encodes prior information about relevant features of the measured states. For any fixed measurement scheme, observables may be inaccessible, weakly reconstructible only through estimators with divergent variance, or stably reconstructible by finite-variance unbiased estimators. The relevant regime is determined by the range of the POVM synthesis operator.
Our framework provides practical methods for constructing estimators and gives an operational interpretation of singular quasiprobability distributions, including the Glauber-Sudarshan $P$ representation: quasiprobabilities act as unbiased estimators for associated observables, and their singularities reflect a pathological feature of the corresponding measurement: its lack of loewr frame bound.
We furthermore show how this formalism naturally provides operational regularization procedures tied to prior information. Overall, our framework provides a unified view of continuous-variable tomography, quasiprobability representations, and classical-shadow estimation.
\end{abstract}
\end{minipage}
\end{center}
\vspace{2em}
]

\etocdepthtag.toc{main}

\etocsettagdepth{main}{subsection}
\etocsettagdepth{appendix}{none}

\setcounter{tocdepth}{2}
\tableofcontents

\section{Introduction}

\paragraph{Reconstructing quantum states and their properties}
Reconstructing properties of an unknown quantum state from measurement data is a central primitive in quantum information.
In finite-dimensional systems, full tomographic reconstruction requires a number of samples that scales at least polynomially in the Hilbert-space dimension, and hence exponentially in the number of qubits~\cite{paris2004quantum,dariano2003QuantumTomography,teo2015IntroductionQuantumStateEstimation,haah2017Sampleoptimal,scharnhorst2025OptimalLowerBounds,lowe2025LowerBoundsLearning,anshu2023SurveyComplexityLearning}.
In contrast, many tasks of practical interest involve estimating only selected properties of the state, which can sometimes be done with resources that do not scale with the full ambient dimension~\cite{anshu2023SurveyComplexityLearning,morris2022QuantumVerificationEstimation,gebhart2023LearningQuantumSystems}.

Shadow tomography protocols make this separation precise: for suitable measurement protocols, one can accurately estimate many target observables without first reconstructing the full state, and with a dimension-independent resource cost~\cite{aaronson2018ShadowTomographyQuantum,huang2020PredictingManyProperties,elben2022RandomizedMeasurementToolbox}.
Although the original formulation relied on randomized unitary measurements followed by projective readout, subsequent work generalized the theory to arbitrary POVMs~\cite{acharya2021ShadowTomographyBased,nguyen2022OptimizingShadowTomography,innocenti2023shadow}.
This approach shows that shadow estimation and linear tomography rely on the same basic construction: computing a single-shot unbiased estimator by solving the associated linear reconstruction problem.
In fact, the underlying mechanism used to find these unbiased estimators can be naturally framed in the language of measurement frames, which have been studied extensively in the context of quantum state tomography~\cite{renes2004SymmetricInformationallyComplete,scott2006tight,dariano2003QuantumTomography,dariano20042QuantumTomographic,dariano2004InformationallyCompleteMeasurements,renes2004Frames,dariano2007OptimalDataProcessing,bisio2009OptimalQuantumTomography,bisio2009OptimalQuantumTomographyPRL,dariano2010renormalized,zhu2011QuantumStateTomography,zhu2014QuantumStateEstimation,saini2026CompletenessStabilityQuantum}.

\paragraph{Continuous-variable estimation}
In continuous-variable systems~\cite{smithey1993MeasurementWignerDistribution,eisert2003INTRODUCTIONBASICSENTANGLEMENT,ferraro2005GaussianStatesContinuous,weedbrook2012GaussianQuantumInformation,serafini2017QuantumContinuousVariables}, state tomography with homodyne and heterodyne detection has been studied extensively~\cite{vogel1989DeterminationQuasiprobabilityDistributions,dariano1995HomodyneDetectionDensity,wallentowitz1995ReconstructionQuantumMechanical,dariano2003QuantumTomography,fadel2025QuantumMetrologyContinuousvariable}.
However, the infinite-dimensional setting introduces difficulties that are absent in finite dimensions.
In particular, full tomography of \(n\)-mode continuous-variable systems is generally highly inefficient~\cite{mele2025LearningQuantumStates}, while explicit reconstruction formulas for standard optical measurements can exhibit divergences or require additional regularity assumptions~\cite{paris1996Quantum,dariano2003QuantumTomography,welsch2009HomodyneDetectionQuantum,lvovsky2009ContinuousvariableOpticalQuantum,mosco2022HighdimensionalMethodsQuantum,dariano1994DetectionDensityMatrix,dariano1995HomodyneDetectionDensity,leonhardt1995TomographicReconstructionDensity,leonhardt1996NumberPhasesRequired,dariano2004InformationallyCompleteMeasurements,dariano2010renormalized}.

Recent works developed continuous-variable shadow-tomography protocols and sample-complexity bounds under physically motivated restrictions, such as energy or photon-number cutoffs, and for specific measurements such as homodyne and heterodyne detection~\cite{michael2016NewClassQuantum,becker2024ClassicalShadowTomography,wu2024EfficientLearningContinuousvariable,gandhari2024PrecisionBoundsContinuousVariable,iosue2024ContinuousVariable,conrad2025ContinuousVariableDesignsDesignBased,yang2025PracticalHomodyneShadow}.
Nevertheless, a general framework explaining when a continuous-variable measurement supports statistically stable estimation of arbitrary target observables, and why formally valid reconstruction formulas sometimes become singular, remains missing.
This issue is closely tied to a basic distinction that is invisible in finite dimensions: in infinite dimensions, informational completeness does not imply stable reconstructibility.

\paragraph{Quasiprobabilities and singular reconstruction}
A related motivation comes from quasiprobability representations.
Measurement frames provide a unifying language for such representations~\cite{ferrie2008frame,ferrie2009FramedHilbertSpace,ferrie2010necessity,ferrie2011quasi,zhu2016QuasiprobabilityRepresentationsQuantum}: different choices of frame correspond to different phase-space expansions, and negativity can be connected with nonclassicality in broad operational settings~\cite{spekkens2008NegativityContextualityAre}.
In quantum optics, the paradigmatic example is the Glauber--Sudarshan \(P\) representation~\cite{sudarshan1963EquivalenceSemiclassicalQuantum,glauber1963CoherentIncoherentStates,cahill1965CoherentState,cahill1969density,cahill1969ordered}, which is naturally associated with coherent states and heterodyne detection.

The \(P\) function is notorious for its highly singular behavior.
Such singularities are often attributed to the overcompleteness and non-orthogonality of coherent states, and are commonly discussed in connection with nonclassicality.
However, negativity and singularity are distinct phenomena: quasiprobability distributions may be negative without exhibiting the same singular behavior as the \(P\) representation~\cite{kiesel2010NonclassicalityFiltersQuasiprobabilities,tan2019NonclassicalLightMetrological}.
This raises a natural question: what structural features of a measurement lead to singular quasiprobability representations?

\paragraph{Main idea}
In this work we develop a general estimation theory for continuous-variable systems using the formalism of measurement frames.
The key structural point is that, in infinite dimensions, informational completeness of a POVM only says that the ideal outcome probabilities determine the state.
But this, on its own, does not also imply that arbitrary target observables can be estimated from finitely many samples with finite variance.
We show that a more physically meaningful notion of completeness is provided by the measurement-frame condition, and in particular by the existence of a strictly positive \textit{lower frame bound}.
Equivalently, this amounts to requiring the function mapping a target observable to the expansion coefficients with respect to a given POVM be bounded below, rather than just injective.

However, the standard frame-theoretic construction used in finite dimensions~\cite{scott2006tight,dariano2007OptimalDataProcessing,innocenti2023shadow} breaks in infinite dimensions, because the Hilbert-Schmidt inner product is not always well-defined, and the naive scaling procedure that worked in finite dimensions becomes pathological in infinite dimensions.
We show that these issues can be overcome by developing an estimation theory via measurement frames with respect to a \(\sigma\)-regularized inner product analogous to the one used in the context of quantum metrology~\cite{holevo2011ProbabilisticStatisticalAspects,amari2016InformationGeometryIts}.
The reference state \(\sigma\) represents trusted prior information about the class of input states, for example a photon-number or energy-tail assumption.
% It does not modify the measurement apparatus, rather, it specifies the observable norm in which approximation, estimator admissibility, and variance control are evaluated.
Within this geometry, we show in~\cref{prop:automatic_upper_frame_bound_sigma} that every POVM has an automatic upper frame bound, thus making the essential obstruction to stable reconstruction the absence of a lower frame bound.

This gives a unified operational explanation of singular reconstruction formulas.
When the lower frame bound vanishes, the inverse reconstruction map is unbounded.
Some observables may still be determined by the ideal measurement statistics, but only through estimators that arise as singular limits; in such cases, the bias can be made arbitrarily small, but only at the cost of a diverging estimator variance.
We show that this mechanism is precisely the source of singularities of many quasiprobability distributions, such as the Glauber-Sudarshan \(P\) representation in heterodyne detection, as well as the singularities emerging in pattern functions for homodyne detection.
More generally, this formalism provides a natural operational interpretation of quasiprobability distributions as unbiased estimators corresponding to a target observable for a fixed measurement choice and explains their potential singularities as due to a fundamentally ill-defined underlying estimation problem.

The remainder of this paper is organized as follows: The main text (\cref{sec:summ,sec:priorWork,sec:conclusions}) provide a swift overview of the main aspects of the general estimation framework that we put forward, addressing dimensionality issues, regularization, and various experimentally relevant measurement strategies.
We also discuss and contextualize our results within the existing literature.
We devolve the technical aspects of our framework, including those underpinning the core results illustrated in~\cref{sec:summ}, to a series of appendices that are both illustrative and pedagogical, but that could be skipped by the uninterested reader without any loss of understanding.

\begin{figure*}[t]
    \centering
    \includegraphics[width=.9\linewidth]{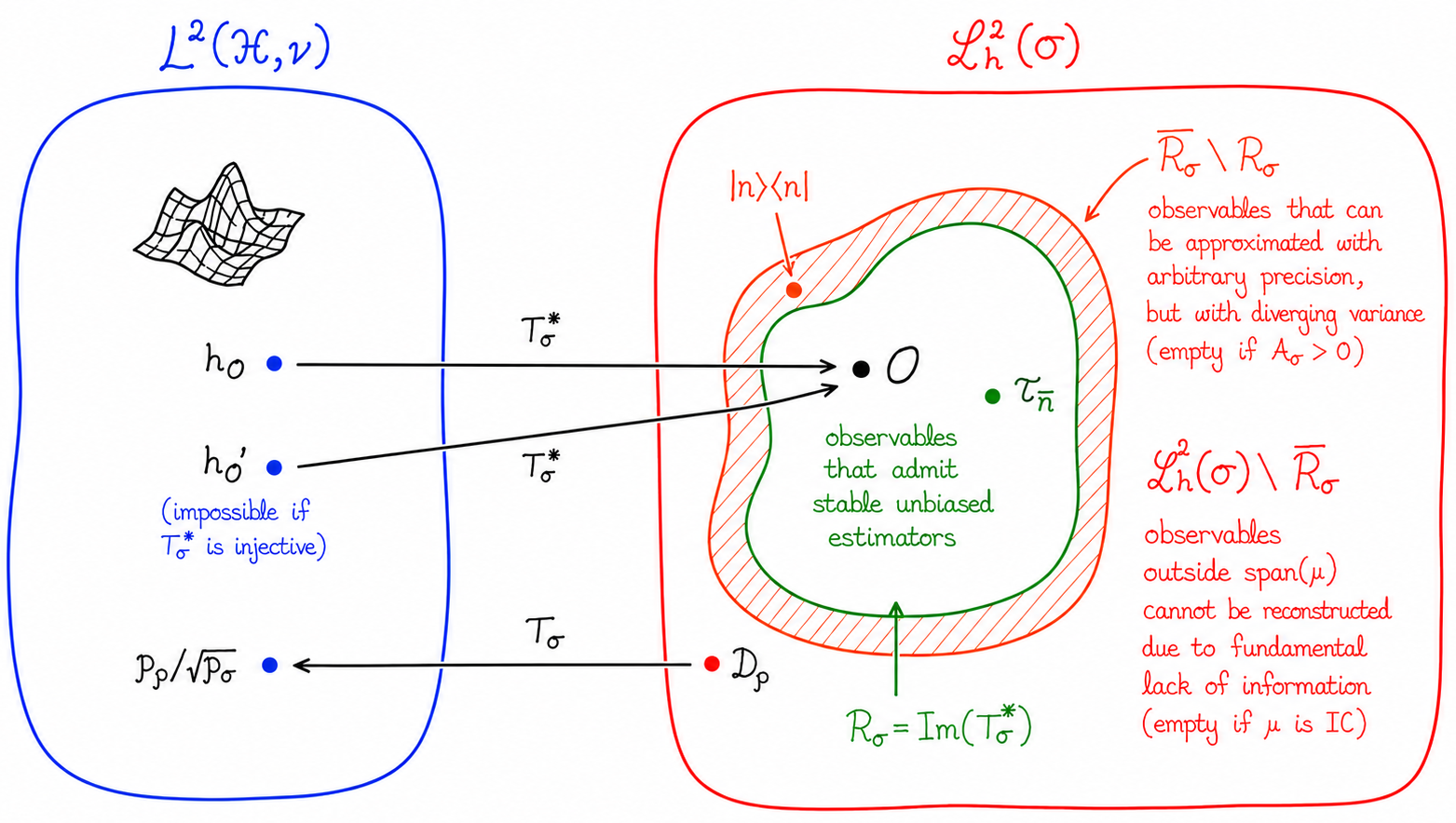}
  \caption{%
 % \color{ForestGreen}
    Schematic picture of the \(\sigma\)-regularized reconstruction problem.
    The POVM defines an analysis operator
    \(T_\sigma:\scrLcompletion\to L^2(\mathcal X,\nu)\), which maps an
    operator \(X\) to its measurement coefficients, and a synthesis operator
    \(T_\sigma^*:L^2(\mathcal X,\nu)\to \scrLcompletion\), which maps a
    square-integrable coefficient function \(h\) to the regularized observable
    \(T_\sigma^*h\). The range
    \(\mathcal R_\sigma\equiv \operatorname{Ran}(T_\sigma^*)\) consists of
    \(\sigma\)-admissible observables: for every \(O\in\mathcal R_\sigma\)
    there exists \(h\in L^2(\mathcal X,\nu)\) such that
    \(T_\sigma^*h=O\), and the corresponding estimator
    \(\hat o=h/\sqrt{p_\sigma}\) has finite second moment with respect to the
    reference distribution. Observables in
    \(\overline{\calR}_\sigma\setminus\calR_\sigma\) are only
    weakly reconstructible: they can be approximated by admissible observables
    with arbitrarily small bias on states compatible with the \(\sigma\)-geometry,
    but the associated coefficient norms, and hence the estimator variances,
    diverge in the zero-bias limit. Observables outside
    \(\overline{\calR}_\sigma\) are not reconstructible even weakly from
    the measurement data in this regularized geometry.
    }
  \label{fig:intro_concept}
\end{figure*}

%\begin{figure}
%    \centering
%    \includegraphics[width=1\linewidth]{figures/trichotomy 4.png}
%    \caption{Possible alternative conceptual}
%\end{figure}

\section{Summary of results}%
\label{sec:summ}

\subsection{Estimation theory via measurement frames}%
\label{sec:summ_estTheoryGeneral}

The problem of estimating expectation values of target observables from single-copy measurements can be formulated naturally in the language of measurement frames~\cite{dariano2003QuantumTomography,dariano2004InformationallyCompleteMeasurements,renes2004Frames,renes2004SymmetricInformationallyComplete,scott2006tight,innocenti2023shadow}.
In finite-dimensional systems, this framework has been shown to recover the estimators used in shadow tomography, together with the corresponding efficiency guarantees~\cite{innocenti2023shadow}.

A POVM ${\mu_b}_b$ is said to be a \emph{measurement frame} if there exist constants $0<A\leq B<\infty$ such that
$A\trace(X^2)\leq\sum_b \trace(\mu_b X)^2\leq B\trace(X^2)$
for every Hermitian operator $X$.
For a finite-dimensional system with finitely many outcomes, this condition is equivalent to informational completeness, namely, to the effects $\{\mu_b\}_b$ spanning the space of Hermitian operators.

Given a target observable $\calO$, our main objective is to construct an estimator $\hat o$ of the expectation value $\trace(\calO\rho)$.
Such an estimator is said to be \textit{unbiased} if
\begin{equation}
    \trace(\calO\rho)=\sum_b \hat o(b)\trace(\mu_b\rho)
\end{equation}
for every state $\rho$.
If $N$ measurement outcomes $\{b_1,\ldots,b_N\}$ are collected, one estimates $\trace(\calO\rho)$ using the sample mean $\bar o_N\equiv N^{-1}\sum_{i=1}^N\hat o(b_i)$.
The single-shot fluctuations of the estimator are quantified by its variance:
\begin{equation}
    \operatorname{Var}(\hat o|\rho)=\EE[\hat o^2|\rho]-\EE[\hat o|\rho]^2.
\end{equation}
The theory of measurement frames provides general recipes to construct such unbiased estimators.
Every dual frame $\{\tilde\mu_b\}_b$, satisfying
$
    X=\sum_b\trace(X\tilde\mu_b)\mu_b
$
for all Hermitian $X$, defines an unbiased estimator through $\hat o(b)=\trace(\calO\tilde\mu_b)$.
To characterize these estimators and identify an optimal choice, one may fix a full-rank reference state $\sigma$, and define $p_\sigma(b)\equiv\trace(\mu_b\sigma)$, $g_\sigma(b)\equiv\mu_b/\sqrt{p_\sigma(b)}$.
Introduce the analysis and synthesis operators
\begin{equation}
\begin{gathered}
    (T_\sigma X)_b
    \equiv
    \trace\!\left[g_\sigma(b)X\right],
    \qquad
    T_\sigma^*h
    \equiv
    \sum_b h(b)g_\sigma(b),
\end{gathered}
\end{equation}
together with the frame operator $\frameOp_\sigma \equiv T_\sigma^*T_\sigma$.
The unbiasedness condition can then be written as $T_\sigma^*(\sqrt{p_\sigma}\,\hat o)=\calO$, where multiplication by $\sqrt{p_\sigma}$ is understood componentwise.
Among all unbiased estimators, the estimator minimizing the variance for the reference state $\sigma$ is given by
\begin{equation}
    \hat o_\sigma^\star(b)
    =
    \trace\!\left(\calO\tilde\mu_b^{(\sigma)}\right),
    \quad
    \tilde\mu_b^{(\sigma)}
    \equiv
    \frac{\frameOp_\sigma^{-1}(g_\sigma(b))}{\sqrt{p_\sigma(b)}}=
    \frac{\frameOp_\sigma^{-1}(\mu_b)}{p_\sigma(b)}.
\end{equation}
Further details on these operators and on the optimization over dual frames are given in~\cref{sec:estTheoryFinDim}.

In particular, for a finite-dimensional system and a POVM with finitely many outcomes, every dual measurement frame defines an unbiased estimator with finite variance. Conversely, the existence of an unbiased estimator for every observable requires the POVM to be informationally complete. Thus, in this setting, informational completeness, the existence of finite frame bounds, and the existence of finite-variance unbiased estimators for every observable are equivalent properties.

\subsection{The infinite-dimensional case}%
\label{sec:summ_infDimCase}

While the frame-based construction is straightforward in finite dimensions, things complicate significantly going to infinite dimensions.
For example, when applied to derive estimators for homodyne or heterodyne detection schemes, one often obtains estimators that are singular distributions, and thus cannot be used in practice without introducing some regularization~\cite{dariano2003QuantumTomography,dariano2004InformationallyCompleteMeasurements,dariano2010renormalized}.

\paragraph{Informational completeness versus measurement frames}
The main departure in infinite dimensions is that this equivalence fails.
Informational completeness means that outcome probabilities characterize the measured states.
Equivalently, this amounts to the function $\rho\mapsto(\trace(\mu_b\rho))_b$, mapping states to the associated outcome probabilities, being injective.
Notably, however, this does not guaranty that this map is boundedly invertible.
In such cases, a continuous-variable measurement may determine a state from exact data while at the same time being extremely sensitive to statistical noise when used to estimate certain observables.
In such cases, informational completeness becomes too weak a notion: when the finiteness of resources is taken into account, it is no longer true that measurement data characterize the measured state.

To characterize and deal with these issues, we develop in~\cref{sec:estTheory_infDim} a general estimation theory via measurement frames.
In particular, we show that the standard measurement frame construction used in finite dimensions can become pathological when applied to infinite dimensional systems.
Moreover, there is no guarantee that the linear problem
\begin{equation}
\label{eq:linear_continuous_problem}
    \calO = \int \rmd\nu(\alpha) \hat o(\alpha) \mu(\alpha)
\end{equation}
admits any unbiased or finite-variance estimator as solutions, even when the POVM is informationally complete. Observe that in \cref{eq:linear_continuous_problem}, we have already generalized our discussion to POVM with a continuous of outcomes, as for homodyne and heterodyne detection.

Our goal is to find necessary and sufficient conditions for~\cref{eq:linear_continuous_problem} to admit solutions, for any given target observable. This solution must be interpreted in a weak sense, that is, the identity must hold when considering the trace inner product with arbitrary states $\rho$:
\begin{equation}
\label{eq:weak_problem}
    \trace(\calO\rho) = \int \rmd\nu(\alpha) \hat o(\alpha) \trace(\mu(\alpha)\rho)
\end{equation}
The natural approach is to apply the theory of measurement frames to the Hilbert space $\hshOps$ of Hermitian Hilbert-Schmidt operators, with Hilbert-Schmidt inner product $\langle A,B\rangle_2\equiv \trace(A B)$.
A hermitian operator is said to be Hilbert-Schmidt, $A\in \hshOps$, if $\trace(A^2)<\infty$.
This is the most direct generalization of the approach that is known to work in finite dimensions~\cite{innocenti2023shadow,scott2006tight,dariano2003QuantumTomography}.
However, the Hilbert-Schmidt inner product is too restrictive for general continuous-variable measurements: POVM effects must be bounded but need not be normalizable in the Hilbert-Schmidt norm, physically relevant observables may be unbounded, and some POVM densities are defined only as distributions.

\paragraph{$\sigma$-regularized geometry}
We show that these problems can be overcome working instead in a $\sigma$-regularized inner product $\langle X,Y\rangle_\sigma\equiv \Re\trace[\sigma XY]$ on Hermitian operators $X,Y$, defined via a reference state $\sigma$.
The closure of the space of bounded Hermitian operators with respect to this norm is denoted with $\scrLcompletion$, and is closely related to the Bures and Fisher metrics employed in quantum metrology~\cite{holevo2011ProbabilisticStatisticalAspects,amari2016InformationGeometryIts}.
% Operationally, \(\sigma\) can be understood as a reference state encoding trusted tail information, such as an energy or photon-number decay assumption, as often imposed in continuous-variable learning and tomography~\cite{michael2016NewClassQuantum,becker2024ClassicalShadowTomography,wu2024EfficientLearningContinuousvariable,gandhari2024PrecisionBoundsContinuousVariable,conrad2025ContinuousVariableDesignsDesignBased,yang2025PracticalHomodyneShadow}.
% It does not change the measurement apparatus; it specifies the geometry in which estimator variance and observable approximation are controlled.
We show that this new construction agrees with the one used in finite dimensions, and extends it to arbitrary measurement scenarios. Such space includes all bounded operators but also unbounded operators (in the Hilbert-Schmidt norm) with finite second moment on the reference state $\sigma$. For instance, if $\operatorname{tr}[\sigma \hat N^{2k}]$ is finite for some $k$, then $\hat N^k$ belongs to $\scrLcompletion$, despite it being unbounded in the Hilbert-Schmidt norm.

\paragraph{Analysis and synthesis operators}
The main tools we use to analyze the structural properties of POVMs are the analysis operator \(T_\sigma\) and its adjoint, the synthesis operator \(T_\sigma^\ast\).
These are defined in terms of the rescaled POVM density $g_\sigma(\alpha)=\mu(\alpha)/\sqrt{p_\sigma(\alpha)}$, with $p_\sigma(\alpha)=\operatorname{tr}[\mu(\alpha)\sigma]$: 
\begin{equation}
\begin{gathered}
    (T_\sigma X)(\alpha)
    =
    \langle X,g_\sigma(\alpha)\rangle_\sigma,
    \\
    T_\sigma^\ast f
    =
    \int_{\outcomesSet}\rmd\nu(\alpha)\,
    g_\sigma(\alpha)\,f(\alpha).
\end{gathered}
\end{equation}
where \(\outcomesSet\) is the outcome space. The analysis operator maps an
observable to its measurement coefficients, whereas the synthesis operator maps a coefficient function to the corresponding observable.
We can infer a lot of information about the reconstruction capacity of a POVM from these operators.
In particular, injectivity of \(T_\sigma\) corresponds to informational completeness in the \(\sigma\)-regularized geometry, but does not guarantee statistical stability.
Stability further requires  \(T_\sigma\) to be bounded below, or equivalently the lower frame bound to be strictly positive.
On the other hand, the range
$
    \calR_\sigma
    \equiv
    \Range(T_\sigma^\ast)
$
characterizes the observables admitting square-integrable
reconstruction coefficients. Equivalently, it determines whether~\cref{eq:linear_continuous_problem} has a solution in the weak
\(\sigma\)-regularized sense.

\paragraph{Reconstructible observables}
Let \(L^2(\outcomesSet,\nu)\) denote the space of square-integrable coefficient functions on the measurement outcome space.
We refer to an observable \(\calO\) as \emph{\(\sigma\)-reconstructible} if there exists \(h\in L^2(\outcomesSet,\nu)\) such that
$T_\sigma^\ast h=\calO$.
The associated estimator
$
    \hat o(\alpha)
    =
    \frac{h(\alpha)}{\sqrt{p_\sigma(\alpha)}}
$
is then unbiased and has finite second moment under the reference distribution defined by $\sigma$:
\begin{equation}
    \mathbb E[\hat o^2|\sigma]
    =
    \int_{\outcomesSet}\rmd\nu(\alpha)\,
    p_\sigma(\alpha)\hat o(\alpha)^2
    =
    \|h\|_{L^2(\nu)}^2
    <
    \infty.
\end{equation}
As shown in~\cref{sec:estTheory_infDim_biasAndMoments}, \(\sigma\)-reconstructibility alone does not guarantee that the same estimator is unbiased or has finite variance for an arbitrary input state \(\rho\).
These properties further require compatibility between \(\rho\) and the
\(\sigma\)-regularized geometry.
A sufficient condition is \(\rho\leq c\sigma\) for some \(c<\infty\).
In this case,
\begin{equation}
    \mathbb E[\hat o|\rho]
    =
    \operatorname{tr}(\rho\calO),
    \qquad
    \mathbb E[\hat o^2|\rho]
    \leq
    c\,\mathbb E[\hat o^2|\sigma].
\end{equation}
This condition is only sufficient: Other input states may also satisfy the required unbiasedness and
integrability conditions.
If \(\calO\) is unbounded, its expectation value \(\tr(\rho\calO)\) may additionally fail to be finite for some states.
If \(\calO\) is bounded then its expectation value is finite for every state, but nonetheless the associated reconstruction estimator need not be unbiased or have finite variance for every state.

\paragraph{Weakly reconstructible observables}
If instead
\(\calO\in\overline{\calR}_\sigma\setminus\calR_\sigma\),
we say that \(\calO\) is only \textit{weakly} $\sigma$-reconstructible. In this case,
\(\calO\) does not admit an unbiased estimator with finite second moment under $p_\sigma$, but it can be approximated arbitrarily
well in the $\sigma$-norm by $\sigma$-reconstructible observables.
More precisely, there exists a sequence $\calO_n=T_\sigma^* h_n\in\calR_\sigma$ such that $\|\calO_n-\calO\|_\sigma\to0$.
Each coefficient function $h_n$ then defines an estimator $\hat o_n(\alpha)=\frac{h_n(\alpha)}{\sqrt{p_\sigma(\alpha)}}$ which has finite second moment under $p_\sigma$ and is unbiased for \(\calO_\epsilon\) on compatible input states.
When used to estimate \(\calO\), however, these estimators are generally biased:
\begin{equation}
    \operatorname{bias}(\hat o_n|\rho)
    =
    \tr\left[\rho(\calO_n-\calO)\right].
\end{equation}
For input states whose expectation functional is continuous with respect to the \(\sigma\)-regularized norm, this bias vanishes as \(n\to\infty\).
In particular, if \(\rho\leq c\sigma\) for some \(c<\infty\), then
\begin{equation}
    \left|
    \operatorname{bias}(\hat o_n|\rho)
    \right|
    \leq
    \sqrt{c}\,
    \|\calO_n-\calO\|_\sigma
\end{equation}
Reducing the bias necessarily increases the statistical fluctuations: since
$
    \mathbb E[\hat o_n^2|\sigma]
    =
    \|h_n\|_{L^2(\nu)}^2
$,
and because \(\calO\notin\calR_\sigma\), every family converging to
\(\calO\) in the \(\sigma\)-norm must satisfy
$
    \lim_{n\to\infty}\|h_n\|_{L^2(\nu)}
    =\infty
$.
The zero-bias limit is therefore a singular limit in which the reference-state second moment diverges.
Weak reconstructibility leads naturally to a bias-variance trade-off: a nonzero regularization scale must be chosen to balance approximation bias against sampling variance.

This yields a trichotomy. If \(\overline{\calR}_\sigma\neq\scrLcompletion\), some observables are not
reconstructible even weakly.
If \(\overline{\calR}_\sigma=\scrLcompletion\) but \(\calR_\sigma\subsetneq\scrLcompletion\), every observable can be approximated, although some admit only singular zero-bias limits with diverging variance.
Finally, if the lower frame bound is strictly positive, then
$
    \calR_\sigma
    =
    \overline{\calR}_\sigma
    =
    \scrLcompletion.
$
In this case the frame operator is boundedly invertible, and every 
observable in \(\scrLcompletion\) admits a canonical finite-second-moment estimator.
This is shown rigorously in~\cref{sec:reconstructibility_etc}.

\paragraph{Canonical and null estimators}
The construction of canonical estimators is completed in~\cref{sec:estTheory_canEst}.
When the inverse problem is statistically stable, the canonical dual frame gives an unbiased estimator, as in standard frame-based tomography~\cite{dariano2007OptimalDataProcessing,bisio2009OptimalQuantumTomography,bisio2009OptimalQuantumTomographyPRL,zhu2014QuantumStateEstimation,innocenti2023shadow}.
A distinct issue is estimator uniqueness:
we show that uniqueness of estimators is equivalent to the injectivity of $T_\sigma^*$, and it is a condition independent from the completeness of the POVM or the existence of a strictly positive lower frame bound.
In particular, we show in~\cref{sec:heterodyne,sec:homodyne} how both heterodyne and homodyne detection have vanishing lower frame bound, causing some observables to admit only statistically unstable estimators.
At the same time, heterodyne detection gives unique estimators while homodyne detection allows for nontrivial null estimators.

Note that when the estimator is unique, it is independent of the choice of reference state. One may therefore wonder why the regularized geometry is relevant in this case. Our construction provides a necessary condition for an operator to be reconstructible within some regularized geometry. In this setting, we are also guaranteed to reconstruct expectation values on certain classes of states with specific tail behavior with respect to $\sigma$. Thus, even when the estimator is unique, the regularized construction gives a clear understanding of the conditions required for reconstruction.

\subsection{Regularization techniques}
\label{sec:regularization_techniques}

The reconstructibility trichotomy also gives a natural way to interpret
regularization. Suppose that the lower frame bound vanishes and that an
observable \(\calO\) is only weakly reconstructible, namely
\begin{equation}
    \calO\in\overline{\calR}_\sigma\setminus\calR_\sigma,
    \qquad
    \calR_\sigma\equiv\Range(T_\sigma^\ast).
\end{equation}
Then \(\calO\) can be approximated in the \(\|\cdot\|_\sigma\)-topology by
admissible observables, but it does not itself admit a finite-second-moment
unbiased estimator in the \(\sigma\)-regularized geometry. Equivalently, one
may choose a family
\begin{equation}
    \calO_\epsilon=T_\sigma^\ast h_\epsilon\in\calR_\sigma,
    \qquad
    \|\calO_\epsilon-\calO\|_\sigma\le\epsilon,
\end{equation}
and estimate \(\calO_\epsilon\) instead of \(\calO\). The corresponding
estimator
\begin{equation}
    \hat o_\epsilon(\alpha)
    =
    \frac{h_\epsilon(\alpha)}{\sqrt{p_\sigma(\alpha)}}
\end{equation}
is unbiased for \(\calO_\epsilon\), but generally biased for \(\calO\). For
an input state \(\rho\), this bias is
\begin{equation}
    \bias(\hat o_\epsilon|\rho)
    =
    \EE[\hat o_\epsilon|\rho]-\trace[\rho\calO]
    =
    \trace[\rho(\calO_\epsilon-\calO)].
\end{equation}
Thus, whenever the expectation functional
\(X\mapsto\trace[\rho X]\) is continuous with respect to the
\(\sigma\)-regularized norm, the bias is controlled by the distance between
\(\calO_\epsilon\) and \(\calO\). More explicitly, if there exists a constant
\(C_{\rho,\sigma}<\infty\) such that
\begin{equation}
    |\trace[\rho X]|
    \le
    C_{\rho,\sigma}\|X\|_\sigma
\end{equation}
for all \(X\in\scrLcompletion\), then
\begin{equation}
    \left|
    \bias(\hat o_\epsilon|\rho)
    \right|
    \le
    C_{\rho,\sigma}
    \|\calO_\epsilon-\calO\|_\sigma .
\end{equation}
For example, this condition holds for all states whose tail behaviour is
dominated by the reference state, in the sense that \(\rho\le c\sigma\) for
some \(c<\infty\). In that case one may take
\(C_{\rho,\sigma}\le\sqrt c\).

The price paid for reducing the bias is an increase in the estimator
variance. Indeed, with respect to the reference state,
\begin{equation}
    \EE[\hat o_\epsilon^2|\sigma]
    =
    \int_{\outcomesSet}\rmd\nu(\alpha)\,
    p_\sigma(\alpha)\hat o_\epsilon(\alpha)^2
    =
    \|h_\epsilon\|_{L^2(\nu)}^2 .
\end{equation}
If \(\calO\notin\calR_\sigma\), then any sequence
\(\calO_\epsilon=T_\sigma^\ast h_\epsilon\) converging to \(\calO\) must
satisfy
\begin{equation}
    \|h_\epsilon\|_{L^2(\nu)}\longrightarrow\infty
\end{equation}
as \(\epsilon\to0\). Otherwise a weakly convergent subsequence of the
coefficients would give an admissible representative of \(\calO\), contrary
to \(\calO\notin\calR_\sigma\). Hence the zero-bias limit is precisely the
singular limit in which the second moment diverges. For \(N\) independent
measurement shots, the corresponding mean-squared error has the standard
form
\begin{equation}
    \MSE_\rho(\hat o_\epsilon,N)
    =
    \bias(\hat o_\epsilon|\rho)^2
    +
    \frac{\Var(\hat o_\epsilon|\rho)}{N}.
\end{equation}
The regularization scale is therefore not fixed by the measurement alone:
it depends on the available statistics, on the target observable, and on
the class of input states for which the estimator is required to work.

A first concrete realization is subspace truncation. Let \(\Pi\) be a
projection onto a trusted finite-dimensional subspace, for instance a
finite Fock cutoff. Instead of requiring unbiasedness on all states, one
requires
\begin{equation}
    \EE[\hat o_{\calO|\Pi}|\rho]
    =
    \trace[\rho\calO],
    \qquad
    \rho=\Pi\rho\Pi .
\end{equation}
Equivalently, one estimates the compressed observable \(\Pi\calO\Pi\) using
the restricted POVM
\begin{equation}
    \mu_\Pi(\alpha)=\Pi\mu(\alpha)\Pi .
\end{equation}
If the restricted POVM is informationally complete on \(\Pi\calH\), the
problem becomes finite dimensional and the corresponding restricted frame
has a strictly positive lower frame bound. The singularity of the
infinite-dimensional inverse is therefore removed by the prior assumption.
If the restricted POVM is not informationally complete, the inverse is
replaced by the Moore--Penrose pseudoinverse and only the component of
\(\Pi\calO\Pi\) lying in the restricted span can be reconstructed.

In the \(\sigma\)-regularized formulation this gives estimators of the form
\begin{equation}
    \hat o_{\calO|\Pi}^{(\sigma)}(\alpha)
    =
    \frac{
    \left\langle
        \calO,
        \frameOp_{\Pi,\sigma}^{+}(\mu_\Pi(\alpha))
    \right\rangle_\sigma
    }{
    \trace[\mu_\Pi(\alpha)\sigma]
    },
\end{equation}
where \(\frameOp_{\Pi,\sigma}^{+}\) denotes the pseudoinverse of the
restricted \(\sigma\)-regularized frame operator. Such estimators are
exactly unbiased on the trusted subspace and generally biased outside it.
As \(\Pi\to I\), the bias decreases for states that are well approximated
by the cutoff, while the variance typically grows and may diverge in the
infinite-dimensional limit.

A second realization regularizes the target observable rather than the
input state class. One replaces \(\calO\) by a smoothed observable
\(\calO_\epsilon\in\calR_\sigma\), whose estimator is an ordinary
square-integrable function. This is particularly transparent for heterodyne
detection. There the formal estimator associated with a target observable
\(\calO\) is
\begin{equation}
    \hat o_\calO(\alpha)
    =
    \pi P_\calO(\alpha),
\end{equation}
where \(P_\calO\) is the Glauber--Sudarshan \(P\)-representation of
\(\calO\). When \(P_\calO\) is distributional, or when it fails to be
square-integrable with respect to \(p_\sigma(\alpha)\rmd^2\alpha\), this
formal expression does not define an admissible estimator. A smoothing of
the \(P\)-representation then amounts to replacing \(\calO\) by a nearby
observable with a regular \(P\)-function.

A useful instance of this idea is thermal regularization. For example, the
singular \(P\)-representative of a Fock projector can be regularized by
replacing \(\PP_k\) with the normalized \(k\)-photon-added thermal
observable
\begin{equation}
    \calO_{k,\bar n}
    =
    \frac{1}{k!(1+\bar n)^k}
    a^{\dagger k}\tau_{\bar n}a^k,
\end{equation}
where \(\tau_{\bar n}\) is the thermal state with average photon number
\(\bar n>0\). For every fixed \(\bar n>0\), the corresponding
\(P\)-function is an ordinary function, and therefore
\begin{equation}
    \hat o_{k,\bar n}(\alpha)
    =
    \pi P_{\calO_{k,\bar n}}(\alpha)
\end{equation}
defines a regular heterodyne estimator whenever it has finite second moment
with respect to the chosen reference distribution. In the limit
\(\bar n\to0\), one has
\begin{equation}
    \calO_{k,\bar n}\longrightarrow \PP_k ,
\end{equation}
so the bias with respect to the target projector vanishes. At the same
time, the estimator converges to the singular \(P\)-representative of
\(\PP_k\), and its second moment diverges in the faithful
\(\sigma\)-regularized geometry. The zero-temperature limit therefore plays
the same role as the infinite-cutoff limit in subspace truncation: it
removes the bias only by reintroducing the unstable inverse problem.

These constructions have the same operational meaning. The measurement
apparatus is kept fixed, while the classical post-processing is modified
according to a chosen prior geometry or state class. Truncation,
\(\sigma\)-rescaling, and thermal smoothing are different ways of replacing
a weakly reconstructible observable by a nearby admissible one. The
resulting estimator is biased for the original observable but has controlled
variance. The detailed optimization of this bias--variance trade-off, and
the associated finite-sample guarantees, will be treated separately.

\begin{table*}[t]
    \centering
    \footnotesize
    \setlength{\tabcolsep}{3.5pt}
    \renewcommand{\arraystretch}{1.4}
    \begin{tabularx}{\textwidth}{
        @{}
        >{\raggedright\arraybackslash}p{0.14\textwidth}
        >{\raggedright\arraybackslash}X
        >{\raggedright\arraybackslash}X
        >{\raggedright\arraybackslash}X
        @{}
    }
        \toprule
        \textbf{Feature}
        &
        \textbf{Unrescaled HS}
        &
        \textbf{Rescaled HS}
        &
        \textbf{\(\sigma\)-geometry}
        \\
        \midrule

        Operator space
        &
        \(\hsOps^{\mathrm{sa}}\), with
        \(\langle X,Y\rangle_2=\trace(XY)\).
        &
        \(\hsOps^{\mathrm{sa}}\), with
        \(\langle X,Y\rangle_2=\trace(XY)\).
        &
        \(\scrLcompletion\), with
        \(\langle X,Y\rangle_\sigma
        =\Re\trace(\sigma XY)\).
        \\

        Frame elements
        &
        \(\mu(\alpha)\).
        &
        \(g_\sigma(\alpha)
        =\mu(\alpha)/\sqrt{p_\sigma(\alpha)}\).
        &
        \(g_\sigma(\alpha)
        =\mu(\alpha)/\sqrt{p_\sigma(\alpha)}\)
        \\

        Analysis map
        &
        \((T_{\rm HS}X)(\alpha)
        =\trace[\mu(\alpha)X]\).
        &
        \((T_{{\rm HS},\sigma}X)(\alpha)
        =\trace[\mu(\alpha)X]/\sqrt{p_\sigma(\alpha)}\).
        &
        \((T_\sigma X)(\alpha)
        =\Re\trace[\sigma X\mu(\alpha)]
        /\sqrt{p_\sigma(\alpha)}\).
        \\

        State coefficient
        &
        \(T_{\rm HS}\rho=p_\rho\).
        &
        \(T_{{\rm HS},\sigma}\rho
        =p_\rho/\sqrt{p_\sigma}\).
        &
        \(T_\sigma D_\rho
        =p_\rho/\sqrt{p_\sigma}\), where
        \(\{\sigma,D_\rho\}/2=\rho\).
        \\
        Synthesis map
        &
        \(T_{\rm HS}^\ast:
        \mathcal D(T_{\rm HS}^\ast)
        \to\hsOps^{\mathrm{sa}}\),
        \newline
        \(T_{\rm HS}^\ast h
        =\int h(\alpha)\mu(\alpha)\,\rmd\lambda(\alpha)\),
        \newline
        \(h=\hat o\in
        \mathcal D(T_{\rm HS}^\ast)
        \subseteq L^2(\lambda)\).
        &
        \(T_{{\rm HS},\sigma}^\ast:
        \mathcal D(T_{{\rm HS},\sigma}^\ast)
        \to\hsOps^{\mathrm{sa}}\),
        \newline
        \(T_{{\rm HS},\sigma}^\ast h
        =\int h(\alpha)g_\sigma(\alpha)\,\rmd\lambda(\alpha)\),
        \newline
        \(h=\sqrt{p_\sigma}\hat o\in
        \mathcal D(T_{{\rm HS},\sigma}^\ast)
        \subseteq L^2(\lambda)\).
        &
        \(T_\sigma^\ast:
        L^2(\lambda)\to\scrLcompletion\),
        \newline
        \(T_\sigma^\ast h
        =\int h(\alpha)g_\sigma(\alpha)\,\rmd\lambda(\alpha)\),
        \newline
        \(h=\sqrt{p_\sigma}\hat o\in
        \mathcal D(T_\sigma^\ast)
        =L^2(\lambda)\).
        \\
        Controlled norm
        &
        \(\|\hat o\|_{L^2(\lambda)}^2\);
        generally not a physical second moment.
        &
        \(\|\sqrt{p_\sigma}\hat o\|_{L^2(\lambda)}^2
        =\EE_\sigma[\hat o^2]\), when defined.
        &
        \(\|\sqrt{p_\sigma}\hat o\|_{L^2(\lambda)}^2
        =\EE_\sigma[\hat o^2]\).
        \\

        Upper frame bound
        &
        Not automatic.
        &
        Not automatic.
        &
        Automatic: \(B_\sigma\le 1\) and 
        \(\|T_\sigma\|,\|T_\sigma^\ast\|\le1\)
        \\
        Reconstructible observables
        &
        \(\Range(T_{\rm HS}^\ast)\subseteq\hsOps\).
        &
        \(\Range(T_{{\rm HS},\sigma}^\ast)\subseteq\hsOps\).
        &
        \(\Range(T_\sigma^\ast)
        \subseteq\scrLcompletion\).
        \\
        \bottomrule
    \end{tabularx}

    \caption{
        Comparison between the unrescaled Hilbert--Schmidt,
        rescaled Hilbert--Schmidt, and \(\sigma\)-regularized
        constructions. Here
        \(p_\sigma(\alpha)=\trace[\sigma\mu(\alpha)]\) and
        \(g_\sigma(\alpha)=\mu(\alpha)/\sqrt{p_\sigma(\alpha)}\).
        Whenever all three constructions are well defined, they encode the
        same physical reconstruction identity
        \(\calO=\int\hat o(\alpha)\mu(\alpha)\,\rmd\lambda(\alpha)\).
        They differ in their operator topology, synthesis domains,
        admissible estimator spaces, and notions of stability.
    }
    \label{tab:comparison_three_geometries}
\end{table*}

\subsection{Photodetection case studies}%
\label{sec:summ_photodetection}

In~\cref{sec:estTheory_infDim_examples} we concretize our general construction with several explicit examples.
In particular, we consider in~\cref{sec:estTheory_infDim_examples_cool} a diagonal POVM which may model a photocounting measurement with some information loss in post-processing.
By fully working out our construction in this case we show that the $\sigma$-dependence in our construction can be highly nontrivial.
In particular, both the completeness of the measurement and the existence of a strictly positive lower frame bound depend on the choice of prior $\sigma$, even though the measurement itself is unchanged.
Defining a suitably parametrized prior $\sigma_b$, we can go through the three scenarios of~\cref{thm:reconstructibility-trichotomy}.
We also show explicitly how bias and finite moment conditions depend on both the prior and the input state, and the relations between them.
This model gives a simple case study to highlight the singularities that later appear in heterodyne and homodyne detection.

We also consider another example of noisy photon-counting measurement in~\cref{sec:estTheory_infDim_examples_inefficient_counting}, which further shows that formal invertibility of the measurement map is not sufficient: admissibility depends on whether the inverse coefficients are square-integrable with respect to the physically relevant prior.

\subsection{Covariant measurements}

We then apply in~\cref{sec:covariantPOVM} our general formalism to the case of covariant measurements~\cite{kiukas2012CharacterizationInformationalCompleteness,dariano2003QuantumTomography,dariano20042QuantumTomographic,dariano2004InformationallyCompleteMeasurements}.
These are POVMs with density
\begin{equation}
    \mu_\nu(\alpha)
    =
    \frac{1}{\pi}D(\alpha)\nu D(-\alpha),
\end{equation}
for some fixed state \(\nu\), which we will refer to as the seed.
Covariant measurements provide a general playground to generalize the following optical examples.
Indeed, they reduce to heterodyne detection using the vacuum state as seed, as well as noisy heterodyne detection using thermal seeds.
We also consider the more general case with a pre-processing unitary applied to the seed before the measurement, corresponding to POVM densities of the form
\begin{equation}
    \mu_\nu(\alpha,S)=\frac{1}{\pi} 
    D(\alpha) U_S \nu U_S^\dagger D(-\alpha),
\end{equation}
which have also recently been studied in the context of shadow tomography in continuous variables~\cite{becker2024ClassicalShadowTomography}.
These protocols collapse to homodyne detection in the limiting case of infinitely squeezed $\nu$ and taking $U_S$ as uniformly random phase-space rotations.

\paragraph{Lower frame bounds}
We show in~\cref{prop:A0_all_gaussian_covariant} that for covariant measurements with faithful Gaussian prior $\sigma$ and Gaussian seed $\nu$, the lower frame bound vanishes despite the POVMs being informationally complete, implying the existence of observables corresponding to singular estimators.
We then extend this result in~\cref{prop:A0_covariantWithS} to covariant measurements with random preprocessing, assuming uniformly random distribution over Gaussian unitaries $U_S$.

\paragraph{Null estimators}
We also study the uniqueness of estimators.
We prove in~\cref{prop:TsigmaStarInjectiveGaussian}, under the same assumptions on $\sigma$ and $\nu$, that the synthesis operator $T_{\sigma,\nu}^*$ for covariant measurements is injective, implying uniqueness of estimators.
At the same time, we show in~\cref{sec:covariantWithS_nullestimators} that this is not necessarily the case when random preprocessing is allowed, which will indeed also be the case for homodyne detection with random angles.

\paragraph{Canonical estimators}
We then derive in~\cref{sec:covariant_estimators_uniqueness} general formulas for the canonical estimators in the case of covariant measurements, leveraging the particularly simple action of the corresponding frame operators on characteristic functions.
In particular we derive the canonical estimator
\begin{equation}\label{eq:heterodyne_est_intro}
    \hat o_\calO(\alpha)=\FT\left[\frac{\chi_\calO}{\chi_\nu}\right],
\end{equation}
with $\FT$ the Fourier transform and $\chi_X(\alpha)\equiv\trace(X D(\alpha))$ the characteristic function.
This estimator is well-defined when $\chi_\calO/\chi_\nu$ is integrable; when this condition is not satisfied, one may end up with estimators defined only as distributions, as it often happens for the special case of the $P$ function.
These expressions were also previously derived from a state tomography perspective in~\cite{dariano2004InformationallyCompleteMeasurements}, and in a shadow tomography context in~\cite{becker2024ClassicalShadowTomography}.
Although~\cref{eq:heterodyne_est_intro} was derived via the naive Hilbert-Schmidt construction, it identifies the same formal reconstruction coefficients that would have been found via the $\sigma$-regularized geometry, thanks to the injectivity of the synthesis operator.

We also derive in~\cref{sec:covariant_estimators_randomized} similar canonical estimators in the case of random covariant measurements:
\begin{equation}\label{eq:heterodyne_est_intro_S}
    \hat o_\calO(\alpha,S) =
    \FT\left[
    \frac{\chi_\calO\overline{\chi_{\nu_S}}}{C_{\nu,\lambda}}
    \right],
\end{equation}
where $C_{\nu,\lambda}(\beta)\equiv\int\lambda(\rmd S)|\chi_{\nu_S}(\beta)|^2$, $\lambda$ is the measure over the random preprocessing matrices $U_S$ and $\nu_S= U_S \nu U^\dagger_S$.
Similar expressions were also reported in~\cite{becker2024ClassicalShadowTomography}.
However, for these random covariant measurements the synthesis operator can be non-injective, implying that the estimators thus derived are not unique, and are generally not the ones with the lowest variance.
In fact, we will show in~\cref{sec:homodyne} that this is not the case for homodyne detection.

\subsection{Heterodyne detection and the \texorpdfstring{\(P\)}{P}-function}

Heterodyne detection is treated in~\cref{sec:heterodyne}.
This can be seen as the specialization of covariant POVMs when the seed state is the vacuum, and corresponds to the POVM density:
\begin{equation}
    \mu_{\rm het}(\alpha)
    =
    \frac{1}{\pi}\PP_\alpha,
\end{equation}
where we use the notation $\PP_\alpha\equiv\ketbra{\alpha}{\alpha}$. The results derived in general for covariant measurements apply to heterodyne detection. In particular, we have vanishing lower frame bound for all faithful Gaussian priors $\sigma$, and an injective synthesis operator, implying uniqueness of admissible unbiased estimators when they exist.

\paragraph{Metrological interpretation of the $P$ function}
A central observation of the section, which also follows immediately from~\cref{eq:heterodyne_est_intro}, is that the canonical formal heterodyne reconstruction function associated with a target observable \(\calO\) is its Glauber-Sudarshan \(P\)-function:
\begin{equation}
    \hat o_\calO(\alpha)
    =
    \FT[e^{|\cdot|^2/2}\chi_\calO]
    =
    \pi P_\calO(\alpha).
\end{equation}
Here \(P_\calO\) denotes the \(P\)-function of the target observable $\calO$.

Whenever \(\pi P_\calO\) is an ordinary measurable function with finite second moment under the heterodyne distribution of the reference state, namely
\begin{equation}
    \int_{\mathbb C}\rmd^2\alpha\,
    p_\sigma(\alpha) P_\calO(\alpha)^2<\infty,
    \quad
    p_\sigma(\alpha)\equiv\trace[\mu_{\rm het}(\alpha)\sigma],
\end{equation}
it defines an admissible estimator in the \(\sigma\)-regularized theory. That is, it is a square-integrable estimator satisfying
\begin{equation}
    \int_{\mathbb C}\rmd^2\alpha\,
    \pi P_\calO(\alpha)\mu_{\rm het}(\alpha)
    =
    \calO,
\end{equation}
and therefore
$
    \mathbb E[\pi P_\calO(\alpha)|\rho]
    =
    \trace(\rho\calO)
$
for all input states \(\rho\) in a suitable class of states compatible with $\sigma$.
Since the heterodyne synthesis operator is injective, this admissible estimator is unique.

The underlying mathematical ingredients are well-known in the quantum optics literature --- in particular, the duality between $P$ and $Q$ functions, the use of heterodyne data to sample the Husimi $Q$-function, and the frame/data-processing interpretation of informationally complete measurements are all established~\cite{dariano2003QuantumTomography,dariano2004InformationallyCompleteMeasurements,ferrie2011quasi,becker2024ClassicalShadowTomography}.
However, to our knowledge, the metrological interpretation of quasiprobability distributions we focus on has not been made explicit in the literature.
In particular, our perspective provides a natural statistical meaning to singularities as arising from an ill-defined underlying statistical reconstruction problem.

If \(P_\calO\) is distributional, or grows too fast relative to the heterodyne outcome distribution of \(\sigma\), then \(\calO\) may still be determined by ideal heterodyne statistics, but not define a finite-variance estimator for $\calO$ relative to $\sigma$.
In this sense, singular $P$ functions provide canonical examples of weak reconstructibility: in these cases the observable is accessible only as a singular limit of admissible estimators, but with diverging variance.
More precisely, since the heterodyne synthesis operator is injective, any admissible estimator must coincide with the formal representative \(\pi P_\calO\). Hence, whenever this formal representative is well-defined,
one has $\calO\in \calR_\sigma$ iff $\hat o_\calO\in L^2(p_\sigma \rmd^2\alpha)$.
Thus, if $P_\calO$ is genuinely distributional, then $\calO\notin\calR_\sigma$ for all faithful reference states $\sigma$.
If, in addition, $\calO\in\scrLcompletion$, and heterodyne detection is complete in the corresponding $\sigma$-regularized geometry, as is the case for the faithful Gaussian priors we consider, then $\calO\in\overline{\calR}_\sigma\setminus\calR_\sigma$.
It is worth noting however that non-admissibility is not limited to cases with distributional representations: $P_\calO$ may be a regular function and still fail to belong to $L^2(p_\sigma \rmd^2\alpha)$ due to incompatible tail behaviour between $\calO$ and $\sigma$. Explicit example of this phenomenology are discussed in~\cref{sec:estTheory_infDim_examples_heterodyne_sigma_dependence}.

\paragraph{Uniqueness of $P$-representations}
% \gabri{The uniqueness of the \(P\)-representation requires some clarification (see ~\cref{sec:heterodyne_PUniq}).}
The heterodyne synthesis operator is injective on the admissible \(L^2\) coefficient space, so any finite-variance unbiased estimator is unique. If one drops the admissibility requirement, formal null representatives can exist outside this \(L^2\) space, and may generate distinct \(P\)-representations of the same operator; however, these cannot be used as statistically meaningful estimators because their second moment is always infinite.
In particular we provide an explicit example of a nonzero null estimator for heterodyne detection, which implies a strong form of non-uniqueness of the $P$ function.
This implies one can find a non-positive $P$ function for a classical state. For example, we show that
\begin{equation}
    \tilde P_{\tau_1}(\alpha) = \frac{1}{\pi}e^{-|\alpha|^2} + e^{|\alpha|^2} e^{-\sqrt{|\alpha|}}\sine(\sqrt{|\alpha|})
\end{equation}
is a valid non-positive $P$ representation for the thermal state $\tau$, satisfying the defining relation $\int\rmd^2\alpha \tilde P_{\tau_1}(\alpha)\mathbb{P}_\alpha=\tau_1$.
Such examples do not contradict our uniqueness statements, because they must necessarily correspond to functions that have infinite variance, and thus cannot be used as statistically meaningful estimators.
We show that imposing the condition of finite second moment is sufficient to make the resulting estimator, and thus $P$ function, unique.

\subsection{Homodyne detection}

Finally, we analyze the case of homodyne detection in~\cref{sec:homodyne}.
Each run consists of choosing a random phase \(\theta\) and measuring the quadrature \(\hat x_\theta\), a standard setting in optical tomography~\cite{smithey1993MeasurementWignerDistribution,dariano1995HomodyneDetectionDensity,wallentowitz1995ReconstructionQuantumMechanical,leonhardt1995TomographicReconstructionDensity,lvovsky2009ContinuousvariableOpticalQuantum,albini2009QuantumHomodyneTomography}.
Like heterodyne detection, homodyne detection is informationally complete but lacks a lower frame bound for faithful Gaussian priors, as proved in~\cref{prop:A0_homodyne_gaussian}.
Unlike heterodyne detection, however, its synthesis operator is not injective.
Thus homodyne reconstruction combines two features: unstable inversion and nonunique unbiased estimators.

The canonical Hilbert-Schmidt inverse, derived in~\cref{sec:homodyne_est}, gives the usual pattern-function formula
\begin{equation}
    \hat o_\calO(x,\theta)
    =
    \frac{1}{2}
    \int_{\mathbb R}\rmd k\,|k|\,
    \trace[\calO e^{ik(x-\hat x_\theta)}].
\end{equation}
The factor \(|k|\) is the inverse-Radon filter.
Homodyne singularities are therefore again inverse-problem singularities: the measurement smooths the target observable, while reconstruction amplifies high-frequency components.
This is the estimator-level counterpart of the known divergence issues in homodyne reconstruction formulas~\cite{dariano1994DetectionDensityMatrix,dariano1995HomodyneDetectionDensity,leonhardt1995TomographicReconstructionDensity,leonhardt1996NumberPhasesRequired,paris1996Quantum,welsch2009HomodyneDetectionQuantum,dariano2010renormalized}.

The examples in~\cref{sec:homodyne_examples_estimators} show that this instability is selective.
Many finite Fock-space observables, including number-state projectors, have regular pattern functions.
Others remain singular.
Displaced parity, equivalently the Wigner function at a point, has only a distributional formal estimator.
Quadrature projectors are also singular, since they attempt to recover information localized at a measure-zero set of homodyne settings. 

The nonuniqueness of homodyne estimators is analyzed in~\cref{sec:homodyne_props_nullEst}.
Nonzero null estimators synthesize the zero observable, so they can be added to any unbiased estimator without changing its mean while generally changing its variance.
Therefore, for homodyne detection, one must choose not only a reconstruction formula but also a representative within its equivalence class.
For thermal priors, the minimum-second-moment representative is obtained by projecting onto the orthogonal complement of the null-estimator space, as derived in~\cref{sec:homodyne_minvarEsts}.
This shows that the statistically optimal homodyne estimator depends on the prior, even when the formal pattern function does not.

% The reconstruction singularities studied here have a common statistical origin.
% They occur when an informationally complete measurement fails to be a measurement frame in the geometry relevant for finite-sample estimation.
% Equivalently, the inverse map exists only as an unbounded operation.
% In heterodyne detection, this unbounded inverse appears as the singularity of the \(P\)-symbol; in homodyne detection, it appears as the inverse-Radon filter and as the singular or nonunique behavior of pattern functions.
% The \(\sigma\)-regularized frame formalism turns these statements into a unified estimation theory: it identifies which observables are stably reconstructible, which are only weakly reconstructible, and how physically motivated prior information can be used to trade bias for variance without changing the measurement apparatus.

\section{Relation with prior work}
\label{sec:priorWork}

\paragraph{Operator frames, tomography, and data processing}
Frame-theoretic formulations of quantum tomography were developed as a systematic way to construct reconstruction functions for informationally complete measurements~\cite{dariano2003QuantumTomography,dariano2004InformationallyCompleteMeasurements,renes2004Frames,renes2004SymmetricInformationallyComplete,scott2006tight}.
In finite dimensions, the operator-frame formalism cleanly links informational completeness to the invertibility of the associated frame operator: a POVM is a measurement frame if and only if it is informationally complete~\cite{scott2006tight}.
This setting also underlies optimal linear data processing~\cite{dariano2004InformationallyCompleteMeasurements,dariano2007OptimalDataProcessing,bisio2009OptimalQuantumTomography,bisio2009OptimalQuantumTomographyPRL,dariano2010renormalized,zhu2011QuantumStateTomography,zhu2014QuantumStateEstimation}.
Our work extends this perspective to continuous-variable systems by showing that the finite-dimensional equivalence between informational completeness and stable inversion breaks down, and that the lower frame bound is the relevant stability parameter.

\paragraph{Continuous-variable tomography and shadows}
Homodyne and heterodyne tomography admit explicit reconstruction formulas, many of which can be understood frame-theoretically~\cite{dariano1994DetectionDensityMatrix,dariano1995HomodyneDetectionDensity,leonhardt1995TomographicReconstructionDensity,leonhardt1996NumberPhasesRequired,paris1996Quantum,dariano2004InformationallyCompleteMeasurements,welsch2009HomodyneDetectionQuantum,lvovsky2009ContinuousvariableOpticalQuantum,dariano2010renormalized,mosco2022HighdimensionalMethodsQuantum}.
At the same time, these formulas are known to develop divergences or require regularity assumptions.
Recent works on continuous-variable shadow tomography have obtained efficient prediction guarantees under physically motivated restrictions, such as photon-number or energy cutoffs, and for specific protocols~\cite{michael2016NewClassQuantum,becker2024ClassicalShadowTomography,gandhari2024PrecisionBoundsContinuousVariable,wu2024EfficientLearningContinuousvariable,conrad2025ContinuousVariableDesignsDesignBased,yang2025PracticalHomodyneShadow}.
Our work provides a general frame-theoretic explanation of why such restrictions are needed: they select a geometry in which estimator admissibility and variance can be controlled.

\paragraph{Measurement frames versus informational completeness.}
The connection between lower frame bounds and bounded invertibility is standard in frame theory~\cite{casazza2013introduction,casazza2016BriefIntroductionHilbert,christensen2016IntroductionFramesRiesz}.
For coherent states, related mathematical issues have been discussed from an abstract perspective~\cite{ali1993Continuous,ali2000coherent}.
However, the role of the lower frame bound has not been systematically connected to finite-sample quantum estimation, the admissibility of shadow estimators, or the singularity of quasiprobability representations.
Our main conceptual contribution is to identify the absence of a lower frame bound as the mechanism underlying unstable continuous-variable reconstruction, and more generally the origin of singular quasiprobability distributions.

\paragraph{Quasiprobabilities and nonclassicality.}
Quasiprobability representations can be expressed naturally within the operator-frame formalism~\cite{ferrie2008frame,ferrie2009FramedHilbertSpace,ferrie2010necessity,ferrie2011quasi,zhu2016QuasiprobabilityRepresentationsQuantum}.
In quantum optics, the Glauber--Sudarshan \(P\) function and its generalizations are central examples~\cite{sudarshan1963EquivalenceSemiclassicalQuantum,glauber1963CoherentIncoherentStates,cahill1965CoherentState,cahill1969density,cahill1969ordered}.
For example, for pure states~\cite{hillery1985classical} showed that the only \(P\)-classical states are coherent states: the only nonnegative \(P\) distributions representing pure states are
\(P(\alpha)=\delta^2(\alpha-\alpha_0)\), with \(\alpha_0\in\mathbb{C}\).
This rules out smooth nonnegative \(P\) functions for noncoherent pure states.
More generally, the singular behavior of \(P\) distributions is often discussed in connection with nonclassicality, while negativity and contextuality have been argued to be equivalent notions of nonclassicality in broad operational settings~\cite{spekkens2008NegativityContextualityAre}.
From a distribution-theoretic perspective,~\cite{sperling2016characterizing} showed that
\(P_{\max}(\alpha)=\exp(-\frac{1}{2}\partial_\alpha\bar\partial_\alpha)\delta^2(\alpha)\) is a maximally singular $P$ function.
Although \(P_{\max}\) does not itself represent a physical quantum state, it bounds the singularity of physical \(P\) representations and corresponds, in the Wigner representation, to \(W_{\max}(\alpha)=\delta^2(\alpha)\).
Here we argue that negativity and singularity, although might in some cases appear concurrently, are distinct notions, and give an operationally clear interpretation to singularity from an estimation-theory perspective.
Specifically, the singularity of the \(P\) function is caused by the absence of a lower frame bound, which makes the bijection between states and output probability distribution statistically unstable.

% \paragraph{Homodyne versus heterodyne detection.}
% A substantial literature compares homodyne and heterodyne strategies for restricted tasks, such as low-order moment estimation or tomography within specific state families.
% For Gaussian benchmarks, heterodyne has been argued to outperform homodyne detection for certain moment-estimation tasks~\cite{teo2017SuperiorityHeterodyningHomodyning}, while homodyne can have advantages for reconstructing some non-Gaussian states~\cite{fernandes2025ComparingHomodyneHeterodyne}.
% Improvements over these basic detection schemes remain an active direction~\cite{chabaud2021CertificationNonGaussianStates,tripiermondancin2025TunablePassiveSqueezing}.
% By contrast, we show that both measurements are informationally complete, but both lack a lower frame bound, causing their respective singular quasiprobability distributions.
% At the same time, we show that the uniqueness of heterodyne estimators (i.e. the $P$ function) and the non-uniqueness of homodyne estimators (i.e. the so-called pattern functions) can be traced back to the synthesis operator being injective in one case, and having nontrivial kernel in the other.

\section{Conclusions}
\label{sec:conclusions}

We have developed a frame-theoretic estimation theory for continuous-variable quantum systems in which statistical stability, rather than informational completeness alone, is the central notion of reconstructibility. In finite dimensions these two notions essentially coincide: an informationally complete POVM gives an invertible reconstruction map, and every observable admits finite-variance unbiased estimators. In infinite dimensions this equivalence breaks down. A POVM may determine the state from ideal outcome probabilities while still failing to provide statistically stable estimators for some target observables. The distinction is controlled by the lower frame bound of the measurement map in the relevant operator geometry.

The main technical tool introduced in this work is a $\sigma$-regularized observable space, where the reference state $\sigma$ encodes prior information about the class of states on which reconstruction is required. Within this geometry, the synthesis operator associated with a POVM determines which observables are admissibly reconstructible, which are only weakly reconstructible, and which are inaccessible. This yields a trichotomy: observables in the range of the synthesis operator admit finite-second-moment unbiased estimators; observables in the closure of this range but not in the range itself can be approached only through estimators whose variance diverges; and observables outside the closure are not reconstructible even at the level of arbitrarily accurate approximation. Thus the absence of a lower frame bound is precisely the mechanism by which formally valid reconstruction formulas become statistically ill posed.

This perspective also gives an operational interpretation of singular quasiprobability distributions. For heterodyne detection, the formal canonical estimator of an observable is its Glauber--Sudarshan $P$ representation. When this representative is square-integrable with respect to the heterodyne outcome distribution of the reference state, it is a genuine finite-variance unbiased estimator. When it is distributional, or has incompatible tail behaviour, the corresponding observable may still be determined by ideal heterodyne statistics but not by a statistically stable finite-sample procedure. In this sense, the singularity of the $P$ representation is not, by itself, a direct witness of nonclassicality. Rather, it reflects the non-admissibility of the associated estimator and the unboundedness of the inverse reconstruction problem. The same mechanism explains the singularities of homodyne pattern functions: the inverse-Radon reconstruction amplifies components that the measurement has smoothed, and some observables can only be recovered through singular limits.

The heterodyne and homodyne examples also show that instability and nonuniqueness are distinct phenomena. Heterodyne detection has unique admissible estimators when they exist, so finite-variance admissibility selects a distinguished $P$ representative. Homodyne detection, by contrast, has nontrivial null estimators, and hence unbiased reconstruction functions are generally not unique. In that case the prior-dependent minimum-norm representative gives the statistically preferred estimator. This separation between completeness, stability, and estimator uniqueness is one of the structural features that is hidden in finite-dimensional tomography but becomes essential in continuous-variable systems.

The framework developed here also clarifies the role of regularization. Truncation, smoothing, thermalization, and prior-dependent rescaling can all be understood as ways of replacing a weakly reconstructible observable by a nearby admissible one. The resulting estimators are biased for the original target observable but have controlled variance, yielding an explicit bias--variance trade-off. Importantly, this regularization is performed at the level of classical post-processing and is tied to a chosen state class or prior geometry, rather than being an ad hoc modification of the measurement model. An open question is to develop sharper finite-sample guarantees for regularized estimators. The frame-theoretic formalism identifies when finite-variance unbiased estimators exist, but practical tomography and shadow-estimation protocols require concentration bounds, confidence regions, and explicit sample-complexity estimates for biased regularized estimators.
Establishing such guarantees in terms of the lower frame bound, the prior $\sigma$, and the chosen regularization would connect the present structural theory more directly with continuous-variable shadow tomography.
These points will be addressed in detail in future work.

\section*{Acknowledgments}

We acknowledge financial support from the UK funding agency EPSRC (grant EP/T028424/1), the Royal Society Wolfson Fellowship (RSWF/R3/183013), the Department for the Economy of Northern Ireland under the US-Ireland R\&D Partnership Programme, the PNRR PE Italian National Quantum Science and Technology Institute (PE0000023), and the EU Horizon Europe EIC Pathfinder project QuCoM (GA no.~10032223), the European Union - NextGenerationEU through the Italian Ministry
of University and Research under PNRR-M4C2-I1.3 Project
PE-00000019 ”HEAL ITALIA” (CUP B73C22001250006), the Italian MUR under PRIN Project No. 2022FEXLYB "Quantum Reservoir Computing
(QuReCo)", the “National Centre for HPC, Big Data and Quantum
Computing (HPC)” Project CN00000013 HyQELM – SPOKE 10. DAC acknowledges support from Taighde \'Eireann Research Ireland under Grant No. GOIPD/2025/1353. AF acknowledges the European Union’s Horizon
Europe Framework Programme (EIC Pathfinder Challenge
project Veriqub) under Grant Agreement No. 101114899.

\clearpage
\appendix

% \twocolumn[
% \begin{center}
%   {\large\bfseries Appendices\par}
%   \vspace{0.75em}
%   {\normalsize Contents of the appendices\par}
%   \vspace{0.5em}
% \end{center}
% ]

\begin{center}
  {\large\bfseries Appendices\par}
  \vspace{0.75em}
  % {\normalsize Contents of the appendices\par}
  \vspace{0.5em}
\end{center}
\phantomsection
\addcontentsline{toc}{section}{Appendices}

\etocdepthtag.toc{appendix}

\etocsettagdepth{main}{none}
\etocsettagdepth{appendix}{subsection}

\etocsettocstyle{}{}
\tableofcontents

\vspace{2em}
% \section*{Appendices}
% \phantomsection
% \addcontentsline{toc}{section}{Appendices}

% \etocdepthtag.toc{appendix}

% \etocsettagdepth{main}{none}
% \etocsettagdepth{appendix}{subsection}

% \etocsettocstyle{\subsection*{Contents of the appendices}}{}
% \tableofcontents

\section{Frame theory}%
\label{sec:frameTheory}

In~\cite{innocenti2023shadow}, some of the authors showed that state reconstruction and shadow tomography for finite-dimensional systems are closely related to the theory of measurement frames.
In~\cref{sec:estTheory_infDim}, we will discuss how the same formalism, suitably adapted to the infinite-dimensional setting, can provide a new perspective on, and a deeper understanding of, tomography for continuous-variable systems.

\Cref{sec:frameTheory_basics} reviews the basics of classical frame theory and the relevant definitions that will be needed throughout the rest of the paper.
\Cref{sec:frameTheory_opFrames} specializes this formalism to the case of frames of linear operators, highlighting what changes in this specialization.
Finally,~\cref{sec:frameTheory_measFrames} further specializes the formalism to operator frames whose elements are POVM effects.

\subsection{Classical frame theory in Hilbert spaces}%
\label{sec:frameTheory_basics}

We review here the basics of classical frame theory that will be useful in the rest of the paper.
For more details, we defer to dedicated treatments~\cite{casazza2013introduction,casazza2016BriefIntroductionHilbert,christensen2016IntroductionFramesRiesz}.

\paragraph{Definition of frames and frame bounds}
Let $\calH$ be a separable Hilbert space with inner product $\langle\cdot,\cdot\rangle$.
A countable sequence $\{f_k\}_{k=1}^\infty\subset\mathcal H$ is said to be a \textit{frame} for $\calH$ iff there exist constants $0<A\le B<\infty$ such that
\begin{equation}\label{eq:frame_bounds}
    A\|f\|^2\le \sum_{k=1}^\infty |\langle f_k,f\rangle|^2 \le B\|f\|^2,
    \quad\forall f\in \calH.
\end{equation}
The constants $A,B$, when they exist, are referred to as lower and upper frame bounds.
The frame bounds are not unique: if $B$ is an upper frame bound, it is clear that any $B'>B$ is another upper frame bound.
We will therefore always refer to the lower frame bound as the largest $A$ satisfying~\cref{eq:frame_bounds}, and similarly to the upper frame bound as the lowest $B$ satisfying~\cref{eq:frame_bounds}.
Frames can informally be understood as \textit{overcomplete bases}: they span the underlying vector space, but are not required to be linearly independent.
More formally, a frame $\{f_k\}_{k=1}^\infty$ is said to be \textit{overcomplete} iff it is not a basis --- this, in turn, is equivalent to the vectors $f_k$ being linearly dependent.

\paragraph{Definition of basic operators}
For any frame $\{f_k\}_{k=1}^\infty$, we define the \textit{analysis operator} $T:\calH\to \ell^2(\mathbb{N})$ and its adjoint, the \textit{synthesis operator} $T^*:\ell^2(\mathbb{N})\to \calH$, as the maps given by
\begin{equation}
    T f=\{\langle f_k,f\rangle\}_{k=1}^\infty,
    \quad
    T^*\{c_k\}_{k=1}^\infty = \sum_{k=1}^\infty c_k f_k.
\end{equation}
Composing the two, we get the \textit{frame operator} $S:\calH\to\calH$ as
\begin{equation}
    Sf=T^*T f=\sum_{k=1}^\infty
    \langle f_k,f\rangle f_k.
\end{equation}
The existence of frame bounds as in~\cref{eq:frame_bounds} ensures that $S$ is bounded, invertible, self-adjoint, and positive.
In terms of $S$, a family $\{f_k\}_{k=1}^\infty$ is a frame iff $S$ if bounded and bounded-below.
The most important consequence of a sequence being a frame is that it guarantees a \textit{frame decomposition}.
Namely, if $\{f_k\}_{k=1}^\infty$ is a frame, then any $f\in\calH$ decomposes as
\begin{equation}\label{eq:frame_decompositions}
\begin{aligned}
    f = \sum_{k=1}^\infty
    \langle S^{-1}f_k,f\rangle f_k
    = \sum_{k=1}^\infty \langle f_k, f\rangle S^{-1} f_k.
\end{aligned}
\end{equation}
Furthermore, $\{S^{-1} f_k\}_{k=1}^\infty$ is also a frame with frame bounds $1/B$ and $1/A$, and frame operator $S^{-1}$.
The frame $\{S^{-1} f_k\}_{k=1}^\infty$ is called the \textit{canonical dual frame} of $\{f_k\}_{k=1}^\infty$.
Whenever $\{f_k\}_{k=1}^\infty$ is overcomplete, there are infinitely many choices of \textit{dual frames} $g_k$ such that
\begin{equation}
    f =
    \sum_{k=1}^\infty \langle g_k ,f\rangle f_k
    =
    \sum_{k=1}^\infty \langle f_k ,f\rangle g_k.
\end{equation}

\paragraph{Continuous frames}
The formalism generalizes to families of possibly uncountably many vectors via the notion of \textit{continuous frames}~\cite{ali1993Continuous,kaiser2011FriendlyGuideWavelets,christensen2016IntroductionFramesRiesz}.
The main difference compared  to the countable case is that we must in such cases supply the Hilbert space $\calH$ with a notion of measure to get well-defined integrals, and require an additional measurability requirement.
Namely, given a complex Hilbert space $\calH$ and a measure space $M$ with positive measure $\mu$, a family of vectors $\{f_\alpha\}_{\alpha\in M}$ is said to be a \textit{continuous frame} when the maps $\alpha\mapsto \langle f_\alpha,f\rangle$ are measurable for all $f\in\calH$, and
\begin{equation}
    A\|f\|^2 \le \int_M  |\langle f_\alpha, f\rangle|^2 \rmd\mu(\alpha)
    \le B\|f\|^2,
    \quad\forall f\in\calH.
\end{equation}

\paragraph{Finite frames}
The theory of frames simplifies considerably in finite dimensions, and even more for finite frames. If $\calH$ is finite-dimensional, a finite sequence $\{f_k\}_{k=1}^n$ is a frame \textit{iff} it spans the space.
For \textit{infinite} sequences in a finite-dimensional $\calH$, it remains true that $\{f_k\}_{k=1}^\infty$ spans the space \textit{iff} it has a lower frame bound; however, it is possible to have countable sequences that span the space but lack an upper frame bound. A simple such example in $\calH=\mathbb{R}^2$ is
\begin{equation}
    \{f_k\}_{k=1}^\infty =
    \{ n e_1\}_{n\in\mathbb{N} }\cup \{e_2\} \subset \mathbb{R}^2.
\end{equation}
In this case, $B=\infty$ as $Se_1=\left(\sum_{n\in\mathbb{N} } n^2\right) e_1$ does not converge, and thus $S$ is not bounded, although $A=1$ remains true as $S$ is bounded below, with $S\ge I$.

\parTitle{Examples}
For example, any orthonormal basis is a frame with $A=B=1$.
On the other hand, the sequence
\begin{equation}
    \{f_k\}_{k=1}^\infty
    = \left\{e_1,\frac{e_2}{\sqrt2},\frac{e_2}{\sqrt2},
    \frac{e_3}{\sqrt3},
    \frac{e_3}{\sqrt3},
    \frac{e_3}{\sqrt3},\dots
    \right\},
\end{equation}
with each vector $e_k/\sqrt k$ repeated $k$ times, is also a frame with $A=B=1$ and $S=I$.
Thus also $S^{-1} f_k=f_k$, and $\{f_k\}_{k=1}^\infty$ is the canonical dual frame of itself.
But at the same, $\{f_k\}_{k=1}^\infty$ is not a basis, since its elements are linearly dependent, and hence it admits infinitely many dual frames. Indeed, relabelling the frame elements as $f_{n,k}=e_n/\sqrt n$ for all $n\ge 1$ and $k=1,\dots, n$, any $g_{n,k}=f_{n,k}+h_{n,k}$ with $\sum_{k=1}^n h_{n,k}=0$ is a valid dual frame.
Indeed, $\sum_{n=1}^\infty\sum_{k=1}^n \langle f,f_{n,k}\rangle g_{n,k}=f$.
For an example of a sequence that is instead \textit{not} a frame, consider
\begin{equation}
    \{f_k\}_{k=1}^\infty
    = \left\{e_1,\frac{e_2}{2},\frac{e_2}{2},
    \frac{e_3}{3},
    \frac{e_3}{3},
    \frac{e_3}{3},\dots
    \right\}.
\end{equation}
This is not a frame because its frame operator is $Sf = \sum_{k=1}^\infty \frac{1}{k} \langle e_k,f\rangle e_k$,
% \begin{equation}
%     S = \sum_{k=1}^\infty \frac{1}{k} \langle e_k,f\rangle e_k,
% \end{equation}
which is not bounded below, and thus $A=0$.

\parTitle{Weighted frames}
Another approach to compute dual frames is as canonical duals of \textit{weighted} frames~\cite{bogdanova2005StereographicWaveletFrames,balazs2010WeightedControlledFrames,balazs2023WeightedFramesWeighted}.
Namely, given some frame $\{\mathbf v_k\}_k$, we consider the weighted version $\{w_k \mathbf v_k\}_k$ for some $w_k>0$, compute the canonical dual as $S_w^{-1}=(\sum_k w_k^2 \mathbf v_k\mathbf v_k^T)^{-1}$, and then the estimators as $\mathbf y_k^{(w)}=w_k^2 S_w^{-1}(\mathbf v_k)$. These are valid dual frames for all $w_k$ as $\sum_k \langle \mathbf v_k,\mathbf x\rangle \mathbf y_k^{(w)}=\mathbf x$ for all $\mathbf x$.

It is worth noting that not all dual frames can be obtained via such procedure. As a simple counterexample, consider in $\mathbb{R}^2$ the frame
$\mathbf v_1=e_1$, $\mathbf v_2=e_2$, $\mathbf v_3=e_1+e_2$.
A generic dual frame $(\mathbf y_i)_{i=1}^3\subset\mathbb{R}^2$ is then one that satisfies
\begin{equation}
    x_1 \mathbf y_1 + x_2 \mathbf y_2 + (x_1+x_2)\mathbf y_3=\binom{x_1}{x_2},
    \quad\forall x_1,x_2\in\mathbb{R}.
\end{equation}
This has general solution
\begin{equation}
    \mathbf y_1 = \binom{1-a}{-b},
    \quad
    \mathbf y_2 = \binom{-a}{1-b},
    \quad
    \mathbf y_3 = \binom{a}{b},
\end{equation}
for all $a,b\in\mathbb{R}$.
Following the weighted frames strategy we instead consider the frame $\{w_i \mathbf v_i\}_{i=1}^3$, thus the weighted frame operator is $S_w=\sum_{i=1}^3 w_i^2 \mathbf v_i \mathbf v_i^T$, and the corresponding duals are $\mathbf y^{(w)}_i=w_i^2 S_w^{-1}(\mathbf v_i)$.
In particular, we find
$\mathbf y_3^{(w)}=\frac{w_3^2}{D}(w_2^2,w_1^2)$, where $D\equiv w_1^2 w_2^2 + w_1^2 w_3^2 + w_2^2 w_3^2$.
Thus we can only recover the solutions for $\mathbf y$ with $a,b\ge0$, showing that not all duals can be obtained as canonical dual of some weighted frame.

\subsection{Operator frames}%
\label{sec:frameTheory_opFrames}

We now specialize the general frame formalism to families of Hermitian operators.
As long as we stick to operators acting on finite dimensional spaces the theory does not change significantly, but it is worth showing explicitly what the main objects look like in this case.
We will go back and analyze in detail the case of infinite-dimensional systems in~\cref{sec:estTheory_infDim}.

\paragraph{Basic definitions}
Let $\calH=\mathbb{C}^d$, and let $\Herm(\mathbb{C}^d)$ denote the space of Hermitian operators on $\calH$. Equipped with the Hilbert-Schmidt inner product
$
    \langle A,B\rangle_2 \equiv \trace(AB),
$
$\Herm(\mathbb{C}^d)$ is a $d^2$-dimensional real Hilbert space.
We will refer to a frame for $\Herm(\mathbb{C}^d)$ as an \textit{operator frame}. Thanks to the natural isomorphism $\Herm(\CC^d)\simeq \CC^{d^2}$, all results of \cref{sec:frameTheory} apply verbatim, for both discrete and continuous families.
In the special case where the operators composing the frame are POVM elements, we will talk of a \textit{measurement frame}.

Given an operator frame $\{A(\alpha)\}_{\alpha\in\outcomesSet}$ with reference measure $\lambda$, the associated analysis operator is the map $T:\Herm(\mathbb{C}^d)\to L^2(\outcomesSet,\lambda)$ defined by
\begin{equation}
    (TX)(\alpha)=\langle A(\alpha),X\rangle_2=\trace[A(\alpha)X].
\end{equation}
Its adjoint, the synthesis operator, is a map $T^*:L^2(\outcomesSet,\lambda)\to \Herm(\mathbb{C}^d)$ defined by
\begin{equation}
    T^*f=\int_{\outcomesSet} \rmd\nu(\alpha)\, f(\alpha)\,A(\alpha).
\end{equation}
The integral becomes a sum in the case of finite operator frames, in which $\nu$ is naturally the discrete measure.

If $\opFrameA$ has frame bounds $0<A\le B<\infty$, then
\begin{equation}\label{eq:opFrame_frameCondition}
\begin{aligned}
    A\|X\|_2^2 \le \|TX\|_{L^2}^2 \le B\|X\|_2^2,
    \\
    \|TX\|_{L^2}^2 =
    \int_\outcomesSet \rmd\nu(\alpha)\, \langle A(\alpha),X\rangle_2^2,
\end{aligned}   
\end{equation}
for all $X\in\Herm(\mathbb{C}^d)$.
Given a general family $\{A(\alpha)\}_{\alpha\in\outcomesSet}$ one or both frame bounds may fail to exist, and the properties of analysis and synthesis operators are directly tied to the existence of $A>0$ and $B<\infty$. 

\paragraph{Interpretation of lower and upper frame bounds}
The upper inequality in~\cref{eq:opFrame_frameCondition} is precisely the
\textit{Bessel condition}: a family satisfying it, regardless of whether it has a lower frame bound, is called a Bessel family.
The Bessel condition is equivalent to the analysis operator \(T\) being a well-defined bounded operator \(T:\Herm(\mathbb C^d)\to L^2(\outcomesSet,\nu)\), and to the the synthesis operator being a bounded operator defined on all of \(L^2(\outcomesSet,\nu)\).
Assuming this Bessel condition, that is, the existence of a finite upper frame bound $B$, the existence of a strictly positive lower frame bound $A>0$ is equivalent to
\(T\) being bounded below, namely
\begin{equation}
    \|TX\|_{L^2}\ge \sqrt A\,\|X\|_2,
    \quad \forall X\in\Herm(\mathbb C^d).
\end{equation}
For bounded operators between Hilbert spaces, this is equivalent to \(T\)
being injective with closed range, and equivalently, by the closed range
theorem, to \(T^*\) being surjective. Since here the domain
\(\Herm(\mathbb C^d)\) is finite dimensional, the closed-range condition is
automatic, so the lower frame bound is equivalent simply to injectivity of
\(T\). Thus the existence of an upper frame bound guarantees stable boundedness of the analysis and synthesis maps, while the lower frame bound guarantees that no nonzero
operator is invisible to the family.

\paragraph{Frame redundancy}
Redundancy is instead detected by the kernel of the synthesis map: a
nonzero \(f\in L^2(\outcomesSet,\nu)\) such that \(T^*f=0\) is a nontrivial
linear dependence among the frame elements. In the discrete case, absence
of such redundancies corresponds to the usual notion of a Riesz basis: a
complete sequence obtained from an orthonormal basis by a bounded
invertible change of coordinates, equivalently a frame whose synthesis
operator is injective.

The frame operator, defined as $\frameOp \equiv T^*T$, is an operator $\Herm(\mathbb{C}^d)\to\Herm(\mathbb{C}^d)$, hence a quantum map.
Explicitly,
\begin{equation}
\begin{gathered}
    \frameOp(X)
    =
    \int_{\outcomesSet} \rmd\nu(\alpha)\,
    \PP(A(\alpha))(X),
    \\
    \PP(A)(X)\equiv \langle A,X\rangle_2\,A.
\end{gathered}
\end{equation}
The frame condition in terms of $\frameOp$ becomes the operator inequality $A I \le \frameOp\le BI$.

\subsection{Measurement frames}%
\label{sec:frameTheory_measFrames}

We now further specialize from generic frames of linear operators, to the case of operator frames whose elements are the effects of some POVM.

\paragraph{General measure-theoretic formulation of POVMs}
A general measurement on a Hilbert space $\calH$ is described by a positive-operator-valued measure (POVM) $\mu$ on a measurable outcome space $(\outcomesSet,\calB(\outcomesSet))$, i.e. a map from the collection of subsets of outcomes to positive-semidefinite operators $\mu:\calB(\outcomesSet)\to\on{Pos}(\calH)$ such that $\mu(\emptyset)=0$, $\mu(\outcomesSet)=I$, and $\mu$ is countably additive in the weak operator sense: for any disjoint $\{E_k\}_{k\in\mathbb N}\subset\calB(\outcomesSet)$ and any $\psi,\varphi\in\calH$ one has
$
\langle \psi|\mu(\cup_k E_k)|\varphi\rangle=\sum_k \langle \psi|\mu(E_k)|\varphi\rangle.$
Given a state $\rho$ (i.e. a positive semidefinite $\rho\ge 0$ with $\tr\rho=1$), the Born rule produces the outcome probability measure $p_\rho(E)=\trace[\rho\,\mu(E)]$ for any measurable $E\in\calB(\outcomesSet)$. Discrete measurements correspond to $\outcomesSet$ finite or countable with $\mu(E)=\sum_{b\in E}\mu_b$ for maps $\{\mu_b\}_{b\in\outcomesSet}\subset\on{Pos}(\calH)$ satisfying $\sum_{b}\mu_b=I$, called \emph{effects}~\cite{ali2000coherent,holevo2011ProbabilisticStatisticalAspects}.
More generally, a POVM $\mu$ can be modeled via uncountably many outcomes. In practice one often specifies such measurements by choosing a ``reference'' scalar measure $\lambda$ on the outcome space (typically the Lebesgue measure for real-valued outcomes) and an operator-valued density $A(\alpha)\ge 0$ such that the effects can be written as the integral $\mu(E)=\int_E A(\alpha)\,\rmd\lambda(\alpha)$.\

\paragraph{POVMs in finite dimensions}
In finite dimension, one can always choose the canonical \textit{trace measure} $\nu(E)\equiv \trace[\mu(E)]$~\cite{scott2006tight,holevo2011ProbabilisticStatisticalAspects}.
This is a genuine (finite) scalar measure because $\mu(E)\ge 0$ implies $\nu(E)\ge 0$, and $\nu(\outcomesSet)=\trace[\mu(\outcomesSet)]=\tr I=d<\infty$ for $\calH=\mathbb C^d$.
Moreover, $\nu(E)=0$ forces $\mu(E)=0$ (a positive operator with zero trace must be the zero operator), which guarantees by the Radon--Nikodym theorem that $\mu$ can be written in the normalized form $\mu(E)=\int_E F(\alpha)\,\rmd\nu(\alpha)$, $F(\alpha)\ge 0$, $\trace[F(\alpha)]=1$, $\nu$-almost everywhere.
In other words, all POVMs can be written as an average of unit-trace positive operators. In infinite dimension this convenient normalization may fail because $\tr I=\infty$ and there may be physical measurements with $\trace[\mu(E)]=\infty$ for sets $E$ of nonzero measure.

In finite dimensions, a POVM $\mu:\calB(\outcomesSet)\to\Herm(\mathbb{C}^d)$ is informationally complete if and only if its effects span $\Herm(\mathbb{C}^d)$, and this is equivalent to the existence of a lower frame bound~\cite[Proposition 10]{scott2006tight}, which in turn is equivalent to the associated frame superoperator being injective.
Thus in finite dimensions a POVM being IC is equivalent to it having a lower frame bound for the same reason a linear operator acting on a finite-dimensional space is injective iff it is bounded below.

\paragraph{POVMs in finite dimensions always have an upper frame bound}
We already mentioned in~\cref{sec:frameTheory_basics} that in finite dimensions, all spanning sequences have a lower frame bound, but may lack an upper frame bound if infinite.
In the special case of measurement frames, we will now show that, by contrast, measurement frames always have an upper frame bound.

In the discrete case, for a POVM $\{\mu_b\}_{b\in\mathbb N}$ with $\mu_b\ge 0$ and $\sum_b \mu_b=I$, one has
\begin{equation}
    \langle X,\mu_b\rangle_2^2
    \le
    \trace(X^2)\,\trace(\mu_b^2)
    \le
    \trace(X^2)\,\trace(\mu_b),
\end{equation}
hence
$\sum_{b\in\mathbb N} \langle X,\mu_b\rangle_2^2 \le d\,\|X\|_2^2$,
and the upper frame bound is $B\le d$ (depending on the POVM, the smaller upper frame bound might be smaller than $d$).

The same argument does not seamlessly extend to general outcome spaces because for POVM densities $\trace(\mu_b^2)\le \trace(\mu_b)$ does not necessarily hold.
We can however adjust the above argument by \textit{rescaling} the POVM effects.
Indeed, for an injective state $\sigma$, define the finite positive measure
$\nu_\sigma(E) \equiv \trace[\sigma\mu(E)]$.
Since $\sigma\ge \lambda_{\rm min}(\sigma)I$, if $\nu_\sigma(E)=0$ then
\begin{equation}
    0=\trace[\sigma\mu(E)]
    \ge
    \lambda_{\rm min}(\sigma)\trace[\mu(E)],
\end{equation}
hence $\trace[\mu(E)]=0$, and therefore $\mu(E)=0$. Thus $\mu$ is absolutely continuous with respect to $\nu_\sigma$: $\mu\ll\nu_\sigma$.
By the Radon-Nikodym theorem, there must then exist a measurable operator-valued density $F_\sigma(\alpha)$ such that
\begin{equation}
    \mu(E)=\int_E F_\sigma(\alpha)\,\rmd\nu_\sigma(\alpha).
\end{equation}
Moreover, for every measurable $E$,
\begin{equation}
    \nu_\sigma(E)
    =
    \trace[\sigma\mu(E)]
    =
    \int_E \trace[\sigma F_\sigma(\alpha)]\,\rmd\nu_\sigma(\alpha),
\end{equation}
so $\trace[\sigma F_\sigma(\alpha)]=1$ for $\nu_\sigma$-almost every $\alpha$. Then, for any Hermitian $X$,
\begin{equation}
    \trace[X F_\sigma(\alpha)]
    =
    \trace[(\sigma^{1/2}F_\sigma(\alpha)^{1/2})(F_\sigma(\alpha)^{1/2}X\sigma^{-1/2})],
\end{equation}
and therefore by Cauchy--Schwarz,
\begin{equation}
\begin{aligned}
    \langle X,F_\sigma(\alpha)\rangle_2^2
    &\le
    \trace[\sigma F_\sigma(\alpha)]\,\trace[\sigma^{-1}X F_\sigma(\alpha) X]
    \\&=
    \trace[\sigma^{-1}X F_\sigma(\alpha) X],
\end{aligned}
\end{equation}
and integrating over $\outcomesSet$ gives
\begin{equation}
\begin{gathered}
    \int_{\outcomesSet} \langle X,F_\sigma(\alpha)\rangle_2^2\,\rmd\nu_\sigma(\alpha)
    \le
    % \trace\!\left[\sigma^{-1}X\!\left(\int_{\outcomesSet}F_\sigma(\alpha)\,\rmd\nu_\sigma(\alpha)\right)\!X\right]
    % \\=
    \trace[\sigma^{-1}X^2]
    \le
    \frac{1}{\lambda_{\rm min}(\sigma)}\|X\|_2^2.
\end{gathered}
\end{equation}
Thus the $\sigma$-rescaled measurement frame has upper frame bound $B_\sigma\le 1/\lambda_{\rm min}(\sigma)$ also in the general case with non-discrete outcomes.
In particular, taking $\sigma=I/d$, we recover the previous upper bound $B\le d$.

We conclude that in finite dimensions, every IC-POVM, rescaled with some prior injective state $\sigma$, has both lower and upper frame bounds, and hence defines a measurement frame. If a POVM is not informationally complete, the same formalism still applies after restricting attention to the span of its effects; of course, in such cases, it is impossible to estimate observables outside that subspace.

In infinite dimensions, these arguments break, and both lower and upper frame bounds may fail to exist. Note for example that in infinite dimensions $\lambda_{\rm min}(\sigma)=0$ for any faithful $\sigma$, which would give a useless bound $B_\sigma <\infty$. In fact, the whole Hilbert-space formulation we just outlined may fail entirely since the measurement effects need not be Hilbert-Schmidt. We will show in ~\cref{sec:estTheory_infDim} that these issues can at least partially be fixed using different inner product and building a frame theory in the associated Hilbert space.

\section{Estimation theory in finite dimensions}%
\label{sec:estTheoryFinDim}

In finite dimensions, the standard construction of unbiased estimators for IC-POVMs can be naturally expressed in terms of the Hilbert-Schmidt inner product of the rescaled measurement frames introduced in~\cref{sec:frameTheory_measFrames}~\cite{scott2006tight,dariano2007OptimalDataProcessing,zhu2014QuantumStateEstimation,innocenti2023shadow}. In~\cref{sec:estTheoryFinDim_obsEst} we show that unbiased estimators can be characterized as the coefficients obtained decomposing observables in terms of POVM effects.
In~\cref{sec:estTheory_finDim_frameFormulation} we recast this characterization in frame-theoretic terms, expressing unbiased estimators in terms of the rescaled POVM and its dual frames.
Finally, in~\cref{sec:estTheory_finDim_rescaledFrames} we show that, for a fixed reference state $\sigma$, the canonical dual of the rescaled frame yields the unique unbiased estimator minimizing the averaged variance over ensembles with mean $\sigma$. For sufficiently symmetric POVMs, this recovers the familiar dimension-independent scaling encountered in standard shadow tomography~\cite{huang2020PredictingManyProperties}. This finite-dimensional construction will also serve as the point of comparison for the infinite-dimensional discussion in~\cref{sec:estTheory_infDim}.

\subsection{Unbiased observable estimators}%
\label{sec:estTheoryFinDim_obsEst}

Let $\bs\mu\equiv (\mu_b)_{b=1}^{\nout}$ be an IC-POVM on a $d$-dimensional Hilbert space, with $\nout\ge d^2$ outcomes. Fix a Hermitian target observable $\calO$ whose expectation value we want to estimate from measurement data.
Since $\bs\mu$ is IC, it spans the real vector space of Hermitian observables $\Herm(\CC^d)$. Therefore there exists at least one function $\hat o_\calO:\{1,\dots,\nout\}\to\mathbb{R}$ such that
\begin{equation}\label{eq:finDim_obsDecomp}
    \calO = \sum_b \hat o_\calO(b)\mu_b.
\end{equation}
Taking the Hilbert-Schmidt inner product with an arbitrary state $\rho$ gives
\begin{equation}\label{eq:finDim_expvalDecomp}
    \langle\calO,\rho\rangle_2
    = \sum_b \hat o_\calO(b)\langle\mu_b,\rho\rangle_2
    = \mathbb{E}[\hat o_\calO|\rho].
\end{equation}
Thus $\hat o_\calO$ is an unbiased estimator for $\calO$.
Conversely, if a function $\hat o_\calO$ satisfies \cref{eq:finDim_expvalDecomp} for all states $\rho$, then \cref{eq:finDim_obsDecomp} must hold, since states span $\Herm(\CC^d)$ as a real vector space. Hence the unbiased estimators for $\calO$ are exactly the coefficient functions appearing in decompositions of the form \cref{eq:finDim_obsDecomp}.
When the POVM is overcomplete, the decomposition in \cref{eq:finDim_obsDecomp} is not unique, so a given observable admits infinitely many unbiased estimators.
In~\cref{sec:estTheory_finDim_frameFormulation} we will reformulate this nonuniqueness in frame-theoretic terms and then show how the canonical dual of the rescaled frame gives the minimum-variance choice.

The same characterization extends to general POVMs on finite-dimensional Hilbert spaces. If the POVM is described by a measurable density $\mu(\alpha)$ on an outcome space $\outcomesSet$ with respect to a reference measure $\nu$, then an unbiased estimator for $\calO$ is a measurable function $\hat o_\calO:\outcomesSet\to\mathbb{R}$ satisfying
\begin{equation}
    \calO=\int_{\outcomesSet}\rmd\nu(\alpha)\,\hat o_\calO(\alpha)\mu(\alpha),
\end{equation}
equivalently,
\begin{equation}
    \langle \calO,\rho\rangle_2
    =
    \int_{\outcomesSet}\rmd\nu(\alpha)\,
    \hat o_\calO(\alpha)\langle \mu(\alpha),\rho\rangle_2
    =
    \mathbb{E}[\hat o_\calO|\rho]
\end{equation}
for all states $\rho$. In this more general setting, the reconstruction is understood modulo $\nu$-null sets, so unbiased estimators are again precisely the coefficient functions appearing in operator reconstructions of $\calO$. From~\cref{sec:estTheory_finDim_frameFormulation} onward, we will adopt this general formulation, with the discrete case recovered by taking $\nu$ to be the counting measure.

\subsection{Frame-theoretic formulation}\label{sec:estTheory_finDim_frameFormulation}

We now rewrite the previous discussion using the rescaled measurement frames introduced in~\cref{sec:frameTheory_measFrames}.

Fix a faithful reference state $\sigma$, and consider the rescaled effects $\{\mu^{(\sigma)}(\alpha)\}_\alpha$, with $p_\sigma(\alpha)\equiv \langle \sigma,\mu(\alpha)\rangle_2$.
In terms of these rescaled effects, the frame condition in~\cref{eq:opFrame_frameCondition} reads
\begin{equation}
    A_\sigma \|X\|_2^2 \le \int_\outcomesSet\rmd\nu(\alpha)\, \frac{\langle \mu(\alpha),X\rangle_2^2}{p_\sigma(\alpha)} \le B_\sigma \|X\|_2^2,
\end{equation}
for some $0<A_\sigma\le B_\sigma<\infty$.
In finite dimensions, this is equivalent to $\mu$ being informationally complete. As in the previous literature, all duals and canonical duals below are understood with respect to the Hilbert-Schmidt inner product on this rescaled frame.

The corresponding analysis and synthesis operators are the maps
$T_\sigma:\Herm(\CC^d)\to L^2(\nu)$ and $T_\sigma^*:L^2(\nu)\to\Herm(\CC^d)$ given by
\begin{equation}
\begin{gathered}
    (T_\sigma X)(\alpha)
    = \frac{\langle \mu(\alpha),X\rangle_2}{\sqrt{p_\sigma(\alpha)}},
    \\
    T_\sigma^* h
    = \int_\outcomesSet \rmd\nu(\alpha)\,
    \frac{h(\alpha)\mu(\alpha)}{\sqrt{p_\sigma(\alpha)}}.
\end{gathered}
\end{equation}
The corresponding frame operator is the map
$\frameOp_\sigma:\Herm(\CC^d)\to\Herm(\CC^d)$ defined by
\begin{equation}
    \frameOp_\sigma(X)
    = \int_\outcomesSet \rmd\nu(\alpha)\,
    \frac{\langle \mu(\alpha),X\rangle_2\,\mu(\alpha)}{p_\sigma(\alpha)}.
\end{equation}

If $\{\mu^{(\sigma)}(\alpha)\}_\alpha$ is a frame, then $T_\sigma^*$ is surjective, so every observable $\calO\in\Herm(\CC^d)$ lies in its range.\footnote{For general non-discrete POVMs, there is a measure-theoretic subtlety, since the POVM density is only defined up to $\nu$-null sets. Thus the relevant object is, strictly speaking, the span modulo null sets rather than the literal pointwise span of the density. This excludes pathological cases, such as a density with $\mu(\alpha)=I$ for all $\alpha\in\mathbb{R}\setminus\mathbb{Q}$ and $\mu(\alpha)=\PP_0$ for $\alpha\in\mathbb{Q}$. Since such pathologies have no physical effect and do not alter measurement statistics, we will not consider these pathological cases.} Therefore, for any observable $\calO$, there exists some $h_\calO\in L^2(\nu)$ such that $\calO=T_\sigma^* h_\calO$.

Writing $h_\calO=\sqrt{p_\sigma}\,\hat o_\calO$, this is equivalently stated as the existence of some $\hat o_\calO\in L^2(p_\sigma\,\rmd\nu)$ such that
\begin{equation}\label{eq:findim_reconstruction_obs}
    \calO
    = T_\sigma^*(\sqrt{p_\sigma}\hat o_\calO)
    = \int_\outcomesSet \rmd\nu(\alpha)\,\hat o_\calO(\alpha)\mu(\alpha).
\end{equation}
Taking the Hilbert-Schmidt inner product with an arbitrary state $\rho$, \cref{eq:findim_reconstruction_obs} is equivalent to
\begin{equation}
    \langle\calO,\rho\rangle_2
    = \int_\outcomesSet \rmd\nu(\alpha)\,
    \hat o_\calO(\alpha)\langle \mu(\alpha),\rho\rangle_2,
\end{equation}
that is,
$\trace[\calO\rho]=\mathbb{E}[\hat o_\calO|\rho]$
for all states $\rho$.
Conversely, in finite dimensions this expectation-value identity implies \cref{eq:findim_reconstruction_obs}, since states span $\Herm(\CC^d)$ as a real vector space. Any $\hat o_\calO\in L^2(p_\sigma\,\rmd\nu)$ satisfying \cref{eq:findim_reconstruction_obs} is therefore an unbiased estimator for $\calO$, and conversely every observable $\calO$ admits at least one such estimator.

Equivalently, one may parametrize unbiased estimators in terms of dual frames of the rescaled POVM. A family $\{\tilde\mu^{(\sigma)}(\alpha)\}_\alpha$ is a dual frame of $\{\mu^{(\sigma)}(\alpha)\}_\alpha$ if
\begin{equation}
    \calO
    = \int_\outcomesSet \rmd\nu(\alpha)\,
    \langle \calO,\tilde\mu^{(\sigma)}(\alpha)\rangle_2\,
    \mu^{(\sigma)}(\alpha)
\end{equation}
for all $\calO\in\Herm(\CC^d)$. Rewriting this in terms of the original POVM effects gives
\begin{equation}
    \calO
    = \int_\outcomesSet \rmd\nu(\alpha)\,
    \hat o_\calO(\alpha)\,
    \mu(\alpha),
\end{equation}
where we defined the unbiased estimator,
\begin{equation}
    \hat o_\calO(\alpha)
    = \frac{\langle \calO,\tilde\mu^{(\sigma)}(\alpha)\rangle_2}{\sqrt{p_\sigma(\alpha)}}.
\end{equation}
For overcomplete POVMs, there are in general infinitely many possible estimators for a given observable. We discuss in~\cref{sec:estTheory_finDim_rescaledFrames} how to pick a canonical choice among them.

\subsection{Canonical duals of rescaled measurement frames}%
\label{sec:estTheory_finDim_rescaledFrames}

Among all possible unbiased estimators, a canonical choice is to pick out the one with the smallest possible variance. We will show here that this can be written explicitly in terms of the dual of the rescaled measurement frame~\cite{scott2006tight,innocenti2023shadow}.

Let the set of unbiased estimators for $\calO$ be
\begin{equation}
    \mathsf H_\calO^{(\sigma)}
    \equiv
    \left\{
        \hat o\in L^2(p_\sigma\,\rmd\nu)
        \,\middle|\,
        \calO=T_\sigma^*(\sqrt{p_\sigma}\hat o)
    \right\},
\end{equation}
and fix an ensemble of states $\calE(\sigma)$ with mean
$
    \mathbb{E}_{\rho\sim\calE(\sigma)}[\rho]=\sigma
$.
We want to find $\hat o\in \mathsf H_\calO^{(\sigma)}$ that minimizes the averaged variance
$\mathbb{E}_{\rho\sim\calE(\sigma)}\Var[\hat o|\rho]$.
Since unbiasedness fixes the first moment, minimizing the variance is equivalent to minimizing the second moment. Moreover, the second moment is linear in $\rho$, so averaging over $\rho\sim\calE(\sigma)$ reduces to evaluating it at the mean:
\begin{equation}
    \mathbb{E}_{\rho\sim\calE(\sigma)}\mathbb{E}[\hat o^2|\rho]
    =
    \mathbb{E}[\hat o^2|\sigma]
    =
    \int_\outcomesSet \rmd\nu(\alpha)\,\hat o(\alpha)^2 p_\sigma(\alpha).
\end{equation}
The minimization of the averaged variance reduces to the quadratic optimization problem
\begin{equation}
\begin{gathered}
    % \arg\min_{\hat o\in\mathsf H_\calO^{(\sigma)}}
    % \mathbb{E}_{\rho\sim\calE(\sigma)}\Var[\hat o|\rho]
    % =
    \arg\min_{\hat o\in\mathsf H_\calO^{(\sigma)}}
    \int_\outcomesSet \rmd\nu(\alpha)\,\hat o(\alpha)^2 p_\sigma(\alpha),
    \\
    \text{s.t.}\quad
    \calO=\int_\outcomesSet \rmd\nu(\alpha)\,\hat o(\alpha)\mu(\alpha).
\end{gathered}
\end{equation}

Writing $h=\sqrt{p_\sigma}\,\hat o$, this is in turn equivalent to
\begin{equation}\label{eq:finDim_optimProblem}
\begin{gathered}
    \arg\min_{h\in L^2(\nu)} \|h\|_{L^2(\nu)}^2,
    \,\,
    \text{s.t.}\quad
    T_\sigma^* h=\calO.
\end{gathered}
\end{equation}
\Cref{eq:finDim_optimProblem} has solution
\begin{equation}\label{eq:finDim_canonicalSolutionForh}
    h_\calO^\star = T_\sigma \frameOp_\sigma^{-1}(\calO),
\end{equation}
where $\frameOp_\sigma=T_\sigma^*T_\sigma$ is the frame operator.
The solution in~\cref{eq:finDim_canonicalSolutionForh} is special because it satisfies $h_\calO^\star\in \Range(T_\sigma)$.
Indeed, observe that for a generic $h$ satisfying $T_\sigma^*h=\calO$, we have $h-h_\calO^\star\in\ker(T_\sigma^*)$, and $\Range(T_\sigma) = (\ker T_\sigma^*)^\perp$. Thus
\begin{equation}
    \|h\|_{L^2(\nu)}^2
    =
    \|h_\calO^\star\|_{L^2(\nu)}^2
    +
    \|h-h_\calO^\star\|_{L^2(\nu)}^2
    \ge \| h_\calO^\star\|^2_{L^2(\nu)}.
\end{equation}
This proves that $h_\calO^\star$ is the unique minimum-norm solution.
More explicitly, $h_\calO^\star$ is given by
\begin{equation}
    h_\calO^\star(\alpha)
    =
    (T_\sigma \frameOp_\sigma^{-1}(\calO))(\alpha)
    =
    \frac{\langle \mu(\alpha),\frameOp_\sigma^{-1}(\calO)\rangle_2}{\sqrt{p_\sigma(\alpha)}}.
\end{equation}
Defining the canonical dual of the rescaled frame,
\begin{equation}
    \tilde\mu_{\mathrm{can}}^{(\sigma)}(\alpha)
    \equiv
    \frameOp_\sigma^{-1}\bigl(\mu^{(\sigma)}(\alpha)\bigr)
    =
    \frac{\frameOp_\sigma^{-1}(\mu(\alpha))}{\sqrt{p_\sigma(\alpha)}}.
\end{equation}
we have
$
    h_\calO^\star(\alpha)
    =
    \langle\calO,\tilde\mu_{\mathrm{can}}^{(\sigma)}(\alpha)\rangle_2
$, and the corresponding estimator is
\begin{equation}\label{eq:rescaled_unbiased_est}
    \hat o_\calO^{(\sigma)}(\alpha)
    =
    \frac{h_\calO^\star(\alpha)}{\sqrt{p_\sigma(\alpha)}}
    =
    % \frac{\langle\calO,\tilde\mu_{\mathrm{can}}^{(\sigma)}(\alpha)\rangle_2}{\sqrt{p_\sigma(\alpha)}}
    % =
    \frac{\langle\calO,\frameOp_\sigma^{-1}(\mu(\alpha))\rangle_2}{p_\sigma(\alpha)}.
\end{equation}
Thus the estimator obtained from the canonical dual of the rescaled frame is exactly the unique unbiased estimator that minimizes the averaged second moment, and therefore also the averaged variance, over all ensembles with mean $\sigma$.

The second moment of $\hat o_\calO^{(\sigma)}$ is
\begin{equation}\label{eq:secondmoment_vs_Finvsigma}
\begin{gathered}
    \mathbb{E}[(\hat o_\calO^{(\sigma)})^2|\rho]
    =
    % \int_\outcomesSet \rmd\nu(\alpha)\,
    % \frac{\langle\mu(\alpha),\rho\rangle_2}{p_\sigma(\alpha)^2}
    % \langle\calO,\frameOp_\sigma^{-1}(\mu(\alpha))\rangle_2^2
    \langle\calO,\calA_{\rho,\sigma}(\calO)\rangle_2,
    \\
    \calA_{\rho,\sigma}
    \equiv
    \frameOp_\sigma^{-1}\circ \calB_{\rho,\sigma}\circ \frameOp_\sigma^{-1},
    \\
    \calB_{\rho,\sigma}
    \equiv
    \int_\outcomesSet \rmd\nu(\alpha)\,
    \frac{\langle\mu(\alpha),\rho\rangle_2}{p_\sigma(\alpha)^2}\,
    \PP(\mu(\alpha)).
\end{gathered}
\end{equation}
If $\rho=\sigma$, then $\calB_{\sigma,\sigma}=\frameOp_\sigma$, and therefore
\begin{equation}
    \mathbb{E}[(\hat o_\calO^{(\sigma)})^2|\sigma]
    =
    \langle\calO,\frameOp_\sigma^{-1}(\calO)\rangle_2.
\end{equation}
By linearity in $\rho$, the same identity holds after averaging over any ensemble with mean $\sigma$:
\begin{equation}\label{eq:secondmoment_avg_vs_Finvsigma}
    \mathbb{E}_{\rho\sim\calE(\sigma)}
    \left[
        \mathbb{E}[(\hat o_\calO^{(\sigma)})^2|\rho]
    \right]
    =
    \langle\calO,\frameOp_\sigma^{-1}(\calO)\rangle_2.
\end{equation}

In particular,
\begin{equation}\label{eq:bound_variance_lambdamin}
    \mathbb{E}_{\rho\sim\calE(\sigma)}
    \Var[\hat o_\calO^{(\sigma)}|\rho]
    \le
    \langle\calO,\frameOp_\sigma^{-1}(\calO)\rangle_2
    \le
    \frac{\trace(\calO^2)}{\lambda_{\min}(\frameOp_\sigma)}.
\end{equation}
For $\sigma=I/d$, this gives the estimator that minimizes the variance on average when the input state is drawn uniformly at random. Recent works~\cite{saini2026CompletenessStabilityQuantum,saini2025CharacterizingFisherInformationa} emphasized the role of $\lambda_{\min}(\frameOp_{I/d})$ in reconstruction variance and stability, referring to it as the \emph{completeness stability} of the measurement.

\section{Estimation theory in infinite dimensions}%
\label{sec:estTheory_infDim}

As shown in~\cref{sec:estTheoryFinDim}, in finite dimensions informational completeness, invertibility of the frame operator, and existence of finite-variance unbiased estimators are all equivalent conditions.
In infinite dimensions, however, this stops being the case: a POVM may be formally informationally complete while at the same time not allow reconstruction of some observables with any finite amount of statistics; an observable may be approximable from measurement data without admitting an exact admissible estimator; and an observable being decomposable as linear combination of the POVM elements does not automatically provide an estimator that is unbiased for all input states.

The goal of this section is to develop a framework that explains what changes going to infinite dimensions, and characterize each of the new phenomena emerging in this context.
We first explain why the Hilbert-Schmidt construction used in finite dimensions does not provide a satisfactory general theory in infinite dimensions.
Given a faithful reference state \(\sigma\), we then replace the Hilbert-Schmidt observable space by a regularized Hilbert space \(\scrLcompletion\), whose topology is adapted to the tail of \(\sigma\).
Within this space, a POVM can be incomplete, complete but unstable, or a genuine measurement frame.
These three regimes correspond, respectively, to inaccessible observables, observables that are only weakly reconstructible with diverging estimator variance, and stably reconstructible observables.

Specifically, in~\cref{sec:estTheory_infDim_HSfailure} we identify the two basic obstructions to the Hilbert-Schmidt construction: failure of the upper frame bound and the absence of Hilbert-Schmidt representatives for many effects and observables.
In~\cref{sec:estTheory_infDim_weighted} we introduce the \(\sigma\)-regularized observable space $\scrLcompletion$, the associated rescaled effects, and the measure-theoretic formulation needed for POVMs without bounded densities.
In~\cref{sec:reconstructibility_etc} we distinguish exact \(\sigma\)-reconstructibility, approximability, stable inversion, and the additional state-dependent conditions needed to interpret a reconstruction identity as an unbiased estimator.
In~\cref{sec:estTheory_canEst} we derive canonical and minimum-norm estimators, and compare the construction with the Hilbert-Schmidt one.
% Finally,~\cref{sec:estTheory_infDim_trichotomy,sec:trading_bias_for_variance} summarize the resulting trichotomy and explain how regularization trades bias for variance when the lower frame bound vanishes.

\subsection{Why the Hilbert-Schmidt construction fails}
\label{sec:estTheory_infDim_HSfailure}

To show where the Hilbert-Schmidt construction fails, we can consider the direct photocounting POVM of a single-mode bosonic system, \(\mu_n=\PP_n\) for \(n\ge0\), restricting our attention to states and observables that are diagonal in the Fock basis.
On this diagonal sector the measurement is informationally complete, so we should expect the frame construction to work without issues.

Given a faithful diagonal reference state \(\sigma=\sum_{n\ge 0}\sigma_n\PP_n\), the rescaled effects are \(g_\sigma(n)=\PP_n/\sqrt{\sigma_n}\).
Now let \(X=\sum_{n\ge 0}x_n\PP_n\) be a diagonal Hilbert-Schmidt operator.
Then
\begin{equation}
    \sum_{n\ge 0}\langle X,g_\sigma(n)\rangle_2^2
    =
    \sum_{n\ge 0}\frac{x_n^2}{\sigma_n}.
\end{equation}
Since \(\sigma\) is trace class, one has \(\inf_n \sigma_n=0\).
It follows that there is no finite upper frame bound \(B\) such that
\begin{equation}
    \sum_{n\ge 0}\langle X,g_\sigma(n)\rangle_2^2
    \le
    B\,\|X\|_2^2
\end{equation}
for all diagonal Hilbert-Schmidt \(X\).
Thus \(\{g_\sigma(n)\}_{n\ge 0}\) is not even a Bessel family, and hence certainly not a frame.

This is already problematic, because an estimation theory that fails on such a basic example would be too restrictive for all practical purposes.
Of course, one can still formally apply the frame-theoretic formulas and obtain the correct reconstruction.
In the present case, the formal inverse gives
\begin{equation}
    \frameOp_{\mathrm{HS},\sigma}^{-1} \calO
    =
    \sum_{n\ge 0}\mel{n}{\calO}{n}\,\sigma_n\,\PP_n,
\end{equation}
and therefore
$\hat o_\calO^{(\sigma)}(n)=\mel{n}{\calO}{n}$, which is exactly the natural estimator one would expect.
However, the frame operator $\frameOp_{\mathrm{HS},\sigma}$ thus defined is unbounded, as is the analysis operator $T_{\mathrm{HS},\sigma}$.
Thus the usual frame machinery, which relies on bounded analysis operators that give well-defined coefficient sequences for all observables, cannot be applied on the full Hilbert-Schmidt observable space, even though the resulting algebraic estimator happens to be the expected one in this simple diagonal example.
This makes it difficult to obtain general and stable variance guarantees from the Hilbert-Schmidt construction, and motivates replacing the Hilbert-Schmidt topology by one adapted to the reference state \(\sigma\).

% \subsubsection{Non-HS effects and unbounded observables}%
% \label{sec:estTheoryInfDim_HSfails_nonHSeffects}

The photocounting example shows that the Hilbert-Schmidt construction can fail even when all POVM effects are orthogonal rank-one projectors.
There is, however, an even more fundamental obstruction: in infinite dimensions, neither the effects used to describe a measurement nor the observables one wants to estimate need belong to the Hilbert-Schmidt class.

Let \(\mathcal B(\mathcal H)\) denote the bounded operators on
\(\mathcal H\), and let \(\mathcal B_2(\mathcal H)\) denote the
Hilbert-Schmidt class,
\begin{equation}
    \mathcal B_2(\mathcal H)
    \equiv
    \left\{
        A\in \mathcal B(\mathcal H)
        :
        \|A\|_2^2 \equiv \trace(A^\dagger A)<\infty
    \right\}.
\end{equation}
The space \(\mathcal B_2(\mathcal H)\) is a Hilbert space with inner product
\(\langle A,B\rangle_2=\trace(A^\dagger B)\), but it is too small
to contain many operators that are unavoidable in infinite-dimensional
quantum mechanics. By contrast, \(\mathcal B(\mathcal H)\) is large enough to accommodate the effects of any physical POVM, but it is not a Hilbert space
with respect to the Hilbert-Schmidt inner product and carries no canonical
Hilbert-space structure compatible with the Hilbert-Schmidt pairing on
\(\mathcal B_2(\mathcal H)\).

This distinction is already visible on the single-mode Fock space
\(\mathcal H=\on{span}\{|n\rangle:n\geq 0\}\). The identity
\(I=\sum_{n\geq 0} \mathbb{P}_n\), with \(\mathbb{P}_n=|n\rangle\langle n|\), is bounded but
not Hilbert-Schmidt. The number operator
$\hat N=\sum_{n\geq 0} n \mathbb{P}_n$
is not even bounded, and therefore does not belong to \(\mathcal B(\mathcal H)\).
Nevertheless, \(\hat N\) is a perfectly meaningful target observable on
states with finite energy, and its powers are meaningful on states with
sufficiently fast decaying photon-number tails. Thus the Hilbert-Schmidt space is
simultaneously too small for some measurement effects and too uniform to
encode the state-dependent tail assumptions under which unbounded observables
can be estimated.

The same issue appears from the measurement side. A POVM effect \(\mu(E)\) is bounded for every measurable outcome set \(E\), but the operator-valued density used to represent a continuous-outcome POVM need not be Hilbert-Schmidt.
In idealized cases, the density may even fail to be a bounded operator.
The standard example is homodyne detection, whose formal density is $\mu(x,\theta)=|x,\theta\rangle\langle x,\theta|$,
where \(|x,\theta\rangle\) is a generalized quadrature eigenstate. These objects are distributional rather than normalizable vectors in \(\mathcal H\). The corresponding POVM is well defined only after integration over measurable sets of outcomes.
% The bounded-density assumption excludes idealized POVM densities such as homodyne detection, where \(\mu(x,\theta)=\pi^{-1}|x,\theta\rangle\langle x,\theta|\) is only defined distributionally.
% For such POVMs the same estimates should be understood either by first smearing the measurement with finite resolution and then taking the weak limit, or by proving the corresponding Bessel bound directly in the generalized-eigenstate representation.
% We will use the latter route in~\cref{sec:homodyne}.
For readability, we first present the construction in the common case where the POVM admits a bounded operator-valued density.
The more measure-theoretic general formulation, which also covers spectral measures and distributional densities such as ideal homodyne detection, is discussed later in~\cref{sec:measure_theoretic_formulation}.

In the rest of this section we therefore formulate a general theory that can handle arbitrary bounded POVM effects, or operator-valued densities, while also accounting for prior information about the states to be estimated.

\subsection{The \texorpdfstring{$\sigma$}{sigma}-regularized space}
\label{sec:estTheory_infDim_weighted}

We now introduce a new Hilbert space construction that fixes the pathologies outlined in~\cref{sec:estTheory_infDim_HSfailure}.
This construction depends on a faithful reference state \(\sigma\), which fixes the tail behavior with respect to which observables and estimator coefficients are required to be square-integrable.
We first present the construction assuming individual POVM elements $\mu(\alpha)$ are well-defined bounded operators, as this is sufficient for discrete POVMs and many continuous POVMs with bounded densities.
The more general measure-theoretic formulation is deferred to~\cref{sec:measure_theoretic_formulation}.

In~\cref{sec:sigma_defL2space} we define the \(\sigma\)-regularized inner product and the corresponding observable space \(\scrLcompletion\).
In~\cref{sec:sigma_rescaledEffects} we introduce the corresponding rescaled POVM
effects, together with the associated analysis, synthesis, and frame
operators.
In~\cref{sec:estTheory_infDim_upperFrameBound} we prove that every rescaled POVM with bounded density has a unit upper frame bound in this space.
In~\cref{sec:measure_theoretic_formulation} we then reformulate the construction directly at the level of the POVM measure, so that the same framework also applies to spectral measures and distributional densities such as ideal homodyne detection.
% Finally, in~\cref{sec:estTheory_infDim_dependenceOnSigma} we discuss the dependence of the resulting topology, reconstructible observables, and
% admissible coefficients on the choice of reference state.

\subsubsection{Definition of \texorpdfstring{$\scrLcompletion$}{L2(sigma)}}%
\label{sec:sigma_defL2space}

Fix a reference state \(\sigma\).
If \(\sigma\) is not faithful, the construction only controls expectation values on \(\operatorname{supp}(\sigma)\).
We therefore restrict all operators, states, and POVM effects to this support; on the reduced Hilbert space, \(\sigma\) is faithful by definition.
If \(\operatorname{supp}(\sigma)\) is finite-dimensional, this reduces to the finite-dimensional setting of~\cref{sec:estTheoryFinDim}.
We therefore focus on the genuinely infinite-dimensional case and assume from now on that \(\sigma\) is faithful with infinite-dimensional support.

Let \(\boundedOps_h\) denote the real vector space of bounded Hermitian operators.
The standard Hilbert-Schmidt inner product \(\langle A,B\rangle_2\equiv \trace(AB)\) is not well defined on \(\boundedOps_h\), because not all bounded operators are Hilbert-Schmidt.
For example, \(\langle I,I\rangle_2=\trace(I)=\infty\).
Nonetheless, for any faithful \(\sigma\), we can equip \(\boundedOps_h\) with the real symmetric inner product~\cite{holevo2011ProbabilisticStatisticalAspects,hayashiQuantumInformationTheory2017}
\begin{equation}
    \label{eq:def_sigma_inner_product}
    \langle A,B\rangle_\sigma
    \equiv
    \frac{1}{2}\trace[\sigma(AB+BA)]
    =
    \Re\trace[\sigma AB].
\end{equation}
The associated norm is
\begin{equation}
    \|A\|_\sigma^2
    =
    \langle A,A\rangle_\sigma
    =
    \trace[\sigma A^2].
\end{equation}
Because \(\sigma\) is faithful, \(\|A\|_\sigma=0\) implies \(A=0\), so \(\boundedOps_h\) is a real inner product state, that is not necessarily complete (a pre-Hilbert space).
We denote its completion by \(\scrLcompletion\).

The space \(\scrLcompletion\) can be understood as the observable space regularized by \(\sigma\): it contains all bounded Hermitian operators, as well as many observables that are unbounded in operator norm but
square-integrable with respect to \(\sigma\).
For example, if \(\sigma\propto\sum_{n=0}^\infty 2^{-n}\PP_n\), then the number operator \(\hat N=\sum_{n=0}^\infty n\,\PP_n\), although unbounded, belongs to
\(\scrLcompletion\), because \(\trace[\sigma\hat N^2]<\infty\). The same holds for every power \(\hat N^k\).
It is worth noting that the completion \(\scrLcompletion\) is, a priori, an abstract real Hilbert space: bounded Hermitian operators embed densely into it, and certain unbounded observables, such as functions of $\hat N$ with finite \(\sigma\)-second moment, admit canonical representatives in this completion; however not all elements of $\scrLcompletion$ need have a clean interpretation as well-defined operators.

\subsubsection{Rescaled effects and frame construction}
\label{sec:sigma_rescaledEffects}

We now use the \(\sigma\)-inner product to turn a POVM into a family of vectors in \(\scrLcompletion\).
The normalization by \(p_\sigma\) is chosen so that square integrability of coefficients is exactly the second-moment condition under the reference state.

Let \(\mu:\mathscr B(\outcomesSet)\to\calB_h(\calH)\) be a POVM with bounded operator-valued density
\(\mu(\alpha)\) with respect to a reference measure \(\nu\).
Here $\mathscr B(\outcomesSet)$ is the Borel $\sigma$-algebra on the outcomes space $\outcomesSet$, and $\calB_h(\calH)$ is the space of Hermitian bounded operators on $\calH$.
Define the probability distribution for a reference state $\sigma$:
\begin{equation}
    p_\sigma(\alpha)\equiv \trace[\sigma\,\mu(\alpha)].
\end{equation}
Since \(\mu(\alpha)\ge 0\) and \(\sigma\) is faithful, \(p_\sigma(\alpha)=0\)
implies \(\mu(\alpha)=0\).
We can therefore assume $p_\sigma(\alpha)>0$ for $\nu$-almost every $\alpha$.
Define the rescaled effects by
\begin{equation}\label{eq:def_galpha}
    g_\sigma(\alpha)\equiv
    \frac{\mu(\alpha)}{\sqrt{p_\sigma(\alpha)}}.
\end{equation}
From 
$0\le \mu(\alpha)^2\le \|\mu(\alpha)\|_\infty\,\mu(\alpha)$, we get
\begin{equation}
    \|g_\sigma(\alpha)\|_\sigma^2
    =
    \frac{\trace[\sigma\,\mu(\alpha)^2]}{p_\sigma(\alpha)}
    \le
    \|\mu(\alpha)\|_\infty,
\end{equation}
and thus \(g_\sigma(\alpha)\in\scrLcompletion\) for \(\nu\)-almost every
\(\alpha\).
In the common case in which \(\mu(\alpha)\le I\) pointwise --- for instance for discrete POVMs --- this simplifies to \(\|g_\sigma(\alpha)\|_\sigma\le 1\) for all $\alpha$.

We say that \(\{g_\sigma(\alpha)\}\) is a frame for \(\scrLcompletion\) if
there exist constants \(0<A_\sigma\le B_\sigma<\infty\) such that
\begin{equation}\label{eq:frame_bounds_sigma}
    A_\sigma \|X\|_\sigma^2
    \le
    \int_{\outcomesSet}\rmd\nu(\alpha)\,
    \langle X,g_\sigma(\alpha)\rangle_\sigma^2
    \le
    B_\sigma \|X\|_\sigma^2
\end{equation}
for all \(X\in\scrLcompletion\).
The constants $A_\sigma$ and $B_\sigma$ need not coincide with the frame bounds obtained from the Hilbert-Schmidt inner product.
Indeed, as already suggested in~\cref{sec:estTheory_infDim_HSfailure}, a family may have no Hilbert-Schmidt upper frame bound while having a well-defined upper frame bound in the $\sigma$-regularized topology.
We will return to this point in~\cref{sec:comparison_fixed_vs_sigma_ip}.

The analysis and synthesis operators are the maps $T_\sigma:\scrLcompletion\to L^2(\outcomesSet,\nu)$, and $T_\sigma^\ast:L^2(\outcomesSet,\nu)\to\scrLcompletion$, formally given by
\begin{equation}\label{eq:def_cont_analysis_synth_ops}
\begin{gathered}
    (T_\sigma X)(\alpha)
    =
    \langle X,g_\sigma(\alpha)\rangle_\sigma,
    \\
    T_\sigma^\ast f
    =
    \int_{\outcomesSet}\rmd\nu(\alpha)\,
    g_\sigma(\alpha)\,f(\alpha).
\end{gathered}
\end{equation}
The integral in the synthesis operator is understood weakly in \(\scrLcompletion\), i.e. through the identity
\begin{equation}
    \langle Y,T_\sigma^\ast f\rangle_\sigma
    =
    \int_{\outcomesSet}\rmd\nu(\alpha)\,
    \langle Y,g_\sigma(\alpha)\rangle_\sigma\,f(\alpha)
\end{equation}
for all \(Y\in\scrLcompletion\).
Equivalently, every coefficient function \(h\in L^2(\outcomesSet,\nu)\) synthesizes a regularized observable
\begin{equation}
    \label{eq:sigma_general_synthesis_h}
    \calO_h
    \equiv
    T_\sigma^\ast h
    =
    \int_{\outcomesSet}
    \rmd\nu(\alpha)\,
    h(\alpha)g_\sigma(\alpha),
\end{equation}
where the equality is understood weakly in \(\scrLcompletion\).
Writing \(h(\alpha)=\sqrt{p_\sigma(\alpha)}\,\hat o(\alpha)\), or equivalently
\(\hat o\in L^2(\outcomesSet,p_\sigma\rmd\nu)\), this becomes
\begin{equation}
    \label{eq:sigma_general_reconstruction_ohat}
    \calO_{\hat o}
    =
    T_\sigma^\ast(\sqrt{p_\sigma}\,\hat o)
    =
    \int_{\outcomesSet}
    \rmd\nu(\alpha)\,
    \hat o(\alpha)\mu(\alpha).
\end{equation}
At this stage, \(\hat o\) is only an admissible coefficient function for the Hilbert-space reconstruction.
Whether it is an unbiased estimator, and for which input states it has finite moments, is a separate statistical question addressed in~\cref{sec:estTheory_infDim_biasAndMoments}.
The corresponding frame superoperator is \(\frameOp_\sigma\equiv T_\sigma^\ast T_\sigma\), i.e.
\begin{equation}
    \label{eq:def_Fsigma_infdim_clean}
    \frameOp_\sigma(X)
    =
    \int_{\outcomesSet}\rmd\nu(\alpha)\,
    g_\sigma(\alpha)\,\langle g_\sigma(\alpha),X\rangle_\sigma,
\end{equation}
again understood weakly in \(\scrLcompletion\).

If $\{g_\sigma(\alpha)\}$ is not a frame, analysis and synthesis operators need not be bounded.
However, as we will show in~\cref{prop:automatic_upper_frame_bound_sigma}, the \(\sigma\)-regularized construction automatically guarantees an upper frame bound, thus the only obstruction to stable reconstruction is the possible absence of a positive lower frame bound.

The dependence on the reference state is controlled by equivalence of the corresponding weighted norms.
Indeed, if two faithful states \(\rho\) and \(\sigma\) satisfy
\(m\sigma\le \rho\le M\sigma\) for some \(0<m\le M<\infty\), then
\begin{equation}
    \sqrt m\,\|A\|_\sigma
    \le
    \|A\|_\rho
    \le
    \sqrt M\,\|A\|_\sigma
\end{equation}
for all \(A\in\boundedOps_h\).
In particular, existence of upper and lower frame bounds is then the same for \(\rho\) and \(\sigma\).

In finite dimensions every pair of faithful states is comparable in this sense, whereas in infinite dimensions this can fail because the tails of the two states need not be comparable.
For example, on the diagonal subspace generated by \(\{\PP_n\}_{n\ge0}\), the states \(\rho\propto \sum_n 2^{-n}\,\PP_n\) and \(\sigma\propto \sum_n 4^{-n}\,\PP_n\) do not satisfy \(\rho\le M\sigma\) for any finite \(M\).

Thus \(\sigma\) fixes not only the observable topology but also the coefficient topology \(L^2(\outcomesSet,p_\sigma\rmd\nu)\).
Thus, in infinite dimensions, changing \(\sigma\) can genuinely change the space of reconstructible observables and the admissible null estimators, even when the underlying POVM is fixed.

\subsubsection{Existence of the upper frame bound}%
\label{sec:estTheory_infDim_upperFrameBound}

The $\sigma$-regularized construction automatically ensures a finite upper frame bound:

\begin{proposition}%
\label{prop:automatic_upper_frame_bound_sigma}
    Let \(\mu\) be a POVM with bounded density and let \(\sigma\) be faithful.
    Then the rescaled family \(g_\sigma(\alpha)=\mu(\alpha)/\sqrt{p_\sigma(\alpha)}\) is a Bessel family (see \cref{sec:frameTheory_opFrames}) in \(\scrLcompletion\) with optimal upper frame bound \(B_\sigma=1\).
\end{proposition}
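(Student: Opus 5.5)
The plan is to prove the Bessel inequality
\[
\int_{\outcomesSet}\rmd\nu(\alpha)\,\langle X,g_\sigma(\alpha)\rangle_\sigma^2\le\|X\|_\sigma^2
\]
first on the dense subspace \(\boundedOps_h\), then extend to \(\scrLcompletion\) by continuity, and finally show that the constant \(1\) is attained.

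For bounded Hermitian \(X\), the key step is a Cauchy--Schwarz estimate in the positive semidefinite form \((A,B)\mapsto\trace[\sigma^{1/2}A^\dagger\mu(\alpha)B\sigma^{1/2}]\). We write
\[
\langle X,g_\sigma(\alpha)\rangle_\sigma=\frac{\Re\trace[\sigma X\mu(\alpha)]}{\sqrt{p_\sigma(\alpha)}}.
\]
Cyclicity gives \(\trace[\sigma X\mu(\alpha)]=\trace[(\mu(\alpha)^{1/2}X\sigma^{1/2})^\dagger\,\mu(\alpha)^{1/2}\sigma^{1/2}]\), which is legitimate because \(X\) and \(\mu(\alpha)\) are bounded and \(\sigma\) is trace class. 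Hilbert--Schmidt Cauchy--Schwarz then yields
\[
|\trace[\sigma X\mu(\alpha)]|^2\le\trace[\sigma^{1/2}X\mu(\alpha)X\sigma^{1/2}]\;\trace[\sigma\mu(\alpha)].
\]
Dividing by \(p_\sigma(\alpha)\) leaves
\[
\langle X,g_\sigma(\alpha)\rangle_\sigma^2\le\trace[\sigma X\mu(\alpha)X],
\]
using \(|\Re z|\le|z|\). This is the same argument as the finite-dimensional computation in \cref{sec:frameTheory_measFrames}, but with the roles arranged so that no \(\sigma^{-1}\) appears.

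Next we integrate over \(\alpha\). The map \(\alpha\mapsto\trace[\sigma X\mu(\alpha)X]\) is nonnegative, and its integral equals \(\trace[X\sigma X\,\mu(\outcomesSet)]=\trace[\sigma X^2]=\|X\|_\sigma^2\). To justify exchanging trace and integral, write \(X\sigma X=\sum_k s_k\ketbra{\psi_k}{\psi_k}\) as a positive trace-class operator. Then \(\trace[X\sigma X\mu(\alpha)]=\sum_k s_k\langle\psi_k|\mu(\alpha)|\psi_k\rangle\), and monotone convergence together with \(\int\rmd\nu\,\langle\psi|\mu(\alpha)|\psi\rangle=\|\psi\|^2\) (weak-operator normalization of the POVM) gives the identity. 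This establishes the bound on \(\boundedOps_h\). The linear map \(T_\sigma\) is then bounded with \(\|T_\sigma\|\le1\) on a dense subspace, so it extends uniquely to \(\scrLcompletion\). The inequality persists in the limit, and the extended map agrees with the pointwise coefficients \(\langle X,g_\sigma(\alpha)\rangle_\sigma\) almost everywhere along an \(L^2\)-convergent subsequence, since each \(g_\sigma(\alpha)\in\scrLcompletion\).

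For optimality, we take \(X=I\). Then \(\langle I,g_\sigma(\alpha)\rangle_\sigma=\sqrt{p_\sigma(\alpha)}\), so \(\int\rmd\nu\,p_\sigma=1=\|I\|_\sigma^2\). Hence \(B_\sigma=1\) exactly, and \(\|T_\sigma\|=\|T_\sigma^*\|=1\).

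I expect the main obstacle to be the measure-theoretic bookkeeping rather than the inequality itself: measurability of \(\alpha\mapsto\langle X,g_\sigma(\alpha)\rangle_\sigma\), the treatment of the null set where \(p_\sigma=0\) (there \(\mu(\alpha)=0\) and the integrand vanishes), and the trace–integral interchange. I would handle the interchange via the spectral decomposition and monotone convergence as sketched above.
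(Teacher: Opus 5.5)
Your proposal is correct and follows the paper's proof essentially step for step. Both use the factorization $\trace[\sigma X\mu(\alpha)]=\trace[(\mu(\alpha)^{1/2}X\sigma^{1/2})^\dagger\mu(\alpha)^{1/2}\sigma^{1/2}]$ with Hilbert--Schmidt Cauchy--Schwarz, then integrate against $\int\rmd\nu\,\mu(\alpha)=I$, and show saturation at $X=I$. Your additional bookkeeping spells out steps the paper leaves implicit: the trace--integral interchange via the spectral decomposition of $X\sigma X$ and monotone convergence, and the a.e.\ identification of the extended $T_\sigma$ with the pointwise coefficients.
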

\begin{proof}
    For \(X\in\boundedOps_h\), define the analysis operator $T_\sigma$ by
    \begin{equation}
        (T_\sigma X)(\alpha)\equiv \langle X,g_\sigma(\alpha)\rangle_\sigma.
    \end{equation}
    The operator \(T_\sigma\) is always a contraction from \(\boundedOps_h\) into \(L^2(\outcomesSet,\nu)\).
    Indeed, using \((\Re z)^2\le |z|^2\), positivity of \(\mu(\alpha)\), and the Hilbert-Schmidt Cauchy-Schwarz inequality, we have
    \begin{equation}
    \begin{aligned}
        \langle X,g_\sigma(\alpha)\rangle_\sigma^2
        &=
        \frac{(\Re\trace[\sigma X\mu(\alpha)])^2}{p_\sigma(\alpha)}
        \\
        &\le
        \frac{\bigl|\trace[(\mu(\alpha)^{1/2}X\sigma^{1/2})^\dagger
        (\mu(\alpha)^{1/2}\sigma^{1/2})]\bigr|^2}{p_\sigma(\alpha)}
        \\
        &\le
        \trace[\sigma X\mu(\alpha)X].
    \end{aligned}
    \end{equation}
    Integrating over \(\alpha\) and using
    \(\int_{\outcomesSet}\rmd\nu(\alpha)\,\mu(\alpha)=I\), we find
    \begin{equation}\label{eq:infdim_upperbound_analysisop}
        \int_{\outcomesSet}\rmd\nu(\alpha)\,
        \langle X,g_\sigma(\alpha)\rangle_\sigma^2
        \le
        \trace[\sigma X^2]
        =
        \|X\|_\sigma^2 .
    \end{equation}
    \Cref{eq:infdim_upperbound_analysisop} shows that \(\{g_\sigma(\alpha)\}\) is always a Bessel family with upper frame bound \(B_\sigma\le 1\).
    More precisely, \(B_\sigma=1\).
    Indeed, for \(X=I\) one has \(\langle I,g_\sigma(\alpha)\rangle_\sigma=\sqrt{p_\sigma(\alpha)}\), and therefore
    \begin{equation}
        \int_{\outcomesSet}\rmd\nu(\alpha)\,
        \langle I,g_\sigma(\alpha)\rangle_\sigma^2
        =
        \int_{\outcomesSet}\rmd\nu(\alpha)\,p_\sigma(\alpha)
        =
        1
        =
        \|I\|_\sigma^2.
    \end{equation}
\end{proof}

It follows that \(T_\sigma\) extends uniquely to a bounded operator
\(T_\sigma:\scrLcompletion\to L^2(\outcomesSet,\nu)\) with $\|T_\sigma\|\le 1$, and thus the synthesis
operator \(T_\sigma^\ast:L^2(\outcomesSet,\nu)\to\scrLcompletion\) is also
bounded with $\|T_\sigma^*\|\le 1$.
In particular, the \(\sigma\)-regularized construction automatically
gives a well-defined synthesis map for all coefficients in \(L^2(\outcomesSet,\nu)\).

For the photocounting PVM discussed in~\cref{sec:estTheory_infDim_HSfailure}, \(g_\sigma(n)=\mathbb{P}_n/\sqrt{\sigma_n}\). In the Hilbert-Schmidt topology this family failed to be Bessel, whereas in \(\scrLcompletion\) it has unit upper frame bound.
This clearly shows that using the regularized formalism is warranted even in such a simple example.

\subsubsection{Measure-theoretic formulation}%
\label{sec:measure_theoretic_formulation}

The density notation $\mu(\alpha)\rmd\nu(\alpha)$ is convenient but can be misleading for continuous-outcome measurements.
For example, for ideal homodyne detection the formal object $\ketbra{x,\theta}{x,\theta}$ is not a bounded operator-valued function of $x$, but rather an operator-valued distribution associated with the spectral measure of the quadrature operator $\hat x_\theta$.
We now reformulate our construction directly via the POVM measure, without using pointwise densities, so that analysis and synthesis maps remain meaningful even when no bounded pointwise density \(\mu(\alpha)\) exists.
This measure-theoretic extension is included for completeness, to show how results extend formally to cases like homodyne detection, but it is not pivotal to the main results of the paper.

POVMs are more generally defined as measures, and do not necessarily come with pointwise-defined operator-valued densities.
For a measurable outcome set \(E\subseteq\outcomesSet\), the effect \(\mu(E)\) is guaranteed to be a bounded positive operator satisfying \(0\le \mu(E)\le I\).
By contrast, symbols such as \(\mu(\alpha)\), when they are used, should be interpreted as pointwise density values rather than effects associated with individual
outcomes. These rely on a suitable prior choice of reference measure, are defined only almost everywhere with respect to that measure, and need not exist as bounded operators at all.

In the examples we consider, one can generally naturally introduce such a density. However, in some cases, such as for homodyne detection, this is only a formal generalized density, i.e. an operator-valued distribution.
This is the case for homodyne detection, where for any fixed local-oscillator phase \(\theta\), the POVM is described by the spectral measure \(\Pi_\theta\) of the quadrature operator \(\hat x_\theta\).
The operator \(\hat x_\theta\) is unbounded, and its spectral projections \(\Pi_\theta(\Delta)\), with
\(\Delta\subseteq\mathbb R\) Borel, are bounded projection operators.
% \gabri{define things}
The formal density \(\ketbra{x,\theta}{x,\theta}\) is not a bounded operator-valued function of \(x\), since \(\ket{x,\theta}\) is a generalized eigenvector rather than a Hilbert-space vector. It should therefore be understood as an operator-valued distribution.
Accordingly, expressions involving \(\ketbra{x,\theta}{x,\theta}\rmd x\) are only shorthand for integration
against the spectral measure \(\Pi_\theta(\rmd x)\).

Our reconstruction formalism can be written without choosing a density but instead operating directly at the level of the measure, which is why everything works even for cases like homodyne detection where the density is a distribution.
In this more general measure-theoretic formulation, the probability measure induced by the reference state $\sigma$ is
$p_\sigma(E)\equiv\trace[\sigma\mu(E)]$,
and for \(X\in\boundedOps_h\), we define the signed measure
\begin{equation}
    m_X^\sigma(E)
    \equiv
    \langle X,\mu(E)\rangle_\sigma
    =
    \Re\trace[\sigma X\mu(E)] .
\end{equation}
Assume that \(\nu\) is a \(\sigma\)-finite reference measure such that \(\mu\ll\nu\). This amounts to saying that if $\nu(E)=0$ then $\mu(E)=0$, i.e. operationally that the probability of finding any state in a zero-measure set is zero, which is a completely natural assumption from a physical standpoint.
Then also \(p_\sigma\ll\nu\). Moreover, since \(\sigma\) is faithful, \(p_\sigma(E)=0\) implies \(\mu(E)=0\), and hence \(m_X^\sigma\ll p_\sigma\).
This allows to define the analysis operator $T_\sigma$ as the Radon-Nikodym derivative~\cite{bauer2001MeasureIntegrationTheory},
\begin{equation}
    (T_\sigma X)(\alpha)
    \equiv
    \frac{\rmd m_X^\sigma/\rmd \nu}{\sqrt{\rmd p_\sigma/\rmd\nu}}(\alpha)
    =
    \sqrt{\frac{\rmd p_\sigma}{\rmd\nu}}
    \frac{\rmd m_X^\sigma}{\rmd p_\sigma}.
\end{equation}
We now need to show that \(\rmd m_X^\sigma/\rmd p_\sigma\in L^2(\outcomesSet,p_\sigma)\).
To this end, we observe that for any bounded real measurable function $u$, applying Cauchy-Schwarz we have the inequality
\begin{equation}
    \left|
    \int_{\outcomesSet} u(\alpha)\,m_X^\sigma(\rmd\alpha)
    \right|
    \le
    \|X\|_\sigma
    \|u\|_{L^2(p_\sigma)} .
\end{equation}
Thus \(u\mapsto\int u\,\rmd m_X^\sigma\) extends continuously to \(L^2(\outcomesSet,p_\sigma)\), and by Riesz duality, \(\rmd m_X^\sigma/\rmd p_\sigma\in L^2(\outcomesSet,p_\sigma)\), with
\begin{equation}
    \left\|
    \frac{\rmd m_X^\sigma}{\rmd p_\sigma}
    \right\|_{L^2(p_\sigma)}
    \le
    \|X\|_\sigma .
\end{equation}
Consequently, we also have $\|T_\sigma X\|_{L^2(\nu)} \le \|X\|_\sigma$, which tells us that \(T_\sigma\) extends to a contraction $T_\sigma:\scrLcompletion\to L^2(\outcomesSet,\nu)$.

The adjoint synthesis map is then defined weakly by
\begin{equation}
    \langle Y,T_\sigma^\ast h\rangle_\sigma
    =
    \int_{\outcomesSet}
    h(\alpha)\,
    (T_\sigma Y)(\alpha)\,
    \nu(\rmd\alpha),
\end{equation}
on $h\in L^2(\outcomesSet,\nu)$.
When \(\hat o\) is bounded and $h=\sqrt{\rmd p_\sigma/\rmd\nu}\, \hat o$,
% \gabri{why $\hat o$ appeared here?}
this weak definition agrees with the usual operator integral
\begin{equation}
    T_\sigma^\ast h
    =
    \int_{\outcomesSet}
    \hat o(\alpha)\,
    \mu(\rmd\alpha).
\end{equation}
By continuity, the same expression defines a \(\sigma\)-regularized observable in $\scrLcompletion$ for every \(h\in L^2(\outcomesSet,\nu)\).

If the POVM admits a bounded density \(\mu(\alpha)\) with respect to a measure \(\nu\), then
\(p_\sigma(\rmd\alpha)=p_\sigma(\alpha)\rmd\nu(\alpha)\), $\mu(\rmd\alpha)=\mu(\alpha)\rmd\nu(\alpha)$, and $m_X^\sigma(\rmd\alpha)=m_X^\sigma(\alpha)\rmd\nu(\alpha)$, hence the measure-theoretic formulation reduces to the density-based one used above.
Indeed,
\begin{equation}
    (T_\sigma X)(\alpha)
    =
    \frac{\Re\trace[\sigma X\mu(\alpha)]}{\sqrt{p_\sigma(\alpha)}},
\end{equation}
and writing \(h(\alpha)=\sqrt{p_\sigma(\alpha)}\,\hat o(\alpha)\) recovers the equivalent formulation in terms of the rescaled density $g_\sigma(\alpha)=\mu(\alpha)/\sqrt{p_\sigma(\alpha)}$, with
\begin{equation}
    T_\sigma^* (\sqrt{p_\sigma}\hat o) =
    \int_\outcomesSet
    \hat o(\alpha) \,\mu(\rmd\alpha)
    =
    \int_\outcomesSet
    h(\alpha) g_\sigma(\alpha)
    \, \nu(\rmd\alpha).
\end{equation}

In summary, the reconstruction formalism can be stated entirely at the level of the POVM measure, without assuming the existence of a pointwise family of bounded effects. When a bounded operator-valued density exists, this reproduces the usual density-based expressions; when the formal density is distributional, as in homodyne detection, the same construction remains well defined in the measure-theoretic sense.

\subsection{Reconstructibility, stability, and admissibility}
\label{sec:reconstructibility_etc}

We will now show that three properties of a POVM that coincide in finite dimensions become distinct in infinite dimensions: exact reconstructibility, approximability, and stable inversion.

As we will observe, the difference stems from the separation between the full regularized observable space \(\scrLcompletion\) and the smaller set of observables that admit an exact synthesis representation of the form~\cref{eq:sigma_general_synthesis_h,eq:sigma_general_reconstruction_ohat}.
Define
\begin{equation}
    \calR_\sigma \equiv \Range(T_\sigma^\ast)\subseteq \scrLcompletion .
\end{equation}
We will refer to \(\calR_\sigma\) as the \textit{\(\sigma\)-reconstructible subspace}, and to observables \(\calO\in\calR_\sigma\) as \textit{\(\sigma\)-reconstructible observables}.
Thus, by~\cref{eq:sigma_general_reconstruction_ohat}, every measurable
\(\hat o\in L^2(\outcomesSet,p_\sigma\rmd\nu)\) defines a
\(\sigma\)-reconstructible observable
$
    \calO_{\hat o}
    \equiv
    T_\sigma^\ast(\sqrt{p_\sigma}\,\hat o)
    \in\calR_\sigma
$.
Conversely, every \(\calO\in\calR_\sigma\) admits at least one such admissible coefficient function.

In~\cref{sec:estTheory_infDim_exactCompleteFrame,sec:estTheory_infDim_biasAndMoments} we will prove the following characterization of the possible reconstructibility scenarios:
\begin{theorem}[Reconstructibility regimes]%
\label{thm:reconstructibility-trichotomy}
Let \(\mu\) be a POVM with \(\sigma\)-rescaled effects \(g_\sigma(\alpha)=\mu(\alpha)/\sqrt{p_\sigma(\alpha)}\), with $\alpha\in\outcomesSet$ the possible measurement outcomes; let
\begin{equation}
  T_\sigma:\scrLcompletion\to L^2(\outcomesSet,\nu),
  \quad
  T_\sigma^*:L^2(\outcomesSet,\nu)\to \scrLcompletion
\end{equation}
be the corresponding analysis and synthesis operators, and let $\calR_\sigma \equiv \Range(T_\sigma^*)$.
We have one of the following alternatives:
\begin{itemize}
    \item \textbf{Incomplete regime ---}
    If $\overline{\calR}_\sigma\subsetneq \scrLcompletion$, then there exist observables \(\calO\in \scrLcompletion\setminus \overline{\calR}_\sigma\) which cannot be estimated from data.
    \item \textbf{Complete but unstable regime ---}
    If
    $\overline{\calR}_\sigma=\scrLcompletion$
    and $\calR_\sigma\subsetneq \scrLcompletion$,
    then every observable is approximable in \(\|\cdot\|_\sigma\), but some observables admit no exact coefficient \(h\in L^2(\outcomesSet,\nu)\).
    For any such observable \(\calO\), every approximating sequence \(h_n\in L^2(\outcomesSet,\nu)\) with \(T_\sigma^*h_n\to \calO\) satisfies
    $\|h_n\|_{L^2(\outcomesSet,\nu)}\to\infty$.
    Operationally, for such observables one can find estimators with arbitrarily low bias, but only at the cost of a divergent estimation variance for some input state.
    \item \textbf{Stable regime ---}
    If the lower frame bound satisfies \(A_\sigma>0\), then $\calR_\sigma=\scrLcompletion$,
    the frame operator \(\frameOp_\sigma=T_\sigma^*T_\sigma\) is boundedly invertible, and every \(\calO\in \scrLcompletion\) has a canonical coefficient
    $h_\calO^{(\sigma)}=T_\sigma \frameOp_\sigma^{-1}\calO$.
    Moreover,
    \begin{equation}
      \|h_\calO^{(\sigma)}\|_{L^2(\outcomesSet,\nu)}^2
      =
      \langle \calO,\frameOp_\sigma^{-1}\calO\rangle_\sigma
      \le
      A_\sigma^{-1}\|\calO\|_\sigma^2 .
    \end{equation}
    Equivalently, the associated estimator
    \begin{equation}
      \hat o_\calO^{(\sigma)}(\alpha)
      =
      \frac{h_\calO^{(\sigma)}(\alpha)}{\sqrt{p_\sigma(\alpha)}}
    \end{equation}
    has finite second moment under the reference state \(\sigma\).
\end{itemize}
\end{theorem}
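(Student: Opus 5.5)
The plan is to work entirely with the bounded operators supplied by \cref{prop:automatic_upper_frame_bound_sigma}: $T_\sigma:\scrLcompletion\to L^2(\outcomesSet,\nu)$ and $T_\sigma^*$ both have norm at most one. Each regime then follows from standard Hilbert-space facts about these maps. For the \emph{incomplete regime} there is essentially nothing to prove beyond unpacking definitions. If $\overline{\calR}_\sigma\neq\scrLcompletion$, then any $\calO\notin\overline{\calR}_\sigma$ lies at positive distance $d=\inf_{h}\|T_\sigma^*h-\calO\|_\sigma$ from every synthesized observable. I will also note that $\overline{\calR}_\sigma=(\ker T_\sigma)^\perp$. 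Hence one can take $\calO$ in $\ker T_\sigma$, so that $\calO$ is invisible to the measurement coefficients. Pairing with any $\rho\le c\sigma$ then shows that no estimator, even a biased one, can have bias uniformly below a fixed threshold.

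For the \emph{complete but unstable regime}, density of $\calR_\sigma$ already gives approximability. The substantive claim is the divergence $\|h_n\|\to\infty$, which I will prove by contradiction. Suppose $T_\sigma^*h_n\to\calO$ but $\|h_n\|$ does not diverge. Then a subsequence is bounded, and by weak sequential compactness of bounded sets in $L^2(\outcomesSet,\nu)$ a further subsequence converges weakly, $h_{n_k}\rightharpoonup h$. Since $T_\sigma^*$ is bounded, it is weak-to-weak continuous, so $T_\sigma^*h_{n_k}\rightharpoonup T_\sigma^*h$. The norm limit $\calO$ is also the weak limit, and uniqueness of weak limits gives $\calO=T_\sigma^*h\in\calR_\sigma$, a contradiction. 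For the operational statement I will use $\EE[\hat o_n^2|\sigma]=\|h_n\|^2_{L^2(\nu)}$ together with the bias bound $|\bias(\hat o_n|\rho)|\le\sqrt c\,\|\calO_n-\calO\|_\sigma$ for $\rho\le c\sigma$. Together these show that small bias forces a divergent second moment at $\rho=\sigma$. The variance also diverges, because the first moment $\trace(\sigma\calO_n)\to\trace(\sigma\calO)$ stays bounded.

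For the \emph{stable regime}, $A_\sigma>0$ means $\|T_\sigma X\|\ge\sqrt{A_\sigma}\|X\|_\sigma$. This gives $A_\sigma I\le\frameOp_\sigma\le I$ as self-adjoint operators on $\scrLcompletion$. A bounded self-adjoint operator bounded below by a positive constant is invertible, with $\|\frameOp_\sigma^{-1}\|\le A_\sigma^{-1}$. For any $\calO$, set $h=T_\sigma\frameOp_\sigma^{-1}\calO$. Then $T_\sigma^*h=\calO$, so $\calR_\sigma=\scrLcompletion$. The norm identity is then a one-line computation,
$\|h\|^2=\langle \frameOp_\sigma^{-1}\calO,T_\sigma^*T_\sigma\frameOp_\sigma^{-1}\calO\rangle_\sigma=\langle\calO,\frameOp_\sigma^{-1}\calO\rangle_\sigma\le A_\sigma^{-1}\|\calO\|_\sigma^2$.
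The estimator statement follows again from $\EE[\hat o^2|\sigma]=\|h\|^2$. I will also remark that this $h$ lies in $\Range T_\sigma=(\ker T_\sigma^*)^\perp$, so it is the minimum-norm coefficient, exactly as in \cref{sec:estTheory_finDim_rescaledFrames}.

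The main obstacle is interpretive rather than technical. Since $\scrLcompletion$ is an abstract completion, statements like ``cannot be estimated'' and ``bias for input state $\rho$'' require the functional $X\mapsto\trace(\rho X)$ to extend continuously to $\scrLcompletion$. I will restrict those operational claims to states with $\rho\le c\sigma$, where Cauchy--Schwarz gives $|\trace(\rho X)|\le\sqrt c\|X\|_\sigma$. With that restriction, all three regimes reduce to the functional-analytic facts above. The one genuine step is the weak-compactness argument in the second regime, where I must check that the adjoint of a bounded map preserves weak convergence. That holds because $\langle Y,T_\sigma^*h_{n}\rangle_\sigma=\langle T_\sigma Y,h_n\rangle$ for every $Y$.
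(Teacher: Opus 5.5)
Your arguments for the complete-but-unstable regime and for the stable regime are exactly the paper's, as is the positive-distance statement in the incomplete regime. In the unstable regime you use a bounded subsequence, weak sequential compactness in $L^2(\outcomesSet,\nu)$, and weak continuity of the bounded $T_\sigma^*$. In the stable regime you use $A_\sigma I\le\frameOp_\sigma\le I$, the bounded inverse, $h=T_\sigma\frameOp_\sigma^{-1}\calO$, and the norm identity.

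The step that fails is your operational upgrade of the incomplete regime. You take $\calO\in\ker T_\sigma$ and claim that pairing with states $\rho\le c\sigma$ shows no estimator can have bias uniformly below a fixed threshold. The only tool you cite, $|\trace(\rho X)|\le\sqrt c\,\|X\|_\sigma$, bounds the bias from \emph{above} and cannot give a lower bound. (By the same token, in the unstable regime what you actually show, and what the paper shows, is that $\|\cdot\|_\sigma$-approximation yields small bias and forces $\|h_n\|\to\infty$.) More seriously, the claim is false. $T_\sigma\calO=0$ means orthogonality to the rescaled effects in $\langle\cdot,\cdot\rangle_\sigma$, not invisibility to the data. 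The statistics $p_\rho=\sqrt{p_\sigma}\,T_\sigma D_\rho$ can still determine $\trace(\rho\calO)=\langle D_\rho,\calO\rangle_\sigma$. This is because $T_\sigma$ may be injective on $\operatorname{span}\{D_\rho:\rho\in\mathcal S_\sigma\}$ while having nontrivial kernel in the completion, and the seminorm $\sup_{\rho\le c\sigma}|\trace(\rho\,\cdot\,)|$ is strictly weaker than $\|\cdot\|_\sigma$.

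Concretely, take the overlapping diagonal POVM of \cref{sec:estTheory_infDim_examples_cool} on the diagonal sector $\mathscr D_{\sigma_b}$ with $3<b<9$. There the kernel of $T_{\sigma_b}$ is spanned by $X^{(b)}$, and parity lies outside $\overline{\calR}_{\sigma_b}$. Yet the formal inverse $\hat o(n)=(-1)^n(3^{n+1}-2)$ obeys $p_\rho(n)|\hat o(n)|\le c\,p_{\sigma_b}(n)|\hat o(n)|\lesssim c\,(3/b)^n$ for every $\rho\le c\sigma_b$. It is therefore exactly unbiased on all of $\mathcal S_{\sigma_b}$. Its admissible truncations $\hat o\,\mathbf 1_{\{n\le N\}}$ have bias at most $c\sum_{n>N}p_{\sigma_b}(n)|\hat o(n)|\to0$ uniformly on $\{\rho\le c\sigma_b\}$, while their $\sigma_b$-second moment grows like $(9/b)^N$. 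Parity differs from a nonzero multiple of $X^{(b)}$ by an element of $\overline{\calR}_{\sigma_b}$, so admissible estimators of $X^{(b)}$ with uniformly vanishing bias exist as well. An observable outside $\overline{\calR}_\sigma$ can thus behave operationally like a weakly reconstructible one.

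What is provable in the incomplete regime, and what the paper's argument actually establishes, is $\delta_\sigma(\calO)=\inf_h\|\calO-T_\sigma^*h\|_\sigma>0$: there is no admissible exact coefficient and no $\|\cdot\|_\sigma$-approximating reconstruction. You should drop the bias lower bound and read ``cannot be estimated from data'' in that sense.
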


The second regime is the one responsible for the singular estimators encountered in the quasiprobability examples of~\cref{sec:covariantPOVM,sec:heterodyne,sec:homodyne}.
The measurement is complete in the sense that the target observable is fixed by the measurement statistics, but the inverse reconstruction map is unbounded.
Thus removing the reconstruction bias requires coefficient functions with diverging \(L^2\)-norm, equivalently estimators with diverging second moment under the reference state.
In the quasiprobability examples of \cref{sec:covariantPOVM,sec:heterodyne,sec:homodyne}, this mechanism appears as the singularity of the corresponding phase-space representation.

Note that at this stage, \(\sigma\)-reconstructibility is only a membership
statement in \(\scrLcompletion\), rather than a statistical statement about
unbiasedness or finite moments for some given input states. That additional
step is addressed in~\cref{sec:estTheory_infDim_biasAndMoments}.

\subsubsection{Characterization of reconstructibility regimes}%
\label{sec:estTheory_infDim_exactCompleteFrame}

The relation between \(\calR_\sigma\) and \(\scrLcompletion\) is controlled by the range of \(T_\sigma^\ast\).
Equivalently, by the kernel of \(T_\sigma\), since $\ker(T_\sigma)=\Range(T_\sigma^\ast)^\perp$.
Completeness of the rescaled family \(\{g_\sigma(\alpha)\}\) is instead equivalent to the closure of the \(\sigma\)-reconstructible subspace being the full reguralized observable space, i.e. 
$\overline{\calR}_\sigma=\scrLcompletion$.
However, completeness still does not imply stable reconstruction: in infinite dimensions one must instead distinguish the following three regimes.

\paragraph*{(i) Incomplete regime}
Suppose that
$
    \overline{\calR}_\sigma
    \subsetneq
    \scrLcompletion
$.
Then there must be elements of $\scrLcompletion$ in the orthogonal complement of $\calR_\sigma$, and we must have $A_\sigma=0$.
In this case the rescaled POVM does not even have dense span in the regularized
observable space, and there exist observables \(\calO\in \scrLcompletion\setminus \overline{\calR}_\sigma\).
Any such $\calO$ cannot even be approximated by observables in \(\calR_\sigma\) in the $\|\cdot\|_\sigma$ norm, i.e.
$\delta_\sigma(\calO)\equiv\inf_{h\in L^2(\outcomesSet,\nu)}\|\calO-T_\sigma^\ast h\|_\sigma>0$.
Thus no admissible coefficient function can reconstruct \(\calO\) even approximately in the regularized observable topology.
Operationally, there is no admissible unbiased estimator for observables outside of $\calR_\sigma$: part of the observable space is simply inaccessible to the measurement for the considered class of input states.

Interestingly, it is still possible for an observable to have $\calO\notin  \overline{\calR}_{\sigma_1}$ for some $\sigma_1$ but $\calO\in \calR_{\sigma_2}$ for some other $\sigma_2$.
One such example is discussed in~\cref{sec:estTheory_infDim_examples_cool}.
In such situations, $\calO$ can in principle be estimated from data, but the associated estimator is guaranteed to be unbiased and finite-variance only for the class of states controlled by $\sigma_2$; it need not be usable when the input state has the tail behavior encoded by $\sigma_1$.

\paragraph*{(ii) Complete but unstable regime}
Suppose
\begin{equation}
    \overline{\calR}_\sigma
    =
    \scrLcompletion,
    \qquad
    A_\sigma=0.
\end{equation}
Then every observable can be approximated in \(\|\cdot\|_\sigma\) by reconstructible observables, but observables outside $\calR_\sigma$ admit no exact \(L^2(p_\sigma \rmd\nu)\) estimator. If \(\calO_m=T_\sigma^\ast h_m\to \calO\) with \(\calO\notin \calR_\sigma\), then necessarily \(\|h_m\|_{L^2(\nu)}\to\infty\). Thus driving the bias to zero forces the second moment under \(\sigma\) to diverge.
If this were not the case, a bounded subsequence $\{h_{m_k}\}_k$ would
converge weakly in \(L^2(\outcomesSet,\nu)\) to some \(h\), and continuity
of \(T_\sigma^\ast\) would yield \(T_\sigma^\ast h=\calO\), contradicting
\(\calO\notin\calR_\sigma\).
Furthermore, if $\hat o_m\equiv h_m/\sqrt{p_\sigma}$, then
\begin{equation}
    \|h_m\|_{L^2(\nu)}^2
    =
    \int_{\outcomesSet}\rmd\nu(\alpha)\,
    p_\sigma(\alpha)\,\hat o_m(\alpha)^2
    =
    \EE[(\hat o_m)^2|\sigma],
\end{equation}
thus reducing the reconstruction bias forces the \(\sigma\)-averaged second moment to diverge.
In such cases it is in principle possible to lower the estimation bias arbitrarily, but this comes at the cost of increasingly large variances.
Thus, depending on the amount of data available, one can choose the degree of approximation that optimizes the resulting bias--variance trade-off.

\paragraph*{(iii) Stable regime.}
Finally, suppose that $A_\sigma>0$.
Then the family must be complete, and $\calR_\sigma$ is closed, hence $\overline{\calR}_\sigma= \calR_\sigma=\scrLcompletion$.
Then \(\frameOp_\sigma\) is boundedly invertible on \(\scrLcompletion\),
every \(\calO\in\scrLcompletion\) admits the canonical estimator
\(\hat o_\calO^{(\sigma)}\in L^2(\outcomesSet,p_\sigma\rmd\nu)\), and
\begin{equation}
    \EE[(\hat o_\calO^{(\sigma)})^2|\sigma]
    =
    \langle \calO,\frameOp_\sigma^{-1}\calO\rangle_\sigma
    \le
    \frac{1}{A_\sigma}\,\|\calO\|_\sigma^2 .
\end{equation}
Thus the strictly positive lower frame bound ensures that the second moment does not blow up, and is precisely the stability parameter controlling the \(\sigma\)-averaged second moment of canonical estimators.

\subsubsection{Bias and moment bounds}
\label{sec:estTheory_infDim_biasAndMoments}

In~\cref{sec:estTheory_infDim_exactCompleteFrame} we classified reconstruction in \(\scrLcompletion\) through \(\calR_\sigma\).
However, even when an observable admits a reconstruction \(\calO=T_\sigma^\ast h\) in \(\scrLcompletion\), this alone guarantees
neither unbiasedness nor finite variance for an arbitrary input state \(\rho\).
Indeed, the \(\sigma\)-regularized construction yields an estimator \(\hat o\) that only satisfies the reconstruction identity weakly in \(\scrLcompletion\).
More explicitly, this means that for every \(X\in\scrLcompletion\),
\begin{equation}
    \label{eq:weak_est_identity}
    \langle \calO,X\rangle_{\sigma}
    =
    \int_{\outcomesSet}
    \rmd\nu(\alpha)\,
    \hat o(\alpha)
    \langle\mu(\alpha),X\rangle_\sigma.
\end{equation}
Every quantum state \(\rho\) belongs to \(\scrLcompletion\), since trace-class operators are bounded.
Nevertheless, the pairing \(\langle\mu(\alpha),\rho\rangle_\sigma\) does not generally coincide with the probability \(\trace(\mu(\alpha)\rho)\), and thus~\cref{eq:weak_est_identity} does not translate into an actual statement about the statistical average of $\hat o$.
To get that, we must determine for which states~\cref{eq:weak_est_identity} can be upgraded to
\begin{equation}
    \trace(\calO\rho)
    =
    \int_{\outcomesSet}
    \rmd\nu(\alpha)\,
    \hat o(\alpha)\trace(\mu(\alpha)\rho),
\end{equation}
as required for \(\hat o\) to be an unbiased estimator of \(\calO\).

\paragraph{A sufficient condition ensuring unbiasedness}
A natural class of states  with this property is
\begin{equation}
    \label{eq:def_Ssigma}
    \mathcal S_\sigma
    \equiv
    \left\{
        \rho\in\mathcal S(\mathcal H):
        \rho\leq c\,\sigma
        \text{ for some }c<\infty
    \right\}.
\end{equation}
To see this, fix a state \(\rho\) and consider the trace functional
\(X\mapsto\trace(\rho X)\), initially defined on
\(\boundedOps_h\). Although this functional is well defined on bounded
operators, it does not generally extend continuously to \(\scrLcompletion\).
However, if \(\rho\leq c\,\sigma\) then, for every \(X\in\boundedOps_h\),
\begin{equation}
    \label{eq:continuity_trace_pairing_sigma}
    \trace(\rho X)^2
    \leq
    \trace(\rho)\,\trace(\rho X^2)
    \leq
    c\,\trace(\sigma X^2)
    =
    c\,\|X\|_\sigma^2,
\end{equation}
where the first inequality is the Hilbert--Schmidt Cauchy--Schwarz inequality.
Hence \(X\mapsto\trace(\rho X)\) extends continuously to \(\scrLcompletion\), with norm at most \(\sqrt c\).

By the Riesz representation theorem, continuity of the trace pairing is equivalent to the existence of a unique
\(D_\rho\in\scrLcompletion\) such that
\begin{equation}
    \label{eq:rep_vector_Drho}
    \trace(\rho X)
    =
    \langle D_\rho,X\rangle_\sigma,
    \qquad
    \forall X\in\scrLcompletion.
\end{equation}
Rewriting the inner product in~\cref{eq:def_sigma_inner_product} shows that \(D_\rho\) is characterized, in the weak sense, by the Sylvester equation
\begin{equation}
    \label{eq:sylvester_eq}
    \rho
    =
    \frac{\sigma D_\rho+D_\rho\sigma}{2}.
\end{equation}
Assuming that \(\sigma\) is faithful and writing
\(\sigma=\sum_i\lambda_i\ketbra{i}{i}\), the formal solution of
\cref{eq:sylvester_eq} has matrix elements
\begin{equation}
    (D_\rho)_{ij}
    =
    \frac{2\rho_{ij}}{\lambda_i+\lambda_j}.
\end{equation}
These matrix elements are well defined even when \(\rho\notin\mathcal S_\sigma\).
The resulting \(D_\rho\), however, need not belong to \(\scrLcompletion\).
More precisely,
\begin{equation}
    D_\rho\in\scrLcompletion
    \quad\Longleftrightarrow\quad
    2\sum_{i,j}
    \frac{|\rho_{ij}|^2}{\lambda_i+\lambda_j}
    <\infty.
\end{equation}
Thus, \(\rho\leq c\sigma\) is a sufficient, but not necessary, condition for continuity of the trace pairing.
On the other hand, if \(D_\rho\notin\scrLcompletion\), then the weak reconstruction identity alone does not imply that
\(\int_{\outcomesSet}\rmd\nu(\alpha)\,
\hat o(\alpha)\trace(\mu(\alpha)\rho)\)
equals \(\trace(\calO\rho)\), and unbiasedness is no longer guaranteed.

Now let \(\calO=T_\sigma^\ast h\in\calR_\sigma\), and write
\(\hat o=h/\sqrt{p_\sigma}\). If the trace pairing with \(\rho\) is
continuous, then
\begin{equation}
    \label{eq:unbiasedness_general_sigma_reconstruction}
    \trace(\rho\calO)
    =
    \langle D_\rho,T_\sigma^\ast h\rangle_\sigma
    =
    \int_{\outcomesSet}
    \rmd\nu(\alpha)\,
    \hat o(\alpha)\,p_\rho(\alpha),
\end{equation}
that is, $\trace(\rho\calO)=\EE[\hat o|\rho]$.
Thus the reconstruction identity $\calO=T_\sigma^* h$ is an unbiasedness statement for all states \(\rho\) whose trace pairing is continuous on \(\scrLcompletion\). In particular, this holds for \(\rho=\sigma\), since \(\langle I,X\rangle_\sigma=\trace(\sigma X)\).

\paragraph{Second moments}
Turning to second moments, continuity of the trace pairing does not by itself imply that the estimator has finite variance. Let
\(\calO=T_\sigma^\ast h\), with
\(\hat o=h/\sqrt{p_\sigma}\). Then, for an arbitrary state \(\rho\),
the second moment is
\begin{equation}
    \label{eq:second_moment_general_sigma}
    \begin{aligned}
        \EE[\hat o^2|\rho]
        &=
        \int_{\outcomesSet}
        \rmd\nu(\alpha)\,
        p_\rho(\alpha)\,\hat o(\alpha)^2
        \\
        &=
        \int_{\outcomesSet}
        \rmd\nu(\alpha)\,
        \frac{p_\rho(\alpha)}{p_\sigma(\alpha)}
        h(\alpha)^2,
    \end{aligned}
\end{equation}
where $p_\rho/p_\sigma$ is understood on the effective outcome space
\(\{\alpha:p_\sigma(\alpha)>0\}\). In general, the right-hand side may
be infinite, thus finiteness of the
second moment under \(\rho\) is equivalent to
$
    h\sqrt{\frac{p_\rho}{p_\sigma}}
    \in L^2(\outcomesSet,\nu)
$,
or, equivalently,
\(\hat o\in L^2(\outcomesSet,p_\rho\rmd\nu)\).

For the reference state \(\rho=\sigma\), this condition is automatic,
since \(h\in L^2(\outcomesSet,\nu)\), and
$
    \EE[\hat o^2|\sigma]
    =
    \int_{\outcomesSet}
    \rmd\nu(\alpha)\,
    h(\alpha)^2
    =
    \|h\|_{L^2(\nu)}^2
$.
More generally, if \(\rho\leq c\,\sigma\), then positivity of the measurement operators
implies
$
    p_\rho(\alpha)
    \leq
    c\,p_\sigma(\alpha)
$
for \(\nu\)-almost every \(\alpha\).
Consequently, every \(h\in L^2(\outcomesSet,\nu)\) yields an estimator with finite second
moment under \(\rho\), and
\begin{equation}
    \EE[\hat o^2|\rho]
    \leq
    c
    \int_{\outcomesSet}
    \rmd\nu(\alpha)\,
    h(\alpha)^2
    =
    c\,\EE[\hat o^2|\sigma].
\end{equation}

Thus, the condition \(\rho\leq c\,\sigma\) plays two distinct roles:
it guarantees continuity of the trace pairing, and hence unbiasedness,
and it guarantees finiteness of the second moment for every
\(\sigma\)-regularized reconstruction. For states outside
\(\mathcal S_\sigma\), these two properties must instead be checked
separately. In particular, continuity of the trace pairing does not
exclude the possibility that
\(\EE[\hat o^2|\rho]=\infty\), while a particular estimator may have a
finite second moment even if no finite constant \(c\) satisfies
\(\rho\leq c\,\sigma\).

The optimal domination constant is
\begin{equation}
    \begin{aligned}
        c_\star(\rho\|\sigma)
        \equiv
        \inf\{\lambda>0:\rho\leq\lambda\sigma\}
        =
        \bigl\|
            \sigma^{-1/2}\rho\sigma^{-1/2}
        \bigr\|_\infty.
    \end{aligned}
\end{equation}
Equivalently, introducing the max-relative entropy
\begin{equation}
    D_{\max}(\rho\|\sigma)
    =
    \log_2
    \inf\{\lambda>0:\rho\leq\lambda\sigma\},
\end{equation}
we have
\(c_\star(\rho\|\sigma)=2^{D_{\max}(\rho\|\sigma)}\), and
\begin{equation}
    \EE[\hat o^2|\rho]
    \leq
    2^{D_{\max}(\rho\|\sigma)}
    \EE[\hat o^2|\sigma].
\end{equation}
Higher moments are not controlled by the frame property alone; they
require corresponding integrability conditions involving higher powers
of the estimator and finer properties of the likelihood ratio
\(p_\rho/p_\sigma\).

\subsection{Canonical and minimum-norm estimators}
\label{sec:estTheory_canEst}

\Cref{sec:reconstructibility_etc} identified which observables allow admissible reconstructions.
We now address the separate question of how to actually compute the unbiased estimator when one or more exist, that is, we discuss how to choose coefficient functions for observables which are $\sigma$-reconstructible. When the rescaled POVM has a positive lower frame bound, the
standard canonical dual frame gives a canonical estimator for every \(\calO\in\scrLcompletion\).
More generally, if \(\calO\in\calR_\sigma\) but
the synthesis operator has a nontrivial kernel, there may be multiple admissible unbiased estimators.
Among these, the coefficient with minimum $L^2(\nu)$-norm is the one that minimizes the second moment, and thus variance, under \(\sigma\).

In~\cref{sec:estTheory_infDim_estimators} we derive the canonical reconstruction formula for stable measurement frames. In~\cref{sec:estTheory_infDim_uniqueness} we describe the non-uniqueness caused by null estimators and identify the minimum-norm coefficient function.
Finally,~\cref{sec:comparison_fixed_vs_sigma_ip} compares the resulting \(\sigma\)-regularized construction with the Hilbert-Schmidt procedure used
in finite dimensions.

\subsubsection{Canonical duals for stable estimation}%
\label{sec:estTheory_infDim_estimators}

Assume now that \(\{g_\sigma(\alpha)\}\) also has a strictly positive lower frame bound \(A_\sigma>0\), and thus is a frame for \(\scrLcompletion\).
Then \(\frameOp_\sigma\) is bounded, positive, bounded below, and has bounded inverse.
The standard frame reconstruction formula gives, for every
\(X\in\scrLcompletion\),
\begin{equation}
\begin{gathered}
    X =
    \int_{\outcomesSet}\rmd\nu(\alpha)\,
    \langle X,\tilde g_\sigma(\alpha)\rangle_\sigma\,
    g_\sigma(\alpha), \label{eq:sigma_ip_reconstruction_X}
    \\
    \tilde g_\sigma(\alpha)
    \equiv
    \frameOp_\sigma^{-1}(g_\sigma(\alpha)).
\end{gathered}
\end{equation}
This identity is understood weakly in \(\scrLcompletion\), i.e. for every
\(Y\in\scrLcompletion\),
\begin{equation}\label{eq:sigma_in_weak_reconstruction}
    \langle Y,X\rangle_\sigma
    =
    \int_{\outcomesSet}\rmd\nu(\alpha)\,
    \langle X,\tilde g_\sigma(\alpha)\rangle_\sigma\,
    \langle Y,g_\sigma(\alpha)\rangle_\sigma .
\end{equation}
It is worth stressing that, as was the case in~\cref{sec:estTheory_infDim_biasAndMoments}, weak identities in $\scrLcompletion$ do not automatically upgrade to identities in the HS inner product.

When $\{g_\sigma(\alpha)\}_\alpha$ is a frame, the synthesis operator is surjective, so every \(\calO\in\scrLcompletion\) can be written as $\calO=T_\sigma^* h$ for some $h\in L^2(\outcomesSet,\nu)$. In particular, we can always write $\calO=T_\sigma^* h_\calO^{(\sigma)}$ with the coefficients obtained from the canonical dual, $h_\calO^{(\sigma)}\equiv T_\sigma\frameOp_\sigma^{-1}\calO \in L^2(\outcomesSet,\nu)$.
This yields the reconstruction formulas
\begin{equation}\label{eq:sigma_ip_reconstruction_O}
\begin{aligned}
    \calO
    &=
    \int_{\outcomesSet}\rmd\nu(\alpha)\,
    h_\calO^{(\sigma)}(\alpha)\,g_\sigma(\alpha)
    \\
    &=
    \int_{\outcomesSet}\rmd\nu(\alpha)\,
    \hat o_\calO^{(\sigma)}(\alpha)\,\mu(\alpha),
\end{aligned}
\end{equation}
where we defined the canonical \(\sigma\)-regularized estimator
\begin{equation}\label{eq:def_estimator_weighted_infdim_clean}
    \hat o_\calO^{(\sigma)}(\alpha)
    \equiv
    \frac{ h_\calO^{(\sigma)}(\alpha)}{ \sqrt{p_\sigma(\alpha)}}.
\end{equation}
Since \(h_\calO^{(\sigma)}\in L^2(\outcomesSet,\nu)\), the estimator belongs
to the larger weighted space \(L^2(\outcomesSet,p_\sigma\rmd\nu)\) and satisfies
\begin{equation*}
    \int_{\outcomesSet}\rmd\nu(\alpha)\,
    p_\sigma(\alpha)\,
    \bigl(\hat o_\calO^{(\sigma)}(\alpha)\bigr)^2
    =
    \int_{\outcomesSet}\rmd\nu(\alpha)\,
    \bigl(h_\calO^{(\sigma)}(\alpha)\bigr)^2
    <
    \infty .
\end{equation*}

At this stage, however,~\cref{eq:sigma_ip_reconstruction_O} is only a reconstruction identity in \(\scrLcompletion\), and by itself does not imply unbiasedness or finite variance for an arbitrary input state \(\rho\), although both properties are guaranteed for \(\rho=\sigma\).
Whether these conclusions hold for a given choice of $\rho$ and $\sigma$ depends on whether the trace pairing with \(\rho\) is continuous on \(\scrLcompletion\).
This state-dependent interpretation was discussed in~\cref{sec:estTheory_infDim_biasAndMoments}.

\subsubsection{Nonuniqueness and null estimators}%
\label{sec:estTheory_infDim_uniqueness}

As mentioned in~\cref{sec:frameTheory}, uniqueness of expansion coefficients is tied to the injectivity of the synthesis operator \(T_\sigma^\ast\). We focus here on what this entails for the estimation theory in $\scrLcompletion$ spaces.

As discussed in~\cref{sec:sigma_rescaledEffects}, every \(\hat o\in L^2(\outcomesSet,p_\sigma\rmd\nu)\) determines an observable through
\begin{equation}
    \calO
    =
    \int_{\outcomesSet}\rmd\nu(\alpha)\,\hat o(\alpha)\,\mu(\alpha)
    =
    T_\sigma^\ast(\sqrt{p_\sigma}\,\hat o),
\end{equation}
understood weakly in \(\scrLcompletion\).
Hence two estimators \(\hat o_1\) and \(\hat o_2\) define the same observable iff
\begin{equation}
    T_\sigma^\ast\!\bigl(\sqrt{p_\sigma}\,(\hat o_1-\hat o_2)\bigr)=0.
\end{equation}
Equivalently, their difference \(\hat o_0\equiv \hat o_1-\hat o_2\) is a \textit{null estimator}, namely a function such that
\begin{equation}\label{eq:null_estimator_id}
    \int \rmd\nu(\alpha)\,\hat o_0(\alpha)\,\mu(\alpha)=0,
\end{equation}
again in the weak sense of \(\scrLcompletion\).
Null estimators have been previously studied in the context of continuous variable estimation theories, for example in~\cite{dariano2003QuantumTomography,dariano2010renormalized}.

At the formal operator level,~\cref{eq:null_estimator_id} does not involve \(\sigma\), so null estimators are intrinsic to the POVM.
What does depend on \(\sigma\) is whether a formal null estimator is admissible, i.e. whether it belongs to \(L^2(p_\sigma\rmd\nu)\).
Namely, a null estimator contributes to the non-uniqueness of coefficients in \(\scrLcompletion\) only when
$\hat o_0\in L^2(\outcomesSet,p_\sigma\rmd\nu)$.
In that case \(\sqrt{p_\sigma}\,\hat o_0\in\ker(T_\sigma^\ast)\), and every \(\sigma\)-reconstructible observable \(\calO=T_\sigma^\ast h\) admits the family of equivalent coefficient functions
\begin{equation}
    h+h_0,
    \qquad
    h_0\in\ker(T_\sigma^\ast).
\end{equation}

Therefore, \(T_\sigma^\ast\) is injective if and only if admissible expansion coefficients are unique up to \(\nu\)-null sets.
If \(\ker(T_\sigma^\ast)\neq\{0\}\), then every \(\sigma\)-reconstructible observable has infinitely many admissible expansions.
Changing \(\sigma\) can thus change the uniqueness properties of the reconstruction, even though the underlying POVM is fixed, because the weighted space \(L^2(\outcomesSet,p_\sigma\rmd\nu)\) changes with \(\sigma\).
We will see explicit examples of this dependence on \(\sigma\) in the case study reported in~\cref{sec:estTheory_infDim_examples_cool}.

Suppose \(\calO\in\calR_\sigma\), so
\(\calO=T_\sigma^\ast h\) for at least one \(h\in L^2(\nu)\). The set of
admissible coefficients is the affine space
\begin{equation}
    \mathsf H_\calO^{(\sigma)}
    =
    \{h_0\in L^2(\nu):T_\sigma^\ast h_0=\calO\}
    =
    h+\ker T_\sigma^\ast .
\end{equation}
There is a unique element of minimal \(L^2(\nu)\)-norm, obtained by projecting
any solution onto \((\ker T_\sigma^\ast)^\perp\):
\begin{equation}
    h_\calO^{\min}
    =
    P_{(\ker T_\sigma^\ast)^\perp}h.
\end{equation}
The corresponding estimator is
\begin{equation}
    \hat o_\calO^{\min}(\alpha)
    =
    \frac{h_\calO^{\min}(\alpha)}{\sqrt{p_\sigma(\alpha)}}.
\end{equation}
When the lower frame bound is positive, this minimum-norm coefficient
coincides with the canonical-dual coefficient
\(T_\sigma \frameOp_\sigma^{-1}\calO\).

\subsubsection{Comparison with the standard HS construction}
\label{sec:comparison_fixed_vs_sigma_ip}

Let us compare the Hilbert-Schmidt procedure used in finite dimensions in~\cref{sec:estTheoryFinDim} with the $\sigma$-inner-product construction presented in~\cref{sec:estTheory_infDim_estimators}.

When both constructions are meaningful, the formal synthesis equation is the same: a coefficient function \(h\) produces the same operator integral
\(\int g_\sigma(\alpha)h(\alpha)\rmd\nu(\alpha)\), whenever this integral is well defined in both spaces.
Thus, on observables for which admissible coefficients are unique, the two constructions necessarily give the same estimator.
When null estimators are present, however, the choice of a canonical representative can depend on the Hilbert-space structure used to define orthogonality and minimum norm.
The \(\sigma\)-regularized procedure is nevertheless more natural in infinite dimensions, because it can handle non-Hilbert-Schmidt effects and gives direct control of admissible coefficient functions.
For clarity of notation, in this section we denote by $T_{\mathrm{HS},\sigma}$ the analysis operator defined via $\langle\cdot,\cdot\rangle_2$, and by $T_{\sigma}$ the analysis operator defined via $\langle\cdot,\cdot\rangle_\sigma$.

% {\color{ForestGreen}
% The $\sigma$-regularized construction is necessary when considering observables that are unbounded or not Hilbert-Schmidt, such as powers of the number operator $\hat N^k$ or projectors on coherent states $\PP_\alpha$. In general, the standard Hilbert-Schmidt construction may still work if we focus on the reconstruction of bounded operators, for instance. Whenever the Hilbert-Schmidt works, one may wonder if the estimator obtained coincides or not with the estimator one would get from the $\sigma$-regularized procedure.
% }

In the Hilbert-Schmidt construction we defined the analysis and synthesis operators as maps $T_{\mathrm{HS},\sigma}:\hsOps\to L^2(\nu)$, $T_{\mathrm{HS},\sigma}^*:L^2(\nu)\to\hsOps$, with
\begin{equation}\label{eq:comparison_analysis_and_synthesis}
\begin{aligned}
    (T_{\mathrm{HS},\sigma} X)(\alpha) &=\langle g_\sigma(\alpha),X\rangle =
    \frac{\trace(\mu(\alpha)X)}{\sqrt{p_\sigma(\alpha)}}, \\
    T_{\mathrm{HS},\sigma}^*(h) &=
    \int_\outcomesSet \rmd\nu(\alpha)\,
    g_\sigma(\alpha) h(\alpha).
\end{aligned}   
\end{equation}
The frame operator is then $\frameOp_{\mathrm{HS},\sigma}=T_{\mathrm{HS},\sigma}^* T_{\mathrm{HS},\sigma}$ and acts on $\hsOps$.
An unbiased estimator for an observable $\calO$ is then any $\hat o\in L^2(p_\sigma\rmd\nu)$ such that $T_{\mathrm{HS},\sigma}^*(\sqrt{p_\sigma} \hat o)=\calO$.
Note that here $T_{\mathrm{HS},\sigma}^*$ can only synthesize Hilbert-Schmidt operators.
% Furthermore, 

If, on the other hand, we follow the recipe of~\cref{sec:estTheory_infDim_estimators}, we work with the \(\sigma\)-regularized inner product
\(\langle A,B\rangle_\sigma=\Re\trace[\sigma AB]\) defined for all $A,B\in \scrLcompletion$. Recall that $\scrLcompletion$ contains all bounded operators and many unbounded ones.
Another convenient way to write this inner product is
\begin{equation}
    \langle A,B\rangle_\sigma
    =
    \langle A,M_\sigma(B)\rangle_2
    =
    \langle M_\sigma(A),B\rangle_2,
\end{equation}
where $M_\sigma(X)\equiv\frac{\sigma X+X\sigma}{2}$.
Analysis and synthesis operators are now given by
\begin{equation}\label{eq:comparison_analysis_and_synthesis_2}
\begin{aligned}
    (T_\sigma X)(\alpha) &=\langle g_\sigma(\alpha),X\rangle_\sigma =
    \frac{\Re\trace(\sigma\mu(\alpha)X)}{\sqrt{p_\sigma(\alpha)}}, \\
    T_\sigma^*(h) &=
    \int_\outcomesSet \rmd\nu(\alpha)\,
    g_\sigma(\alpha) h(\alpha),
\end{aligned}
\end{equation}
with $T_\sigma^*:L^2(\nu)\to \scrLcompletion$.
Note in particular from~\cref{eq:comparison_analysis_and_synthesis,eq:comparison_analysis_and_synthesis_2} that $T_{\mathrm{HS},\sigma}^*$ and $T_{\sigma}^*$ are formally the same, but they have different codomains. In particular, $T_\sigma^*$ is bounded, as shown in~\cref{sec:estTheory_infDim_upperFrameBound}, and is therefore defined on all of $L^2(\nu)$: every coefficient function $h\in L^2(\nu)$ defines a \(\sigma\)-reconstructible observable.
On the other hand, $T_{\mathrm{HS},\sigma}^*$ may be unbounded, so its domain can be a strict subset of $L^2(\nu)$, and \(T_{\mathrm{HS},\sigma}^*(h)\) need not be well defined for all $h\in L^2(\nu)$.
Nevertheless, when both $T_\sigma^*(h)$ and $T_{\mathrm{HS},\sigma}^*(h)$ are well defined, they coincide as formal operator integrals.

\paragraph{Equivalence in finite dimensions}
In finite dimensions one has $\hsOps=\boundedOps=\scrLcompletion$, and the two constructions are equivalent.
We can also see this explicitly for the canonical estimators: the Hilbert-Schmidt construction gives
\begin{equation}
    \hat o_\calO^{(\mathrm{HS},\sigma)}(\alpha) =
    \frac{\trace[\calO \frameOp_{\mathrm{HS},\sigma}^{-1} (\mu(\alpha))]}{p_\sigma(\alpha)}.
\end{equation}
On the other hand, using $\langle\cdot,\cdot\rangle_\sigma$ gives
\begin{equation}
    \hat o_\calO^{(\sigma)}(\alpha) =
    \frac{\trace[\calO M_\sigma\frameOp_{\sigma}^{-1}(\mu(\alpha))]}{p_\sigma(\alpha)}.
\end{equation}
The two frame superoperators are related by $\frameOp_\sigma=\frameOp_{\mathrm{HS},\sigma}\circ M_\sigma$, hence $\frameOpInvSigma = M_\sigma^{-1}\circ \frameOpInv_{\mathrm{HS},\sigma}$, and thus $\hat o_\calO^{(\mathrm{HS},\sigma)}=\hat o_\calO^{(\sigma)}$.

\paragraph{Inequivalence in infinite dimensions}
In infinite dimensions, by contrast, the HS construction is typically more restrictive, even when the effects happen to be Hilbert-Schmidt and thus the construction is well defined.
In particular, the \(\sigma\)-regularized inner product provides a better framework for handling possibly unbounded target observables, for identifying the input states for which the reconstruction is unbiased, and for controlling second moments.
What differs between the two approaches is not the formal synthesis formula, but rather the observable space on which reconstruction is defined, the class of admissible coefficients, and the stability properties encoded by the frame bounds.
Accordingly, whenever both constructions are simultaneously valid on the same observable and the admissible coefficients are unique, they must yield the same estimator.

\section{Case studies}%
\label{sec:estTheory_infDim_examples}

We show here how the construction developed in~\cref{sec:estTheory_infDim_exactCompleteFrame,sec:estTheory_infDim_biasAndMoments} translates for explicit examples of measurement.
\Cref{sec:estTheory_infDim_example_basic,sec:estTheory_infDim_examples_parity} discuss simple toy examples showing how the \(\sigma\)-regularized construction allows to fully develop a frame construction.
The overlapping diagonal POVM of~\cref{sec:estTheory_infDim_examples_cool} is then the main example: by changing only the reference state, it realizes the incomplete, complete-but-unstable, and stable-frame regimes of~\cref{sec:estTheory_infDim_exactCompleteFrame}.
Finally, \cref{sec:estTheory_infDim_examples_inefficient_counting} shows a complementary phenomenon in which the diagonal inverse exists algebraically, but admissibility of the inverse coefficients depends sharply on \(\sigma\).

\subsection{Photocounting}%
\label{sec:estTheory_infDim_example_basic}

We showed in~\cref{sec:estTheory_infDim_HSfailure} that with the rescaled HS construction, even the simplest possible infinite dimensional case of a projective measurement in the computational basis, $\mu_n=\PP_n$ for $n\ge 0$, does not give a frame due to the lack of an upper frame bound.
Let us show here that this issue is indeed fixed for that example.
Let $\sigma=\sum_{n\ge 0}\sigma_n \PP_n$ be an arbitrary faithful diagonal state. The rescaled effects are $g_\sigma(n)=\PP_n/\sqrt{\sigma_n}$.
Using the $\sigma$-inner product, the rescaled effects are orthonormal:
\begin{equation}
    \langle g_\sigma(n),g_\sigma(m)\rangle_\sigma
    =\trace\left[\sigma \frac{\PP_n}{\sqrt{\sigma_n}}\frac{\PP_m}{\sqrt{\sigma_m}}\right]
    = \delta_{n,m}.
\end{equation}
It is then easy to see that the frame condition with respect to this inner product is satisfied with $A_\sigma=B_\sigma=1$, as for any $X=\sum_{n\ge0} X_n \PP_n$,
\begin{equation}
    \sum_{n\ge 0}\langle X,g_\sigma(n)\rangle_\sigma^2
    = \sum_{n\ge 0} \sigma_n X_n^2 = \|X\|_\sigma^2.
\end{equation}
The frame superoperator becomes simply the identity in the subspace of diagonal Hermitian operators, $\frameOpSigma=I$, and thus the canonical estimator for $\calO$ is what we expect:
\begin{equation}
    \hat o^{(\sigma)}_\calO(n) = \frac{\langle\calO,\frameOpInvSigma(g_\sigma(n))\rangle_\sigma}{\sqrt{p_\sigma(n)}} = \langle n|\calO|n\rangle.
\end{equation}
The synthesis operator is $T_\sigma^* h=\sum_n h_n \frac{\PP_n}{\sqrt{\sigma_n}}$ for all $h\in \ell^2(\mathbb{N}_0)$, thus the $\sigma$-reconstructible observables are all and only those $\calO=\sum_{n\ge 0} o_n \PP_n$ with coefficients $\sum_{n\ge 0} o_n^2 \sigma_n<\infty$.
For example, the number operator $\hat N=\sum_{n\ge 0}n \PP_n$ is $\sigma$-reconstructible for any prior of the form $\sigma\propto \sum_{n\ge 0} b^{-n}\PP_n$, $b>1$, but not for a prior with polynomial tails, for instance $\sigma\propto \sum_{n\ge 1} \frac{\PP_n}{n^2}$.

\subsection{Parity measurements}
\label{sec:estTheory_infDim_examples_parity}

This example serves a complementary purpose: it shows that the same
formalism also works when the POVM effects are bounded but not trace class,
provided one evaluates everything in the \(\sigma\)-regularized geometry of
\cref{sec:estTheory_infDim_weighted}.

Consider the two-outcome parity measurement on
\(\calH=\on{span}(\{\ket n:\,n\in\mathbb N_0\})\), with effects
\(\mu(e)=\Pi\) and \(\mu(o)=I-\Pi\), where
\begin{equation}
    \Pi
    =
    \frac{I+(-1)^{a^\dagger a}}{2}
    =\sum^\infty_{n=0}\PP_{2n}.
 %   \PP_0+\PP_2+\PP_4+\cdots .
\end{equation}
Let \(\sigma=\sum_{n\ge 0}s_n\PP_n\) be any faithful diagonal reference
state, with \(s_n>0\) and \(\sum_n s_n=1\), and define
\begin{equation}
    \sigma_e
    \equiv
    \trace[\sigma\Pi],
    \qquad
    \sigma_o
    \equiv
    \trace[\sigma(I-\Pi)].
\end{equation}
We restrict the attention to the commuting subspace
\(\calK_{\mathrm{par}}\equiv \on{span}\{\Pi,I-\Pi\}\).
Then \(p_\sigma(e)=\sigma_e\) and \(p_\sigma(o)=\sigma_o\), so the rescaled
effects are
\(g_\sigma(e)=\Pi/\sqrt{\sigma_e}\) and
\(g_\sigma(o)=(I-\Pi)/\sqrt{\sigma_o}\).
They form an orthonormal basis of the \(\|\cdot\|_\sigma\)-completion of
\(\calK_{\mathrm{par}}\), hence again \(A_\sigma=B_\sigma=1\) on this
reduced space.

For an observable
\(\calO=\calO_e\Pi+\calO_o(I-\Pi)\), the canonical coefficients are simply
$\hat o_\calO^{(\sigma)}(e)=\calO_e$
and $\hat o_\calO^{(\sigma)}(o)=\calO_o$.
In particular,
$
    \|\calO\|_\sigma^2
    =
    \sigma_e\calO_e^2+\sigma_o\calO_o^2.
$
This example makes clear that the regularized observable space is not restricted to trace-class POVM effects: both \(\Pi\) and \(I-\Pi\) fail to be trace
class, but \(\|\Pi\|_\sigma^2=\sigma_e\) and
\(\|I-\Pi\|_\sigma^2=\sigma_o\) are finite, and the procedure leads to perfectly well defined estimators.

\subsection{Overlapping diagonal POVM}
\label{sec:estTheory_infDim_examples_cool}

We now discuss a less trivial example that realizes, for one fixed POVM and different choices of the reference state, all three regimes
introduced in~\cref{sec:estTheory_infDim_exactCompleteFrame}: incomplete reconstruction, dense but nonclosed reconstructible range, and genuine frame reconstruction.
This example also provides a simple setting in which one can see directly that unbiasedness and finiteness of moments are properties of the pair consisting of the estimator and the input state, not of the observable alone.

\paragraph*{Measurement and unregularized incompleteness.}
Let \(\calH=\ell^2(\mathbb N_0)\), with \(\{\PP_n\}_{n\ge 0}\) the rank-one
projections onto the canonical basis, and consider the countable POVM
\(\mu=\{\mu_n\}_{n\ge 0}\) defined by
\begin{equation}
    \mu_0
    =
    \PP_0+\frac{3}{4}\PP_1,
    \qquad
    \mu_n
    =
    \frac{1}{4}\PP_n+\frac{3}{4}\PP_{n+1},
    \qquad
    n\ge 1.
\end{equation}
The effects are positive and satisfy \(\sum_{n\ge 0}\mu_n=I\), so this is
indeed a POVM. Operationally, \(\mu\) can be realized by ideal number
measurement followed by a classical post-processing channel: if the true
photon number is \(m\ge 1\), the reported outcome is \(m\) with probability
\(1/4\) and \(m-1\) with probability \(3/4\), while \(m=0\) is always
reported as \(0\).

Since all effects are diagonal and finite rank, the measurement only probes
the diagonal observable sector. Let \(\mathscr D_2\) denote the diagonal
Hilbert-Schmidt subspace, i.e. the \(\|\cdot\|_2\)-completion of
\(\on{span}\{\PP_n:\,n\ge 0\}\). Identifying \(\mathscr D_2\) with
\(\ell^2(\mathbb N_0)\) through the orthonormal basis \(e_n\equiv \PP_n\),
we see that the POVM effects are already incomplete on \(\mathscr D_2\), before any \(\sigma\)-regularization is introduced.
Indeed, the diagonal Hilbert-Schmidt operator
\begin{equation}
    X_{\mathrm{miss}}
    =
    \PP_0
    +
    4\sum_{n=1}^\infty
    \left(-\frac{1}{3}\right)^n\PP_n
\end{equation}
belongs to \(\mathscr D_2\) and satisfies
\(\trace[\mu_n X_{\mathrm{miss}}]=0\) for every \(n\ge 0\).
The unregularized lower frame bound on \(\mathscr D_2\) is thus \(A=0\).
By contrast, the family is still Bessel. Writing the corresponding synthesis operator on \(\ell^2(\mathbb N_0)\) as
\(U_2=(I+SB_2)D_2\), with
\(D_2=\operatorname{diag}(1,\frac14,\frac14,\ldots)\),
\(B_2=\operatorname{diag}(\frac34,3,3,\ldots)\), and \(S\) the unilateral shift, one has \(\|D_2\|=1\) and \(\|SB_2\|=3\), hence
\(\|U_2\|\le 4\).
Therefore the unregularized family has a finite upper
frame bound, for instance \(B\le 16\).

\paragraph*{Null estimators.}
Although the family is incomplete, the corresponding synthesis map is injective, thus expansion coefficients for observables that are reconstructible are unique, as discussed in~\cref{sec:estTheory_infDim_uniqueness}.
Indeed, a null estimator would here be a sequence \(\hat o=(\hat o(n))_{n\ge 0}\)
such that \(\sum_{n\ge 0}\hat o(n)\mu_n=0\). Matching diagonal
coefficients this gives $\hat o(0)=0$, and
$\frac34\,\hat o(n-1)+\frac14\,\hat o(n)=0$, for all $n\ge1$.
This forces \(\hat o(n)=0\) for all \(n\ge 1\), hence the only null estimator is the zero sequence.
Thus both unregularized and regularized synthesis operators are injective, and estimators are always unique.

\paragraph*{Regularization by a geometric reference state.}
Let \(\sigma=\sum_{n\ge 0}\sigma_n\PP_n\) be a faithful diagonal state, and
let \(\mathscr D_\sigma\subseteq\scrLcompletion\) denote the
\(\|\cdot\|_\sigma\)-completion of \(\on{span}\{\PP_n:\,n\ge 0\}\).
The vectors \(e_n^{(\sigma)}\equiv \PP_n/\sqrt{\sigma_n}\) form an
orthonormal basis of \(\mathscr D_\sigma\), so \(\mathscr D_\sigma\) is
canonically identified with \(\ell^2(\mathbb N_0)\).

We now specialize to the geometric family
\begin{equation}
    \sigma_b
    =
    Z_b^{-1}\sum_{n=0}^\infty b^{-n}\PP_n,
    \qquad
    b>1,
\end{equation}
where \(Z_b=\sum_{n\ge 0}b^{-n}=b/(b-1)\). Writing
\(\sigma_{b,n}=Z_b^{-1}b^{-n}\) and \(e_n^{(b)}\equiv \PP_n/\sqrt{\sigma_{b,n}}\),
the rescaled effects of \cref{eq:def_galpha} are
\begin{equation}
\begin{aligned}
    g_{\sigma_b}(0)
    &=
    \frac{2\sqrt b}{\sqrt{4b+3}}\,e_0^{(b)}
    +
    \frac{3}{2\sqrt{4b+3}}\,e_1^{(b)},
    \\
    g_{\sigma_b}(n)
    &=
    \frac{\sqrt b}{2\sqrt{b+3}}\,e_n^{(b)}
    +
    \frac{3}{2\sqrt{b+3}}\,e_{n+1}^{(b)},
    \qquad
    n\ge 1.
\end{aligned}
\end{equation}
Under the identification \(\mathscr D_{\sigma_b}\cong \ell^2(\mathbb N_0)\),
the synthesis operator \(T_{\sigma_b}^\ast:\ell^2(\mathbb N_0)\to\mathscr D_{\sigma_b}\) is represented by $U_b=(I+SB_b)D_b$, where \(S\) is the unilateral shift and
\begin{equation}
\begin{aligned}
    D_b
    &=
    \operatorname{diag}
    \!\left(
    \frac{2\sqrt b}{\sqrt{4b+3}},
    \frac{\sqrt b}{2\sqrt{b+3}},
    \frac{\sqrt b}{2\sqrt{b+3}},
    \ldots
    \right), \\
    B_b
    &=
    \operatorname{diag}
    \!\left(
    \frac{3}{4\sqrt b},
    \frac{3}{\sqrt b},
    \frac{3}{\sqrt b},
    \ldots
    \right).
\end{aligned}
\end{equation}
Thus the analysis operator \(T_{\sigma_b}\) is represented by \(U_b^\ast\), and the frame quadratic form is
\begin{equation}
    \sum_{n\ge0}
    |\langle X,g_{\sigma_b}(n)\rangle_{\sigma_b}|^2
    =
    \|U_b^\ast x\|_{\ell^2}^2,
\end{equation}
where \(x\in\ell^2(\mathbb N_0)\) is the coordinate vector of \(X\).
Thus frame bounds are bounds for \(U_b^\ast\).
In particular, \(\|SB_b\|=3/\sqrt b\).

Given an operator \(Y=\sum_{n\ge 0}y_n e_n^{(b)}\in\mathscr D_{\sigma_b}\), we have \(Y\perp g_{\sigma_b}(n)\) for every \(n\) iff
\begin{equation}\label{eq:example_recurrence}
    4\sqrt b\,y_0+3y_1=0,
    \quad
    \sqrt b\,y_n+3y_{n+1}=0,
    \,\,
    n\ge 1.
\end{equation}
Thus \(y_{n+1}=-(\sqrt b/3)y_n\) for \(n\ge 1\), and the tail behavior is entirely controlled by the ratio \(\sqrt b/3\).
We then distinguish three different regimes:
\paragraph*{Regime \(1<b<9\): incompleteness.}
If \(1<b<9\), then \(\sqrt b/3<1\), so~\cref{eq:example_recurrence} admits the nonzero solution
\begin{equation}
    X^{(b)}
    \propto
    \PP_0
    +
    4\sum_{n=1}^\infty
    \left(-\frac{b}{3}\right)^n\PP_n \in\mathscr L^2_h(\sigma_b).
\end{equation}
Indeed,
\begin{equation}
    \|X^{(b)}\|_{\sigma_b}^2
    \asymp
    \sum_{n=0}^\infty
    \left(\frac{b}{9}\right)^n
    <
    \infty,
\end{equation}
and \(X^{(b)}\) is orthogonal to every \(g_{\sigma_b}(n)\). Therefore
\begin{equation}
    \overline{\calR}_{\sigma_b}
    \subsetneq
    \mathscr D_{\sigma_b},
    \qquad
    1<b<9.
\end{equation}
Thus the rescaled family is incomplete: some diagonal observables are not
even approximable in the \(\|\cdot\|_{\sigma_b}\)-topology.

\paragraph*{Regime \(b=9\): complete but not a frame.}
At the threshold \(b=9\), the recurrence becomes \(y_{n+1}=-y_n\) for
\(n\ge 1\). The corresponding tail has constant modulus and therefore
cannot belong to \(\ell^2(\mathbb N_0)\) unless it vanishes identically.
Hence \(\{g_{\sigma_9}(n)\}_{n\ge 0}\) is complete in \(\mathscr D_{\sigma_9}\), but the lower frame bound vanishes. Indeed, define
\(x^{(M)}\in \ell^2(\mathbb N_0)\) by
\(x_n^{(M)}=M^{-1/2}(-1)^n\) for \(1\le n\le M\) and
\(x_n^{(M)}=0\) otherwise. Then \(\|x^{(M)}\|_{\ell^2}=1\) and
\begin{equation}
    \|U_9 x^{(M)}\|_{\ell^2}\longrightarrow 0
    \qquad
    (M\to\infty),
\end{equation}
so \(A_{\sigma_9}=0\), and the family is complete but not a frame, i.e.
\begin{equation}
    \overline{\calR}_{\sigma_9}
    =
    \mathscr D_{\sigma_9},
    \qquad
    \calR_{\sigma_9}
    \subsetneq
    \mathscr D_{\sigma_9}.
\end{equation}
The non-square-summable alternating tail \(y_n\propto (-1)^n\) should be
viewed as a generalized vector in the kernel of the adjoint problem. It is
precisely this borderline obstruction that gives rise to formal inverses
whose coefficients solve the reconstruction equations algebraically but fail
to belong to \(L^2(p_{\sigma_9}\rmd\nu)\).

\paragraph*{Regime \(b>9\): genuine frame reconstruction.}
If \(b>9\), then \(\|SB_b\|=3/\sqrt b<1\), so \(I+SB_b\) is invertible by
the Neumann series~\cite{conway2019course}. Since \(D_b\) is also boundedly invertible, \(U_b^\ast\) is boundedly invertible on \(\ell^2(\mathbb N_0)\), and the rescaled effects \(\{g_{\sigma_b}(n)\}_{n\ge 0}\) form a frame for \(\mathscr D_{\sigma_b}\).
In particular
\begin{equation}
    \calR_{\sigma_b}
    =
    \mathscr D_{\sigma_b},
    \qquad
    b>9.
\end{equation}

A lower frame bound follows from
\begin{equation}
\begin{aligned}
    \|U_b^\ast x\|_{\ell^2}
    &=
    \|D_b(I+B_bS^\ast)x\|_{\ell^2} \\
    &\ge
    \inf_n \, (D_b)_{nn}\,
    \|(I+B_bS^\ast)x\|_{\ell^2} \\
    &\ge
    \frac{\sqrt b}{2\sqrt{b+3}}
    \left(1-\frac{3}{\sqrt b}\right)
    \|x\|_{\ell^2} \\
    &=
    \frac{\sqrt b-3}{2\sqrt{b+3}}\,
    \|x\|_{\ell^2}.
\end{aligned}
\end{equation}
Hence
$
    A_{\sigma_b}
    \ge
    \frac{(\sqrt b-3)^2}{4(b+3)}.
$
To show that this bound is sharp, observe that the alternating blocks \(x^{(M)}\) introduced above one gives
\begin{equation}
    \|U_b^\ast x^{(M)}\|_{\ell^2}^2
    \longrightarrow
    \frac{(\sqrt b-3)^2}{4(b+3)},
    \qquad
    M\to\infty.
\end{equation}
Therefore, for $b>9$, we have
\begin{equation}
    A_{\sigma_b}
    =
    \frac{(\sqrt b-3)^2}{4(b+3)},
    \qquad
    B_{\sigma_b}
    =
    1.
\end{equation}
As expected from \cref{eq:infdim_upperbound_analysisop}, the upper frame
bound is \(1\), and it is saturated by the identity operator.

\paragraph*{Parity observable reconstruction.}
Consider the parity observable
$
    \calO_{\mathrm{par}}
    =
    \sum_{n=0}^\infty (-1)^n\PP_n.
$
Since
\(\|\calO_{\mathrm{par}}\|_{\sigma_b}^2=\trace(\sigma_b)=1\), parity
belongs to \(\mathscr D_{\sigma_b}\) for every \(b>1\). To reconstruct it,
we solve
\(\calO_{\mathrm{par}}=\sum_{n\ge 0}c_n\mu_n\). Matching diagonal
coefficients gives the recurrence
\begin{equation}
    c_0=1,
    \qquad
    \frac34\,c_{n-1}+\frac14\,c_n=(-1)^n,
    \qquad
    n\ge 1,
\end{equation}
whose unique solution is
\begin{equation}
    \hat o_{\calO_{\mathrm{par}}}(n)
    =
    (-1)^n\bigl(3^{n+1}-2\bigr).
\end{equation}
Equivalently,
$
    \calO_{\mathrm{par}}
    =
    \sum_{n=0}^\infty
    \hat o_{\calO_{\mathrm{par}}}(n)\,\mu_n.
$
These coefficients are independent of \(b\), as they are determined only by
the synthesis equation with the fixed POVM.

What depends on \(b\) is whether this formal inverse belongs to
\(L^2(p_{\sigma_b}\rmd\nu)\). Since
\(p_{\sigma_b}(n)\sim b^{-n}\) while
\(|\hat o_{\calO_{\mathrm{par}}}(n)|\sim 3^n\), one has
$
    p_{\sigma_b}(n)\,
    \hat o_{\calO_{\mathrm{par}}}(n)^2
    \sim
    (9/b)^n.
$
Hence:
\begin{enumerate}
    \item if \(1<b<9\), parity is not even approximable in
    \(\mathscr D_{\sigma_b}\), i.e. $\calO_{\rm par}\notin \overline{\calR}_{\sigma_b}$, as
    \(\langle X^{(b)},\calO_{\mathrm{par}}\rangle_{\sigma_b}\neq 0\);
    \item if \(b=9\), parity lies in
    \(\overline{\calR}_{\sigma_9}\setminus\calR_{\sigma_9}\): the same
    coefficient sequence is a formal inverse, but it is not square
    integrable with respect to \(p_{\sigma_9}\);

    \item if \(b>9\), parity belongs to \(\calR_{\sigma_b}\), and the above
    coefficients are its unique exact \(\sigma_b\)-estimator.
\end{enumerate}
Thus a single bounded observable already separates the three geometric
regimes.

\paragraph*{Unbounded targets.}
The same phenomenon is not restricted to bounded observables. For every
\(k\ge 1\), the power \(\hat N^k=\sum_{n\ge 0}n^k\PP_n\) belongs to
\(\mathscr D_{\sigma_b}\) for every \(b>1\), because
\begin{equation}
    \|\hat N^k\|_{\sigma_b}^2
    =
    \sum_{n=0}^\infty \sigma_{b,n}\,n^{2k}
    <
    \infty.
\end{equation}
For \(k=1\), solving
\(\hat N=\sum_{n\ge 0}\hat o_{\hat N}(n)\mu_n\) gives
\begin{equation}
    \hat o_{\hat N}(n)
    =
    n+\frac34-\frac34(-3)^n.
\end{equation}
More generally, the coefficients for \(\hat N^k\) are given by a polynomial
in \(n\) plus a homogeneous contribution proportional to \((-3)^n\). Thus,
even when the target observable lies in the regularized Hilbert space, the
associated exact estimator can still have poor tail behavior, in agreement
with the general discussion of \cref{sec:estTheory_infDim_biasAndMoments}.

\paragraph*{Dependence on the input state.}
To see explicitly how unbiasedness and moments depend on the input state,
consider the diagonal states
\begin{equation}
    \rho_a
    =
    (1-a^{-1})\sum_{n=0}^\infty a^{-n}\PP_n,
    \qquad
    a>1.
\end{equation}
The corresponding measurement probabilities are
$p_{\rho_a}(0)=\frac{(a-1)(4a+3)}{4a^2}$ and
\begin{equation}
    p_{\rho_a}(n)
    =
    \frac{(a-1)(a+3)}{4a^{n+2}},
    \quad
    n\ge 1.
\end{equation}
Thus the summand in the expectation value
\(\EE[\hat o_{\calO_{\mathrm{par}}}\mid\rho_a]
=\sum_{n\ge 0}p_{\rho_a}(n)\hat o_{\calO_{\mathrm{par}}}(n)\) behaves as \((3/a)^n\). Therefore the first moment exists iff
\(a>3\), in which case the series sums to
\begin{equation}
    \EE[\hat o_{\calO_{\mathrm{par}}}\mid\rho_a]
    =
    \frac{a-1}{a+1}
    =
    \trace(\rho_a\calO_{\mathrm{par}}).
\end{equation}
Thus the same coefficient sequence is unbiased on every \(\rho_a\) with
\(a>3\), independently of \(b\).

Likewise, the \(k\)-th absolute moment exists iff \(a>3^k\), because the
summand in
\(\EE[|\hat o_{\calO_{\mathrm{par}}}|^k\mid\rho_a]\) behaves as
\((3^k/a)^n\). In particular,
\begin{equation}
    \EE[(\hat o_{\calO_{\mathrm{par}}})^2\mid\rho_a]
    <
    \infty
    \quad\Longleftrightarrow\quad
    a>9,
\end{equation}
and for \(a>9\),
\begin{equation}
    \EE[(\hat o_{\calO_{\mathrm{par}}})^2\mid\rho_a]
    =
    \frac{a^2+63}{(a-9)(a-3)}.
\end{equation}
At the threshold \(a=9\), the first moment is still finite but the second
moment diverges. In particular, for the reference state \(\sigma_9\), parity
is unbiased but has infinite second moment. This is exactly the borderline
behavior expected when \(\calO_{\mathrm{par}}\) lies in the closure of the
reconstructible subspace but not in the range of the synthesis operator.

It is also instructive to compare this explicit calculation with the general sufficient criterion \(\rho\in\mathcal S_{\sigma_b}\).
For \(b>9\), the relation \(\rho_a\le c\,\sigma_b\) holds iff \(a\ge b\), while the parity estimator is already unbiased for every \(a>3\) and has finite second moment for every \(a>9\).
Thus the state-dominance criterion of~\cref{sec:estTheory_infDim_biasAndMoments} is sufficient but not sharp in this example.

\Cref{tab:diag_pathologies_summary} summarizes the three regimes realized by this single diagonal POVM.

\begin{table*}[t]
    \centering
    \small
    \renewcommand{\arraystretch}{1.15}
    \begin{tabularx}{\textwidth}{|c|>{\hsize=0.7\hsize\linewidth=\hsize}X|>{\hsize=0.75\hsize\linewidth=\hsize}X|>{\hsize=1.55\hsize\linewidth=\hsize}X|}
    \hline
    regime &
    geometry of \(\{g_{\sigma_b}(n)\}\) &
    status of \(\calR_{\sigma_b}\) &
    phenomenology
    \\
    \hline
    \(1<b<9\) &
    incomplete &
    \(\overline{\calR}_{\sigma_b}\subsetneq\mathscr D_{\sigma_b}\) &
    Some observables are not even approximable in \(\|\cdot\|_{\sigma_b}\).
    Parity is one such example. Nonetheless \(T_{\sigma_b}^*\) remains
    injective, so coefficients are unique whenever reconstruction exists.
    \\
    \hline
    \(b=9\) &
    complete, but \(A_{\sigma_9}=0\) &
    \(\overline{\calR}_{\sigma_9}=\mathscr D_{\sigma_9}\), but
    \(\calR_{\sigma_9}\subsetneq\mathscr D_{\sigma_9}\) &
    All observables are approximable, but exact finite-variance estimation may fail.
    Parity lies in $\overline{\calR}_{\sigma_9}\setminus \calR_{\sigma_9}$: it admits a formally unbiased estimator for all $\rho_a$, $a>3$, but its second moment is finite only for \(a>9\). In
    particular, it is unbiased but has infinite variance for
    \(\rho=\sigma_9\).
    \\
    \hline
    \(b>9\) &
    Riesz basis; \(A_{\sigma_b}>0\) &
    \(\calR_{\sigma_b}=\mathscr D_{\sigma_b}\) &
    Every \(\calO\in\mathscr D_{\sigma_b}\) has a unique exact estimator in \(L^2(p_{\sigma_b}\rmd\nu)\). For parity, the explicit coefficients above are admissible and have finite variance under \(\sigma_b\).
    \\
    \hline
    \end{tabularx}
    \caption{
        By changing only the reference state \(\sigma_b\), the same diagonal
        POVM of~\cref{sec:estTheory_infDim_examples_cool} realizes the three regimes of~\cref{sec:estTheory_infDim_exactCompleteFrame}.
        In this example the synthesis operator \(T_{\sigma_b}^*\) is injective for every \(b>1\), so expansion coefficients are unique whenever a reconstruction exists;
        what changes with \(b\) is completeness, stability, and the class of observables admitting an admissible \(L^2(p_{\sigma_b}\rmd\nu)\) estimator.
    }
    \label{tab:diag_pathologies_summary}
\end{table*}

\subsection{Noisy photocounting}
\label{sec:estTheory_infDim_examples_inefficient_counting}

Another instructive example is inefficient photon counting.
As in~\cref{sec:estTheory_infDim_examples_cool}, this is a diagonal measurement.
Unlike the previous example, however, its formal span is the entire diagonal sector: every diagonal target admits a unique algebraic inverse.
The obstruction is therefore analytic rather than algebraic, because the inverse coefficients may grow too rapidly to belong to \(L^2(p_\sigma\rmd\nu)\).
This example thus shows that formal invertibility of the measurement map does not by itself guarantee a statistically admissible estimator. Reconstructibility is instead determined by the competition between the measurement noise and the tail behavior encoded by the reference state \(\sigma\).

For \(0<\eta\le 1\), define
\begin{equation}
    \mu_n^{(\eta)}
    =
    \sum_{m=n}^\infty
    \binom{m}{n}\eta^n(1-\eta)^{m-n}\PP_m,
    \qquad
    n\ge 0.
\end{equation}
This POVM describes an inefficient photon-number measurement. The inefficiency is modeled by sending the input mode through a beam splitter of transmissivity
\(\eta\), with vacuum in the auxiliary input port, and then performing an ideal photon-number measurement on the transmitted mode. Thus, if the input contains exactly \(m\) photons, each photon is detected independently with probability \(\eta\), and the reported count \(n\) is binomially distributed:
\begin{equation}
    \Pr(n\mid m)
    =
    \binom{m}{n}\eta^n(1-\eta)^{m-n},
    \qquad
    0\le n\le m.
\end{equation}
The case \(\eta=1\) corresponds to ideal photon counting, whereas \(0<\eta<1\) accounts for photon loss or finite detector efficiency.

As in~\cref{sec:estTheory_infDim_examples_cool}, let
\(\mathscr D_{\sigma_b}\subseteq\scrLcompletion\) denote the diagonal sector
regularized by the geometric state \(\sigma_b\).
If \(X=\sum_{m\ge 0}x_m\PP_m\) is diagonal, then a formal expansion
\(X=\sum_{n\ge 0}c_n\mu_n^{(\eta)}\) is equivalent to
\begin{equation}
    x_m
    =
    \sum_{n=0}^m
    \binom{m}{n}\eta^n(1-\eta)^{m-n}c_n,
    \qquad
    m\ge 0.
\end{equation}
This lower-triangular system has a unique formal solution,
\begin{equation}
    c_n
    =
    \eta^{-n}
    \sum_{m=0}^n
    \binom{n}{m}
    \bigl(-(1-\eta)\bigr)^{n-m}x_m,
    \qquad
    n\ge 0.
\end{equation}
Thus every diagonal sequence has a unique formal expansion in terms of the effects \(\mu_n^{(\eta)}\). In particular, the synthesis operator is injective.

The parity observable provides a particularly transparent test case. 
Since \(\calO_{\mathrm{par}}=\sum_{m\ge 0}(-1)^m\PP_m\), the inverse formula gives
\begin{equation}
    \hat o_{\calO_{\mathrm{par}}}^{(\eta)}(n)
    =
    \left(1-\frac{2}{\eta}\right)^n.
\end{equation}
Equivalently,
\begin{equation}
    \calO_{\mathrm{par}}
    =
    \sum_{n=0}^\infty
    \left(1-\frac{2}{\eta}\right)^n
    \mu_n^{(\eta)}.
\end{equation}
For \(\eta=1\), this reduces to the familiar estimator
\(\hat o_{\calO_{\mathrm{par}}}(n)=(-1)^n\) for ideal number measurement.
However, for \(\eta<1\), the magnitude of the inverse coefficients grows exponentially as \(((2-\eta)/\eta)^n\).

To determine when this formal inverse is admissible, compute the
measurement distribution of the geometric reference state
\(\sigma_b=(1-b^{-1})\sum_{m\ge 0}b^{-m}\PP_m\). Summing the binomial series
gives
\begin{equation}
    p_{\sigma_b}^{(\eta)}(n)
    =
    \trace\left[\sigma_b\mu_n^{(\eta)}\right]
    =
    \frac{b-1}{b-1+\eta}
    \left(\frac{\eta}{b-1+\eta}\right)^n.
\end{equation}
Therefore
\begin{equation}
    p_{\sigma_b}^{(\eta)}(n)
    \hat o_{\calO_{\mathrm{par}}}^{(\eta)}(n)^2
    \asymp
    \left(
    \frac{\eta}{b-1+\eta}
    \left(\frac{2-\eta}{\eta}\right)^2
    \right)^n.
\end{equation}
Because the parity coefficients are unique, one has \(\calO_{\mathrm{par}}\in\calR_{\sigma_b}\) if and only if the corresponding
second-moment series converges, namely if and only if
\begin{equation}
    \frac{\eta}{b-1+\eta}
    \left(\frac{2-\eta}{\eta}\right)^2
    <
    1.
\end{equation}
Equivalently,
$\calO_{\mathrm{par}}\in\calR_{\sigma_b} \iff b>\frac{4}{\eta}-3$.
Thus this formally invertible diagonal POVM exhibits a sharp \(\sigma\)-dependent reconstructibility threshold.
At \(b=4/\eta-3\), the summands in the second-moment series remain of constant order, so the series diverges; below this threshold, they grow exponentially.
Hence the formal estimator fails to belong to \(L^2(p_{\sigma_b}\rmd\nu)\) throughout the region \(b\le 4/\eta-3\).

The dependence on the input state can again be worked out explicitly.
For \(\rho_a=(1-a^{-1})\sum_{m\ge 0}a^{-m}\PP_m\), the measured count
distribution is
\begin{equation}
    p_{\rho_a}^{(\eta)}(n)
    =
    \frac{a-1}{a-1+\eta}
    \left(\frac{\eta}{a-1+\eta}\right)^n,
    \qquad
    n\ge 0.
\end{equation}
Hence the first moment of the parity estimator exists iff
\begin{equation}
    \frac{\eta}{a-1+\eta}
    \left|\frac{2-\eta}{\eta}\right|
    <
    1,
\end{equation}
that is, iff \(a>3-2\eta\). In that regime the series sums to
\begin{equation}
    \EE[\hat o_{\calO_{\mathrm{par}}}^{(\eta)}\mid\rho_a]
    =
    \frac{a-1}{a+1}
    =
    \trace(\rho_a\calO_{\mathrm{par}}),
\end{equation}
so the same formal inverse is unbiased on every \(\rho_a\) with
\(a>3-2\eta\). Likewise, the \(k\)-th absolute moment exists iff
\begin{equation}
    \frac{\eta}{a-1+\eta}
    \left(\frac{2-\eta}{\eta}\right)^k
    <
    1,
\end{equation}
or equivalently iff
\begin{equation}
    a
    >
    1-\eta+\eta\left(\frac{2-\eta}{\eta}\right)^k.
\end{equation}
In particular, the second moment is finite iff \(a>\frac{4}{\eta}-3\), which
coincides exactly with the threshold for parity to belong to
\(\calR_{\sigma_a}\).

Inefficient photon counting therefore cleanly separates algebraic invertibility from statistical admissibility.
The reference state does not alter the unique formal inverse; it determines whether that inverse belongs to \(L^2(p_\sigma\rmd\nu)\).
In this sense, inefficient photon counting is a natural diagonal counterpart of the heterodyne example discussed later in~\cref{sec:heterodyne}.

\section{Covariant Weyl-Heisenberg POVM}
\label{sec:covariantPOVM}

This section specializes the general infinite-dimensional estimation framework of~\cref{sec:estTheory_infDim} to Weyl-Heisenberg covariant POVMs.
We compare three natural Hilbert-space constructions for the same measurement family.
% \Cref{sec:covariant_HSnorescale}\addSA{(SA:non c'è una sezione con questo label)} studies the unrescaled Hilbert-Schmidt construction, where the frame operator is bounded above but has no positive lower frame bound, while the synthesis operator remains injective for Gaussian seeds.
\Cref{sec:covariant_HSyesrescale} shows that rescaling the effects while keeping the Hilbert-Schmidt inner product leads to a more severe pathology: even in simple Gaussian examples the frame operator is neither bounded above nor bounded below.
\Cref{sec:covariant_sigmaIP} then introduces the $\sigma$-regularized inner product, which restores an automatic upper frame bound but still gives a vanishing lower frame bound for faithful Gaussian prior $\sigma$ and seed $\nu$.
The synthesis operator is shown to be injective in this construction, implying that admissible estimators are unique whenever they exist; equivalently, that there are no nontrivial null estimators.
Finally,~\cref{sec:covariantWithS} extends the analysis to randomized covariant measurements and the homodyne limit, showing that additional random settings can introduce genuine null estimators through redundancy in the enlarged measurement record.

\subsection{Measurement model}%
\label{sec:covariant_measurementModel}

We first fix the covariant POVM convention and the intrinsic estimator equation.
The purpose of this subsection is only to define the measurement model and the reconstruction equation; questions of stability, admissibility, and uniqueness are addressed later in~\cref{sec:covariant_constructions,sec:covariant_frame,sec:covariant_estimators_uniqueness}.

The family of covariant POVMs is defined as:
\begin{equation}\label{eq:def_covariantPOVM}
    \mu_\nu(\alpha)  = \frac1\pi D(\alpha)\nu D(-\alpha),
\end{equation}
with $\nu$ a Gaussian state. The prefactor $1/\pi$ is included to ensure the normalization $\int_{\mathbb{C}}\rmd^2\alpha \mu_\nu(\alpha)=I$.
Throughout this section we will write
$p_{\sigma,\nu}(\alpha)\equiv\trace[\sigma\mu_\nu(\alpha)]$.
An estimator for an observable $\calO$ is thus a measurable function $\hat o_\calO$ satisfying
$\calO=\int\rmd^2\alpha\, \hat o(\alpha)\mu(\alpha)$ in the weak sense defined by the hosting space.
Using the $\sigma$-regularized notation of \cref{sec:estTheory_infDim_weighted}, this is equivalently written as
\begin{equation}
    \calO
    =
    T_\sigma^*h,
    \qquad
    h(\alpha)
    =
    \sqrt{p_\sigma(\alpha)}\,\hat o(\alpha).
    \label{eq:intrinsic_estimator_equation_covariant_h}
\end{equation}

\subsection{Three Hilbert-space constructions}%
\label{sec:covariant_constructions}

We now introduce the analysis, synthesis, and frame operators for the three possible constructions.
In all cases, $\sigma$ is assumed to be a faithful reference state, and $\nu$ a Gaussian ``seed'' state.

\subsubsection{HS construction}%
\label{sec:covariant_HSyesrescale}

Using the standard HS inner product and without rescaling, analysis, synthesis, and frame operators read
\begin{gather}
    (T_{\rm HS}X)(\alpha)
    =
    p_{X,\nu}(\alpha)
    \\
    T_{\rm HS}^*h
    =
    \int_{\mathbb C}\rmd^2\alpha\,
    h(\alpha)\mu_\nu(\alpha),
    % =
    % \int_{\mathbb C}\frac{\rmd^2\alpha}{\pi}\,
    % h(\alpha)D(\alpha)\nu D(-\alpha)
    \\
    \frameOp_{\rm HS}(X)
    =
    T_{\rm HS}^*T_{\rm HS}X
    =
    \int_{\mathbb C}\rmd^2\alpha\,
    p_{X,\nu}(\alpha)\mu_\nu(\alpha).
\end{gather}

Suppose instead we use the HS inner product but rescale the effects by the probability density associated with a faithful reference state $\sigma$.
This construction will \textit{not} be used to derive the estimators later.
As shown below, injectivity of the relevant synthesis operators implies that, for covariant measurements with Gaussian $\nu$ and $\sigma$, the constructions lead to the same estimator whenever that estimator is admissible, although they differ in their domains of applicability.
We still discuss this construction to highlight its pathological nature, and the difficulties arising when using a rescaled strategy \textit{without} also switching to the more suitable $\sigma$-regularized inner product; in particular, one loses the general characterization of which observables can be reconstructed from measurement data and for which states those reconstructions apply.

Analysis, synthesis, and frame operators  in this case read
\begin{gather}
    (T_{\mathrm{HS},\sigma}X)(\alpha)
    =
    \frac{p_{X,\nu}(\alpha)}{\sqrt{p_{\sigma,\nu}(\alpha)}},
    \\
    T_{\mathrm{HS},\sigma}^*h
    =
    \int_{\mathbb C}\rmd^2\alpha\,
    h(\alpha)
    \frac{\mu_\nu(\alpha)}{\sqrt{p_{\sigma,\nu}(\alpha)}},
    \\
    \frameOp_{\mathrm{HS},\sigma}(X)
    =
    \int_{\mathbb C}\rmd^2\alpha\,
    \frac{p_{X,\nu}(\alpha)}{p_{\sigma,\nu}(\alpha)}
    \mu_\nu(\alpha).
\end{gather}

Unlike the unrescaled covariant frame operator, this rescaled operator is not diagonalized by a simple bounded multiplier in characteristic-function space, as
\begin{equation}
    \chi_{\frameOp_{\mathrm{HS},\sigma}(X)}(\beta)
    =
    \chi_\nu(\beta)
    \FT\!\left[
        \frac{p_{X,\nu}}{p_{\sigma,\nu}}
    \right]\!(\beta).
\end{equation}
For the frame bounds it is therefore more useful to work directly with the quadratic form:
\begin{equation}\label{eq:HS_rescaled_quadratic_form}
    \langle X,\frameOp_{\mathrm{HS},\sigma}(X)\rangle_2
    =
    \|T_{\mathrm{HS},\sigma}X\|_{L^2}^2
    =
    \int_{\mathbb C}\rmd^2\alpha\,
    \frac{|p_{X,\nu}(\alpha)|^2}{p_{\sigma,\nu}(\alpha)} .
\end{equation}
This rescaled Hilbert-Schmidt construction is generally pathological.

\subsubsection{\texorpdfstring{$\sigma$}{sigma}-regularized construction}%
\label{sec:covariant_sigmaIP}

We now finally switch to the $\sigma$-regularized inner product. This is the most suitable construction to obtain performance guarantees also accounting for prior information on the measured states.
In this case the operators are defined with respect to the following $\sigma$-regularized inner product
\begin{equation}
    \langle A,B\rangle_\sigma
    =
    \frac12\trace[\sigma(AB+BA)]
    \equiv \trace[M_\sigma(A)B],
\end{equation}
where $M_\sigma(A)\equiv\frac12\{\sigma,A\}$.
The analysis operator is $T_{\sigma,\nu}:\scrLcompletion\to L^2(\mathbb{C})$, given by
\begin{equation}\label{eq:def_cov_sigma_analysisop}
    (T_{\sigma,\nu} X)(\alpha) =
    % \frac{1}{\pi}
    \frac{
    \trace[\sigma\{ X,\mu_\nu(\alpha)\}]
    }{2\sqrt{p_{\sigma,\nu}(\alpha)}}.
\end{equation}
The synthesis operator is $T_{\sigma,\nu}^*:L^2(\mathbb{C})\to\scrLcompletion$, given by
\begin{equation}\label{eq:def_cov_sigma_synthesisop}
    T_{\sigma,\nu}^* h=\int\frac{\rmd^2\alpha}{\pi} h(\alpha)\frac{D(\alpha)\nu D(-\alpha)}{\sqrt{p_{\sigma,\nu}(\alpha)}}.
\end{equation}
Finally, the frame operator $\frameOp_{\sigma,\nu}=T_{\sigma,\nu}^* T_{\sigma,\nu}$ takes the form
\begin{equation}
    \frameOp_{\sigma,\nu}(X)
    =
    \int\rmd^2\alpha\frac{\trace[\sigma\{X,\mu_\nu(\alpha)\}]}{2p_{\sigma,\nu}(\alpha)} \mu_\nu(\alpha).
\end{equation}
Using $M_\sigma(X)=\{\sigma,X\}/2$, this may also be written as
$\frameOp_{\sigma,\nu}(X)=\frameOp_{\sigma,\nu}^{\rm HS}(M_\sigma(X))$.

\subsection{Frame properties for Gaussian seeds}%
\label{sec:covariant_frame}

We analyze here the properties of the frames obtained with the three constructions outlined in~\cref{sec:covariant_constructions}.
We will focus on the existence of nontrivial lower and upper frame bounds, and the existence of nontrivial null estimators via the injectivity of the synthesis operator.

A brief outline of the results we will find for the various constructions is provided in~\cref{tab:constructions}.

\begin{table*}[t]
    \caption{Summary of the three constructions for covariant measurements.}
    \label{tab:constructions}
    \small
    \begin{tabularx}{\textwidth}{
      >{\raggedright\arraybackslash}p{0.12\textwidth}
      c
      >{\raggedright\arraybackslash}p{0.17\textwidth}
      >{\raggedright\arraybackslash}p{0.17\textwidth}
      >{\raggedright\arraybackslash}X
    }
    \toprule
    Construction & Space & Upper bound & Lower bound & Role \\
    \midrule
    Unrescaled HS
    & $\hsOps$
    & $B_{\nu}^{\mathrm{HS}}=1/\pi$
    & Typically $A_{\nu}^{\mathrm{HS}}=0$
    & Formal deconvolution and comparison with standard quasiprobability formulas. \\
    
    Rescaled HS
    & $\hsOps$
    & $B_{\sigma,\nu}^{\mathrm{HS}}=\infty$ for some $\sigma,\nu$
    & Typically $A_{\sigma,\nu}^{\mathrm{HS}}=0$
    & Shows why the finite-dimensional rescaled construction does not directly generalize. \\
    
    $\sigma$-regularized
    & $\scrLcompletion$
    & $B_{\sigma,\nu}=1$
    & Typically $A_{\sigma,\nu}=0$
    & Enable full analysis of admissibility, uniqueness, and estimation variances. \\
    \bottomrule
    \end{tabularx}
\end{table*}

\subsubsection{Upper frame bounds}%
\label{sec:covariant_frame_upper}

We prove in~\cref{prop:covariant_unrescaledHS_upperbound} that the unrescaled HS construction gives a finite upper frame bound.

\paragraph{Unrescaled HS construction}
\begin{proposition}%
\label{prop:covariant_unrescaledHS_upperbound}
    For covariant measurements in the unrescaled HS construction, the upper frame bound is $B_{\nu}^{\mathrm{HS}}=\frac1\pi$.
\end{proposition}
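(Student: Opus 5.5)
The plan is to diagonalize the quadratic form $\|T_{\rm HS}X\|_{L^2}^2=\int\rmd^2\alpha\,|p_{X,\nu}(\alpha)|^2$ in characteristic-function space, where $p_{X,\nu}(\alpha)=\trace[\mu_\nu(\alpha)X]$. Covariance turns $p_{X,\nu}$ into a convolution of $X$ with $\nu$ in phase space, and hence into a Fourier transform of a product of characteristic functions. Throughout, $\chi_X(\beta)=\trace[XD(\beta)]$.

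\textbf{Step 1: write $p_{X,\nu}$ as a Fourier transform.} I will use two standard ingredients. The first is the Hilbert--Schmidt Parseval identity $\trace(AB)=\int\frac{\rmd^2\beta}{\pi}\chi_A(\beta)\chi_B(-\beta)$ for $A,B\in\hsOps$. The second is the displacement rule $\chi_{D(\alpha)\nu D(-\alpha)}(\beta)=e^{\alpha\bar\beta-\bar\alpha\beta}\chi_\nu(\beta)$. Combining them gives
\begin{equation*}
    p_{X,\nu}(\alpha)=\frac1\pi\int\frac{\rmd^2\beta}{\pi}\,e^{\alpha\bar\beta-\bar\alpha\beta}\,\chi_\nu(\beta)\chi_X(-\beta).
\end{equation*}
This exhibits $\pi p_{X,\nu}$ as the symplectic Fourier transform of $f(\beta)=\chi_\nu(\beta)\chi_X(-\beta)$. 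The kernel $e^{\alpha\bar\beta-\bar\alpha\beta}/\pi$ defines a unitary map on $L^2(\mathbb C,\rmd^2\beta)$.

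\textbf{Step 2: apply Plancherel and bound the seed.} Plancherel then yields
\begin{equation*}
    \int\rmd^2\alpha\,|p_{X,\nu}(\alpha)|^2=\frac1\pi\int\frac{\rmd^2\beta}{\pi}\,|\chi_\nu(\beta)|^2|\chi_X(\beta)|^2 .
\end{equation*}
So the frame operator acts as the bounded multiplier $|\chi_\nu|^2/\pi$ on characteristic functions. Since $\nu$ is a state and $D(\beta)$ is unitary, $|\chi_\nu(\beta)|\le1$. Using Parseval once more, $\int\frac{\rmd^2\beta}{\pi}|\chi_X|^2=\|X\|_2^2$, which gives $B_\nu^{\rm HS}\le 1/\pi$. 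The argument above is first carried out for $X$ in a dense subspace, such as finite-rank operators, where every integral converges absolutely; the bound then extends by continuity.

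\textbf{Step 3: optimality.} I expect this to be the only mildly delicate point, since the bound is generally a supremum that is not attained. Because $\chi_\nu$ is continuous with $\chi_\nu(0)=1$, it suffices to produce normalized $X_k\in\hsOps$ whose characteristic functions concentrate near $\beta=0$. Then $\int|\chi_\nu|^2|\chi_{X_k}|^2/\int|\chi_{X_k}|^2\to1$. A natural choice is the normalized thermal states $X_k=\tau_{\bar n_k}/\|\tau_{\bar n_k}\|_2$ with $\bar n_k\to\infty$. Their characteristic functions are Gaussians $e^{-(\bar n_k+1/2)|\beta|^2}$, whose width shrinks to zero. A dominated-convergence or direct Gaussian-integral estimate then shows that the ratio tends to $1$. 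This proves that the optimal upper frame bound is exactly $B_\nu^{\rm HS}=1/\pi$.
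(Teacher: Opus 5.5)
Your proof is correct and follows the paper's own route: write $\pi p_{X,\nu}$ as the symplectic Fourier transform of a product of characteristic functions, apply Plancherel to get $\langle X,\frameOp_{\rm HS}(X)\rangle_2=\int\frac{\rmd^2\beta}{\pi^2}|\chi_\nu(\beta)|^2|\chi_X(\beta)|^2$, and conclude $B_\nu^{\rm HS}=1/\pi$ from $|\chi_\nu|\le 1=\chi_\nu(0)$. Your thermal-state sequence spells out the optimality step that the paper leaves implicit. The phase in your displacement rule should be $e^{\beta\bar\alpha-\bar\beta\alpha}$ in the paper's convention, not its conjugate, but this is harmless because only moduli enter the bound.
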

\begin{proof}
    The Weyl transform diagonalizes the frame operator.
    Namely, we have
    $\pi p_{X,\nu}=\FT[\chi_X\overline{\chi_\nu}]$, hence
    \begin{equation}
        \chi_{\frameOp_{\rm HS}(X)}(\alpha) =
        \frac1\pi \lvert\chi_\nu(\alpha)\rvert^2 \chi_X(\alpha).
    \end{equation}
    Indeed,
    \begin{equation}
    \begin{gathered}
        \int\frac{\rmd^2\beta}{\pi^2}\trace[D(\beta)\nu D(-\beta)X]
        \trace[D(\beta)\nu D(-\beta)D(\alpha)] \\
        % = \chi_\nu(\alpha)\int \frac{\rmd^2\beta}{\pi^2}
        % e^{\alpha\bar\beta-\bar\alpha\beta} \trace[D(\beta)\nu D(-\beta)X]
        =\chi_\nu(\alpha) \,\FT[p_{X,\nu}](\alpha)
        =
        \frac1\pi \lvert\chi_\nu(\alpha)\rvert^2 \, \chi_X(\alpha).
    \end{gathered}
    \end{equation}
    Using Plancherel theorem for the Weyl transform, we thus obtain
    \begin{equation}\label{eq:XFX_notrescaledHS}
        \langle X,\frameOp_{\rm HS}(X)\rangle_2=  
        \int\frac{\rmd^2\alpha}{\pi^2} \lvert\chi_X(\alpha)\rvert^2 \lvert\chi_\nu(\alpha)\rvert^2.
    \end{equation}
    Thus $\frameOp_{\rm HS}$ is bounded, with upper bound
    \begin{equation}
        B_{\nu}^{\mathrm{HS}}
        =
        \frac1\pi\sup_{\beta\in\mathbb C}|\chi_\nu(\beta)|^2
        =
        \frac1\pi ,
    \end{equation}
    where the last equality uses $\chi_\nu(0)=1$ and $|\chi_\nu(\beta)|\le1$ for any state $\nu$.
\end{proof}

\paragraph{Rescaled HS construction}
\begin{proposition}%
\label{prop:covariant_rescaledHS_upperbound}
    The HS construction, with thermal prior $\sigma=\tau_{\bar n}$ and seed $\nu=\mathbb{P}_0$, has no upper frame bound: $B_{\sigma,\nu}^{\mathrm{HS}}=\infty$.
\end{proposition}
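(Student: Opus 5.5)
The plan is to exhibit an explicit sequence of Hilbert–Schmidt operators $X_k$ along which the Rayleigh quotient of the quadratic form \cref{eq:HS_rescaled_quadratic_form} blows up. With seed $\nu=\PP_0$ the POVM is heterodyne, $\mu_\nu(\alpha)=\PP_\alpha/\pi$. With thermal prior $\sigma=\tau_{\bar n}$ one has the Gaussian density
\begin{equation*}
    p_{\sigma,\nu}(\alpha)=\frac{1}{\pi(1+\bar n)}\,e^{-|\alpha|^2/(1+\bar n)},
\end{equation*}
which decays at infinity. The ratio $\|T_{\mathrm{HS},\sigma}X\|_{L^2}^2/\|X\|_2^2$ can therefore be made large by choosing $X$ whose heterodyne profile $p_{X,\nu}(\alpha)=\mel{\alpha}{X}{\alpha}/\pi$ is concentrated far from the origin, while keeping $\|X\|_2$ fixed.

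The natural test family is displaced vacuum projectors, $X_\beta=\PP_\beta=D(\beta)\PP_0D(-\beta)$, for which $\|X_\beta\|_2=1$ and $p_{X_\beta,\nu}(\alpha)=e^{-|\alpha-\beta|^2}/\pi$. Substituting into \cref{eq:HS_rescaled_quadratic_form} gives
\begin{equation*}
    \|T_{\mathrm{HS},\sigma}X_\beta\|^2=\frac{1+\bar n}{\pi}\int\rmd^2\alpha\,\exp\!\Bigl(-2|\alpha-\beta|^2+\frac{|\alpha|^2}{1+\bar n}\Bigr).
\end{equation*}
Since $2>1/(1+\bar n)$, this is a convergent Gaussian integral. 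Completing the square, with $a=2-1/(1+\bar n)>0$, the exponent equals
\begin{equation*}
    -a\,\bigl|\alpha-2\beta/a\bigr|^2+\Bigl(\frac{4}{a}-2\Bigr)|\beta|^2 .
\end{equation*}
The coefficient of $|\beta|^2$ is $(4-2a)/a=\frac{2}{a(1+\bar n)}>0$. Hence $\|T_{\mathrm{HS},\sigma}X_\beta\|^2\propto e^{c|\beta|^2}$ with $c>0$, which diverges as $|\beta|\to\infty$ at fixed unit Hilbert–Schmidt norm. This rules out any finite $B$ with $\langle X,\frameOp_{\mathrm{HS},\sigma}X\rangle_2\le B\|X\|_2^2$.

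The expected obstacle is only bookkeeping: checking that $X_\beta$ lies in the form domain (the integral is finite for each $\beta$, so it does) and that the exponent constants work out, i.e.\ that the rescaling weight $1/p_{\sigma,\nu}$ grows like $e^{|\alpha|^2/(1+\bar n)}$ while a coherent profile decays only like $e^{-2|\alpha-\beta|^2}$. No earlier result beyond the definitions of \cref{sec:covariant_HSyesrescale} is needed. The argument mirrors the photocounting failure of \cref{sec:estTheory_infDim_HSfailure}: the rescaled frame vectors have unbounded Hilbert–Schmidt norm.
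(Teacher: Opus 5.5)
Your proof is correct, and it takes a different route from the paper's. The paper tests the quadratic form on Fock projectors $X_n=\PP_n$, which reduces it to a radial Gamma integral equal to $(\bar n+1)\frac{(2n)!}{(n!)^2}c^{-(2n+1)}$ with $c=\frac{2\bar n+1}{\bar n+1}$. It then needs Stirling's formula to see growth like $(4/c^2)^n/\sqrt n$. You test on displaced vacua $\PP_\beta$ instead, so the integral is an exact Gaussian. Your $a$ coincides with the paper's $c$, and the closed form is $\langle\PP_\beta,\frameOp_{\mathrm{HS},\sigma}(\PP_\beta)\rangle_2=\frac{(1+\bar n)^2}{2\bar n+1}\exp\bigl(\frac{2|\beta|^2}{2\bar n+1}\bigr)$, so no asymptotics are required.

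The paper's choice stays in the rotation-invariant diagonal sector, matching the $\PP_n$ test families used for the lower-bound results in the same section, but it relies on the rotational symmetry of the thermal prior. Your choice isolates the mechanism more cleanly: a unit-norm operator whose heterodyne profile has fixed width, placed where $p_{\sigma,\nu}$ is small. For that reason it extends with essentially no extra work. For any state $\sigma$, bounding the integrand on the disk $|\alpha-\beta|<1$ gives $\langle\PP_\beta,\frameOp_{\mathrm{HS},\sigma}(\PP_\beta)\rangle_2\ge e^{-2}/\bigl(\pi\sup_{|\alpha-\beta|<1}p_{\sigma,\PP_0}(\alpha)\bigr)$. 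This diverges because the Husimi function of a trace-class state vanishes at infinity, so $B^{\mathrm{HS}}_{\sigma,\PP_0}=\infty$ for every prior, not only thermal ones.

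One small caveat on your closing remark. Unbounded Hilbert--Schmidt norms of the rescaled frame vectors are a good heuristic, but for a continuous family they do not by themselves rule out the Bessel property. What does the work is your explicit fixed-width overlap computation.
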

\begin{proof}
    We illustrate this already for heterodyne detection, namely $\nu=\PP_0$, and for a thermal reference state
    $\sigma=\tau_{\bar n}$.
    Taking $X_n\equiv \PP_n$ we have
    \begin{equation}
    \begin{gathered}
        p_{\sigma,\PP_0}(\alpha)
        =
        \trace[\sigma\mu_{\PP_0}(\alpha)]
        =
        \frac{1}{\pi(\bar n+1)}
        e^{-|\alpha|^2/(\bar n+1)},
        \\
        p_{X_n,\PP_0}(\alpha)
        =
        \frac1\pi e^{-|\alpha|^2}
        \frac{|\alpha|^{2n}}{n!}.
    \end{gathered}
    \end{equation}
    Substituting these expressions into \cref{eq:HS_rescaled_quadratic_form} gives
    \begin{equation}
    \begin{aligned}
        \langle \PP_n,\frameOp_{\mathrm{HS},\sigma}(\PP_n)\rangle_2
        &=
        \frac{\bar n+1}{\pi}
        \int_{\mathbb C}\rmd^2\alpha\,
        e^{-c|\alpha|^2}
        \frac{|\alpha|^{4n}}{(n!)^2}
        \\
        &=
        (\bar n+1)
        \frac{(2n)!}{(n!)^2}
        \frac{1}{c^{2n+1}},
    \end{aligned}
    \end{equation}
    where
    $c\equiv\frac{2\bar n+1}{\bar n+1}$.
    Since $1\le c<2$ for finite $\bar n$, Stirling's formula gives
    \begin{equation}
        \langle \PP_n,\frameOp_{\mathrm{HS},\sigma}(\PP_n)\rangle_2
        \sim
        \frac{{\rm const}}{\sqrt n}
        \left(\frac{4}{c^2}\right)^n \to \infty.
    \end{equation}
    Thus $\frameOp_{\mathrm{HS},\sigma}$ is not bounded, i.e. $B^{\rm HS}_{\sigma,\nu}=\infty$.
    We thus find the same kind of pathology encountered in~\cref{sec:estTheory_infDim_HSfailure}.
\end{proof}

\paragraph{$\sigma$-regularized construction}
The $\sigma$-regularized construction has a built-in upper frame bound.
Indeed, by the general result of~\cref{sec:estTheory_infDim_upperFrameBound}, the rescaled effects
$\{\mu_\nu(\alpha)/\sqrt{p_{\sigma,\nu}(\alpha)}\}$
define a bounded analysis operator, with $B_{\sigma,\nu}=1$.

\subsubsection{Lower frame bounds}%
\label{sec:covariant_frame_lower}

We show here that the lower frame bound for covariant measurements vanishes in all three constructions.
Specifically, we prove in~\cref{prop:covariant_unrescaledHS_lowerbound} that the unrescaled HS construction gives a vanishing lower frame bound.
We prove in~\cref{prop:covariant_rescaledHS_lowerbound} that the same is true in the rescaled HS construction, at least for thermal priors and taking as seed the vacuum state (which corresponds to the case of heterodyne detection).

Finally, we show that in the $\sigma$-regularized construction we also have a vanishing lower frame bound, $A_{\sigma,\nu}=0$, for all Gaussian seeds $\nu$ and faithful Gaussian priors $\sigma$.
This proof proceeds through a sequence of reductions.
First, we prove in~\cref{lemma:A0_sigmaThNuVac} the result for thermal Gaussian $\sigma$ and vacuum $\nu$.
Second, we extend in~\cref{lemma:A0_bothThermal} the result to an arbitrary thermal $\nu$.
Third, we use in~\cref{lemma:A0_displacement,lemma:A0_gaussianUnitaries} displacement covariance and Gaussian-unitary covariance to reduce the general Gaussian case to the thermal case.
Finally,~\cref{lemma:A0_allgaussians} handles the remaining centered thermal-reference case, and~\cref{prop:A0_all_gaussian_covariant} combines the reductions to obtain the general Gaussian statement.

% \paragraph{Unrescaled HS construction}
\begin{proposition}%
\label{prop:covariant_unrescaledHS_lowerbound}
    For covariant measurements in the unrescaled HS construction, the lower frame bound is $A_\nu^{\mathrm{HS}}=0$.
\end{proposition}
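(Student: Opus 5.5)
We reuse the diagonalization from the proof of \cref{prop:covariant_unrescaledHS_upperbound}. In characteristic-function space the unrescaled frame operator acts as multiplication by $\frac1\pi|\chi_\nu(\beta)|^2$. Combining this with the Plancherel identity for the Weyl transform, $\|X\|_2^2=\int\frac{\rmd^2\beta}{\pi}|\chi_X(\beta)|^2$, the quadratic form becomes
\begin{equation*}
    \frac{\langle X,\frameOp_{\rm HS}(X)\rangle_2}{\|X\|_2^2}
    =
    \frac{\int \rmd^2\beta\, |\chi_X(\beta)|^2|\chi_\nu(\beta)|^2}{\pi\int\rmd^2\beta\,|\chi_X(\beta)|^2}.
\end{equation*}
The lower frame bound is therefore $\frac1\pi\operatorname{ess\,inf}_\beta|\chi_\nu(\beta)|^2$, provided the Weyl transform is unitary from $\hsOps$ onto $L^2(\mathbb C,\rmd^2\beta/\pi)$. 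Hermitian $X$ correspond to functions satisfying $\chi_X(-\beta)=\overline{\chi_X(\beta)}$, and this constraint does not affect the infimum, since we may symmetrize any test function.

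It then suffices to show that $\chi_\nu(\beta)\to0$ as $|\beta|\to\infty$. For a Gaussian seed this is explicit: $\chi_\nu(\beta)=\exp(-\tfrac12\bm\beta^T\Omega^T V\Omega\bm\beta+i\,\cdot)$ with covariance matrix $V>0$, so $|\chi_\nu|$ decays like a Gaussian. More generally, for any state $\nu$ the function $\chi_\nu$ is in $L^2$, since $\nu$ is Hilbert--Schmidt. A continuous $L^2$ function cannot be bounded below by a positive constant on a set of infinite measure, so its essential infimum is $0$; this covers non-Gaussian seeds as well.

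To make the argument concrete, we will exhibit test operators. Take a normalized bump $\phi_R(\beta)$ supported in a small disc around a point $\beta_R$ with $|\beta_R|=R$, add its Hermitian partner $\overline{\phi_R(-\beta)}$, and let $X_R\in\hsOps$ be the Hermitian operator whose characteristic function is this sum. Then
\begin{equation*}
    \frac{\langle X_R,\frameOp_{\rm HS}X_R\rangle_2}{\|X_R\|_2^2}\le \frac1\pi\sup_{|\beta|\ge R-1}|\chi_\nu(\beta)|^2\longrightarrow 0 .
\end{equation*}
Equivalently, we can take $X_n=D(\gamma)\PP_n D(-\gamma)$-type states or high Fock projectors, whose characteristic functions spread to large $|\beta|$. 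The bump construction is cleaner, however.

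The main obstacle is justifying the Weyl--Plancherel correspondence. We need that every $L^2$ function with the Hermitian symmetry arises as $\chi_X$ for some Hermitian Hilbert--Schmidt $X$, and that the formula for $\frameOp_{\rm HS}$ extends from the dense set where the computation in \cref{prop:covariant_unrescaledHS_upperbound} is formal to all of $\hsOps$. Boundedness, already established with $B^{\rm HS}_\nu=1/\pi$, handles the extension by continuity, and surjectivity of the Weyl transform is the standard Pool/Segal theorem, which we will invoke.
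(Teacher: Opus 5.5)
Your proposal is correct and follows essentially the paper's route: both use the Weyl--Plancherel form $\langle X,\frameOp_{\rm HS}(X)\rangle_2=\int\frac{\rmd^2\beta}{\pi^2}|\chi_X|^2|\chi_\nu|^2$ and a normalized, Hermitian-symmetric test characteristic function supported where $|\chi_\nu|^2$ is small. The paper uses the indicator of a large annulus, which is automatically real and symmetric, where you use a bump plus its Hermitian partner. One minor aside: displacing $\PP_n$ only multiplies $\chi_{\PP_n}$ by a phase, so the $D(\gamma)\PP_nD(-\gamma)$ variant adds nothing beyond $\PP_n$ itself, though this does not affect your main argument.
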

\begin{proof}
    To prove that $\frameOp_{\rm HS}$ is not bounded below, we need to show that for all $\epsilon>0$ we can find a normalized Hilbert-Schmidt operator $X$ such that $\langle X,\frameOp_{\rm HS}(X)\rangle_2<\epsilon$.
    Using~\cref{eq:XFX_notrescaledHS}, we can do so by taking $X$ such that $\chi_X(\alpha)$ is highly localized around some $\alpha$ in which $\lvert \chi_\nu(\alpha)\rvert \le\epsilon$.
    
    Since $\nu$ is trace class, its characteristic function satisfies $\lim_{|\alpha|\to\infty}|\chi_\nu(\alpha)|=0$.
    Choose $R_\epsilon$ sufficiently large, and let
    \begin{equation}
        E_\epsilon
        =
        \{\beta\in\mathbb C:\,R_\epsilon<|\beta|<R_\epsilon+1\}.
    \end{equation}
    Then $E_\epsilon$ has finite nonzero measure, satisfies $|\chi_\nu(\beta)|^2\le\epsilon$ on $E_\epsilon$, and is symmetric under $\beta\to-\beta$.
    Define
    \begin{equation}
        f_\epsilon
        =
        \frac{1_{E_\epsilon}}{\|1_{E_\epsilon}\|_{L^2(\rmd^2\beta/\pi)}} .
    \end{equation}
    Since $f_\epsilon$ is real and satisfies $f_\epsilon(-\beta)=\overline{f_\epsilon(\beta)}$, it is the characteristic function of a Hermitian Hilbert-Schmidt operator $X_\epsilon$.
    By unitarity of the Weyl transform from Hilbert-Schmidt operators to $L^2(\mathbb C,\rmd^2\beta/\pi)$, $\|X_\epsilon\|_2=\|f_\epsilon\|_{L^2}=1$.
    \Cref{eq:XFX_notrescaledHS} then gives
    \begin{equation}
        \langle X_\epsilon,\frameOp_{\rm HS}(X_\epsilon)\rangle_2
        \le
        \frac{\epsilon}{\pi}.
    \end{equation}
    Since $\epsilon$ is arbitrary, the lower frame bound is zero.
\end{proof}

\begin{proposition}%
\label{prop:covariant_rescaledHS_lowerbound}
    The HS construction, with thermal prior $\sigma=\tau_{\bar n}$ and seed $\nu=\mathbb{P}_0$, has vanishing lower frame bound: $A_{\sigma,\nu}^{\mathrm{HS}}=0$.
\end{proposition}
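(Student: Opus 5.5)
We need to show that $\inf_X \langle X,\frameOp_{\mathrm{HS},\sigma}(X)\rangle_2/\|X\|_2^2=0$ for heterodyne detection with thermal prior. The plan uses the quadratic form \cref{eq:HS_rescaled_quadratic_form}, which here reads
\begin{equation*}
    \langle X,\frameOp_{\mathrm{HS},\sigma}(X)\rangle_2
    =
    \pi(\bar n+1)\int_{\mathbb C}\rmd^2\alpha\,
    e^{|\alpha|^2/(\bar n+1)}\,|Q_X(\alpha)|^2 ,
\end{equation*}
where $Q_X(\alpha)=\frac1\pi\langle\alpha|X|\alpha\rangle$ is the Husimi function of $X$. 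The weight grows, and the upper-bound counterexample (Fock projectors) even made this blow up. So we need test operators whose Husimi function is tiny everywhere relative to their Hilbert-Schmidt norm. High-frequency oscillations in Fock space should do this, in the spirit of the alternating sequences $x^{(M)}$ of \cref{sec:estTheory_infDim_examples_cool}.

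As a first attempt, take diagonal $X=\sum_n x_n\PP_n$. Then $Q_X(\alpha)=\frac1\pi e^{-|\alpha|^2}\sum_n x_n |\alpha|^{2n}/n!$, and the quadratic form becomes
\begin{equation*}
    \frac{\bar n+1}{\pi}\int\rmd^2\alpha\, e^{-c|\alpha|^2}\Bigl|\sum_n x_n\frac{|\alpha|^{2n}}{n!}\Bigr|^2 ,
\end{equation*}
with $c=(2\bar n+1)/(\bar n+1)\in[1,2)$ as in the proof of \cref{prop:covariant_rescaledHS_upperbound}. Passing to $t=|\alpha|^2$, we want $\sum_n x_n t^n/n!$ to be small in $L^2(e^{-ct}\rmd t)$ while $\sum_n x_n^2=1$.

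The natural choice is a block of binomial alternating coefficients. Take $x_n=\binom{M}{n}(-s)^n$, normalized, or more cleanly choose $x_n$ so that $\sum_n x_n t^n/n!=e^{-\lambda t}\,p(t)$ with strong cancellation. A more robust route is to pick the generating function first: set $\sum_n x_n t^n/n! = e^{-\lambda t}$ with $\lambda>0$ large, i.e.\ $x_n=(-\lambda)^n$. This is not $\ell^2$, so truncate it, or instead use $x_n=(-\lambda)^n/\sqrt{N}$ for $n\le N$.

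The cleanest choice I would try first is off-diagonal: take $X=\ketbra{0}{m}+\ketbra{m}{0}$. Then $|Q_X(\alpha)|^2\propto e^{-2|\alpha|^2}|\alpha|^{2m}/m!$ with $\|X\|_2^2=2$, and the form gives
\begin{equation*}
    \propto \int e^{-c|\alpha|^2}\frac{|\alpha|^{2m}}{m!}\,\rmd^2\alpha=\frac{\pi}{c^{m+1}} .
\end{equation*}
Whenever $c>1$, i.e.\ $\bar n>0$, this tends to $0$ as $m\to\infty$, which gives $A^{\mathrm{HS}}_{\sigma,\nu}=0$ directly.

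For the borderline $\bar n=0$ (so $c=1$) this gives a constant. There I would fall back on the diagonal alternating blocks, showing via the $t$-integral, a Laplace/Gamma computation, that $\int e^{-t}|\sum x_n t^n/n!|^2\rmd t\to0$. Alternatively, I would average many off-diagonal pieces $\ketbra{j}{j+m}$ with $j$ spread out, so that the Husimi contributions partially cancel.

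The main obstacle is this $c=1$ case and checking that the normalization constants ($\pi$ factors from $\mu_{\PP_0}=\PP_\alpha/\pi$) are right. For $\bar n>0$, the off-diagonal computation is a one-line Gamma integral.
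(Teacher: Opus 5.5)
\paragraph{Verdict} Your argument is correct for every faithful thermal prior. It uses a different family of test operators from the paper's proof.

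\paragraph{Comparison with the paper} The paper stays in the Fock-diagonal sector. It takes the normalized binomial-alternating blocks $X_N\propto\sum_{k\le N}(-c)^k\binom{N}{k}\PP_k$, whose coherent-state diagonal is $e^{-|\alpha|^2}C_N^{-1/2}L_N(c|\alpha|^2)$. Laguerre orthonormality then gives the quadratic form as $(\bar n+1)/(c\,C_N)$, up to a harmless constant. Since $C_N=\sum_j c^{2j}\binom{N}{j}^2\ge c^{2N}$, the Rayleigh quotient decays like $c^{-2N}$.

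You instead use the phase-sensitive operators $X_m=\ketbra{0}{m}+\ketbra{m}{0}$. Here the quadratic form is a single Gamma integral, and one gets exactly $\langle X_m,\frameOp_{\mathrm{HS},\sigma}(X_m)\rangle_2/\|X_m\|_2^2=(\bar n+1)\,c^{-(m+1)}$, with no special-function identity needed. Your route is more elementary. The paper's route shows that the missing lower bound already appears in the phase-invariant diagonal sector, where your off-diagonal trick is unavailable. Its construction is also more robust: since $C_N\ge\binom{2N}{N}$, the normalization diverges even at $c=1$.

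\paragraph{A slip in the intermediate step} $|Q_{X_m}(\alpha)|^2$ is not pointwise proportional to $e^{-2|\alpha|^2}|\alpha|^{2m}/m!$. It equals $\frac{4}{\pi^2}e^{-2|\alpha|^2}\frac{|\alpha|^{2m}}{m!}\cos^2(m\arg\alpha)$. The factor $\cos^2(m\arg\alpha)$ averages to $1/2$ for $m\ge1$, so your final $\pi/c^{m+1}$ is right. The intermediate claim should just be made after the angular integration.

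\paragraph{The case $c=1$ is not needed} The borderline $c=1$ means $\bar n=0$, i.e.\ $\sigma=\PP_0$, which is not faithful. The rescaled construction assumes a faithful prior, and the paper's proof likewise uses $c>1$. Incidentally, your fallback of binomial alternating blocks with ratio $c$ is exactly the paper's family, and it would handle $c=1$ as well: there $\int_0^\infty e^{-t}L_N(t)^2\,\rmd t=1$, while the unnormalized squared norm is $\binom{2N}{N}\to\infty$.
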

\begin{proof}
    We can also show that $\frameOp_{\mathrm{HS},\sigma}$ lacks a lower frame bound using as test state truncated diagonal oscillating operators of the form
    \begin{equation}
    \begin{gathered}
        X_N \equiv \sum_{k=0}^N x_k^{(N)}\PP_k, \quad
        x_k^{(N)} \equiv C_N^{-1/2} (-c)^k \binom{N}{k},
    \end{gathered}
    \end{equation}
    with $C_N \equiv \sum_{j=0}^N c^{2j} \binom{N}{j}^2$ the normalization constant ensuring $\|X_N\|_2=1$, and again $c\equiv \frac{2\bar n+1}{\bar n+1} >1$.
    Then $\langle X,\frameOp_{\mathrm{HS},\sigma}(X)\rangle_2=\int\rmd^2\alpha\frac{\lvert\langle\alpha|X|\alpha\rangle\rvert^2}{p_\sigma(\alpha)}$, and
    \begin{equation}
    \begin{aligned}
        \langle \alpha |X_N|\alpha\rangle
        &=
        e^{-|\alpha|^2} \sum_{k=0}^N  x_k^{(N)} \frac{|\alpha|^{2k}}{k!}
        \\&=
        e^{-|\alpha|^2} C_N^{-1/2} L_N(c|\alpha|^2),
    \end{aligned}
    \end{equation}
    with $L_N$ Laguerre polynomials.
    Hence
    \begin{equation}
    \begin{gathered}
        \langle X_N,\frameOp_{\mathrm{HS},\sigma}(X_N)\rangle_2
        =
        \frac{\pi(\bar n+1)}{C_N}
        \int_0^\infty dt \, e^{-ct} L_N(ct)^2
        \\ =
        \frac{\pi(\bar n+1)^2}{(2\bar n+1)C_N}
        \le \frac{\pi (\bar n+1)}{c^{2N+1}} \to 0,
    \end{gathered}
    \end{equation}
    which proves we also have $A_\sigma=0$, so at least for this choice of $\sigma$ and $\nu$ the rescaled effects have neither an upper nor a lower frame bound.
    Thus the rescaling by $p_{\sigma,\nu}^{-1/2}$ does not cure the Hilbert-Schmidt construction; instead, in this example it makes the upper frame bound diverge while still leaving the lower frame bound equal to zero.
\end{proof}

\paragraph{$\sigma$-regularized construction}
Consider now the $\sigma$-regularized construction.
Having $A_{\sigma,\nu}=0$ means that $\frameOp_{\sigma,\nu}$ is not bounded below, meaning that $\langle X,\frameOp_{\sigma,\nu}(X)\rangle_\sigma$ with $\|X\|_\sigma=1$ can be made arbitrarily small.
We will thus study the behaviour of
\begin{equation}
    \langle X,\frameOp_{\sigma,\nu}(X)\rangle_\sigma
    \equiv
    \int \rmd^2\alpha
    \frac{\trace[\sigma\{X,\mu_\nu(\alpha)\}]^2}{4\trace[\sigma\mu_\nu(\alpha)]}
\end{equation}
and compute the lower frame bound $A_{\sigma,\nu}$ via
\begin{equation}
    A_{\sigma,\nu} \equiv \inf_{\|X\|_\sigma=1}
    \langle X,\frameOp_{\sigma,\nu}(X)\rangle_\sigma.
\end{equation}

\begin{lemma}\label{lemma:A0_sigmaThNuVac}
    % Let $\sigma=\tau_s$ and $\nu=\PP_0$. Then $A_{\sigma,\nu}=0$.
    $A_{\tilde\tau_s,\PP_0}=0$ for all $0<s<1$.
\end{lemma}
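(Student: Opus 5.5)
The plan is to exhibit an explicit family of bounded, finite-rank, Fock-diagonal test operators \(X_N\) along which the Rayleigh quotient \(\langle X_N,\frameOp_{\sigma,\nu}(X_N)\rangle_\sigma/\|X_N\|_\sigma^2\) tends to zero. This is the same strategy as in the proof of \cref{prop:covariant_rescaledHS_lowerbound}, now carried out in the \(\sigma\)-regularized geometry. I read \(\tilde\tau_s\) as the thermal state written in geometric form, \(\tilde\tau_s=(1-s)\sum_{n\ge0}s^n\PP_n\) with \(0<s<1\); if the paper's normalization differs, only the constants below change. Since finite-rank diagonal operators are bounded, they lie in \(\scrLcompletion\), so it suffices to compute on this sector. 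The infimum defining \(A_{\sigma,\nu}\) is then bounded above by the infimum over this subfamily.

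First I compute the relevant quantities for \(X=\sum_{n=0}^N x_n\PP_n\). Because \(\sigma\) and \(X\) are both diagonal, \(\tfrac12\trace[\sigma\{X,\mu_{\PP_0}(\alpha)\}]=\trace[\sigma X\mu_{\PP_0}(\alpha)]\). Using \(|\langle n|\alpha\rangle|^2=e^{-|\alpha|^2}|\alpha|^{2n}/n!\), this equals \(\pi^{-1}(1-s)e^{-|\alpha|^2}\sum_n x_n s^n|\alpha|^{2n}/n!\). Similarly, \(p_{\sigma,\PP_0}(\alpha)=\pi^{-1}(1-s)e^{-(1-s)|\alpha|^2}\). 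Passing to \(t=|\alpha|^2\), with \(\rmd^2\alpha=\pi\,\rmd t\) after the angular integration, the quadratic form becomes
\begin{equation}
    \langle X,\frameOp_{\sigma,\nu}(X)\rangle_\sigma
    =(1-s)\int_0^\infty\rmd t\,e^{-(1+s)t}f_X(t)^2,
    \qquad
    f_X(t)\equiv\sum_{n=0}^N x_n\frac{(st)^n}{n!},
\end{equation}
while the norm is \(\|X\|_\sigma^2=(1-s)\sum_n s^n x_n^2\).

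Next I choose \(f_X\) to be a rescaled Laguerre polynomial, \(f_{X_N}(t)=L_N((1+s)t)\). This makes the integral explicit through Laguerre orthonormality: it equals \(1/(1+s)\), independent of \(N\). Matching coefficients via \(L_N(y)=\sum_k\binom Nk(-y)^k/k!\) gives
\[
    x_k^{(N)}=\binom Nk\left(-\frac{1+s}{s}\right)^k,
\]
and therefore
\[
    \|X_N\|_\sigma^2=(1-s)\sum_{k=0}^N\binom Nk^2\left(\frac{(1+s)^2}{s}\right)^k
    \ge(1-s)\left(\frac{(1+s)^2}{s}\right)^N .
\]
Since \((1+s)^2/s>1\) for every \(0<s<1\) (indeed \((1+s)^2>s\)), the normalized quotient is at most \(\frac{1}{1+s}\bigl(s/(1+s)^2\bigr)^N\to0\). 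Hence \(A_{\tilde\tau_s,\PP_0}=0\).

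The only delicate points are bookkeeping rather than analysis. One is checking the exact convention for \(\tilde\tau_s\) and the \(1/\pi\) factors. The other is confirming that the anticommutator reduces to \(2\sigma X\mu\) under the trace for diagonal \(X\); this holds because \(\sigma\) and \(X\) commute, via cyclicity. I would double-check the Laguerre normalization \(\int_0^\infty e^{-y}L_N(y)^2\,\rmd y=1\) after the substitution \(y=(1+s)t\). The essential mechanism is that the rescaling by \(s^n\) in the analysis map damps high Fock components far more strongly than the \(\sigma\)-norm does. Oscillating sign patterns can therefore be made almost invisible to the measurement while keeping a large norm.
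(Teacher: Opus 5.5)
Your argument is correct, but it is not the route the paper takes.

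\textbf{The paper's route.} It uses single normalized Fock projectors $X_n=\PP_n/\sqrt{(1-s)s^n}$. Then $\|X_n\|_\sigma=1$, and the quadratic form reduces to one Gamma integral, $\binom{2n}{n}s^n/(1+s)^{2n+1}\le\frac{1}{1+s}\bigl(4s/(1+s)^2\bigr)^n\to0$. The mechanism is that each high Fock projector is individually almost invisible to the measurement.

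\textbf{Your route.} You use sign-alternating Laguerre combinations, mirroring the paper's rescaled-HS argument in \cref{prop:covariant_rescaledHS_lowerbound}. Here the quadratic form stays fixed at $(1-s)/(1+s)$, while $\|X_N\|_\sigma^2$ grows at least like $\bigl((1+s)^2/s\bigr)^N$. Your bookkeeping checks out: the anticommutator reduction, $p_{\sigma,\PP_0}$, the substitution $\rmd^2\alpha=\pi\,\rmd t$, the coefficient matching, and Laguerre orthonormality. You also get a rate $\bigl(s/(1+s)^2\bigr)^N\le 4^{-N}$ that is uniform in $s$. By contrast, the paper's geometric rate $4s/(1+s)^2$ degrades to $1$ as $s\to1$.

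\textbf{What each choice buys.} The paper's choice is portable. Since $X_n$ is a single positive projector commuting with $\sigma$, its coefficient function is one nonnegative term of the thermal expansion. This gives the domination $m_{n,\nu}\le p_{\tilde\tau_t,\nu}/[(1-t)t^n]$, which drives \cref{lemma:A0_allgaussians}, \cref{prop:A0_covariantWithS} and \cref{prop:A0_homodyne_gaussian} for squeezed seeds, randomized settings and homodyne detection. Your cancellation instead rests on the exact vacuum-seed overlaps $e^{-t}t^n/n!$ and on Laguerre orthogonality. Such termwise domination bounds would destroy it, so it does not extend as easily beyond vacuum or thermal seeds. Your $X_N$ do still feed into \cref{lemma:A0_bothThermal}, since that lemma holds for arbitrary $X$.
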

\begin{proof}
Suppose $\nu=\PP_0$ is the vacuum state and $\sigma=\tilde\tau_{s}$ is a thermal state with $0<s<1$. We exclude $s=0$ because $\tilde\tau_0=\PP_0$ is not faithful.
Then $X_n \equiv \PP_n/\sqrt{(1-s)s^n}$ has $\|X_n\|_\sigma=1$, and $\mu_\nu(\alpha)=\frac1\pi \PP_\alpha$.
% Define $c_{X,\sigma,\nu}(\alpha)\equiv \langle X,\mu_\nu(\alpha)\rangle_\sigma $.
Because $[X_n,\sigma]=0$, we have
\begin{equation}
    \langle X_n,\mu_\nu(\alpha)\rangle_\sigma
    % = \trace[\sigma X\mu_\nu(\alpha)]
    = \frac{\sqrt{(1-s)s^n}}{\pi}e^{-|\alpha|^2}\frac{|\alpha|^{2n}}{n!}.
\end{equation}
Similarly, $\trace[\mu_\nu(\alpha)\sigma]=\frac{1-s}{\pi}e^{-(1-s)|\alpha|^2}$, hence
\begin{equation}
\begin{aligned}
    \langle X_n,\frameOp_{\sigma,\nu}(X_n)\rangle_\sigma
    &=
    \int\rmd^2\alpha\frac{\langle X_n,\mu_\nu(\alpha)\rangle_\sigma^2}{\trace[\mu_\nu(\alpha)\sigma]}
    \\ &=
    \frac{(2n)!}{(n!)^2} \frac{s^n}{(1+s)^{2n+1}}
    \\ &\le
    \frac{1}{1+s}\left(\frac{4s}{(1+s)^2}\right)^n \to0,
\end{aligned}
\end{equation}
because $4s < (1+s)^2$ for $0<s<1$.
Hence the right-hand side converges to zero, and therefore $A_{\tilde\tau_s,\PP_0}=0$.
\end{proof}

\begin{lemma}\label{lemma:A0_bothThermal}
    For any faithful state $\sigma$, $A_{\sigma,\tilde\tau_r} \le A_{\sigma,\PP_0}$. In particular,
    $A_{\tilde\tau_s,\tilde\tau_r}=0$ for all $0<s<1$ and $0\le r<1$.
\end{lemma}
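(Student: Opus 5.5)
The plan is to realize the thermal-seed covariant POVM as a classical post-processing of heterodyne detection, and then to show that the $\sigma$-regularized frame quadratic form cannot increase under such post-processing. For $0<r<1$ the thermal state has a Gaussian Glauber--Sudarshan representation, $\tilde\tau_r=\int\rmd^2\beta\,G_r(\beta)\,\PP_\beta$, where $G_r\ge0$ is a normalized Gaussian density. Displacement covariance, $D(\alpha)\PP_\beta D(-\alpha)=\PP_{\alpha+\beta}$, then gives
\begin{equation*}
    \mu_{\tilde\tau_r}(\alpha')=\int\rmd^2\alpha\,K(\alpha'|\alpha)\,\mu_{\PP_0}(\alpha),
    \qquad K(\alpha'|\alpha)\equiv G_r(\alpha-\alpha').
\end{equation*}
The kernel satisfies $K\ge0$ and $\int\rmd^2\alpha'\,K(\alpha'|\alpha)=1$, so it is a Markov kernel. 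The case $r=0$ is trivial, since then $\tilde\tau_0=\PP_0$ and the inequality is an equality.

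Next I would prove a data-processing inequality for the quadratic form. Fix a bounded Hermitian $X$; since bounded operators are dense in $\scrLcompletion$ and both forms are continuous (the upper frame bound is $1$ by \cref{prop:automatic_upper_frame_bound_sigma}), this suffices. Write $m_X(\alpha)=\langle X,\mu_{\PP_0}(\alpha)\rangle_\sigma$ and $p(\alpha)=\trace[\sigma\mu_{\PP_0}(\alpha)]$. Linearity gives $m'_X=\int K\,m_X$ and $p'=\int K\,p$. Applying Cauchy--Schwarz with weight $K(\alpha'|\cdot)\,p$ yields
\begin{equation*}
    m'_X(\alpha')^2\le\left(\int\rmd^2\alpha\,K(\alpha'|\alpha)\frac{m_X(\alpha)^2}{p(\alpha)}\right)p'(\alpha').
\end{equation*}
I would divide by $p'(\alpha')>0$, integrate over $\alpha'$, and use Tonelli together with $\int K\,\rmd^2\alpha'=1$. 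This gives $\langle X,\frameOp_{\sigma,\tilde\tau_r}X\rangle_\sigma\le\langle X,\frameOp_{\sigma,\PP_0}X\rangle_\sigma$ for every $X$. Taking the infimum over $\|X\|_\sigma=1$ then gives $A_{\sigma,\tilde\tau_r}\le A_{\sigma,\PP_0}$.

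The ``in particular'' statement follows immediately from \cref{lemma:A0_sigmaThNuVac}: $0\le A_{\tilde\tau_s,\tilde\tau_r}\le A_{\tilde\tau_s,\PP_0}=0$.

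The only delicate step is the measure-theoretic bookkeeping. I need to check that $p$ is strictly positive, that the interchange of integrals is justified, and that the identity $m'_X=\int K\,m_X$ holds pointwise. All of this follows from nonnegativity and Tonelli, together with the Gaussian explicit forms: $p(\alpha)>0$ everywhere for faithful $\sigma$, because $\langle\alpha|\sigma|\alpha\rangle>0$. I would also note that $X$ ranges over the same space $\scrLcompletion$ for both seeds, since that space depends only on $\sigma$, so the two infima are taken over the same set.
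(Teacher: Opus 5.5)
Your proposal is correct and follows the paper's proof essentially step for step. The paper also writes the thermal-seed POVM as a Gaussian smearing of heterodyne detection via the positive $P$-representation of $\tilde\tau_r$, proves the pointwise convolution inequality $(G*a)^2/(G*b)\le G*(a^2/b)$ (your Cauchy--Schwarz step), integrates, takes the infimum, and then invokes \cref{lemma:A0_sigmaThNuVac}. Your remarks on density of bounded operators and on the pointwise validity of the kernel identity are harmless refinements that the paper leaves implicit.
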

\begin{proof}
    Suppose now $\sigma$ is an arbitrary faithful state and $\nu=\tilde\tau_r$ is a thermal state, $0\le r<1$.
    The thermal state $\tilde\tau_r$ has a positive Glauber-Sudarshan representation
    \begin{equation}
        \nu = \tilde\tau_r = \int \rmd^2\beta\, G_r(\beta)\PP_\beta,
    \end{equation}
    where $G_r$ is a centered Gaussian probability density for $r>0$ and is a Dirac mass at the origin for $r=0$.
    
    % where $G_r(\beta)$ is the $P$-function of $\tilde\tau_r$, and is thus a Gaussian probability density in phase space such that $G_r(\beta)=G_r(-\beta)$ and $\int_{\Bbb C}\rmd^2\beta G_r(\beta)=1$.
    Hence
    \begin{equation}
    \begin{aligned}
        \mu_{\tilde\tau_r}(\alpha)
        % &= 
        % \int\rmd^2\beta \, G_r(\beta) \mu_{\PP_0}(\alpha+\beta)
        =
        \int\rmd^2\beta \, G_r(\beta) \mu_{\PP_0}(\alpha-\beta).
    \end{aligned}
    \end{equation}
    Defining $c_{X,\sigma,\nu}(\alpha)\equiv \langle X,\mu_\nu(\alpha)\rangle_\sigma$, we can write
    \begin{equation}
    \begin{aligned}
        c_{X,\sigma,\tilde\tau_r}
        =
        (G_r * c_{X,\sigma,\PP_0}),
        \qquad
        p_{\sigma,\tilde\tau_r}
        =
        (G_r * p_{\sigma,\PP_0}),
    \end{aligned}
    \end{equation}
    and thus
    \begin{equation}\label{eq:lemma2_XFX}
        \langle X,\frameOp_{\sigma,\tilde\tau_r}(X)\rangle_\sigma
        = \int\rmd^2\alpha\frac{[(G_r* c_{X,\sigma,\PP_0})(\alpha)]^2}{(G_r * p_{\sigma,\PP_0})(\alpha)}.
    \end{equation}
    We now use the following elementary convolution inequality: if $G\ge0$, $\int G=1$, and $b>0$, then
    \begin{equation}\label{eq:lemma2_convolutionBound}
        \frac{(G*a)^2}{G*b}
        \le
        G*\left(\frac{a^2}{b}\right)
        \le \int G\int \frac{a^2}{b}.
    \end{equation}
    Applying \cref{eq:lemma2_convolutionBound} with $a=c_{X,\sigma,\PP_0}$ and $b=p_{\sigma,\PP_0}$, and integrating over $\alpha$, we obtain
    \begin{equation}
    \begin{aligned}
        \langle X,\frameOp_{\sigma,\tilde\tau_r}(X)\rangle_\sigma
        &=
        \int_{\mathbb C}\rmd^2\alpha\,
        \frac{|G_r*c_{X,\sigma,\PP_0}(\alpha)|^2}
        {G_r*p_{\sigma,\PP_0}(\alpha)}
        \\
        &\le
        \int_{\mathbb C}\rmd^2\alpha\,
        G_r*
        \left(
        \frac{|c_{X,\sigma,\PP_0}|^2}
        {p_{\sigma,\PP_0}}
        \right)(\alpha)
        \\
        &=
        \langle X,\frameOp_{\sigma,\PP_0}(X)\rangle_\sigma .
    \end{aligned}
    \end{equation}
    Note that since $\sigma$ is faithful, $p_{\sigma,\PP_0}(\alpha)>0$ for all $\alpha$, so the quotient $a^2/b$ is well-defined.
    Taking the infimum over $\|X\|_\sigma=1$ gives
    \begin{equation}
        A_{\sigma,\tilde\tau_r}
        \le
        A_{\sigma,\PP_0}.
    \end{equation}
    This shows that, for any state $\sigma$, $A_{\sigma,\PP_0}=0$ implies $A_{\sigma,\tilde\tau_r}=0$.
    This is physically intuitive, because smearing the measurement by switching to a measurement with higher temperature $\nu$ cannot improve estimation performance.
    In particular, from~\cref{lemma:A0_sigmaThNuVac} we conclude that $A_{\tilde\tau_s,\tilde\tau_r}=0$.
\end{proof}

\begin{lemma}\label{lemma:A0_displacement}
    Let $\sigma$ and $\nu$ be generic states, with $\sigma$ faithful. Then displacing $\nu$ does not change the lower frame bound, and neither does displacing $\sigma$ and $\nu$ at the same time.
    More precisely, given $\sigma'\equiv D(-\delta)\sigma D(\delta)$ and $\nu'\equiv D(-\delta)\nu D(\delta)$, we have $A_{\sigma',\nu'}=A_{\sigma,\nu'}=A_{\sigma,\nu}$.
\end{lemma}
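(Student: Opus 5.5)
The plan is to realise both statements as unitary equivalences between frames. Conjugating the observable space by a displacement is an isometry of the $\sigma$-geometries. A shift of the outcome variable is an isometry of $L^2(\mathbb C,\rmd^2\alpha)$. The lower frame bound $A_{\sigma,\nu}=\inf_{\|X\|_\sigma=1}\langle X,\frameOp_{\sigma,\nu}(X)\rangle_\sigma$ is invariant under such isometries once the quadratic forms are shown to match.

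\emph{Displacing $\nu$ alone.} By the Weyl relations, $D(\alpha)D(-\delta)=e^{i\phi(\alpha,\delta)}D(\alpha-\delta)$ with a scalar phase. Hence
\begin{equation}
    \mu_{\nu'}(\alpha)=\tfrac1\pi D(\alpha)D(-\delta)\nu D(\delta)D(-\alpha)=\mu_\nu(\alpha-\delta),
\end{equation}
since the phases cancel. Therefore $p_{\sigma,\nu'}(\alpha)=p_{\sigma,\nu}(\alpha-\delta)$ and $\langle X,\mu_{\nu'}(\alpha)\rangle_\sigma=\langle X,\mu_\nu(\alpha-\delta)\rangle_\sigma$ for every $X$. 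Substituting $\alpha\to\alpha+\delta$ in the quadratic form, which is allowed because Lebesgue measure is translation invariant, gives $\langle X,\frameOp_{\sigma,\nu'}X\rangle_\sigma=\langle X,\frameOp_{\sigma,\nu}X\rangle_\sigma$ for all bounded Hermitian $X$. The $\sigma$-norm is untouched, so the infimum over the dense set of bounded $X$ is the same. The analysis operator is bounded by \cref{prop:automatic_upper_frame_bound_sigma}, so both sides are continuous in $\|\cdot\|_\sigma$ and the infimum over bounded $X$ equals $A$. This gives $A_{\sigma,\nu'}=A_{\sigma,\nu}$.

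\emph{Displacing $\sigma$ and $\nu$ together.} Define $U(X)\equiv D(-\delta)XD(\delta)$. For bounded $X$, cyclicity of the trace gives
\begin{equation}
    \|U(X)\|_{\sigma'}^2=\trace[D(-\delta)\sigma X^2D(\delta)]=\|X\|_\sigma^2 ,
\end{equation}
so $U$ maps $\boundedOps_h$ bijectively onto itself and extends to a unitary $\scrLcompletion\to\mathscr L^2_h(\sigma')$. The same cyclicity, applied to the displaced effects, gives $U(\mu_\nu(\alpha))=\mu_{\nu'}(\alpha')$ up to a shift of the outcome label; in fact it is exactly $\mu_\nu$ conjugated, so a relabelling $\alpha'=\alpha$ or $\alpha-\delta$ results. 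The required identity is
\begin{equation}
    \langle U X,\mu_{\nu'}(\beta)\rangle_{\sigma'}=\Re\trace[\sigma X U^{-1}(\mu_{\nu'}(\beta))],
\end{equation}
with $U^{-1}(\mu_{\nu'}(\beta))=\tfrac1\pi D(\delta)D(\beta)D(-\delta)\nu D(\delta)D(-\beta)D(-\delta)=\mu_\nu(\beta)$, after the displacement phases cancel in the conjugation. The same computation gives $p_{\sigma',\nu'}=p_{\sigma,\nu}$. Hence $\langle UX,\frameOp_{\sigma',\nu'}UX\rangle_{\sigma'}=\langle X,\frameOp_{\sigma,\nu}X\rangle_\sigma$ with equal norms, and $A_{\sigma',\nu'}=A_{\sigma,\nu}$. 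Combining this with the first step at $\sigma$ yields the chain $A_{\sigma',\nu'}=A_{\sigma,\nu}=A_{\sigma,\nu'}$.

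The main obstacle is bookkeeping rather than analysis. One must check carefully that the displacement phases cancel in the conjugations, since composing $D(\delta)D(\beta)D(-\delta)$ produces only a scalar phase that drops out of $(\cdot)\nu(\cdot)^\dagger$. One must also confirm that $U$ preserves faithfulness and Hermiticity, and justify passing from bounded operators to the completion. The completion step follows from density together with the boundedness of $T_\sigma$.
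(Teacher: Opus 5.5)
Your proof is correct and follows the paper's argument: joint conjugation $X\mapsto D(-\delta)XD(\delta)$ is an isometry that matches the Rayleigh quotients, and displacing $\nu$ alone is a translation of the outcome variable. Your explicit check $D(\delta)\mu_{\nu'}(\beta)D(-\delta)=\mu_\nu(\beta)$ is the paper's identity $\mu_{\nu'}(\alpha)=D(-\delta)\mu_\nu(\alpha)D(\delta)$, so the hedged ``$\alpha'=\alpha$ or $\alpha-\delta$'' sentence should simply read $\alpha'=\alpha$; your density argument, that the infimum over bounded $X$ equals the infimum over $\scrLcompletion$, is a point of rigor the paper leaves implicit.
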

\begin{proof}
    Let $X'\equiv D(-\delta)X D(\delta)$.
    Observe that
    \begin{equation}
    \begin{gathered}
        \mu_{\nu'}(\alpha)
        =
        D(-\delta)\mu_\nu(\alpha)D(\delta)=\mu_\nu(\alpha-\delta),
        \\
        p_{\sigma',\nu'}(\alpha)
        =
        p_{\sigma,\nu}(\alpha),
        \quad
        p_{\sigma,\nu'}(\alpha) = p_{\sigma,\nu}(\alpha-\delta),
    \end{gathered}
    \end{equation}
    and thus also
    $\langle X',\mu_{\nu'}(\alpha)\rangle_{\sigma'}= \langle X,\mu_\nu(\alpha)\rangle_\sigma$, and
    \begin{equation}
        \frameOp_{\sigma',\nu'}(X')
        = D(-\delta) \frameOp_{\sigma,\nu}(X) D(\delta),
    \end{equation}
    \begin{equation}
        \frac{\langle X',\frameOp_{\sigma',\nu'}(X')\rangle_{\sigma'}}{\|X'\|_{\sigma'}^2}
        =
        \frac{\langle X,\frameOp_{\sigma,\nu}(X)\rangle_\sigma}{\|X\|_\sigma^2}.
    \end{equation}
    In particular, this implies $A_{\sigma',\nu'}=A_{\sigma,\nu}$.

    Furthermore, displacing $\nu$ alone does not affect the frame operator, as
    \begin{equation}
        \frameOp_{\sigma,\nu'}(X) =
        \int\rmd^2\alpha
        \frac{\langle X,\mu_\nu(\alpha') \rangle_{\sigma} }{p_{\sigma,\nu}(\alpha')}
        \mu_\nu(\alpha')
        = \frameOp_{\sigma,\nu}(X),
    \end{equation}
    where we used the notation $\alpha'\equiv \alpha-\delta$.
    Thus trivially $A_{\sigma,\nu'}=A_{\sigma,\nu}$.
    
    Putting these two results together, we may first center $\sigma$ by an appropriate joint displacement and then center the resulting seed by a second displacement applied only to the seed.
    In conclusion, this tells us that we can restrict ourselves to proving the lack of lower frame bound for $\nu,\sigma$ both \textit{centered} Gaussian states.
\end{proof}

\begin{lemma}\label{lemma:A0_gaussianUnitaries}
    Let $\sigma$ and $\nu$ be generic states, with $\sigma$ faithful.
    Let $U_G$ be a Gaussian unitary, and define $\nu'=U_G^\dagger\nu U_G$, and $\sigma'=U_G^\dagger \sigma U_G$.
    Then $A_{\sigma',\nu'}=A_{\sigma,\nu}$.
\end{lemma}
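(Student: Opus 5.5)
The plan is to mirror the proof of \cref{lemma:A0_displacement}, using the conjugation map $X\mapsto X'\equiv U_G^\dagger X U_G$ together with the fact that Gaussian unitaries act covariantly on displacements. Recall that $U_G^\dagger D(\alpha)U_G = D(S\alpha+d)$ up to a phase. Here $S$ is the symplectic matrix of $U_G$, acting on $\alpha\simeq(\Re\alpha,\Im\alpha)$, and $d$ is its displacement part; the phase cancels in $D(\alpha)\nu D(-\alpha)$. Writing $U_G=D(\delta)U_S$ and using \cref{lemma:A0_displacement} to dispose of the displacement, it suffices to treat a purely symplectic $U_S$. For this case we have
\begin{equation}
    U_S^\dagger D(\alpha) U_S = D(S^{-1}\alpha),
\end{equation}
so that $U_S^\dagger\mu_\nu(\alpha)U_S = \mu_{\nu'}(S^{-1}\alpha)$.

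The first step is to transport all the ingredients. Cyclicity of the trace gives $p_{\sigma',\nu'}(S^{-1}\alpha)=p_{\sigma,\nu}(\alpha)$. Likewise
\begin{equation}
    \langle X',\mu_{\nu'}(S^{-1}\alpha)\rangle_{\sigma'}
    = \Re\trace[U_S^\dagger\sigma X\mu_\nu(\alpha)U_S]
    = \langle X,\mu_\nu(\alpha)\rangle_\sigma ,
\end{equation}
and $\|X'\|_{\sigma'}=\|X\|_\sigma$. Thus conjugation by $U_S$ is an isometry from $\mathscr L^2_h(\sigma)$ onto $\mathscr L^2_h(\sigma')$. To justify this, note that it is isometric and bijective on bounded operators, which are dense, so it extends to the completions.

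The second step is the quadratic form. We write
\begin{equation}
    \langle X',\frameOp_{\sigma',\nu'}(X')\rangle_{\sigma'}
    = \int\rmd^2\beta\,\frac{\langle X',\mu_{\nu'}(\beta)\rangle_{\sigma'}^2}{p_{\sigma',\nu'}(\beta)}
\end{equation}
and change variables $\beta=S^{-1}\alpha$. Since $\det S=1$, the measure $\rmd^2\alpha$ is invariant. The integrand then equals that of $\langle X,\frameOp_{\sigma,\nu}(X)\rangle_\sigma$, so the Rayleigh quotients coincide. Taking the infimum over $X$, using the bijection of the unit spheres, gives $A_{\sigma',\nu'}=A_{\sigma,\nu}$.

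The displacement part is handled by \cref{lemma:A0_displacement}. A joint displacement of $\sigma$ and $\nu$ leaves $A$ invariant. Combining this with the symplectic case covers a general Gaussian $U_G$, whichever order $D(\delta)U_S$ or $U_S D(\delta')$ is used, since $U_S D(\delta')=D(S\delta')U_S$. Alternatively, the argument can be done in one go with the affine change of variables $\alpha\mapsto S\alpha+d$, which still has unit Jacobian.

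I expect the main obstacle to be the bookkeeping of conventions: the direction $S$ versus $S^{-1}$, and the phase and sign conventions in $U^\dagger D(\alpha)U$. None of these affect the conclusion, because only the Jacobian and the pointwise equality of the integrands matter. A secondary technical point is the extension to the completion. I would treat the identities first on bounded $X$ and then pass to limits, using continuity of $T_\sigma$ from \cref{prop:automatic_upper_frame_bound_sigma}.
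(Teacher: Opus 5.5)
Your argument is correct and is essentially the paper's proof: you transport the effects, the probabilities and the $\sigma$-norms under $X\mapsto U_G^\dagger X U_G$, change variables using $\det S=1$, and take the infimum over the bijectively mapped unit spheres; your density/extension remark fills in a step the paper leaves implicit. One small correction: the displacement part of $U_G$ does not shift the argument of a displacement operator but only contributes a $d$-dependent phase, $U_G^\dagger D(\alpha)U_G=e^{i\varphi(\alpha)}D(S^{\pm1}\alpha)$ (cf.~\cref{eq:charfun_gaussianed_operator}), so your recalled formula $D(S\alpha+d)$ and the ``affine change of variables'' aside are wrong --- this is harmless because your main line routes the displacement through \cref{lemma:A0_displacement}, whereas the paper lets the phase cancel in $D(\alpha)\nu D(-\alpha)$ and treats a general $U_G$ in one step.
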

\begin{proof}
    Let $X'=U_G^\dagger XU_G$, and let $S$ be the symplectic matrix such that
    % We will show that the ratio $\langle X,\frameOp_{\sigma,\nu}(X)\rangle_\sigma/\|X\|_\sigma^2$ does not change if $X,\nu,\sigma$ are all rotated with a Gaussian unitary $U_G$.
    $U_G D(z)U_G^\dagger=e^{i\phi(z)}D(S^{-1}z)$.
    Then $\|X'\|_{\sigma'}=\|X\|_\sigma$, and
    \begin{equation}
    \begin{aligned}
        \mu_{\nu'}(z)
        &=
        U_G^\dagger \mu_\nu(S^{-1}z) U_G,
        \\
        \langle X',\mu_{\nu'}(z)\rangle_{\sigma'}
        &=
        \langle X,\mu_\nu(S^{-1}z)\rangle_\sigma,
        \\
        \trace[\sigma'\mu_{\nu'}(z)]
        &=
        \trace[\sigma\mu_\nu(S^{-1} z)].
    \end{aligned}
    \end{equation}
    Thus
    \begin{equation}
        \frac{\langle X',\mu_{\nu'}(z)\rangle_{\sigma'}^2}{\trace[\sigma'\mu_{\nu'}(z)]}
        =
        \frac{\langle X,\mu_{\nu}(S^{-1} z)\rangle_{\sigma}^2}{\trace[\sigma\mu_{\nu}(S^{-1} z)]}.
    \end{equation}
    Taking the integral, changing the integration variables with $z=S\xi$ and remembering that $S$ is symplectic and thus $\det(S)=1$, we get
    \begin{equation}
    \begin{aligned}
        \frac{
        \langle X',\frameOp_{\sigma',\nu'}(X')\rangle_{\sigma'}
        }{\|X'\|_{\sigma'}^2}
        &=
        \int \rmd^2 z
        \frac{\langle X',\mu_{\nu'}(z)\rangle_{\sigma'}^2}{\trace[\sigma'\mu_{\nu'}(z)]}
        \\ &=
        \frac{
        \langle X,\frameOp_{\sigma,\nu}(X)\rangle_{\sigma}
        }{\|X\|_{\sigma}^2}.
    \end{aligned}
    \end{equation}
    Because $X\mapsto X'=U_G^\dagger X U_G$ is invertible, minimizing over $X'$ is the same as minimizing over $X$, hence $A_{\sigma',\nu'}=A_{\sigma,\nu}$.
\end{proof}

Using~\cref{lemma:A0_displacement} we can assume $\sigma,\nu$ to be both centered Gaussian states, and then using~\cref{lemma:A0_gaussianUnitaries} we can further assume that $\sigma$ is a thermal mixed state.
Thus if we can prove that $A_{\tilde\tau_s,\nu}=0$ for any Gaussian $\nu$ and $0<s<1$, we can conclude that $A_{\sigma,\nu}=0$ for all Gaussian $\nu$ and faithful Gaussian $\sigma$.

\begin{lemma}\label{lemma:A0_allgaussians}
    Let $\sigma=\tilde\tau_s$, $0<s<1$, and let $\nu$ be a centered Gaussian state. Then $A_{\tilde\tau_s,\nu}=0$.
\end{lemma}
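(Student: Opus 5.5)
The plan is to reduce to a pure seed and then build explicit test operators, as in \cref{lemma:A0_sigmaThNuVac}. Every centered single-mode Gaussian state can be written $\nu=U_r\tilde\tau_t U_r^\dagger$, with $U_r$ a squeezing unitary (up to a phase rotation, which we absorb since it commutes with $\tilde\tau_s$). In the Gaussian-mixture step we must not convolve with the vacuum: a squeezed state is not a positive mixture of coherent states. Instead we use
\begin{equation*}
    \nu=\int\rmd^2\beta\,G(\beta)\,D(\beta)\xi_r D(-\beta),
    \qquad
    \xi_r\equiv U_r\PP_0U_r^\dagger ,
\end{equation*}
with $G$ a centered Gaussian probability density, or a Dirac mass when $t=0$. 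This holds because $\nu=U_r\tilde\tau_tU_r^\dagger$ and $U_r D(\gamma)U_r^\dagger=D(S_r\gamma)$, so the positive $P$-representation of $\tilde\tau_t$ is mapped to a Gaussian mixture of displaced $\xi_r$. Then $\mu_\nu=G*\mu_{\xi_r}$, and the convolution inequality \cref{eq:lemma2_convolutionBound}, applied exactly as in \cref{lemma:A0_bothThermal}, gives $A_{\tilde\tau_s,\nu}\le A_{\tilde\tau_s,\xi_r}$. It therefore suffices to treat a squeezed-vacuum seed.

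For $\nu=\xi_r$ and $\sigma=\tilde\tau_s$ we take the same test operators as before, $X_n=\PP_n/\sqrt{(1-s)s^n}$, which satisfy $\|X_n\|_\sigma=1$ and commute with $\sigma$. This gives
\begin{equation*}
    \langle X_n,\frameOp_{\sigma,\nu}(X_n)\rangle_\sigma
    =(1-s)s^n\int\rmd^2\alpha\,\frac{\mel{n}{\mu_{\xi_r}(\alpha)}{n}^2}{p_{\sigma,\xi_r}(\alpha)} .
\end{equation*}
Both ingredients are explicit Gaussian/Hermite expressions:
\begin{itemize}
    \item $p_{\sigma,\xi_r}$ is a centered anisotropic Gaussian, being the Husimi-type distribution of $\tilde\tau_s$ seen through $\xi_r$, with covariance $\mathrm{diag}(a_+,a_-)$;
    \item $\mel{n}{D(\alpha)\xi_rD(-\alpha)}{n}=|\langle n|\alpha,r\rangle|^2$ is the squared Fock amplitude of a displaced squeezed vacuum, i.e.\ a Gaussian times $|H_n|^2$ of a complex argument.
\end{itemize}
I would evaluate the integral by writing $|\langle n|\alpha,r\rangle|^2$ through its generating function, $\sum_n\frac{z^n}{n!}\ldots$. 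The integral against $1/p_{\sigma,\xi_r}$ becomes a Gaussian integral whenever the combined quadratic form is positive, which holds because the effect density decays like $e^{-|\alpha|^2}$ in the squeezed frame. This yields a closed generating function in $s$ whose radius of convergence controls the growth rate $\lambda(r,s)^n$ of the $n$-th term. We then need $s\,\lambda(r,s)<1$, in analogy with $4s<(1+s)^2$ at $r=0$.

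The main obstacle is this last exponential comparison. For large squeezing the Fock-diagonal test vectors may fail, because the anisotropic $p_{\sigma,\xi_r}$ decays more slowly in one direction than the isotropic $\tilde\tau_s$-weights assume. If the rate comparison fails, the fallback is to use squeezed test operators $X_n=U_{r'}\PP_nU_{r'}^\dagger$ (suitably $\sigma$-normalized), with $r'$ chosen to match the seed's squeezing. Alternatively, localize $X$ in characteristic-function space at large $|\beta|$ along the anti-squeezed direction, where $|\chi_{\xi_r}(\beta)|$ is exponentially small, as in \cref{prop:covariant_unrescaledHS_lowerbound}. The quadratic form is then controlled using $\|X\|_\sigma\ge\ldots$ restricted to finite-photon-number truncations, exploiting that $\langle\cdot,\cdot\rangle_\sigma$ and $\langle\cdot,\cdot\rangle_2$ are comparable on any finite Fock block. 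In either route the goal is a sequence with ratio $\to0$, giving $A_{\tilde\tau_s,\nu}=0$.
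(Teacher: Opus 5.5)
\paragraph{A gap at the decisive step} Your reduction to a squeezed-vacuum seed is correct: writing $\nu$ as a Gaussian mixture of displaced $\xi_r$ and applying the convolution inequality does give $A_{\tilde\tau_s,\nu}\le A_{\tilde\tau_s,\xi_r}$. However, the step that actually proves the lemma is missing. Writing $m_n(\alpha)\equiv\langle n|\mu_{\xi_r}(\alpha)|n\rangle$, you never show that $(1-s)s^n\int\rmd^2\alpha\, m_{n}(\alpha)^2/p_{\sigma,\xi_r}(\alpha)\to 0$.
\begin{itemize}
\item The generating-function evaluation is only sketched. Since $m_n^2$ is a Hadamard square, the diagonal terms are not given by a single Gaussian integral.
\item The rate $\lambda(r,s)$ is never computed, and you yourself suspect that $s\lambda<1$ may fail at large squeezing.
\item The fallbacks are not arguments either. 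Localizing $\chi_X$ at large $|\beta|$ forces large Fock support $N$. There, comparing $\|\cdot\|_\sigma$ with $\|\cdot\|_2$ costs factors of order $s^{-N}$. Moreover, the $\sigma$-quadratic form, with its $1/p_\sigma$ weight and anticommutator, is not diagonalized by the multiplier $|\chi_\nu|^2$.
\end{itemize}
As written, the proposal does not establish $A_{\tilde\tau_s,\nu}=0$.

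\paragraph{The missing idea} What you need is a termwise domination that avoids computing any exact rate. Since $p_{\tilde\tau_t,\nu}=\sum_k(1-t)t^k m_{k,\nu}$ with nonnegative terms, one has $m_{n,\nu}\le p_{\tilde\tau_t,\nu}/[(1-t)t^n]$ for every $t\in(0,1)$. Hence
\[
\langle X_n,\frameOp_{\sigma,\nu}(X_n)\rangle_\sigma\le\frac{1-s}{(1-t)^2}\Bigl(\frac{s}{t^2}\Bigr)^n\int\rmd^2\alpha\,\frac{p_{\tilde\tau_t,\nu}(\alpha)^2}{p_{\tilde\tau_s,\nu}(\alpha)} .
\]
Both densities are centered Gaussians, with covariances $V_\nu+v_tI$ and $V_\nu+v_sI$, where $v_t=\frac{1+t}{2(1-t)}$. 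So the integral is finite iff $v_t<u+2v_s$ for every eigenvalue $u>0$ of $V_\nu$, and it suffices that $v_t\le 2v_s$. Since $v_{\sqrt s}<2v_s$ (equivalent to $(1-\sqrt s)^2>0$), any $t\in\bigl(\sqrt s,\tfrac{1+3s}{3+s}\bigr]$ works. This gives geometric decay at rate $s/t^2<1$ for every centered Gaussian seed, uniformly in the squeezing.

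Your worry about the Fock-diagonal test operators is therefore unfounded. The anisotropy of $\nu$ enters the numerator and denominator in the same way and can only help, and no reduction to a pure seed is needed.
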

\begin{proof}
    Let $X_n\equiv \PP_n/\sqrt{(1-s)s^n}$, so that $\|X_n\|_\sigma=1$.
    We want to prove that $\langle X_n,\frameOp_{\sigma,\nu}(X_n)\rangle_\sigma\to0$.
    Observe that
    \begin{equation}\label{eq:lemma5_basicDefs}
    \begin{aligned}
        \langle X_n, \mu_\nu(\alpha)\rangle_\sigma
        &=
        \sqrt{(1-s)s^n} \, m_{n,\nu}(\alpha),
        \\
        p_{\tilde\tau_t,\nu}(\alpha)
        &=
        \sum_{n=0}^\infty (1-t)t^n  m_{n,\nu}(\alpha),
        \quad\forall t\in(0,1),
    \end{aligned}
    \end{equation}
    where $m_{n,\nu}(\alpha)\equiv p_{\PP_n,\nu}(\alpha)= \langle n|\mu_\nu(\alpha)|n\rangle$ and $p_{\sigma,\nu}(\alpha)\equiv\trace[\sigma\mu_\nu(\alpha)]$.
    Thus
    \begin{equation}\label{eq:lemma5_firstEq}
        \langle X_n,\frameOp_{\sigma,\nu}(X_n)\rangle_\sigma
        =
        (1-s)s^n
        \int\rmd^2\alpha
        \frac{m_{n,\nu}(\alpha)^2}{p_{\tilde\tau_s,\nu}(\alpha)}.
    \end{equation}
    An upper bound that immediately follows from~\cref{eq:lemma5_basicDefs} is
    \begin{equation}\label{eq:lemma5_basicUpperBound}
        m_{n,\nu}(\alpha)
        \le
        \frac{p_{\tilde\tau_t,\nu}(\alpha)}{(1-t)t^n},
        \quad \forall t\in(0,1).
    \end{equation}
    Substituting~\cref{eq:lemma5_basicUpperBound} into~\cref{eq:lemma5_firstEq}, we get
    \begin{equation}\label{eq:lemma5_defI}
    \begin{gathered}
        \langle X_n,\frameOp_{\sigma,\nu}(X_n)\rangle_\sigma
        \le
        \frac{1-s}{(1-t)^2}
        \left(\frac{s}{t^2}\right)^n
        I_{t,s,\nu},
        \\
        I_{t,s,\nu} \equiv
        \int\rmd^2\alpha
        \frac{p_{\tilde\tau_t,\nu}(\alpha)^2}{p_{\tilde\tau_s,\nu}(\alpha)}.
    \end{gathered}
    \end{equation}
    This upper bound holds for all $0<t<1$, but is only useful if $I_{t,s,\nu}<\infty$, which is not always the case, and if $s<t^2$, to ensure that $(s/t^2)^n\to0$.
    For example, if $t$ approaches $1$, then $\tilde\tau_t$ might have very heavy tails, and for large $\alpha$ the denominator $p_{\tilde\tau_s,\nu}$ would approach $0$ fast enough to make the integral diverge.
    At the same time, if $t$ is too small then the factor $(s/t^2)^n$ would diverge --- this is for example the case with $t=s$, which gives $I_{t,t,\nu}=1$, but then $(s/t^2)^n=(1/s)^n\to\infty$.
    Thus, to prove the statement, we need to find some $t\in (\sqrt s,1)$ such that $I_{t,s,\nu}<\infty$.

    By assumption, $\nu$ is a centered Gaussian state, and we call  $V_\nu$ the covariance matrix that fully characterized it.
    The overlap of any pair of Gaussian states $\rho_i$ ($i=1,2$) with first moment vectors $d_i\in\RR^2$ and covariance matrices $V_i$ reads
    \begin{equation}
        \trace[\rho_1\rho_2] =
        \frac
        {\exp[-\frac12(d_1-d_2)^T (V_1+V_2)^{-1} (d_1-d_2)]}
        {\sqrt{\det(V_1+V_2)}}.
    \end{equation}
    In particular this tells us that
    \begin{gather}\label{eq:lemma5_explicitPt}
        p_{\tilde\tau_t,\nu}(z)
        =
        C_t \exp[-\frac12 z^T A_t z],
        \\
        \label{eq:lemma5_defvt}
        A_t \equiv (V_\nu+v_t I)^{-1},
        \quad
        v_t\equiv \frac{1+t}{2(1-t)},
    \end{gather}
    with $C_t\equiv \pi^{-1}\sqrt{\on{det}(A_t)}$.
    Here $v_t I$ is the covariance matrix of $\tilde\tau_t$.
    Using~\cref{eq:lemma5_explicitPt} into~\cref{eq:lemma5_defI} we get
    \begin{equation}
        I_{t,s,\nu} = \frac{C_t^2}{C_s}
        \int\rmd^2 z
        \exp[-\frac12 z^T (2A_t-A_s)z].
    \end{equation}
    Thus $I_{t,s,\nu}<\infty$ iff $2A_t - A_s>0$.

    For any $t,s$, the matrices $A_t$ and $A_s$ commute and therefore admit a common eigenbasis, which makes it possible to reduce $2A_t - A_s>0$ into an inequality on the corresponding eigenvalues.
    Let the eigenvalues of $V_\nu$ be $u_\pm$, with $0<u_- \le u_+$.
    Then the eigenvalues of $A_t$ are $(u_\pm + v_t)^{-1}$, so the inequality can be rewritten as
    \begin{equation}\label{eq:lemma5_A0eigvalscondition}
        \frac{2}{u_\pm + v_t}
        -
        \frac{1}{u_\pm + v_s} >0
        \iff
        v_t < u_- + 2v_s.
    \end{equation}
    Thus the statement is proved if there is $t\in (\sqrt s,1)$ such that $v_t< u_- + 2v_s$.
    To do this, we will show that this inequality is satisfied with $t=\sqrt s$, and argue that by continuity this implies the existence of some $t>\sqrt s$ such that the integral still converges.
    
    From the definition of $v_t$ in~\cref{eq:lemma5_defvt} we can see that
    \begin{equation}
        v_{\sqrt s}<2v_s < u_- + 2v_s,
    \end{equation}
    and thus by continuity of $t\mapsto v_t$ there must be $t_s\in (\sqrt s,1)$ such that $v_{t_s} < u_- + 2v_s$.
    This $t_s$ is thus by construction such that $I_{t_s,s,\nu}<\infty$ and $s/t_s^2 < 1$, hence~\cref{eq:lemma5_defI} implies that
    \begin{equation}
        \langle X_n,\frameOp_{\sigma,\nu}(X_n)\rangle_\sigma
        \le C \left(\frac{s}{t_s^2}\right)^n \to 0,
    \end{equation}
    hence $A_{\tilde\tau_s,\nu}=0$.
\end{proof}

\begin{proposition}\label{prop:A0_all_gaussian_covariant}
    Let $\sigma$ be a faithful Gaussian state and let $\nu$ be a Gaussian state.
    Then $A_{\sigma,\nu}=0$.
\end{proposition}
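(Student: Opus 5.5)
The plan is to reduce the general Gaussian pair $(\sigma,\nu)$ to the case of a thermal reference state and a centered Gaussian seed. That case is settled by \cref{lemma:A0_allgaussians}. The reduction uses only the covariance lemmas \cref{lemma:A0_displacement,lemma:A0_gaussianUnitaries}.

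First, write $\sigma$ in terms of its first moments $d_\sigma$ and covariance matrix $V_\sigma$. Pick $\delta$ so that $\sigma'=D(-\delta)\sigma D(\delta)$ is centered, and set $\nu'=D(-\delta)\nu D(\delta)$. By \cref{lemma:A0_displacement}, $A_{\sigma,\nu}=A_{\sigma',\nu'}$. Next, displace the seed alone so that it becomes centered; the same lemma says this leaves the lower frame bound unchanged. Both states are still Gaussian, so we may assume $\sigma$ and $\nu$ are centered Gaussian states.

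Second, apply Williamson's theorem to $V_\sigma$. There is a symplectic $S$, implemented by a Gaussian unitary $U_G$ with no displacement part, such that $U_G^\dagger\sigma U_G=\tilde\tau_s$ is thermal. Faithfulness of $\sigma$ forces the symplectic eigenvalue to be strictly above the vacuum value, so $0<s<1$. Apply the same $U_G$ to the seed. Then $\nu''=U_G^\dagger\nu' U_G$ is again a centered Gaussian state, because Gaussian unitaries without displacement preserve centered Gaussianity. \Cref{lemma:A0_gaussianUnitaries} gives $A_{\sigma',\nu'}=A_{\tilde\tau_s,\nu''}$.

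Finally, \cref{lemma:A0_allgaussians} applied to $(\tilde\tau_s,\nu'')$ gives $A_{\tilde\tau_s,\nu''}=0$, and chaining the equalities yields $A_{\sigma,\nu}=0$.

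The main point to check is the form of the Gaussian-unitary lemma. It requires transforming $\sigma$ and $\nu$ jointly, and it must not spoil the centering of the seed; this holds because the diagonalizing unitary is purely symplectic. One must also confirm that faithfulness translates to $s>0$. A faithful single-mode Gaussian state cannot be pure, so its symplectic eigenvalue exceeds $1/2$ in the paper's convention $v_t=(1+t)/(2(1-t))$, which corresponds to $t=s\in(0,1)$. A further technicality is that the covariance matrix $V_\nu$ of the seed is positive definite. This guarantees $u_->0$ as used inside \cref{lemma:A0_allgaussians}, and it holds for every physical Gaussian state, pure squeezed seeds included.
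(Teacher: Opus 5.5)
Your proposal is correct and follows essentially the same route as the paper. You center $\sigma$ by a joint displacement and then the seed alone (\cref{lemma:A0_displacement}), reduce $\sigma$ to $\tilde\tau_s$ with $0<s<1$ by a joint purely symplectic unitary (\cref{lemma:A0_gaussianUnitaries}), and conclude with \cref{lemma:A0_allgaussians}; your extra checks (faithfulness forcing $s>0$, and the symplectic unitary preserving the seed's centering) are exactly the points the paper asserts without elaboration.
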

\begin{proof}
    By \cref{lemma:A0_displacement}, first moments do not affect the lower frame bound.
    We may therefore assume that both $\sigma$ and $\nu$ are centered.
    Since $\sigma$ is a faithful one-mode Gaussian state, there exists a Gaussian unitary $U_G$ such that
    $U_G^\dagger\sigma U_G=\tilde\tau_s$
    for some $0<s<1$.
    By \cref{lemma:A0_gaussianUnitaries}, applying this unitary jointly to $\sigma$ and $\nu$ does not change the lower frame bound.
    The transformed $\nu$ remains a centered Gaussian state.
    The claim therefore follows from \cref{lemma:A0_allgaussians}.
\end{proof}

\subsubsection{Null estimator}%
\label{sec:covariant_frame_nullestimators}

\paragraph{Unrescaled HS construction}
The characteristic function of the synthesis operator is
\begin{equation}
    \chi_{T_{\rm HS}^*h}(\beta)
    =
    \FT[h](\beta)\chi_\nu(\beta).
\end{equation}
If $\nu$ is Gaussian, then $\chi_\nu(\beta)$ is a nowhere-vanishing  Gaussian.
Consequently,
\begin{equation}
    T_{\rm HS}^*h=0
    \quad
    \Longleftrightarrow
    \quad
    \FT[h]=0
    \quad
    \Longleftrightarrow
    \quad
    h=0 ,
\end{equation}
where the last equivalence uses unitarity of the Fourier transform on $L^2(\mathbb C)$.
Thus the synthesis operator is injective, hence the unrescaled covariant POVM has no $L^2$ null estimators, even though its frame operator has no strictly positive lower frame bound.

\paragraph{Rescaled HS construction}
For the rescaled Hilbert-Schmidt synthesis operator we have
\begin{equation}
    \chi_{T_{\mathrm{HS},\sigma}^*h}(\beta)
    =
    \chi_\nu(\beta)
    \FT\!\left[
        \frac{h}{\sqrt{p_{\sigma,\nu}}}
    \right]\!(\beta).
\end{equation}
If $\nu$ is Gaussian, then $\chi_\nu$ is nowhere zero.
Consequently,
\begin{equation}
    T_{\mathrm{HS},\sigma}^*h=0
    \iff
    \FT\!\left[
        \frac{h}{\sqrt{p_{\sigma,\nu}}}
    \right]
    =
    0
\end{equation}
whenever $h/\sqrt{p_{\sigma,\nu}}\in L^2(\mathbb C)$.
Under this additional domain assumption, unitarity of the Fourier transform gives $h=0$.
However, this does not prove injectivity of $T_{\mathrm{HS},\sigma}^*$ on all of $L^2(\mathbb C)$, because the Fourier integral is not generally defined for arbitrary $h\in L^2(\mathbb C)$.
Thus using the rescaled Hilbert-Schmidt construction both gives no frame bounds, but also requires additional care in the definition of the domains of the operators involved, which end up being unbounded.

\paragraph{$\sigma$-regularized construction}
% Although the lower frame bound vanishes, the synthesis operator can still be injective.
We now prove the $\sigma$-regularized synthesis operator $T^*_{\sigma,\nu}$ is injective for faithful Gaussian $\sigma$ and Gaussian $\nu$.
This means that the associated quasiprobability representation is unique, but not stably controlled in the $L^2$ norm.
More explicitly, this implies that there are no $h\in L^2(\mathbb C)$ such that $T_{\sigma,\nu}^* h=0$, and hence no null estimators $\hat o\in L^2(p_{\sigma,\nu}\rmd^2\alpha)$ such that $T_{\sigma,\nu}^*(\sqrt{p_{\sigma,\nu}}\hat o)=0$.

\begin{proposition}\label{prop:TsigmaStarInjectiveGaussian}
    Let $\sigma$ be a faithful Gaussian state and let $\nu$ be a Gaussian state.
    Then $T_{\sigma,\nu}^*$, as defined in~\cref{eq:def_cov_sigma_synthesisop}, is injective.
\end{proposition}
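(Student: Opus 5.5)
The plan is to reduce injectivity of $T_{\sigma,\nu}^*$ to a completeness statement for a family of Gaussian-weighted functions in $L^2(\mathbb C)$. By the weak definition of the synthesis operator, $T_{\sigma,\nu}^*h=0$ in $\scrLcompletion$ holds iff $\int\rmd^2\alpha\, h(\alpha)\,(T_{\sigma,\nu}Y)(\alpha)=0$ for all $Y$ in the dense subspace $\boundedOps_h$. Equivalently, $h\perp\Range(T_{\sigma,\nu})$. It therefore suffices to exhibit a family of bounded $Y$ whose coefficient functions $T_{\sigma,\nu}Y$ span a dense subspace of $L^2(\mathbb C)$.

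First I will normalize the pair $(\sigma,\nu)$, mirroring \cref{lemma:A0_displacement,lemma:A0_gaussianUnitaries}. A joint Gaussian unitary or displacement $X\mapsto U^\dagger XU$ acts isometrically on $\scrLcompletion$. On $L^2(\mathbb C)$ it acts by the measure-preserving change of variables $z\mapsto S^{-1}z$, or by a translation. A displacement of $\nu$ alone only translates the outcome variable. All three operations are unitary on $L^2$, and they intertwine the synthesis operators. Injectivity is therefore invariant under them, and I may assume $\sigma=\tilde\tau_s$ is thermal and $\nu$ is a centered Gaussian state with covariance $V_\nu$.

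With $\sigma$ diagonal in the Fock basis, I choose $Y$ to be finite-rank Fock matrices solving the Sylvester equation $M_\sigma(Y)=W$, namely $Y_{ij}=2W_{ij}/(\lambda_i+\lambda_j)$ for an arbitrary finite-rank Hermitian $W$ supported on finitely many Fock states. These $Y$ are bounded, and they satisfy $(T_{\sigma,\nu}Y)(\alpha)=\trace[W\mu_\nu(\alpha)]/\sqrt{p_{\sigma,\nu}(\alpha)}$. Taking complex combinations, the orthogonality condition becomes
\begin{equation*}
    \int\rmd^2\alpha\,\frac{h(\alpha)}{\sqrt{p_{\sigma,\nu}(\alpha)}}\,\langle m|D(\alpha)\nu D(-\alpha)|n\rangle=0,\qquad \forall\, m,n.
\end{equation*}
Using the Gaussian overlap formula from the proof of \cref{lemma:A0_allgaussians}, the following two asymptotics hold. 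The denominator satisfies $p_{\sigma,\nu}\propto\exp[-\tfrac12 z^T(V_\nu+v_sI)^{-1}z]$, so that $\sqrt{p_{\sigma,\nu}}$ decays only like $\exp[-\tfrac14 z^T(V_\nu+v_sI)^{-1}z]$. Each matrix element $\langle m|\mu_\nu(\alpha)|n\rangle$ is a polynomial in $(\alpha,\bar\alpha)$ times a fixed Gaussian, whose decay is at least that of $\exp[-\tfrac12 z^T(V_\nu+\tfrac12 I)^{-1}z]$, the vacuum-overlap rate.

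I then need the key inequality $(V_\nu+\tfrac12 I)^{-1}-\tfrac12(V_\nu+v_sI)^{-1}>0$. It follows from commuting eigenvalues, as in \cref{eq:lemma5_A0eigvalscondition}, since $v_s\ge\tfrac12$. Given this inequality, each test function has the form $P_{mn}(\alpha,\bar\alpha)\,e^{-\frac12 z^TQz}$ with $Q>0$ a fixed matrix. Here the $P_{mn}$ are polynomials that, as $m,n$ vary, span all polynomials in $(\alpha,\bar\alpha)$. I would check the spanning claim by expanding $D(-\alpha)|n\rangle$ and using the generating function of displaced Fock states.

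Finally I invoke completeness. Polynomials times a fixed nondegenerate Gaussian are dense in $L^2(\mathbb C)$, because after a linear change of variables they span the Hermite functions. Hence $h=0$. An alternative to the spanning argument is to sum the Fock expansions into coherent-state matrix elements $\langle\gamma|\mu_\nu(\alpha)|\gamma'\rangle$, justified by dominated convergence under the Gaussian bound. The conclusion then reads: the entire function $\gamma\mapsto\int h\,e^{-\frac12 z^TQz}e^{\text{linear}(\gamma,\alpha)}$ vanishes, so its Fourier–Laplace transform is zero and $h=0$.

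I expect the main obstacle to be the middle step. Two points need care there: the precise Gaussian decay of the Fock matrix elements $\langle m|\mu_\nu(\alpha)|n\rangle$ for a general, possibly squeezed, $\nu$, and the verification that their polynomial prefactors exhaust all polynomials. If the exponent comparison fails for strongly squeezed $\nu$, I would instead choose $W$ as suitably squeezed finite-rank operators adapted to $V_\nu$, which still yield bounded $Y$ via the Sylvester solution.
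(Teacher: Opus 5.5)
Your proposal is correct and follows essentially the same route as the paper. Both arguments reduce injectivity to density of $\Range(T_{\sigma,\nu})$, normalize to a thermal $\sigma$ and a centered Gaussian $\nu$ by covariance, show that finite-Fock operators map onto polynomials times one fixed Gaussian, and conclude by density of Hermite functions in $L^2(\mathbb C)$. The differences are minor. The paper gets positive definiteness of the quadratic form from $T_{\sigma,\nu}(E_{00})\in L^2(\mathbb C)$ rather than from an explicit eigenvalue check. Your check in fact holds for every covariance $V_\nu>0$, so your fallback for strongly squeezed seeds is never needed. The paper also shows that all polynomials are reached by a dimension count through the Weyl and Fourier transforms, rather than by a generating-function computation.
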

\begin{proof}
    Since $T_{\sigma,\nu}$ is bounded --- as we know from the general results in~\cref{sec:estTheory_infDim_upperFrameBound} --- the standard Hilbert-space identity
    $\ker T_{\sigma,\nu}^*=\Range(T_{\sigma,\nu})^\perp$
    shows that $T_{\sigma,\nu}^*$ is injective if and only if $\Range(T_{\sigma,\nu})$ is dense in $L^2(\mathbb C)$.
    It is therefore enough to prove that the range of the analysis operator is dense.

    We first note that density of the range of $T_{\sigma,\nu}$ is invariant under the same transformations used above for the lower frame bound.
    Indeed, if $\sigma'=U_G^\dagger \sigma U_G$ and $\nu'=U_G^\dagger \nu U_G$ for a Gaussian unitary $U_G$
    with associated symplectic matrix $S$, then
    \begin{equation}
        T_{\sigma',\nu'}(U_G^\dagger X U_G)(z)
        =
        (T_{\sigma,\nu}X)(S^{-1}z).
    \end{equation}
    Since pullback by $S^{-1}$ is unitary on $L^2(\mathbb C)$, $\operatorname{Ran}T_{\sigma',\nu'}$
    is dense iff $\operatorname{Ran}T_{\sigma,\nu}$ is dense.
    Likewise, joint displacements only translate the phase-space variable, hence also preserve density.
    Therefore it is enough to treat the case in which $\sigma=\tilde\tau_s$, $0<s<1$, is thermal,
    and $\nu$ is centered Gaussian.

Let $E_{mn}\equiv |m\rangle\langle n|$.
Because $\sigma=\tilde\tau_s=(1-s)\sum_{k\ge0}s^k|k\rangle\langle k|$, we have
\begin{equation}
    M_\sigma(E_{mn})
    \equiv \frac12\{\sigma,E_{mn}\}
    =
    \frac{1-s}{2}(s^m+s^n)\,E_{mn}.
\end{equation}
Hence
\begin{equation}
    T_{\sigma,\nu}(E_{mn})
    =
    \frac{1-s}{2}(s^m+s^n)\,
    \frac{p_{E_{mn},\nu}}{\sqrt{p_{\sigma,\nu}}},
\end{equation}
\begin{equation}
    p_{E_{mn},\nu}(\alpha)
    \equiv
    \trace[\mu_\nu(\alpha)E_{mn}]
    =
    \frac1\pi \FT\!\bigl[\chi_{E_{mn}}\overline{\chi_\nu}\bigr](\alpha),
\end{equation}
where we used $\chi_{D(\alpha)\nu D(-\alpha)}(\beta)=e^{\beta\bar\alpha-\bar\beta\alpha} \chi_\nu(\beta)$ and $\overline{\chi_\nu(\beta)}=\chi_\nu(-\beta)$.
Furthermore, the characteristic functions of $E_{mn}$ are the standard complex Hermite functions, and the characteristic function of $\nu$ --- which is assumed Gaussian and centered --- is a nowhere-vanishing Gaussian, i.e.
\begin{equation}
    \chi_{E_{mn}}(z)=P_{mn}(z,\bar z)e^{-\frac12|z|^2},
    \qquad
    \chi_\nu(z) = e^{-q_\nu(z)},
\end{equation}
with $P_{mn}$ a polynomial of total degree $m+n$, and $q_\nu$ a positive quadratic form.
Therefore
$
\chi_{E_{mn}}(z)\,\overline{\chi_\nu(z)}
=
P_{mn}(z,\bar z)e^{-q_1(z)}
$
for a fixed positive quadratic form $q_1$, independent of $m,n$.
The Fourier transform of such function is also a product of a polynomial times a Gaussian, thus there exist polynomials $\widetilde P_{mn}$ such that
\begin{equation}
    p_{E_{mn},\nu}(\alpha)=\widetilde P_{mn}(\alpha,\bar\alpha)e^{-q_2(\alpha)},
\end{equation}
with $q_2$ another fixed positive quadratic form.
Finally, since $p_{\sigma,\nu}$ is itself a Gaussian density,
$p_{\sigma,\nu}(\alpha)=C\,e^{-q_p(\alpha)}$,
we get
\begin{equation}
    T_{\sigma,\nu}(E_{mn})(\alpha)
    =
    \widehat P_{mn}(\alpha,\bar\alpha)e^{-q(\alpha)},
    \quad
    q\equiv q_2-\frac12 q_p,
\end{equation}
for some polynomial $\widehat P_{mn}$ of total degree $m+n$.
Because $T_{\sigma,\nu}(E_{mn})\in L^2(\mathbb C)$ for all $m,n$, the quadratic form $q$ is positive definite.

Fix $N\in\mathbb N$ and let
$
\mathcal H_N
\equiv
\on{span}\{E_{mn}:m+n\le N\}.
$
The map
$
E_{mn}
\mapsto
T_{\sigma,\nu}(E_{mn})
$
is, up to nonzero scalar factors, the composition of:
(i) the Weyl transform,
(ii) multiplication by a fixed nowhere-vanishing Gaussian,
(iii) Fourier transform,
(iv) multiplication by another fixed Gaussian.
More precisely, we have
\begin{equation}
    T_{\sigma,\nu} =
    M_{p^{-1/2}}\circ\FT\circ M_{\overline{\chi_\nu}}\circ\chi\circ M_\sigma,
\end{equation}
where we defined the maps $M_\sigma(X) \equiv \frac12\{\sigma,X\}$, $M_{\overline{\chi_\nu}}f\equiv \overline{\chi_\nu}f$, and $\chi(X)\equiv \chi_X$.
% Each of the function whose composition gives $T_{\sigma,\nu}$ is an isomorphism, and thus so is their composition.
On each finite-dimensional subspace $\calH_N$ the above composition maps into the polynomial-Gaussian sector
\begin{equation}
    \calG_N(q) \equiv \{P(\alpha,\bar\alpha) e^{-q(\alpha)}: \,\deg (P)\le N\}.
\end{equation}
Moreover, $\dim\calH_N=\dim\calG_N(q)$, and each map in the displayed composition is injective on the finite-dimensional polynomial-Gaussian sector.
Hence
\begin{equation}
    T_{\sigma,\nu}(\calH_N)=\calG_N(q).
\end{equation}
Thus the image of $T_{\sigma,\nu}$ is spanned by a family of generalized Hermite polynomials, which are dense in $L^2(\mathbb C,\mathrm d^2\alpha)$~\cite{williamjohnston2014WeightedHermitePolynomials}.
Thus $\Range(T_{\sigma,\nu})$ is dense, hence $T_{\sigma,\nu}^*$ is injective.

Strictly speaking, the operators $E_{mn}$ with $m\ne n$ are not Hermitian, and therefore do not belong to $\scrLcompletion$.
We can however repeat the argument above using operators of the form $E_{nn}$, $(E_{nm}+E_{mn})/\sqrt2$ and $(E_{nm}- E_{mn})/i\sqrt2$, which are Hermitian and span the same finite-dimensional subspaces, hence their images
under $T_{\sigma,\nu}$ generate the same polynomial-Gaussian spaces $\mathcal G_N(q)$.
Thus the injectivity of $T_{\sigma,\nu}^*$ holds for the original self-adjoint operator space.
\end{proof}

\subsection{Estimators}
\label{sec:covariant_estimators_uniqueness}

We now discuss how to use the constructions of~\cref{sec:covariant_sigmaIP,sec:covariantWithS} to derive actual estimators.
The main point is that the estimator equation is intrinsic to the POVM, while the choice of Hilbert-space structure determines admissibility, second moments, and, when the synthesis operator is not injective, the preferred representative among equivalent estimators.

The intrinsic synthesis equation was fixed in~\cref{sec:covariant_measurementModel}.
We now use it to derive explicit estimators for covariant Gaussian POVMs and their randomized variants.
In~\cref{sec:covariant_estimator_nonrandomCovariant} we treat the non-random covariant Gaussian POVM of~\cref{sec:covariant_sigmaIP}; in this case the synthesis operator is injective by~\cref{prop:TsigmaStarInjectiveGaussian}, and the estimator is uniquely determined by a Weyl-transform deconvolution.
In~\cref{sec:covariant_estimators_randomized} we discuss the estimators for the randomized covariant case of~\cref{sec:covariantWithS}; in this case the enlarged synthesis operator may fail to be injective, so the unrescaled-frame estimator is only one representative and need not be the minimum-variance choice.
We will later specialize this discussion to homodyne detection in~\cref{sec:homodyne}.

\subsubsection{General covariant estimator}%
\label{sec:covariant_estimator_nonrandomCovariant}

Consider first the covariant POVM of \cref{eq:def_covariantPOVM}, with Gaussian seed $\nu$ and no additional random rotation.
In the unrescaled Hilbert-Schmidt construction, the frame operator is diagonalized by the Weyl transform:
\begin{equation}
    \chi_{\frameOp_{\nu,\rm HS}(X)}(\beta)
    =
    \frac1\pi|\chi_\nu(\beta)|^2\chi_X(\beta).
    \label{eq:covariant_HS_frame_multiplier_recall}
\end{equation}
Equivalently, in the formal displacement-operator spectral representation,
\begin{equation}
    \frameOp_{\nu,\rm HS}
    =
    \int_{\mathbb C}\frac{\rmd^2\beta}{\pi^2}\,
    |\chi_\nu(\beta)|^2
    |D(\beta)\rangle\!\langle D(\beta)|.
    \label{eq:covariant_HS_frame_spectral}
\end{equation}
If $\chi_\nu$ is nowhere zero, then the corresponding inverse is
\begin{equation}
\begin{gathered}
    \frameOp_{\nu,\rm HS}^{-1}
    =
    \int_{\mathbb C}\rmd^2\beta\,
    \frac{1}{|\chi_\nu(\beta)|^2}
    |D(\beta)\rangle\!\langle D(\beta)|,
    \\
    \chi_{\frameOp_{\nu,\rm HS}^{-1}(Y)}(\beta)
    =
    \pi
    \frac{\chi_Y(\beta)}{|\chi_\nu(\beta)|^2}.
    \label{eq:covariant_HS_frame_inverse}
\end{gathered}
\end{equation}
The canonical frame estimator is then
\begin{equation}
    \hat o_\calO^{\rm HS}(\alpha)
    =
    \langle \calO,\frameOp_{\nu,\rm HS}^{-1}\mu_\nu(\alpha)\rangle_2.
    \label{eq:covariant_HS_estimator_from_dual}
\end{equation}
Expressing this via characteristic functions, we get
\begin{align}
    \hat o_\calO^{\rm HS}(\alpha)
    &=
    \int_{\mathbb C}\frac{\rmd^2\beta}{\pi}\,
    \chi_\calO(\beta)
    \overline{
        \chi_{\frameOp_{\nu,\rm HS}^{-1}\mu_\nu(\alpha)}(\beta)
    }
    \notag\\
    &=
    \int_{\mathbb C}\frac{\rmd^2\beta}{\pi}\,
    \frac{\chi_\calO(\beta)}{\chi_\nu(\beta)}
    e^{\bar\beta\alpha-\beta\bar\alpha}
    = \FT\!\left[
        \frac{\chi_\calO}{\chi_\nu}
    \right].
    \label{eq:covariant_estimator_from_dual_effect}
\end{align}

The same expression can be derived via the synthesis operator.
Indeed, this has characteristic function
\begin{equation}
    \chi_{T_{\rm HS}^*h}(\beta)
    =
    \FT[h](\beta)\chi_\nu(\beta),
    \label{eq:covariant_synthesis_characteristic_recall}
\end{equation}
and thus the condition $T_{\rm HS}^*h=\calO$ becomes $\FT[h]\chi_\nu=\chi_\calO$, consistently with~\cref{eq:covariant_estimator_from_dual_effect}.
Note that in the non-rescaled HS formalism there is no distinction between the coefficients fed to the synthesis operator and the estimator, i.e. $h=\hat o$.

For Gaussian $\nu$, $\chi_\nu$ is nowhere zero.
Hence~\cref{eq:covariant_estimator_from_dual_effect} defines a well-defined estimator whenever $\chi_\calO/\chi_\nu$ is integrable.
Although here we derived the estimator via the HS non-rescaled formalism, recall from the discussion in~\cref{sec:comparison_fixed_vs_sigma_ip} that an estimator derived from one formalism also applies in the other formalisms, as the weak reconstruction formula remains identical.
Indeed, $T_{\nu,\rm HS}^* \hat o=\calO$ is the same condition as $T_{\sigma,\nu}^*(\sqrt{p_{\sigma,\nu}}\hat o)=\calO$, although the latter is well-defined acting on a more general set of operators, and thus allows for example to deal with estimation of unbounded operators.
By~\cref{prop:TsigmaStarInjectiveGaussian}, we also know that the $\sigma$-regularized synthesis operator $T_{\sigma,\nu}^*$ is injective.
Consequently, whenever~\cref{eq:covariant_estimator_from_dual_effect} defines an estimator in $L^2(p_{\sigma,\nu}\rmd^2\alpha)$, that is, it satisfies
\begin{equation}
    \int_{\mathbb C}\rmd^2\alpha\,
    p_{\sigma,\nu}(\alpha)
    |\hat o_\calO^{\rm HS}(\alpha)|^2
    <
    \infty,
    \label{eq:covariant_sigma_admissibility_check}
\end{equation}
then it is automatically the unique $\sigma$-admissible estimator for $\calO$.
If this condition fails, the estimator may still exist as a distribution, but does not correspond to a usable estimator with finite variance.

In summary, there is no need to derive a separate estimator by explicitly inverting the $\sigma$-regularized frame operator.
Instead, we proceed as follows:
\begin{enumerate}
    \item derive the candidate estimator $\hat o_\calO$ from~\cref{eq:covariant_estimator_from_dual_effect};
    \item check whether $\hat o_{\calO}\in L^2(p_{\sigma,\nu}\rmd^2\alpha)$;
    \item if this holds, conclude that $\hat o_{\calO}$ is the unique $\sigma$-admissible estimator for $\calO$.
\end{enumerate}
The $\sigma$-regularized construction therefore does not modify the estimator in this injective case.
Its role is instead to determine whether the formal solution for the estimator has finite variance for a given choice of prior, and to clarify whether $\calO$ belongs to the exact reconstructible range $\calR_\sigma$.

Since, as we showed in~\cref{sec:covariant_frame_lower}, the lower frame bound is zero for covariant measurements with Gaussian $\nu$, one should not expect $\hat o_\calO$ to be a well-defined element of $L^2(p_{\sigma,\nu}\rmd^2\alpha)$ for all $\calO$.
In fact, we will see in~\cref{sec:heterodyne} explicit examples of observables for which no such estimator exists in the special case of heterodyne detection.

When the seed $\nu$ is thermal, the deconvolution formula in~\cref{eq:covariant_estimator_from_dual_effect} gives the thermal analogue of the Glauber-Sudarshan representation.
Indeed, dividing $\chi_\calO$ by the Gaussian characteristic function of $\nu$ is exactly the passage from the ordinary $P$ representation to the thermal coherent-state representation discussed in~\cite{bishop1987CoherentMixedStates}.
Thus the estimator associated with a thermal covariant POVM is the corresponding thermal $P$-symbol of the target observable, whenever this symbol is a genuine function with the required $L^2(p_{\sigma,\nu}\rmd^2\alpha)$ integrability.

\subsubsection{Relation with Gaussian white-noise channels}%
\label{sec:covariant_estimators_whitenoise}

The unrescaled covariant frame operator multiplies the characteristic function of $X$ by $|\chi_\nu|^2$.
For Gaussian $\nu$, this is Gaussian damping in characteristic-function space, the same smoothing mechanism implemented by Gaussian white-noise channels.
The estimator requires the inverse operation, namely deconvolution by this Gaussian factor, which is the source of the instability when the inverse is not controlled in the relevant coefficient norm.

Consider for $\lambda>0$ the Gaussian white-noise channel
\begin{equation}\label{eq:gwn-def}
    \calN_\lambda(\cdot)
    \equiv
    \int_{\mathbb R^2}\frac{\rmd^2x}{2\pi\lambda}\,
    e^{-\frac{\|x\|^2}{2\lambda}}\,
    D(x)(\cdot)D(-x).
\end{equation}
Its action on characteristic functions is
\begin{equation}\label{eq:gwn-char}
    \chi_{\calN_\lambda(\rho)}(z)
    =
    \chi_\rho(z)\,e^{-\frac{\lambda}{2}\|z\|^2}.
\end{equation}
Equivalently, $\calN_\lambda$ is obtained by applying a random phase-space displacement distributed according to a centered Gaussian of variance $\lambda$ and averaging over the outcome.
In the limit $\lambda=0$, one recovers the identity channel.

Denote by $\tau_{\bar n}$ the single-mode thermal state with mean photon number $\bar n$, and define the displaced thermal states
$
    \mu_{x,\bar n}
    \equiv
    D(x)\tau_{\bar n}D(-x)
$.
It was shown in~\cite{becker2024ClassicalShadowTomography} that for $0<\lambda\le 1$, $\calN_\lambda$ admits the Kraus decomposition
\begin{equation}\label{eq:gwn-kraus-kernel}
    \calN_\lambda(\rho)
    =
    \int_{\mathbb R^2}\frac{\rmd^2x}{2\pi\lambda}\,
    \mu_{x,\bar n_\lambda}\,\rho\,\mu_{x,\bar n_\lambda},
\end{equation}
where
$\bar n_\lambda\equiv\frac{1}{2\lambda}-\frac{1}{2}=\frac{1-\lambda}{2\lambda}$.
For $\lambda=1$, we have $\bar n_1=0$ and $\mu_{x,\bar n_1}=\PP_x$.
Hence~\cref{eq:gwn-kraus-kernel} reduces to
\begin{equation}\label{eq:heterodyne-form}
    \calN_1(\rho)
    =
    \int_{\mathbb R^2}\frac{\rmd^2x}{2\pi}\,
    \langle x|\rho|x\rangle
    \,\PP_x
    = \intdalphapi \langle\alpha|\rho|\alpha\rangle \PP_\alpha.
\end{equation}
Thus $\calN_1=\pi\frameOp_{\rm HS}$, where $\frameOp_{\rm HS}$ is the unrescaled covariant frame superoperator with vacuum seed $\nu=\PP_0$.

For $\lambda\ge 1$, defining
$s\equiv\frac{\lambda-1}{2}\ge0$, the channel can be written as an entanglement-breaking measure-and-prepare map:
% We now show that $\calN_\lambda$ can be written as the measure-and-prepare map
\begin{equation}\label{eq:eb-form}
    \calN_\lambda(\rho)
    =
    \int_{\mathbb R^2}\frac{\rmd^2x}{2\pi}\,
    \trace\!\left[\mu_{x,s}\rho\right]\,
    \mu_{x,s}.
\end{equation}
Indeed, observe that $\calN_\alpha\circ \calN_\beta = \calN_{\alpha+\beta}$, $\lambda=s+1+s$, and $\calN_s(\PP_x)=\mu_{x,s}\equiv D(x) \tau_s D(-x)$. Thus
\begin{equation}\label{eq:eb-proof-1}\begin{aligned}
    \calN_\lambda(\rho)
    &=
    \calN_s\!\left(\calN_1(\calN_s(\rho))\right)
    \nonumber\\
    &=
    \calN_s\!\left(
    \int_{\mathbb R^2}\frac{\rmd^2x}{2\pi}\,
    \langle x|\calN_s(\rho)|x\rangle\,
    \PP_x
    \right)
    \nonumber\\
    &=
    \int_{\mathbb R^2}\frac{\rmd^2x}{2\pi}\,
    \trace[\mu_{x,s}\rho]\,
    \mu_{x,s},
\end{aligned}\end{equation}
where we also used the self-adjointness of $\calN_s$ to write $\langle x|\calN_s(\rho)|x\rangle=\trace[\mu_{x,s}\rho]$.
Thus, for all $\lambda\ge1$, $\calN_\lambda=\pi\frameOp_{\rm HS}$ with $\frameOp_{\rm HS}$ the unrescaled covariant frame superoperator for the thermal seed $\nu=\tau_{(\lambda-1)/2}$.

\subsection{Randomized covariant measurements}\label{sec:covariantWithS}

We now extend the covariant construction by allowing a randomly chosen centered Gaussian unitary $U_S$, associated with $S\in\mathbf{Sp}(2,\mathbb{R})$, to act on the seed before displacement.
\Cref{sec:covariantWithS_ops} defines the corresponding analysis, synthesis, and frame operators.
\Cref{sec:covariantWithS_framebounds} shows that the lower frame bound still vanishes for faithful Gaussian \(\sigma\) and Gaussian \(\nu\), provided the randomization measure has compact support in the relevant symplectic subgroup.
Finally,~\cref{sec:covariantWithS_nullestimators} shows that the additional random unitary \(U_S\) can introduce genuine null estimators, reflecting redundancy in the added degree of freedom offered by the Gaussian unitaries. This remains true in the case of homodyne detection, which will be discussed separately later on in~\cref{sec:homodyne}.

\subsubsection{Measurement model}
\label{sec:covariantWithS_ops}

More explicitly, we consider here measurements with POVM effects given by
\begin{equation}\label{eq:def_covariantPOVMwithS}
    \mu_{\nu}(\alpha,S) = \frac1\pi D(\alpha) U_S \nu U_S^\dagger D(-\alpha).
\end{equation}
We thus regard \(S\) as part of the measurement record: the outcome space is
\(\mathbb C\times \on{supp}\lambda\), with measure \(\rmd^2\alpha\,\lambda(\rmd S)\).
With this definition we have normalization for each fixed $S$, and then also randomizing over $S$:
\begin{equation}
\begin{gathered}
    \int_{\mathbb{C}}\frac{\rmd^2\alpha}{\pi}
    D(\alpha) U_S \nu U_S^\dagger D(\alpha)^\dagger =I,
    \\
    \int\lambda(\rmd S)\int_{\mathbb{C}}\frac{\rmd^2\alpha}{\pi}
    D(\alpha) U_S \nu U_S^\dagger D(\alpha)^\dagger =I.
\end{gathered}
\end{equation}
In the infinite-squeezing limit of the Gaussian seed \(\nu\), and with uniformly random $U_S\in \mathbf{SO}(2)\cap\mathbf{Sp}(2,\mathbb{R})$, this family converges formally to the uniformly randomized homodyne POVM.

The analysis operator for this POVM is
\begin{equation}\label{eq:def_randomcov_analysisop}
    (T_{\sigma,\nu,\lambda} X)(\alpha,S) = \frac{\trace[\sigma\{X,\mu_\nu(\alpha,S)\}]}{2\sqrt{p_{\sigma,\nu}(\alpha,S)}},
\end{equation}
where $p_{\sigma,\nu}(\alpha,S)\equiv\trace[\sigma \mu_\nu(\alpha,S)]$.
Note that the parameter \(\lambda\) enters $T_{\sigma,\nu,\lambda}$ via its support, namely
\(L^2(\mathbb C\times \mathbf{Sp}(2,\mathbb R),
\rmd^2\alpha\,\lambda(\rmd S))\), although the pointwise formula for
\(T_{\sigma,\nu,\lambda}X\) is independent of \(\lambda\).
The corresponding synthesis operator is
\begin{equation}\label{eq:def_randomcov_synthesisop}
    T_{\sigma,\nu,\lambda}^* h =
    \int\lambda(\rmd S)
    \int \rmd^2\alpha\,
    h(\alpha,S)\frac{\mu_\nu(\alpha,S)}{\sqrt{p_{\sigma,\nu}(\alpha,S)} },
\end{equation}
where \(h\) is defined on the joint outcome-setting space \((\alpha,S)\).
The frame operator $\frameOp_{\sigma,\nu,\lambda}=T_{\sigma,\nu,\lambda}^*T_{\sigma,\nu,\lambda}$ is thus
\begin{equation}\Scale[0.97]{\displaystyle
    \frameOp_{\sigma,\nu,\lambda}(X) =
    \int\!\lambda(\rmd S)
    \!\int\!\rmd^2\alpha
    \frac{\trace[\sigma\{X,\mu_\nu(\alpha,S)\}]}{2p_{\sigma,\nu}(\alpha,S)} \mu_\nu(\alpha,S).
}\end{equation}

\subsubsection{Lower frame bound}%
\label{sec:covariantWithS_framebounds}

As in~\cref{sec:covariant_sigmaIP}, the $\sigma$-regularized construction has an upper frame bound by the general result of~\cref{sec:estTheory_infDim_upperFrameBound}.
We now show that the lower frame bound vanishes also for the randomized covariant measurements considered here.

\begin{lemma}\label{lemma:A0_covariantWithS_reductions}
    Let $\sigma$ be faithful and let $\nu$ be arbitrary.
    Let $\lambda$ be a probability measure on $\mathbf{Sp}(2,\mathbb{R})$.
    Then
    \begin{align}
        A_{\sigma,D(-\delta)\nu D(\delta),\lambda}
        &= A_{\sigma,\nu,\lambda},
        \label{eq:A0_covariantWithS_red1}
        \\
        A_{D(-\delta)\sigma D(\delta),D(-\delta)\nu D(\delta),\lambda}
        &= A_{\sigma,\nu,\lambda}.
        \label{eq:A0_covariantWithS_red2}
    \end{align}
    Furthermore, let $U_G$ be a Gaussian unitary with associated symplectic matrix $G$, define
    $\sigma_G\equiv U_G^\dagger\sigma U_G$
    and $\nu_G\equiv U_G^\dagger\nu U_G$,
    let $\Gamma_G(S)\equiv GSG^{-1}$ be the conjugation map,
    and let $\lambda_G\equiv(\Gamma_G)_*\lambda$ be the corresponding push-forward measure. Then
    \begin{equation}\label{eq:A0_covariantWithS_red3}
        A_{\sigma_G,\nu_G,\lambda_G}=A_{\sigma,\nu,\lambda}.
    \end{equation}
\end{lemma}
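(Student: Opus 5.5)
The plan is to mirror the proofs of \cref{lemma:A0_displacement,lemma:A0_gaussianUnitaries}, now carrying the extra setting variable $S$ along. For each claimed identity we build an invertible map $X\mapsto X'$ that preserves $\|X\|_\sigma$. We then check that the integrand of the quadratic form
\begin{equation*}
    \langle X,\frameOp_{\sigma,\nu,\lambda}(X)\rangle_\sigma
    =\int\lambda(\rmd S)\int\rmd^2\alpha\,
    \frac{\langle X,\mu_\nu(\alpha,S)\rangle_\sigma^2}{p_{\sigma,\nu}(\alpha,S)}
\end{equation*}
transforms only by a measure-preserving change of variables on $\mathbb C\times\mathbf{Sp}(2,\mathbb R)$. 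Taking the infimum over the unit $\sigma$-sphere then gives equality of the lower frame bounds.

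\textbf{Identity~\eqref{eq:A0_covariantWithS_red1}.} We replace $\nu$ by $D(-\delta)\nu D(\delta)$ and write
\begin{equation*}
    U_S D(-\delta)\nu D(\delta)U_S^\dagger
    =D(-S\delta)\,U_S\nu U_S^\dagger\, D(S\delta),
\end{equation*}
up to a phase that cancels. Hence $\mu_{\nu'}(\alpha,S)=\mu_\nu(\alpha-S\delta,S)$. For each fixed $S$ we translate $\alpha\mapsto\alpha+S\delta$, which preserves Lebesgue measure. This shows that the frame operator itself is unchanged, exactly as in \cref{lemma:A0_displacement}. The only difference is that the shift now depends on $S$, which is harmless because the inner integral is done at fixed $S$.

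\textbf{Identity~\eqref{eq:A0_covariantWithS_red2}.} We additionally set $X'=D(-\delta)XD(\delta)$ and $\sigma'=D(-\delta)\sigma D(\delta)$. Using the previous step together with $D(-\delta)\mu_\nu(\alpha,S)D(\delta)=\mu_\nu(\alpha-\delta,S)$, we get
\begin{equation*}
    \langle X',\mu_{\nu'}(\alpha,S)\rangle_{\sigma'}=\langle X,\mu_\nu(\alpha',S)\rangle_\sigma,
    \qquad
    p_{\sigma',\nu'}(\alpha,S)=p_{\sigma,\nu}(\alpha',S)
\end{equation*}
for a shifted variable $\alpha'$. We also have $\|X'\|_{\sigma'}=\|X\|_\sigma$. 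The ratio is therefore invariant, and the claim follows.

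\textbf{Identity~\eqref{eq:A0_covariantWithS_red3}.} We set $X'=U_G^\dagger XU_G$, so that $\|X'\|_{\sigma_G}=\|X\|_\sigma$. The key computation is
\begin{equation*}
    U_G^\dagger D(\alpha)U_{GSG^{-1}}U_G^\dagger\nu U_G U_{GSG^{-1}}^\dagger D(-\alpha)U_G
    = D(G^{-1}\alpha)\,U_S\nu U_S^\dagger\, D(-G^{-1}\alpha),
\end{equation*}
up to phases. Here we use that the metaplectic representation gives $U_G^\dagger U_{GSG^{-1}}U_G=U_S$ up to a sign or phase, and also $U_G^\dagger D(\alpha)U_G=e^{i\phi}D(G^{-1}\alpha)$ with the convention of \cref{lemma:A0_gaussianUnitaries}. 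Thus
\begin{equation*}
    U_G\,\mu_{\nu_G}(\alpha,\Gamma_G(S))\,U_G^\dagger=\mu_\nu(G^{-1}\alpha,S).
\end{equation*}
It follows that the integrand for $(X',\sigma_G,\nu_G)$ at the point $(\alpha,\Gamma_G(S))$ equals the integrand for $(X,\sigma,\nu)$ at $(G^{-1}\alpha,S)$. Integrating against $\lambda_G=(\Gamma_G)_*\lambda$ is the same as integrating against $\lambda$ after substituting $S\mapsto\Gamma_G(S)$. Substituting $\alpha=G\xi$ with $\det G=1$ completes the argument.

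The main obstacle I expect is the bookkeeping of the metaplectic phases and signs. One must check that they only enter as overall factors of $U_S$, so they cancel in $U_S\nu U_S^\dagger$. One must also get the direction of $G$ versus $G^{-1}$ consistent with the convention in \cref{lemma:A0_gaussianUnitaries}, since that direction fixes whether the push-forward is by $\Gamma_G$ or by $\Gamma_{G^{-1}}$. Measurability of $\Gamma_G$ is immediate, since it is continuous.
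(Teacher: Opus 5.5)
Your strategy is the same as the paper's. You use an isometric bijection $X\mapsto X'$ and show pointwise equality of the integrands $\langle X,\mu_\nu(\alpha,S)\rangle_\sigma^2/p_{\sigma,\nu}(\alpha,S)$ at transformed points of $\mathbb C\times\mathbf{Sp}(2,\mathbb R)$. A measure-preserving change of variables and the infimum then finish the argument. Your treatment of \eqref{eq:A0_covariantWithS_red1} and \eqref{eq:A0_covariantWithS_red2} is fine. In the paper's convention $U_SD(z)U_S^\dagger\propto D(S^{-1}z)$ the shift is $S^{-1}\delta$ rather than $S\delta$, which is immaterial because it is a translation at fixed $S$.

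For \eqref{eq:A0_covariantWithS_red3}, however, your displayed key identity is wrong, exactly at the point you flagged as the obstacle.

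\emph{The displayed identity fails.} Inserting $U_GU_G^\dagger$, its left-hand side is
$[U_G^\dagger D(\alpha)U_G]\,[U_G^\dagger U_{GSG^{-1}}U_G^\dagger]\,\nu\,[U_GU_{GSG^{-1}}^\dagger U_G]\,[U_G^\dagger D(-\alpha)U_G]$.
The middle factor is not a conjugation. It is $\propto U_{SG^{-2}}$ in either convention, not $U_S$.

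\emph{The two facts you invoke belong to the other convention.} The facts $U_G^\dagger U_{GSG^{-1}}U_G\propto U_S$ and $U_G^\dagger D(\alpha)U_G\propto D(G^{-1}\alpha)$ hold for the homomorphism convention $U_SD(z)U_S^\dagger\propto D(Sz)$. But \cref{lemma:A0_gaussianUnitaries} uses $U_GD(z)U_G^\dagger\propto D(G^{-1}z)$. Under that convention $S\mapsto U_S$ is a projective anti-homomorphism, and the correct facts are $U_GU_{GSG^{-1}}U_G^\dagger\propto U_S$ and $U_GD(\alpha)U_G^\dagger\propto D(G^{-1}\alpha)$.

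\emph{The repair.} Using these facts, conjugate $\mu_{\nu_G}(\alpha,\Gamma_G(S))=\frac1\pi D(\alpha)U_{GSG^{-1}}U_G^\dagger\nu U_GU_{GSG^{-1}}^\dagger D(-\alpha)$ by $U_G(\cdot)U_G^\dagger$. This gives exactly the conclusion you state, $U_G\,\mu_{\nu_G}(\alpha,\Gamma_G(S))\,U_G^\dagger=\mu_\nu(G^{-1}\alpha,S)$. Equivalently, it is the paper's relation $\mu_{\nu_G}(Gz,\Gamma_G(S))=U_G^\dagger\mu_\nu(z,S)U_G$. From there the rest of your change-of-variables argument goes through.

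In the homomorphism convention the same computation would instead require the push-forward by $\Gamma_{G^{-1}}$. So the statement with $\Gamma_G$ really depends on the paper's convention, and you need to commit to it explicitly rather than leave the direction open.
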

\begin{proof}
    For each \(S\), \(U_S\) denotes an arbitrary metaplectic representative; the expressions \(U_S\nu U_S^\dagger\) are independent of the sign of this representative.
    Define $\nu'\equiv D(-\delta)\nu D(\delta)$, $\sigma'\equiv D(-\delta)\sigma D(\delta)$, $X'=D(-\delta)XD(\delta)$, and use the convention $U_SD(z)U_S^\dagger=e^{i\phi_S(z)}D(S^{-1}z)$.
    We then have
    $
        \mu_{\nu'}(\alpha,S)
        =
        \mu_\nu(\alpha-S^{-1}\delta,S)
    $,
    and the change of variables $\alpha'=\alpha-S^{-1}\delta$ leads to
    $A_{\sigma,\nu',\lambda}=A_{\sigma,\nu,\lambda}$.
    Furthermore, $\|X'\|_{\sigma'}=\|X\|_\sigma$, and
    \begin{equation}
    \begin{gathered}
        c_{X',\sigma',\nu'}(\alpha,S)
        =
        c_{X,\sigma,\nu}(\alpha+\delta-S^{-1}\delta,S),
        \\
        c_{X,\sigma,\nu}(\alpha,S)
        \equiv
        \frac12\trace[\sigma\{X,\mu_\nu(\alpha,S)\}].
    \end{gathered}
    \end{equation}
    The same shift in $\alpha$ relates the corresponding probability densities.
    Hence another change of variables gives
    $A_{\sigma',\nu',\lambda}=A_{\sigma,\nu,\lambda}$.

    We now prove the covariance under Gaussian unitaries.
    Let
    $X_G=U_G^\dagger XU_G$, so that $\|X_G\|_{\sigma_G}=\|X\|_\sigma$.
    We find
    \begin{equation}
    \begin{gathered}
        \mu_{\nu_G}(Gz,\Gamma_G(S))
        =
        U_G^\dagger\mu_\nu(z,S)U_G,
        \\
        c_{X_G,\sigma_G,\nu_G}(Gz,\Gamma_G(S))
        =
        c_{X,\sigma,\nu}(z,S),
        \\
        p_{\sigma_G,\nu_G}(Gz,\Gamma_G(S))
        =
        p_{\sigma,\nu}(z,S).
    \end{gathered}
    \end{equation}
    Since $\det G=1$ and $\lambda_G=(\Gamma_G)_*\lambda$, changing variables gives
    \begin{equation}
        \frac{
        \langle X_G,\frameOp_{\sigma_G,\nu_G,\lambda_G}(X_G)\rangle_{\sigma_G}
        }{
        \|X_G\|_{\sigma_G}^2
        }
        =
        \frac{
        \langle X,\frameOp_{\sigma,\nu,\lambda}(X)\rangle_\sigma
        }{
        \|X\|_\sigma^2
        } .
    \end{equation}
    Taking the infimum then proves the result.
\end{proof}

\begin{proposition}\label{prop:A0_covariantWithS}
    Let $\sigma$ be a faithful Gaussian state and let $\nu$ be a Gaussian state.
    Let $\lambda$ be a probability measure on $\mathbf{Sp}(2,\mathbb{R})$ with compact support.
    Then $A_{\sigma,\nu,\lambda}=0$.
\end{proposition}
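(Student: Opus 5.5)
Reduce with \cref{lemma:A0_covariantWithS_reductions} and then replay the test-operator argument of \cref{lemma:A0_allgaussians}, this time uniformly over the compact support of $\lambda$.

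\textbf{Reduction.} Applying \cref{eq:A0_covariantWithS_red1,eq:A0_covariantWithS_red2}, we may assume $\sigma$ and $\nu$ are centered. We then choose a Gaussian unitary $U_G$ with $U_G^\dagger\sigma U_G=\tilde\tau_s$ and apply \cref{eq:A0_covariantWithS_red3}. The new randomization measure $\lambda_G=(\Gamma_G)_*\lambda$ is still compactly supported, since conjugation is continuous. So it suffices to prove $A_{\tilde\tau_s,\nu,\lambda}=0$ for $0<s<1$, $\nu$ centered Gaussian, and $\lambda$ compactly supported.

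\textbf{Test operators and the bound.} Take $X_n=\PP_n/\sqrt{(1-s)s^n}$, so that $\|X_n\|_\sigma=1$. Since $[X_n,\sigma]=0$,
\begin{equation}
    \langle X_n,\frameOp_{\sigma,\nu,\lambda}(X_n)\rangle_\sigma
    =
    (1-s)s^n\int\lambda(\rmd S)\int\rmd^2\alpha\,
    \frac{m_{n,\nu_S}(\alpha)^2}{p_{\tilde\tau_s,\nu_S}(\alpha)},
\end{equation}
where $\nu_S=U_S\nu U_S^\dagger$ and $m_{n,\nu_S}(\alpha)=\langle n|\mu_{\nu_S}(\alpha)|n\rangle$. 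For each fixed $S$ the inner integral is exactly the quantity treated in \cref{lemma:A0_allgaussians}, with the centered Gaussian seed $\nu_S$. The pointwise bound $m_{n,\nu_S}\le p_{\tilde\tau_t,\nu_S}/((1-t)t^n)$ holds for every $t\in(0,1)$. It gives
\begin{equation}
    \langle X_n,\frameOp_{\sigma,\nu,\lambda}(X_n)\rangle_\sigma
    \le
    \frac{1-s}{(1-t)^2}\left(\frac{s}{t^2}\right)^n
    \int\lambda(\rmd S)\,I_{t,s,\nu_S}.
\end{equation}

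\textbf{Main step: a uniform choice of $t$.} We need a single $t\in(\sqrt s,1)$ such that $\sup_{S\in\on{supp}\lambda}I_{t,s,\nu_S}<\infty$.

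\emph{Finiteness for each $S$.} In the proof of \cref{lemma:A0_allgaussians}, $I_{t,s,\nu_S}<\infty$ holds iff $v_t<u_-(S)+2v_s$, where $u_-(S)>0$ is the smallest eigenvalue of $V_{\nu_S}=SV_\nu S^T$. Since $u_-(S)>0$ for every $S$ and $v_{\sqrt s}<2v_s$, we can fix one $t_s\in(\sqrt s,1)$ with $v_{t_s}<2v_s$. This condition does not involve $S$, so $I_{t_s,s,\nu_S}<\infty$ for all $S$.

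\emph{Uniformity over $S$.} Write $I_{t,s,\nu_S}=(C_t^2/C_s)\int e^{-\frac12 z^T(2A_t-A_s)z}\,\rmd^2 z$. This equals
\begin{equation}
    \frac{C_t^2}{C_s}\,\frac{2\pi}{\sqrt{\det(2A_t-A_s)}},
\end{equation}
where $A_t$, $A_s$ and the constants $C_t$, $C_s$ all depend continuously on $V_{\nu_S}$. Continuity of $V_{\nu_S}$ in $S$ then makes $S\mapsto I_{t_s,s,\nu_S}$ continuous. Moreover, $2A_t-A_s$ stays positive definite: its eigenvalues are positive by the eigenvalue criterion, and the matrices again commute for each $S$. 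A continuous function is bounded on the compact set $\on{supp}\lambda$, so $\int\lambda(\rmd S)\,I_{t_s,s,\nu_S}\le C<\infty$.

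\textbf{Conclusion.} The bound becomes $C'(s/t_s^2)^n\to0$ as $n\to\infty$, hence $A_{\sigma,\nu,\lambda}=0$.

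The main obstacle is this uniformity step. It requires (i) a choice of $t$ that is independent of $S$, secured by dropping $u_-$ from the condition, and (ii) continuity and boundedness of the Gaussian integral constants, which is exactly where compact support of $\lambda$ enters.
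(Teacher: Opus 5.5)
Your proof is correct and follows the paper's argument: you use the same displacement and Gaussian-unitary reductions, the same test operators $X_n=\PP_n/\sqrt{(1-s)s^n}$, and the same bound by $\frac{1-s}{(1-t)^2}(s/t^2)^n\int\lambda(\rmd S)\,I_{t,s,\nu_S}$. The only difference is that you fix $t$ through the $S$-independent condition $v_t<2v_s$, whereas the paper uses $u_*=\inf_{S\in\on{supp}\lambda}\lambda_{\min}(V_S)$, and your choice is slightly cleaner. In fact, since $I_{t,s,\nu_S}=\prod_\pm (u_\pm+v_s)/\sqrt{(u_\pm+v_s)^2-(v_t-v_s)^2}$, with $u_\pm$ the eigenvalues of $V_S$, the condition $v_t<2v_s$ bounds $I_{t,s,\nu_S}$ uniformly over all $S$, so the compact support of $\lambda$ is not actually needed.
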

\begin{proof}
    By the joint displacement covariance in~\cref{eq:A0_covariantWithS_red2}, we first displace both \(\sigma\) and \(\nu\) by the same amount so that \(\sigma\) becomes centered. This replaces \(\nu\) by another Gaussian state \(\nu_1\), but does not change the value of \(A_{\sigma,\nu,\lambda}\).
    Then, using~\cref{eq:A0_covariantWithS_red1} we displace \(\nu_1\) alone so that it also becomes centered. This second
    step leaves the already centered \(\sigma\) unchanged. Therefore, without loss of generality, both \(\sigma\) and \(\nu\) may be assumed centered.
    Using~\cref{eq:A0_covariantWithS_red3} then allows us to apply a joint Gaussian unitary reducing $\sigma$ to a thermal state.
    The measure $\lambda$ is then replaced by its push-forward under a conjugation map.
    The push-forward of a compactly supported measure under the homeomorphism \(S\mapsto GSG^{-1}\) is again compactly supported. Thus it is enough to prove the result for \(\sigma=\tilde\tau_s\), \(0<s<1\), with \(\nu\) centered Gaussian and \(\lambda\) compactly supported in \(\mathbf{Sp}(2,\mathbb R)\).

    Define $X_n=\PP_n/\sqrt{(1-s)s^n}$. These satisfy $\|X_n\|_\sigma=1$.
    Repeating the argument of \cref{lemma:A0_allgaussians}, now with the additional integration over $S$, gives
    \begin{equation}
    \begin{gathered}
        \langle X_n,\frameOp_{\tilde\tau_s,\nu,\lambda}(X_n)\rangle_{\tilde\tau_s}
        \le
        \frac{1-s}{(1-t)^2}
        \left(\frac{s}{t^2}\right)^n
        I_{t,s,\nu}^{\lambda},
        \\
        I_{t,s,\nu}^{\lambda}
        \equiv
        \int\lambda(\rmd S)
        \int_{\mathbb C}\rmd^2\alpha\,
        \frac{p_{\tilde\tau_t,\nu}(\alpha,S)^2}
        {p_{\tilde\tau_s,\nu}(\alpha,S)} .
    \end{gathered}
    \end{equation}
    It remains to choose $t\in(\sqrt s,1)$ such that $I_{t,s,\nu}^{\lambda}<\infty$.
    To prove this, we will look for a uniform bound of the form
    \begin{equation}
        \frac{p_{\tilde\tau_t,\nu}(\alpha,S)^2}
        {p_{\tilde\tau_s,\nu}(\alpha,S)}
        \le C_0 e^{-\eta |z|^2/2}
    \end{equation}
    for suitable $\eta>0$, identifying $\alpha\in\CC$ with the corresponding $z\in\RR^2$. This would in turn prove that the integral is finite.

    For fixed $S$, the state $U_S\nu U_S^\dagger$ is centered Gaussian. Let $V_S$ be its covariance matrix.
    Then
    \begin{equation}
    \begin{gathered}
        p_{\tilde\tau_t,\nu}(z,S)
        =
        C_{t,S}
        \exp\!\left[
            -\frac12 z^TA_{t,S}z
        \right],
        \\
        A_{t,S}
        \equiv
        (V_S+v_tI)^{-1},
        \qquad
        v_t
        \equiv
        \frac{1+t}{2(1-t)}.
    \end{gathered}
    \end{equation}
    Thus for each fixed $S$, the Gaussian integral in $I_{t,s,\nu}^{\lambda}$ is finite if and only if $2A_{t,S}-A_{s,S}>0$.
    Since $A_{t,S}$ and $A_{s,S}$ are both functions of $V_S$, the same eigenvalue calculation as in~\cref{eq:lemma5_A0eigvalscondition} shows that this condition is equivalent to
    $v_t<\lambda_{\min}(V_S)+2v_s$, and we want this condition to hold for $\lambda$-almost all $S$.
    This holds because $2A_{t,S}-A_{s,S}>0$ is equivalent to $\frac{2}{u+v_t}-\frac{1}{u+v_s}>0$ with $u$ the eigenvalues of $V_S$.

    Each \(V_S\) is strictly positive definite. Since \(S\mapsto \lambda_{\min}(V_S)\) is continuous and \(\operatorname{supp}\lambda\) is compact, the infimum is attained and is strictly positive:
    \begin{equation}
        u_*
        =
        \inf_{S\in\on{supp}\lambda}\lambda_{\min}(V_S)
        >
        0 .
    \end{equation}
    It is therefore enough to choose $t$ such that $v_t<u_*+2v_s$.
    Such a choice exists with $t>\sqrt s$, because
    \begin{equation}
        v_{\sqrt s}<2v_s<u_*+2v_s
    \end{equation}
    and $t\mapsto v_t$ is continuous.
    For this choice of $t$, the matrices $2A_{t,S}-A_{s,S}$ are positive on $\on{supp}\lambda$ for $\lambda$-almost all $S$, and the ratio \(C_{t,S}^2/C_{s,S}\) is uniformly bounded on \(\operatorname{supp}(\lambda)\).
    In particular, there is $\eta>0$ such that $2A_{t,S}-A_{s,S}\ge \eta I$ for all $S\in\on{supp}(\lambda)$.
    Thus by the same logic of~\cref{lemma:A0_allgaussians}, we conclude that $I_{t,s,\nu}^{\lambda}<\infty$, and
    \begin{equation}
        \langle X_n,\frameOp_{\tilde\tau_s,\nu,\lambda}(X_n)\rangle_{\tilde\tau_s}
        \le
        C
        \left(\frac{s}{t^2}\right)^n
        \longrightarrow0 .
    \end{equation}
    Hence $A_{\tilde\tau_s,\nu,\lambda}=0$, and the reduction step proves the claim.
\end{proof}

\Cref{prop:A0_covariantWithS} holds even in the limit of infinite squeezing, in which case the measurement collapses to a standard homodyne measurement with uniformly random angles.
This will be proved when discussing homodyne detection, in~\cref{prop:A0_homodyne_gaussian}.

\subsubsection{Canonical estimator}%
\label{sec:covariant_estimators_randomized}

Consider now the randomized covariant POVM defined in~\cref{eq:def_covariantPOVMwithS}.
The reconstruction equation is
\begin{equation}
    \calO
    =
    \int\lambda(\rmd S)
    \int_{\mathbb C}\rmd^2\alpha\,
    \hat o(\alpha,S)\mu_\nu(\alpha,S),
    \label{eq:random_covariant_intrinsic_estimator}
\end{equation}
where $\lambda(\rmd S)$ is the probability measure over the random $U_S$.
Taking the Weyl transform this gives
\begin{equation}
    \chi_\calO(\beta)
    =
    \int\lambda(\rmd S)\,
    \chi_{\nu_S}(\beta)
    \FT[\hat o(\cdot,S)](\beta),
    \label{eq:random_covariant_estimator_equation_fourier}
\end{equation}
where $\nu_S\equiv U_S\nu U_S^\dagger$.

Again, the unrescaled Hilbert-Schmidt frame operator is diagonalized by the Weyl transform, and we find:
\begin{equation}
\begin{gathered}
    \chi_{\frameOp_{{\rm HS},\nu,\lambda}(X)}(\beta)
    =
    \frac1\pi
    C_{\nu,\lambda}(\beta)\chi_X(\beta),
    \\
    C_{\nu,\lambda}(\beta)
    \equiv
    \int\lambda(\rmd S)\,
    |\chi_{\nu_S}(\beta)|^2,
\end{gathered}
\end{equation}
or, formally,
\begin{equation}
    \frameOp_{{\rm HS},\nu,\lambda}
    =
    \int_{\mathbb C}\frac{\rmd^2\beta}{\pi^2}\,
    C_{\nu,\lambda}(\beta)
    |D(\beta)\rangle\!\langle D(\beta)|.
    \label{eq:random_covariant_frame_spectral}
\end{equation}
When $C_{\nu,\lambda}(\beta)>0$, the formal inverse is
\begin{equation}
    \frameOp_{{\rm HS},\nu,\lambda}^{-1}
    =
    \int_{\mathbb C}\rmd^2\beta\,
    \frac{1}{C_{\nu,\lambda}(\beta)}
    |D(\beta)\rangle\!\langle D(\beta)|.
    \label{eq:random_covariant_frame_inverse_spectral}
\end{equation}
Thus
\begin{equation}
    \chi_{\frameOp_{{\rm HS},\nu,\lambda}^{-1}(Y)}(\beta)
    =
    \pi
    \frac{\chi_Y(\beta)}{C_{\nu,\lambda}(\beta)}.
    \label{eq:random_covariant_frame_inverse}
\end{equation}
The Hilbert-Schmidt dual estimator is then
\begin{align}
    \hat o_\calO^{\rm HS}(\alpha,S)
    &=
    \langle \calO,
    \frameOp_{{\rm HS},\nu,\lambda}^{-1}\mu_\nu(\alpha,S)
    \rangle_2,
    \notag\\
    &=
    \int_{\mathbb C}\frac{\rmd^2\beta}{\pi}\,
    \chi_\calO(\beta)
    \overline{
        \chi_{\frameOp_{{\rm HS},\nu,\lambda}^{-1}\mu_\nu(\alpha,S)}(\beta)
    }
    \notag\\
    &=
    \int_{\mathbb C}\frac{\rmd^2\beta}{\pi}\,
    \frac{
        \chi_\calO(\beta)\overline{\chi_{\nu_S}(\beta)}
    }{
        C_{\nu,\lambda}(\beta)
    }
    e^{\bar\beta\alpha-\beta\bar\alpha}
    \notag\\
    &=
    \FT\!\left[
        \frac{
            \chi_\calO\,\overline{\chi_{\nu_S}}
        }{
            C_{\nu,\lambda}
        }
    \right]\!(\alpha).
    \label{eq:random_covariant_estimator_from_dual_explicit}
\end{align}

For a point-mass measure $\lambda$ supported on a single setting, $C_{\nu,\lambda}=|\chi_\nu|^2$, and~\cref{eq:random_covariant_estimator_from_dual_explicit} reduces to~\cref{eq:covariant_estimator_from_dual_effect}.
For a nontrivial randomization, however, this estimator need not be unique.
As shown in~\cref{sec:covariantWithS_nullestimators}, the additional setting variable can introduce genuine null estimators.
If $h_0(\alpha,S)=\sqrt{p_{\sigma,\nu}(\alpha,S)}\,\hat o_0(\alpha,S)$ is one admissible coefficient for $\calO$, then all admissible coefficients representing the same observable are
\begin{equation}
    h_0+k,
    \qquad
    k\in\ker T_{\sigma,\nu,\lambda}^*.
    \label{eq:random_covariant_affine_estimators}
\end{equation}
The corresponding estimators \((h_0+k)/\sqrt{p_{\sigma,\nu}}\) all give the same expectation value but generally have different variance.
In particular, their second moment under the reference state \(\sigma\) is given by
$\|h\|_{L^2(\rmd^2\alpha\,\lambda(\rmd S))}^2$.
The unique minimum-norm representative is the solution satisfying
\begin{equation}
    T_{\sigma,\nu,\lambda}^*h_{\min}
    =
    \calO,
    \qquad
    h_{\min}
    \in
    \left(\ker T_{\sigma,\nu,\lambda}^*\right)^\perp.
    \label{eq:random_covariant_min_orthogonality}
\end{equation}

\subsubsection{Null estimators}%
\label{sec:covariantWithS_nullestimators}

In contrast with what we found in~\cref{sec:covariant_frame_nullestimators}, the introduction of a random Gaussian unitary $U_S$ can make the synthesis operator not injective, and thus allow for the existence of null estimators.
We will discuss here general conditions to find such null estimators.
We also provide in~\cref{eq:covariantWithS_nullEstimator} an explicit expression for such a null estimator for random covariant measurements in the finite-squeezing case.
The case of homodyne detection will be discussed separately in~\cref{sec:homodyne_props_nullEst}.

\paragraph{General construction of null estimators}
Consider the case in which $U_S\in\mathbf{SO}(2)\cap\mathbf{Sp}(2,\mathbb{R})$ is a uniformly random rotation, and let $\nu$ be a centered Gaussian state.
We write the coefficient function on which the synthesis operator acts as
$h(\alpha,\theta)=\sqrt{p_{\sigma,\nu}(\alpha,\theta)}\,\hat o(\alpha,\theta)$.
Thus $\hat o$ is a null estimator if $h\neq0$, $h$ belongs to the domain of the synthesis operator, and $T_{\sigma,\nu}^*h=T_{\sigma,\nu}^*\bigl(\sqrt{p_{\sigma,\nu}}\,\hat o\bigr)=0$.
From the definition of synthesis operator as given in~\cref{eq:def_randomcov_synthesisop}, the characteristic function of $T_{\sigma,\nu}^*$ takes the form
\begin{equation}
    \chi_{T_{\sigma,\nu}^* h}(\beta)
    =
    \int_0^{\pi} \frac{\rmd\theta}{\pi}
    \,\chi_{\nu_\theta}(\beta)
    \FT[\hat o(\cdot,\theta)](\beta),
\end{equation}
where $\nu_\theta\equiv U_\theta \nu U_\theta^\dagger$.

A general way to construct null estimators is to choose a nonzero function $\varphi$ such that $\varphi(\cdot)\partial_\theta\chi_{\nu_\theta}(\cdot)$ is integrable, with sufficient regularity to justify the Fourier transform, and define
\begin{equation}
    \hat o(\alpha,\theta) =
    \FT[\varphi(\cdot)\partial_\theta \chi_{\nu_\theta}(\cdot)](\alpha).
\end{equation}
In particular, one may take $\varphi=1$ whenever the resulting Fourier transform is well defined and the corresponding
$h=\sqrt{p_{\sigma,\nu}}\,\hat o$ belongs to the synthesis domain.
This works because
\begin{equation}
\begin{gathered}
    \int_0^\pi\rmd\theta\,
    \chi_{\nu_\theta}(\beta)\partial_\theta\chi_{\nu_\theta}(\beta)
    =
    \int_0^\pi
    \frac{\rmd\theta}{2}
    \, \partial_\theta (\chi_{\nu_\theta}(\beta)^2)
    \\= \frac12
    \left[\chi_{\nu_\theta}(\beta)^2\right]_{\theta=0}^{\theta=\pi} =0,
\end{gathered}
\end{equation}
where we used the fact that $\nu_{\theta+\pi}=\nu_\theta$.

This construction is trivial when the seed is rotationally invariant, for example when $\nu$ is thermal, because then $\nu_\theta=\nu$ and $\partial_\theta\chi_{\nu_\theta}=0$.
In that case the angle variable is redundant, and any nonzero admissible estimator $\hat o$ satisfying
$\int_0^\pi \rmd\theta\,\hat o(\alpha,\theta)=0$ gives a null estimator.
Equivalently, any nonzero square-integrable function with vanishing average over $\theta$ gives a null estimator for the artificially enlarged measurement record.

\paragraph{Explicit example with squeezed vacuum}
For an explicit example, consider as seed the squeezed vacuum, $\nu_r=\ketbra{0_r}{0_r}$ with $\ket{0_r}$ one-mode squeezed vacuum with $r>0$.
This has characteristic function
\begin{equation}
    \chi_{\nu_r}(\beta) =
    \exp\left[
    -\frac12\left(
    e^{-2r}\beta_1^2 + e^{2r}\beta_2^2
    \right)
    \right],
\end{equation}
with $\beta\equiv\beta_1+i\beta_2$.
For the rotated seed we get $\chi_{\nu_{r,\theta}}(\beta)=\chi_{\nu_r}(e^{-i\theta}\beta)$, and thus
\begin{equation}
    \partial_\theta\chi_{\nu_{r,\theta}}(\beta)
    =
    2\sinh(2r)\Re(e^{-i\theta}\beta)\Im(e^{-i\theta}\beta)
    \chi_{\nu_{r,\theta}}(\beta).
\end{equation}
Taking $\varphi=1$, the null estimator is then $\hat o(\alpha,\theta)=\FT[\partial_\theta\chi_{\nu_{r,\theta}}(\cdot)](\alpha)$, and evaluating the Fourier transform gives
\begin{equation}\label{eq:covariantWithS_nullEstimator}
    \hat o_0(\alpha,\theta) =
    16\sinh(2r) \alpha_{\theta,x}\alpha_{\theta,y}
    e^{-2(e^{-2r}\alpha_{\theta,x}^2+e^{2r}\alpha_{\theta,y}^2)},
\end{equation}
where $\alpha_{\theta,x}\equiv \Re(e^{-i\theta}\alpha)$ and $\alpha_{\theta,y}\equiv \Im(e^{-i\theta}\alpha)$.
One can directly verify that this is a null estimator, namely that
\begin{equation}
    \int_0^{\pi}\rmd\theta\int\rmd^2\alpha
    \,\hat o_0(\alpha,\theta)
    D(\alpha)
    \ketbra{0_{r,\theta}}{0_{r,\theta}}
    D(-\alpha)
    = 0.
\end{equation}

\subsubsection{The case of homodyne detection}%
\label{sec:covariantWithS_homodyne}

Homodyne detection is obtained formally from the randomized covariant model by taking the seed to an infinitely squeezed Gaussian state and randomizing uniformly over phase rotations.
The lower-bound argument above persists in this limit, and the resulting homodyne POVM still has a vanishing lower frame bound; this is proved directly in~\cref{prop:A0_homodyne_gaussian}.
The null-estimator mechanism also specializes to homodyne detection, where it becomes the polynomial angular-sector structure discussed in~\cref{sec:homodyne_props_nullEst}.

\section{Heterodyne detection}
\label{sec:heterodyne}

In this section we specialize the results of~\cref{sec:covariantPOVM} to the case of heterodyne detection.
This is the special case of the Weyl-Heisenberg covariant POVM in~\cref{eq:def_covariantPOVM} obtained by choosing as seed the vacuum state, $\nu=\PP_0$. The corresponding POVM density is thus
\begin{equation}
    \mu_{\rm het}(\alpha)
    \equiv
    \mu_{\PP_0}(\alpha)
    =
    \frac1\pi \PP_\alpha.
    \label{eq:heterodyne_povm_as_covariant}
\end{equation}
Thus all structural results of~\cref{sec:covariantPOVM} apply with
$\chi_{\PP_0}(\beta)=e^{-|\beta|^2/2}$.

In~\cref{sec:heterodyne_props} we specialize the frame-bound and uniqueness statements of~\cref{sec:covariant_HSyesrescale,sec:covariant_sigmaIP}. 
In~\cref{sec:heterodyne_est} we specialize the general formulas for covariant measurements to derive explicitly the estimators for heterodyne detection, and show their connection with the Glauber-Sudarshan $P$ representation.
In~\cref{sec:heterodyne_PUniq} we discuss how the statements about uniqueness of estimators translate into statements about uniqueness of the Glauber-Sudarshan $P$ representation in general.

\subsection{General properties}%
\label{sec:heterodyne_props}

In the unrescaled HS construction lower and upper frame bounds for covariant measurements, and thus in particular for heterodyne detection, are $A_{\rm HS}=0$ and $B_{\rm HS}=1/\pi$, while in the $\sigma$-regularized construction they are $A_{\sigma}=0$ and $B_{\sigma}=1$.

The vanishing lower frame bound $A_{\sigma}=0$ means that the synthesis operator, $T_{\sigma}^*$, has non-closed range.
Analysis and synthesis operators, obtained specializing~\cref{eq:def_cov_sigma_analysisop,eq:def_cov_sigma_synthesisop}, are given by
\begin{equation}
\begin{gathered}
    (T_{\mathrm{het},\sigma} X)(\alpha)
    =
    \frac{\trace[\sigma\{X,\PP_\alpha\}]}{2\pi\sqrt{p_\sigma(\alpha)}},
    \\
    T_{\mathrm{het},\sigma}^*h
    =
    \int\frac{\rmd^2\alpha}{\pi}
    \frac{h(\alpha)}{\sqrt{p_\sigma(\alpha)}}\PP_\alpha,
    \quad
    p_\sigma(\alpha)\equiv\frac1\pi
    \langle\alpha|\sigma|\alpha\rangle
\end{gathered}
\end{equation}
and it is in general a map $T_{\mathrm{het},\sigma}^*:L^2(\rmd^2\alpha)\to \scrLcompletion$. Denoting its range with $\calR_\sigma\equiv\Range(T_{\mathrm{het},\sigma}^*)$, the vanishing lower frame bound together with the injectivity of the analysis operator $T_{\sigma}$ imply that $\calR_\sigma\subsetneq\scrLcompletion$ but $\overline{\calR}_\sigma=\scrLcompletion$.
As shown in~\cref{sec:estTheory_infDim}, this tells us that there are observables $\calO$ corresponding to no unbiased estimator with finite variance.
At the same time, $T_{\sigma}^*$ is injective, which follows as a special case of~\cref{prop:TsigmaStarInjectiveGaussian}. This implies that any observable having an admissible expression as a linear combination of POVM elements has only one such expression, and thus the corresponding unbiased estimator $\hat o_\calO$ is unique whenever it exists.

\subsection{Estimators}%
\label{sec:heterodyne_est}

An unbiased estimator for an observable $\calO$, in the case of heterodyne detection, is a function $\hat o_\calO(\alpha)$ that is defined on each detection event $\alpha$, that reproduces on average the correct expectation value of $\calO$, i.e.
\begin{equation}\label{eq:heterodyne_unbiasedness}
    \langle\mathcal O,\rho\rangle=
    \int \frac{\rmd^2\alpha}{\pi} \hat o_\calO(\alpha) \langle\PP_\alpha,\rho\rangle,
    \quad\forall\rho,
\end{equation}
or equivalently
\begin{equation}
    \calO
    =
    \int_{\mathbb C}\rmd^2\alpha\,
    \hat o_\calO(\alpha)\mu_{\rm het}(\alpha)
    =
    \int_{\mathbb C}\frac{\rmd^2\alpha}{\pi}\,
    \hat o_\calO(\alpha)\PP_\alpha .
    \label{eq:heterodyne_intrinsic_estimator_equation}
\end{equation}

Specializing the results of~\cref{sec:covariant_estimator_nonrandomCovariant} and using $\chi_{\PP_0}(\alpha)=e^{-|\alpha|^2/2}$, the frame and its inverse take the form
\begin{equation}\label{eq:eigendecomposition_for_calF}
\begin{aligned}
     \frameOp_{\mathrm{het}}
     &=
     \intdalphapis
    e^{-|\alpha|^2} |D(\alpha)\rangle\!\langle D(\alpha)|, \\
    \frameOpInv_{\mathrm{het}}
    &=
    \int \rmd^2\alpha \,
    e^{|\alpha|^2} |D(\alpha)\rangle\!\langle D(\alpha)|,
\end{aligned}
\end{equation}
or equivalently,
\begin{equation}
\begin{aligned}
    \frameOp_{\mathrm{het}}(D(\alpha))
    &=
    \frac{e^{-|\alpha|^2}}\pi  D(\alpha),
    \\
    \frameOpInv_{\mathrm{het}}(D(\alpha))
    &=
    \pi e^{|\alpha|^2} D(\alpha).
\end{aligned}
\end{equation}
Expressions analogous to~\cref{eq:eigendecomposition_for_calF} are also discussed in~\cite{dariano2004InformationallyCompleteMeasurements}.
The associated unbiased estimator is given by
\begin{equation}
    \hat o_\calO(\alpha)
    =
    \FT[e^{|\cdot|^2/2} \chi_\calO(\cdot)]
    =
    \FT[\chi_\calO(\cdot,1)],
\end{equation}
hence the unbiased estimator is well-defined when the normally-ordered characteristic function is integrable, i.e. when $\chi_\calO(\cdot,1)\in L^1(\mathbb{R})$.
Using the definition of $T$ operators given in~\cref{sec:quasiprobs_as_observables}, we also observe that
\begin{equation}\label{eq:heterodyne_frameopin_on_beta}
\begin{gathered}
    \frameOpInv_{\mathrm{het}}\left(\frac{\PP_\beta}{\pi}\right)
    % \intdalphapi
    % e^{|\alpha|^2}
    % \langle\beta|D(\alpha)|\beta\rangle D(\alpha) \\
    = \intdalphapi
    e^{|\alpha|^2/2}
    e^{\beta\bar\alpha-\bar\beta\alpha} D(\alpha)
    \\
    = \FT[D(\cdot,1)](\beta)= T(\beta,1),
\end{gathered}
\end{equation}
and thus
\begin{equation}\label{eq:heterodyne_obsestimator}
    \hat o_\calO(\beta) =
    \langle\calO,\frameOpInv_{\mathrm{het}}(\tfrac1\pi \PP_\beta)\rangle
    =\langle\calO,T(\beta,1)\rangle = \pi P_\calO(\beta),
\end{equation}
We thus conclude that the unbiased estimator corresponding to heterodyne detection is precisely the $P$ function of $\calO$.
As previously observed in~\cref{sec:covariant_estimator_nonrandomCovariant}, although this estimator was derived via the non-rescaled HS construction, the injectivity of $T_{\sigma}^*$ ensures it is the unique valid expression regardless of the construction used to derive it.

\Cref{eq:heterodyne_obsestimator} is a formal solution of the weak reconstruction equation in~\cref{eq:heterodyne_intrinsic_estimator_equation}. 
It only becomes a statistically admissible estimator relative to a reference state $\sigma$ if $\hat o_\calO\in L^2(p_{\sigma}\,\rmd^2\alpha)$, i.e.
\begin{equation}
    \int_{\mathbb C}\rmd^2\alpha\,
    p_{\sigma}(\alpha)|\hat o_\calO(\alpha)|^2<\infty .
    \label{eq:heterodyne_sigma_admissibility}
\end{equation}
Equivalently, $h_\calO=\sqrt{p_{\sigma}}\hat o_\calO$ must belong to $L^2(\mathbb C)$ and satisfy $T_{\mathrm{het},\sigma}^*h_\calO=\calO$.

\subsubsection{Examples of estimators}
\label{sec:heterodyne_fock_estim_singular}

\paragraph{Observables with finitely many excitations}
Consider the case in which one wants to estimate target operators of the form $\calO=\ketbra{n}{m}$.
These have characteristic function~\cite{cahill1969ordered}
\begin{equation}\label{eq:heterodyne_chinm}
    \chi_{\ketbra{n}{m}}(\alpha,1) =
    \sqrt{\frac{n!}{m!}}
    \alpha^{m-n}
    L_n^{(m-n)}(|\alpha|^2),
\end{equation}
with $L_n^{(m-n)}$ associated Laguerre polynomials.
Thus for $n=m$ we have $\chi_{\PP_n}(\alpha,1)=\on{poly}(\alpha,\bar\alpha)$, and its Fourier transform $\hat o_\calO(\alpha)$ is a linear combination of derivatives of the delta function. Formally, the estimator becomes
\begin{equation}
\label{eq:heterodyne_estim_delta}
    \hat o_\calO(\alpha) =
    \pi \on{poly}(-\partial_{\bar\alpha},\partial_\alpha) \delta^2(\alpha),
\end{equation}
which is generally more singular than a delta function, and not a physically meaningful estimator.
More generally, this shows that any observable $\calO$ with finitely-many excitations corresponds to a singular estimator, i.e. $\calO\in \overline{\calR}_\sigma\setminus\calR_\sigma$.
Displaced Fock states, or more generally observables obtained displacing another observable with finite Fock support, will have the same problem, as clear from~\cref{eq:charfun_displaced_operator}.

Other examples of states with non-positive regular $P$ function can be found in~\cite{damanet2018NonclassicalStatesLight}.

\paragraph{Coherent states}
For coherent states the characteristic function is a phase, as per~\cref{eq:charfun_basic_examples}. We have
$\chi_{\PP_\beta}(\alpha,1)=e^{\alpha\bar\beta-\bar\alpha\beta}$, and thus $\hat o_{\PP_\beta}(\alpha)=\pi \delta^2(\alpha-\beta)$.
This is again not usable in practice without some binning, as such an estimator would be zero with probability $1$ in any experimental implementation of the protocol.

\paragraph{Displaced thermal states}
Consider now instead the goal of reconstructing an observable that is a displaced thermal state, as defined in~\cref{eq:def_displaced_thermal_state}: 
$\calO=\tilde\tau_{x,\beta}=(1-x)D(\beta)x^{a^\dagger a}D(\beta)^\dagger$, with $x=\frac{\bar n}{\bar n+1}\in[0,1]$ and $\bar n\ge0$ the average photon number.
The corresponding characteristic function is given by~\cref{eq:chis_displacedthermalstates} with $s=1$, and the corresponding estimator reads
\begin{equation}\label{eq:estimator_displacedthermalstates}
\begin{gathered}
    \hat o_{\tilde\tau_{x,\beta}}(\alpha) =
    \frac{1}{\bar n} e^{-|\alpha-\beta|^2/\bar n}.
\end{gathered}
\end{equation}
% The unbiasedness of this estimator is automatic from its definition, and corresponds to the statement
% \begin{equation}
%     \mathbb{E}[\hat o_{\tilde\tau_{x,\beta}}|\rho]=
%     \intdalphapi
%     \hat o_{\tilde\tau_{x,\beta}}(\alpha) \langle \alpha|\rho|\alpha\rangle
%     = \trace[\tilde\tau_{x,\beta}\rho].
% \end{equation}
%\parTitle{Calculation of variance}
Thus, displaced thermal states all have well-defined unbiased estimators for any prior state, i.e. $\tilde\tau_{x,\beta}\in \calR_\sigma$ for all $\sigma$.

To quantify the amount of resources needed to estimate the expectation value with a certain accuracy, we study the variance of this estimator conditional to a given input state $\rho$, which is given by
\begin{equation}
    \Var[\hat o_{\tilde\tau_{x,\beta}}|\rho] = 
    \mathbb{E}[\hat o_{\tilde\tau_{x,\beta}}^2|\rho]
    - \mathbb{E}[\hat o_{\tilde\tau_{x,\beta}}|\rho]^2.
\end{equation}
The first term of the variance can be written as
\begin{equation}
\begin{aligned}
    \mathbb{E}[\hat o_{\tilde\tau_{x,\beta}}^2|\rho] =
    \intdalphapi
    \hat o_{\tilde\tau_{x,\beta}}^2(\alpha)
    \langle\alpha|\rho|\alpha\rangle
    \\=
    \frac{1}{2\bar n}\trace\left[\rho_{\rm th}\left(\frac{x}{2-x},\beta\right)\rho\right],
\end{aligned}
\end{equation}
where $\rho_{\rm th}(\frac{x}{2-x},\beta)$ is the (displaced) thermal state with average occupation number $\bar n/2$.
% \cLI{maybe we don't need to discuss the variance here}

%\parTitle{Singular behaviour of the variance}

Note that, in the limit $\bar n\to 0^+$, in which $\tilde\tau_{x,\beta}\sim \PP_\beta$, we have
\begin{equation}
    \Var[\hat o_{\tilde\tau_{x,\beta}}|\rho] 
    \sim \mathbb{E}[\hat o_{\tilde\tau_{x,\beta}}^2|\rho]
    \sim \frac{\langle\beta|\rho|\beta\rangle}{2\bar n},
\end{equation}
which diverges when $\bar n\to0^+$.
This is consistent with the fact that coherent states have singular estimators, which correspond to infinite variance.

\paragraph{Quadratures}
Consider as target observable a quadrature, $\calO=\hat x_\theta$.
Using the characteristic function given in~\cref{eq:charfun_quadratures} we find that the corresponding unbiased estimator is
$\hat o_{\hat x_\theta}(\alpha) = \sqrt2 \,\Re(\alpha e^{-i\theta})$, whose variance is
$\mathbb{E}[\hat o_{\hat x_\theta}^2|\rho] =
    \langle \hat x_\theta^2\rangle_\rho + \frac12$.
% \begin{equation}
%     \mathbb{E}[\hat o_{\hat x_\theta}^2|\rho] =
%     \langle \hat x_\theta^2\rangle_\rho + \frac12.
% \end{equation}
% \simone{Simone: plot di MSE/Var anche qua? Su stati Fock dovrebbe essere easy}

\subsubsection{\texorpdfstring{$\sigma$}{sigma}-dependence of admissible estimators}
\label{sec:estTheory_infDim_examples_heterodyne_sigma_dependence}

Heterodyne detection provides a conceptually simple example in which the underlying POVM is fully informationally complete on the whole operator space, the corresponding synthesis operator is injective, and yet the class of observables that admit an exact \(L^2(p_\sigma\rmd\nu)\) estimator depends on the choice of the reference state \(\sigma\).

Consider a faithful thermal prior $\sigma=\tau_{\bar n}$, $\bar n>0$, and target observables that are also thermal states, $\calO_{\bar m}=\tau_{\bar m}$, $\bar m >0$.
Then the condition $\calO_{\bar m}\in\mathscr L_{\tau_{\bar n}}$ amounts to $(\frac{\bar m}{\bar m+1})^2<\frac{\bar n+1}{\bar n}$, which is always verified.
Similarly the exact reconstructibility condition $\calO_{\bar m}\in\calR_{\tau_{\bar n}}$ translates into $\frac{\bar m}{\bar m+1}<\frac{2(\bar n+1)}{2\bar n+1}$, which is also automatically verified.
Thus, for positive-temperature thermal states as target observables and priors, exact reconstructibility is always guaranteed.

Consider instead the case of target observables $\calO_{\bar m}=\tau_{\bar m}$ with $\bar m<-1$. These are unbounded diagonal operators, and in this case we have $\tau_{\bar m}\in\mathscr L_h^2(\tau_{\bar n})$ only for $\bar m<-(\bar n+1+\sqrt{\bar n(\bar n+1)})$. On the other hand, $\tau_{\bar m}\in\calR_{\tau_{\bar n}}$ iff $\bar m < -2(\bar n+1)$.
Thus in these cases exact reconstructibility strongly depends on the choice of prior and target observable. In particular, we can summarize the reconstructability properties as follows:
\begin{equation}
\begin{aligned}
    \bar m<-2(\bar n+1)
    &\quad\Longrightarrow\quad
    \tau_{\bar m}\in\calR_{\tau_{\bar n}},
    \\
    -2(\bar n+1)
    \le
    \bar m
    <
    -A(\bar n)
    &\quad\Longrightarrow\quad
    \tau_{\bar m}\in
    \mathscr L_h^2(\tau_{\bar n})\setminus\calR_{\tau_{\bar n}},
    \\
    -A(\bar n)
    \le
    \bar m<-1
    &\quad\Longrightarrow\quad
    \tau_{\bar m}\notin\mathscr L_h^2(\tau_{\bar n}),
\end{aligned}
\end{equation}
where $A(\bar n)\equiv\bar n+1+\sqrt{\bar n(\bar n+1)}$.
% These regions are schematically represented in~\cref{fig:heterodyne_feasibility_thresholds}.

% ===================== TO RE-ADD LATER ON ===================
% \begin{figure}[tb]
%     \centering
%     \includegraphics[width=1\linewidth]{figures/heterodyne_feasibility_thresholds.pdf}
%     \caption{Representation of feasibility thresholds for thermal-like target observables $\calO=\tau_{\bar m}$, $\bar m>0$ and $\bar m<-1$, and faithful thermal priors $\tau_{\bar n}$, $\bar n>0$. \addLI{prob to fix up graphically}
%     % source at https://chatgpt.com/share/e/6a0a3967-a7c8-800d-ae33-7deeb9cee693
%     }
%     \label{fig:heterodyne_feasibility_thresholds}
% \end{figure}

Now consider instead as target observable the vacuum state, $\calO=\mathbb{P}_0$, still with faithful thermal prior $\sigma=\tau_{\bar n}$.
Since $\|\mathbb{P}_0\|_{\sigma_{\bar n}}^2=\frac{1}{\bar n+1}$ we have $\mathbb{P}_0\in\mathscr L_h^2(\tau_{\bar n})$.
However, the $P$ function of $\mathbb{P}_0$ is a Dirac delta distribution, not an element of $L^2(p_{\tau_{\bar n}}\rmd^2\alpha)$, hence $\mathbb{P}_0\notin\calR_{\tau_{\bar n}}$. Since $\tau_{\bar m}\to\mathbb{P}_0$ in $\mathscr L_h^2(\tau_{\bar n})$ as $\bar m\to0$, this means that $\mathbb{P}_0\in \overline{\calR}_{\tau_{\bar n}}\setminus \calR_{\tau_{\bar n}}$.

\subsection{Uniqueness of the \texorpdfstring{$P$}{P} representation}%
\label{sec:heterodyne_PUniq}

The injectivity of the synthesis operator is sufficient to prove the uniqueness of unbiased estimators for heterodyne detection whenever they exist.
However, this uniqueness holds in principle only in the relevant $\sigma$-regularized space $L^2(p_\sigma\rmd^2\alpha)$ for some fixed choice of $\sigma$.
In other words, injectivity of $T_{\sigma}^*$ ensures us that for any pair of unbiased estimators $\hat o_\calO,\hat o_\calO'$ for $\calO$, if $\hat o_\calO,\hat o_\calO'\in L^2(p_\sigma\rmd^2\alpha)$, then $\hat o_\calO=\hat o_\calO'$.
But if $\hat o_\calO\in L^2(p_\sigma\rmd^2\alpha)$ and $\hat o_\calO'\in L^2(p_{\sigma'}\rmd^2\alpha)$ with $\sigma\neq\sigma'$, it might still be that $\hat o_\calO\neq\hat o_\calO'$.
Furthermore, nothing prevents the possibility of there being distinct unbiased estimators for a given observable if we completely forgo the constraint of the estimators belonging to some $L^2(p_\sigma\rmd^2\alpha)$ space, that is, if we do not require them to have finite variance with respect to some state.

\subsubsection{Formal null estimators outside the admissible space}%
\label{sec:heterodyne_PUniq_formalNulls}

To see this, let $\hat o_\calO$ and $\hat o'_\calO$ be two unbiased estimator for $\calO$. Then $f(\alpha)=\hat o_\calO(\alpha)-\hat o'_\calO(\alpha)$ is a null estimator, i.e. $\int_{\mathbb{C}}\frac{d^2\alpha}{\pi}\,f(\alpha)\,\PP_\alpha=0$.
In the Fock basis this condition yields the following moment constraints 
\begin{equation}\label{eq:moment_constraints_f}
    \int_{\mathbb{C}}\frac{d^2\alpha}{\pi}\,e^{-|\alpha|^2}f(\alpha)\alpha^n\bar\alpha^m=0
\end{equation}
for all $m,n\in\mathbb{N}$.
In general, nonzero functions $f$ satisfying~\cref{eq:moment_constraints_f} exist. Indeed, writing $f(\alpha)=e^{|\alpha|^2}g(\alpha)$, it is sufficient to find any nonzero $g(\alpha)$ with zero moments.
One such case is $f(\alpha)=e^{|\alpha|^2} e^{-|\alpha|^{1/2}}\sine(|\alpha|^{1/2})$.
Indeed, the angular integral kills all moments with $m\ne n$, while for $m=n$ we have, changing variables with $t=|\alpha|^2$,
\begin{equation}
    \int_0^\infty \rmd t\,
    t^n e^{-t^{1/4}}\sin(t^{1/4})
    =
    \frac{4\,\Gamma(4n+4)}
    {\Im\!\left[(1-i)^{4(n+1)}\right]}
    =
    0.
\end{equation}
This means that e.g. the function
\begin{equation}
    \tilde P_{\tau_1}(\alpha) = \frac{1}{\pi}e^{-|\alpha|^2} + e^{|\alpha|^2} e^{-\sqrt{|\alpha|}}\sine(\sqrt{|\alpha|})
\end{equation}
is a $P$ representation for the thermal state $\tau_1$, satisfying the defining relation $\int\rmd^2\alpha \tilde P_{\tau_1}(\alpha)\PP_\alpha=\tau_1$.
Notably, this is a non-positive $P$ representation for a classical state, as e.g. $\tilde P_{\tau_1}(100)<0$.
Of course, this null estimator is wildly outside of any $L^2(p_\sigma\rmd^2\alpha)$ space, as it diverges exponentially fast as $|\alpha|\to\infty$, so its existence is compatible with our previous statements about uniqueness of unbiased estimators in $L^2(p_\sigma\rmd^2\alpha)$.

The examples in this subsection should not be interpreted as null estimators in the admissible sense of~\cref{sec:estTheory_infDim}, as we already showed that for heterodyne detection admissible estimators are unique when they exist.
Indeed, the functions considered here lie outside the relevant $L^2(p_\sigma\rmd^2\alpha)$ spaces. They are useful only as formal algebraic objects illustrating why singular $P$ representations cannot be treated as statistically valid finite-variance estimators without additional regularization.

\subsubsection{Uniqueness of null estimators with finite variance}%
\label{sec:heterodyne_PUniq_uniquenessNulls}

We will now show that even the weaker conditions $\calO=T_\sigma^* (\sqrt{p_\sigma}\hat o_\calO)=T_{\sigma'}^* (\sqrt{p_{\sigma'}}\hat o_\calO')$ with $\sigma\neq\sigma'$, which means that $\hat o_\calO$ and $\hat o_\calO'$ provide weak reconstruction formulas for the same $\calO$ but with different guarantees on the states with respect to which they have finite variance, are sufficient to prove that $\hat o_\calO=\hat o_\calO'$, at least when $\sigma,\sigma'$ are both faithful Gaussian states.
Indeed, $\calO=T_\sigma^*(\sqrt{p_\sigma}\hat o_\calO)$ implies that 
\begin{equation}
    \trace[A\calO]=\int\frac{\rmd^2\alpha}{\pi}
    \langle\alpha|A|\alpha\rangle \hat o_\calO(\alpha)
\end{equation}
for all Hermitian operators $A$ with finite Fock support.
In particular, it holds by linearity for any $A=\ketbra{n}{m}$, and this is true for both $\hat o_\calO$ and $\hat o_\calO'$.
Thus defining $f\equiv \hat o_\calO-\hat o_\calO'$, we have the conditions
\begin{equation}
    \intdalphapi e^{-|\alpha|^2}
    f(\alpha) \alpha^n\bar\alpha^m
    =0,
\end{equation}
for all $n,m\ge 0$.
Our goal is now to show that these conditions force $f=0$ almost everywhere.
The difference with respect to the previous paragraph, where we showed that a nonzero such $f$ does in fact exist, is that we are now assuming that $f=\hat o_\calO-\hat o_\calO'$ with both $\hat o_\calO$ and $\hat o_\calO'$ integrable with respect to some Gaussian states $\sigma$ and $\sigma'$.
This additional constraint is sufficient to enforce $f=0$.

To see it, define the exponential generating function
\begin{equation}
    F(u,v) = \int\rmd^2\alpha
    \, e^{-|\alpha|^2}
    f(\alpha)
    e^{u\alpha+v\bar\alpha}.
\end{equation}
By definition, this function has zero moments at $u=v=0$: $\frac{\partial^{n+m} F}{\partial u^n\partial v^m}(0,0)=0$.
If we can show that $F$ is also an \textit{entire} complex function, then this automatically implies $F=0$, which in turn implies $f=0$ by Fourier transform.

To show that $F$ is entire we use the fact that if the integrand is bounded and integrable on all compact subsets of $\mathbb C$, then its integral defines a holomorphic function and one can differentiate under the integral sign.
Indeed, the integrand in the definition of $F$ can be upper bounded in any compact $K\subset\mathbb{C}$ as
\begin{equation}
    e^{-|\alpha|^2} f(\alpha) e^{u\alpha+v\bar\alpha}
    \le
    \lvert f(\alpha)\rvert
    e^{-|\alpha|^2 + S_K |\alpha|}
    \in L^1(\mathbb{C}).
\end{equation}
for some $S_K<\infty$.
Thus we have absolute compact domination, $F(u,v)\le \int \rmd^2\alpha g_K(\alpha)$ for all compact $K\subset\mathbb{C}$, which ensures $F$ is holomorphic, and thus entire in $\mathbb{C}^2$.
But an entire function with vanishing moment must be zero, hence $F=0$.

Now observe that taking $u=(ik_x+k_y)/2$ and $v=(ik_x-k_y)/2$, we have
\begin{equation}
    F\left(
    \frac{ik_x+k_y}{2},
    \frac{ik_x-k_y}{2}
    \right)
    =
    \FT[e^{-|\cdot|^2}f(\cdot)](k_x,k_y),
    % \int_{\mathbb{C}}
    % \rmd^2\alpha e^{i(k_x x+k_y y)}
    % e^{-|\alpha|^2}f(\alpha),
\end{equation}
hence $F=0$ implies that the Fourier transform of $e^{-|\alpha|^2}f(\alpha)$ is zero, which in turn implies that $e^{-|\alpha|^2}f(\alpha)=0$, and thus $f(\alpha)=0$.

\subsection{Thermalized estimator}%
\label{sec:estTheory_infDim_thermObs}

We described in~\cref{sec:regularization_techniques} a general regularization strategy for observables that do not admit finite-variance unbiased estimators, based on restricting the reconstruction problem to a trusted finite-dimensional subspace $\Pi$.
This gives an exactly unbiased estimator for the projected observable $\Pi\calO\Pi$, at the price of introducing bias when the input state has support outside of $\Pi$.
This is a special instance of the general idea of trading bias for variance. Namely, we are weakening the requirement of having strict unbiasedness, gaining a much reduced variance in return.
Whether this results in an overall lower mean squared error depends on the specifics of the situation at hand; for example, on the likelihood of the input states actually being supported in $\Pi$.

We now discuss a different regularization strategy, based on smoothing the target observable rather than truncating the input space. The idea is to replace a singular target observable by a nearby observable whose heterodyne estimator is an ordinary function. This again trades bias for variance, but in a way that is more directly tied to the singularity structure of the Glauber-Sudarshan \(P\)-function.

A second way to trade bias for variance is to keep the measurement fixed but replace the target observable \(\calO\) by a smoothed observable \(\calO_\epsilon\in\calR_\sigma\).
The estimator is then exactly unbiased for \(\calO_\epsilon\), while its bias for the original target is
$\operatorname{bias}(\epsilon|\rho)=\trace[\rho(\calO_\epsilon-\calO)]$,
and the parameter \(\epsilon\) controls the approximation error.

Consider now Fock projectors as target observables and input states. Recall that the formal heterodyne estimator associated with a target observable \(\calO\) is
$
    \hat o_\calO(\alpha)=\pi P_\calO(\alpha)
$,
where \(P_\calO\) is the \(P\)-function of \(\calO\). For \(\calO=\PP_k\), this is a singular distribution; the corresponding estimator can therefore be interpreted only as a singular limit, and its second moment is not well defined.
A natural smoothing of \(\PP_k\) is given by the normalized \(k\)-photon-added thermal state
\begin{equation}
    \calO_{k,\bar n}
    \equiv
    \frac{1}{k!(1+\bar n)^k}
    a^{\dagger k}\tau_{\bar n}a^k,
\end{equation}
where $\tau_{\bar n}$ is the thermal state with average photon number $\bar n>0$.
In the limit \(\bar n\to0\), \(\tau_{\bar n}\to\PP_0\), and therefore \(\calO_{k,\bar n}\to\PP_k\).

The \(P\)-function of \(\calO_{k,\bar n}\) is the \(P\)-function of a \(k\)-photon-added thermal state~\cite{agarwal1992}:
\begin{equation}
    \label{eq:P_PATS}
    P_{k,\bar n}(\alpha)
    =
    \frac{(-1)^k}{\pi \bar n^{k+1}}
    e^{-\frac{|\alpha|^2}{\bar n}}
    L_k\left(
    \frac{1+\bar n}{\bar n}|\alpha|^2
    \right).
\end{equation}
For every fixed \(\bar n>0\), this is an ordinary function, and thus gives the regularized heterodyne estimator
$\hat o_{k,\bar n}(\alpha)=\pi P_{k,\bar n}(\alpha)$.
By construction, we have
\begin{equation}
    \int_{\mathbb C}\rmd^2\alpha\,
    \hat o_{k,\bar n}(\alpha)\mu_{\rm het}(\alpha)
    =
    \calO_{k,\bar n}.
\end{equation}
Thus \(\hat o_{k,\bar n}\) is an unbiased estimator of the smoothed observable \(\calO_{k,\bar n}\), not of the original projector \(\PP_k\).
For a Fock input state \(\rho=\PP_m\), this estimator has expectation value
\begin{equation}
    \label{eq:PATS_mean_general}
    \EE[\hat o_{k,\bar n} | \mathbb{P}_m]
    =
    \trace[\PP_m\calO_{k,\bar n}]
    =
    \binom{m}{k}
    \frac{\bar n^{m-k}}{(1+\bar n)^{m+1}},
\end{equation}
with the convention that \(\binom{m}{k}=0\) for \(m<k\). Hence,
$\lim_{\bar n\to0}\EE[\hat o_{k,\bar n}|\mathbb{P}_m]=\delta_{m,k}$, i.e. the bias vanishes at $\bar n\to0$ for all $m\ge 0$.

\paragraph{MSE and bias-variance trade-off}
Consider now the second moment of the estimator. The \(Q\)-function of $\mathbb{P}_m$ is
$Q_m(\alpha)=
    \frac{1}{\pi}
    e^{-|\alpha|^2}
    \frac{|\alpha|^{2m}}{m!}
$, while the second moment reads
\begin{equation}%
\label{eq:PATS_second_moment_general}
    S_{k,m}(\bar n)
    \equiv
    \EE[\hat o_{k,\bar n}^2 | \mathbb{P}_m]
    =
    \pi^2
    \int_{\mathbb C}\rmd^2\alpha\,
    P_{k,\bar n}(\alpha)^2 Q_m(\alpha).
\end{equation}
For small $\bar n$, the singular part of \(P_{k,\bar n}\) is localized in the region \(|\alpha|^2\sim\bar n\).
Setting \(|\alpha|^2=\bar n t\) in
\cref{eq:PATS_second_moment_general} gives the asymptotic behavior
\begin{equation}%
\begin{gathered}%
\label{eq:PATS_Ckm}
    S_{k,m}(\bar n)
    =
    C_{k,m}\,
    \bar n^{m-(2k+1)}
    +
    o\!\left(
    \bar n^{m-(2k+1)}
    \right), \\
    C_{k,m}
    =
    \frac{1}{m!}
    \int_0^\infty \rmd t\,
    t^m e^{-2t}L_k(t)^2
    >
    0.
\end{gathered}
\end{equation}
Denote the variance of $\hat o_{k,\bar n}$ when the input state is $\rho=\PP_m$ with $V_{k,m}(\bar n)$; we have
$V_{k,m}(\bar n)=S_{k,m}(\bar n)-\EE[\hat o_{k,\bar n}|\PP_m]^2$.
Since the mean term does not cancel the leading singular behavior of \(S_{k,m}(\bar n)\), one obtains
\begin{equation}
    \label{eq:PATS_variance_trichotomy_fixed_m}
    \lim_{\bar n\to0}
    V_{k,m}(\bar n)
    =
    \begin{cases}
        \infty, & m<2k+1,\\
        C_{k,2k+1}, & m=2k+1,\\
        0, & m>2k+1.
    \end{cases}
\end{equation}
The corresponding mean squared error, when using the estimator averaged over $S$ observations, $\bar o_{k,\bar n,S}\equiv\frac{1}{S}\sum_{i=1}^N \hat o_{k,\bar n}(b_i)$, is
\begin{equation}
    \MSE(\bar n|k,m,N)=
    B_{k,m}(\bar n)^2 +
    \frac{V_{k,m}(\bar n)}{S}.
\end{equation}
This structure is the origin of the bias variance trade-off: increasing $S$ will lower the contribution of the variance, but at the same time the squared bias might blow up.
At the same time, this behaviour will generally depend on the input state $m$.

\section{Homodyne detection}%
\label{sec:homodyne}

Consider the homodyne detection setup where each experimental run involves (1) choosing a uniformly random angle $\theta\in[0,\pi)$, and (2) measuring the associated quadrature $\hat x_\theta$, with $x\in \mathbb{R}$.
In such a scenario, if the goal is to retrieve a target observable $\calO$ from measurement data, an unbiased estimator $\hat o_\calO$ will be a function of the measurement outcomes $(x,\theta)$ that on average equals $\langle\calO,\rho\rangle$ if the measurement data comes from measurements of the input state $\rho$. The outcomes $(x,\theta)$ have the corresponding POVM density
\begin{equation}
    \mu_{\rm hom}(x,\theta)
    =
    \frac1\pi
    \PP(x_\theta).
    \label{eq:homodyne_povm_density_recall}
\end{equation}
Homodyne detection differs from heterodyne detection in an important structural way. In the heterodyne case, admissible estimators are unique when they exist.
In the homodyne case, on the other hand, the synthesis operator is not injective, hence different estimators $\hat o(x,\theta)$ can correspond to the same observable.
Therefore the main questions are not only whether an unbiased estimator exists, but also which choice of unbiased estimator should be chosen.

\subsection{Frame properties}%
\label{sec:homodyne_props_bounds}

We showed in~\cref{sec:covariantWithS} that the random covariant measurements, in the $\sigma$-regularized construction with faithful Gaussian $\sigma$, have vanishing lower frame bound, $A_{\mathrm{hom},\sigma}=0$, while the upper frame bound is always $1$ in the regularized framework, $B_{\mathrm{hom},\sigma}=1$.
The same remains true in the special case of homodyne detection, as we will show in this section.
Furthermore, as again was shown in~\cref{sec:covariantWithS_nullestimators} for random covariant measurements, the synthesis operator for homodyne detection is not injective, thus there are nonzero null estimators, and unbiased estimators are not unique, as we will discuss in~\cref{sec:homodyne_props_nullEst}.

Analysis and synthesis operators, obtained by specializing~\cref{eq:def_randomcov_analysisop,eq:def_randomcov_synthesisop}, are given by
\begin{equation}
\begin{gathered}
    (T_{\mathrm{hom},\sigma}X)(x,\theta)
    \equiv
    \frac{\trace[\sigma\{X,\PP(x_\theta)\}]}{2\sqrt{p_\sigma(x,\theta)}}, \\
    T_{\mathrm{hom},\sigma}^* h
    =
    \int_0^\pi\frac{\rmd\theta}{\pi}
    \int_{\mathbb{R}}
    \rmd x
    \frac{h(x,\theta)
    \PP(x_\theta)}{
    \sqrt{p_\sigma(x,\theta)}
    },
\end{gathered}
\end{equation}
where $p_\sigma(x,\theta)\equiv \trace[\mu_{\rm hom}(x,\theta)\sigma]=\frac1\pi \langle x,\theta|\sigma|x,\theta\rangle$.
The analysis operator is a map of the form $T_{\mathrm{hom},\sigma}:\scrLcompletion\to L^2(\RR\times [0,\pi])$, the synthesis operator $T_{\mathrm{hom},\sigma}^*$ is its adjoint, and we denote its range with $\calR_\sigma\equiv\Range(T_{\mathrm{hom},\sigma}^*)$.
Since \(T_{\mathrm{hom},\sigma}\) is injective, \(\Range(T_{\mathrm{hom},\sigma}^*)\) is dense in \(\scrLcompletion\).
Moreover, because the lower frame bound vanishes, this range is not closed and hence is a proper dense subspace: $\calR_\sigma\subsetneq\scrLcompletion$ and $\overline{\calR}_\sigma=\scrLcompletion$.

\subsubsection{Vanishing lower frame bound}

We proved in~\cref{sec:covariantWithS_framebounds}, and in particular in~\cref{prop:A0_covariantWithS}, that covariant measurements with random symplectic unitaries also have vanishing lower frame bound.
We will prove here that this remains the case in the limit in which $\nu$ is infinitely squeezed and $\lambda$ is a uniform rotation, that is, in the case of homodyne detection:

\begin{proposition}%
\label{prop:A0_homodyne_gaussian}
    Let \(\sigma\) be a faithful single-mode Gaussian state. Then the
    uniformly randomized homodyne POVM has vanishing lower frame bound:
    \(A_{\mathrm{hom},\sigma}=0\).
\end{proposition}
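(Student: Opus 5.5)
The argument reuses the strategy of \cref{lemma:A0_allgaussians} and \cref{prop:A0_covariantWithS}, now applied directly to the homodyne marginals. The infinitely squeezed seed makes compactness of $\on{supp}\lambda$ irrelevant, since the density is now one-dimensional in $x$ for each $\theta$.

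\textbf{Step 1: reduction to a thermal reference state.} Joint displacements and joint rotations/squeezings act on the homodyne POVM by relabelling the outcomes.
\begin{itemize}
\item A displacement shifts $x\mapsto x+\Re(e^{-i\theta}\delta)$ (up to convention), which leaves Lebesgue measure invariant.
\item A phase rotation shifts $\theta$.
\end{itemize}
As in \cref{lemma:A0_covariantWithS_reductions}, displacements and rotations therefore preserve $A_{\mathrm{hom},\sigma}$. Squeezing is not a symmetry of the uniformly randomized homodyne POVM, so I would not try to diagonalize $\sigma$ by a joint unitary. Instead I would use the domination argument of \cref{sec:sigma_rescaledEffects}. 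A faithful centered Gaussian $\sigma$ is a squeezed thermal state, and $m\tilde\tau_s\le\sigma$ fails in general. However, the lower bound only requires finding normalized $X_n$ with vanishing quadratic form, so I would work with $\sigma$ itself and take test operators adapted to it.

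\textbf{Step 2: test operators and the pointwise bound.} After displacement, write $\sigma=U\tilde\tau_sU^\dagger$ with $U$ a Gaussian unitary. Take
\[
X_n=U\PP_nU^\dagger/\sqrt{(1-s)s^n},
\]
which commute with $\sigma$ and satisfy $\|X_n\|_\sigma=1$. Then
\[
\langle X_n,\frameOp_\sigma X_n\rangle_\sigma=(1-s)s^n\int\frac{\rmd\theta}{\pi}\rmd x\,\frac{m_n(x,\theta)^2}{p_\sigma(x,\theta)},
\]
where $m_n(x,\theta)=\frac1\pi\langle x_\theta|U\PP_nU^\dagger|x_\theta\rangle$. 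The bound \cref{eq:lemma5_basicUpperBound} carries over verbatim with $\sigma_t\equiv U\tilde\tau_tU^\dagger$:
\[
m_n\le \frac{p_{\sigma_t}}{(1-t)t^n}.
\]
This yields
\[
\langle X_n,\frameOp_\sigma X_n\rangle_\sigma\le \frac{1-s}{(1-t)^2}\left(\frac{s}{t^2}\right)^n I_t,
\qquad
I_t=\int\frac{\rmd\theta}{\pi}\rmd x\,\frac{p_{\sigma_t}^2}{p_\sigma}.
\]
It remains to find $t\in(\sqrt s,1)$ with $I_t<\infty$.

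\textbf{Step 3 (main obstacle): finiteness of $I_t$.} For each $\theta$, $p_{\sigma_t}(\cdot,\theta)$ is a one-dimensional Gaussian with variance $v_t\,w(\theta)$. Here $v_t=\frac{1+t}{2(1-t)}$, and $w(\theta)=e_\theta^TSS^Te_\theta$, where $S$ is the symplectic matrix of $U$. Since $\sigma_t$ and $\sigma$ share the same $U$, their quadrature variances are $v_tw(\theta)$ and $v_sw(\theta)$ respectively. The $x$-integral is finite iff
\[
\frac{2}{v_tw}-\frac{1}{v_sw}>0,
\quad\text{that is,}\quad
v_t<2v_s,
\]
a condition independent of $\theta$. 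The prefactor ratio scales like $w(\theta)^{-1/2}$ times a Gaussian integral of order $w(\theta)^{1/2}$, which gives a $\theta$-independent constant. Because $w$ is bounded above and below on $[0,\pi]$, the $\theta$-integral is trivially finite. As in \cref{lemma:A0_allgaussians}, $v_{\sqrt s}<2v_s$, so by continuity some $t\in(\sqrt s,1)$ satisfies $v_t<2v_s$. Then $(s/t^2)^n\to0$, which proves $A_{\mathrm{hom},\sigma}=0$.

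The delicate point is the distributional density. I would justify the pointwise expressions via the measure-theoretic formulation of \cref{sec:measure_theoretic_formulation}: for each $\theta$, all relevant marginals are absolutely continuous with smooth Gaussian-times-polynomial densities, so the formulas hold literally.
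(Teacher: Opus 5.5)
Your proposal is correct and follows essentially the same route as the paper. You use the test operators $W\PP_nW^\dagger/\sqrt{(1-s)s^n}$, which commute with $\sigma=W\tilde\tau_sW^\dagger$, the domination bound by $\sigma_t=W\tilde\tau_tW^\dagger$, and finiteness of the one-dimensional Gaussian integral via $v_t<2v_s$; this is exactly the paper's condition $2a_t>a_s$, since $a_u=1/(2v_u)$. The only difference is cosmetic: you remove the displacement first, whereas the paper keeps it inside $W$ and lets the common mean $m_\theta$ drop out.
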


\begin{proof}
    Write \(\sigma=\sigma_s\), where
    \(\sigma_u\equiv W\tilde\tau_uW^\dagger\), \(0<u<1\), and \(W=D(d)U_G\) is a
    Gaussian unitary. Explicitly,
    \(\tilde\tau_u=(1-u)\sum_{n\ge0}u^n\PP_n\). Let
    \(Q_n\equiv W\PP_nW^\dagger\) and
    \(X_n\equiv Q_n/\sqrt{(1-s)s^n}\). Then \(X_n\) commutes with
    \(\sigma\), and \(\|X_n\|_\sigma=1\).
    
    Define $p_X(x,\theta)\equiv \trace[X \mu(x,\theta)]$.
    Since \([X_n,\sigma]=0\), we have
    \begin{equation}
    \begin{gathered}
        \langle X_n,\mu(x,\theta)\rangle_\sigma
        =\sqrt{(1-s)s^n}\,p_{Q_n}(x,\theta),
        \\
        \frac{
        \langle X_n,\frameOp_{\mathrm{hom},\sigma}(X_n)\rangle_\sigma
        }{
        (1-s)s^n
        }
        =
        \int_0^\pi\frac{d\theta}{\pi}\int_{\mathbb R}dx\,
        \frac{p_{Q_n}(x,\theta)^2}{p_\sigma(x,\theta)} .
    \end{gathered}
    \end{equation}
    
    For \(0<t<1\), the expansion
    \(\sigma_t=(1-t)\sum_{k\ge0}t^kQ_k\) gives
    \(p_{\sigma_t}(x,\theta)
    =(1-t)\sum_{k\ge0}t^kp_{Q_k}(x,\theta)\), and therefore
    \(p_{Q_n}(x,\theta)\le p_{\sigma_t}(x,\theta)/[(1-t)t^n]\). It follows
    that
    \begin{equation}
    \begin{gathered}
        \langle X_n,\frameOp_{\mathrm{hom},\sigma}(X_n)\rangle_\sigma
        \le
        \frac{1-s}{(1-t)^2}
        \left(\frac{s}{t^2}\right)^n
        J_{t,s}, \\
        J_{t,s}\equiv
        \int_0^\pi\frac{d\theta}{\pi}\int_{\mathbb R}dx\,
        \frac{p_{\sigma_t}(x,\theta)^2}{p_\sigma(x,\theta)} .
    \end{gathered}
    \end{equation}
    
    It remains to choose \(t>\sqrt{s}\) such that \(J_{t,s}<\infty\). Since
    all states \(\sigma_u=W\tilde\tau_uW^\dagger\) have the same Gaussian
    unitary \(W\), their homodyne marginals have the form
    \begin{equation}
    \begin{gathered}
        p_{\sigma_u}(x,\theta)
        =
        \sqrt{\frac{a_u}{\pi g(\theta)}}
        \exp\left[-\frac{a_u}{g(\theta)}(x-m_\theta)^2\right],
    \end{gathered}
    \end{equation}
    where $a_u\equiv\frac{1-u}{1+u}$, and with \(g(\theta)>0\) continuous and independent of \(u\).
    Because \(g\) is continuous and strictly positive on the compact interval \([0,\pi]\), both \(g\) and \(g^{-1}\) are bounded. Hence, once \(2a_t>a_s\), the resulting Gaussian integral is bounded uniformly in \(\theta\), and therefore \(J_{t,s}<\infty\).
    This condition holds for some \(t>\sqrt{s}\), because
    \(2a_{\sqrt{s}}>a_s\) and \(t\mapsto a_t\) is continuous. Indeed, with
    \(r=\sqrt{s}\),
    \begin{equation}
        2a_r>a_{r^2}
        \Longleftrightarrow
        2\frac{1-r}{1+r}>\frac{1-r^2}{1+r^2}
        \Longleftrightarrow
        (1-r)^2>0 .
    \end{equation}
    For such a choice of \(t\), \(J_{t,s}<\infty\) and \(s/t^2<1\), hence
    \begin{equation}
        \langle X_n,\frameOp_{\mathrm{hom},\sigma}(X_n)\rangle_\sigma
        \le
        C_{t,s}\left(\frac{s}{t^2}\right)^n
        \longrightarrow 0 .
    \end{equation}
    Since \(\|X_n\|_\sigma=1\), this proves \(A_{\mathrm{hom},\sigma}=0\).
\end{proof}

\subsubsection{Non-uniqueness and null estimators}%
\label{sec:homodyne_props_nullEst}

We proved the non-injectivity of the synthesis operator and existence of null estimators for random covariant measurements in~\cref{sec:covariantWithS_nullestimators}.
Here we specialize that discussion to the case of homodyne detection.

Null estimators are those $\hat o$ such that $T_{\mathrm{hom},\sigma}^*(\sqrt{p_\sigma}\hat o)=0$.
Similarly to what we did in~\cref{sec:covariant_frame_nullestimators}, we can characterize the possible null estimators by studying the range of the analysis operator, in this case identifying the functions that sit in its orthogonal complement.
The action of this analysis operator on $X=E_{mn}$ reads
\begin{equation}\label{eq:studying_range_analysishom}
    (T_{\mathrm{hom},\sigma}E_{mn})(x,\theta) =
    c_{mn}e^{i(m-n)\theta}\frac{\psi_m(x)\psi_n(x)}{\sqrt{p_s(x)}},
\end{equation}
where $c_{mn}\equiv\frac{1-s}{2}(s^m+s^n)$, we used $\langle x_\theta|m\rangle=e^{im\theta}\psi_m(x)$ with $\psi_n(x)=C_n H_n(x)e^{-x^2/2}$ and $H_n(x)$ the Hermite polynomials.
The range of $T_{\mathrm{hom},\sigma}$ is inside the space of functions $f\in L^2(\mathbb{R}\times[0,\pi])$.
It is however convenient to identify this space with
\begin{equation}
    \{ f\in L^2(\mathbb{R}\times[0,2\pi]):\,
    f(x,\theta+\pi)=f(-x,\theta)
    \}.
\end{equation}
The boundary constraint ensures that we are working on an equivalent space, but using $[0,2\pi]$ as angular range is more convenient because it makes the ordinary exponentials $e^{ik\theta}$ orthogonal for different $k\in\mathbb{Z}$.
We can now take the Fourier expansion
\begin{equation}
    f(x,\theta ) =\sum_{k\in\mathbb{Z} }
    f_k(x) e^{ik\theta},
\end{equation}
and the boundary constraint becomes $f_k(-x)=(-1)^k f_k(x)$ for all $k$.

Null estimators can be characterized as the nonzero functions in the orthogonal complement of $T_{\mathrm{hom},\sigma}$, or equivalently, orthogonal to all $T_{\mathrm{hom},\sigma}(E_{mn})$.
We have
\begin{equation}\label{eq:studying_range_analysishom22}
    (T_{\mathrm{hom},\sigma}E_{mn})(x,\theta) =
    c_{mn}e^{i(m-n)\theta}\frac{\psi_m(x)\psi_n(x)}{\sqrt{p_\sigma(x,\theta)}},
\end{equation}
thus for any estimator $\hat o$, 
\begin{equation}\begin{gathered}
    \langle \sqrt{p_\sigma}\hat o,\ T_{\mathrm{hom},\sigma}E_{mn}\rangle
    =
    c_{mn}
    \int_0^{2\pi}\frac{d\theta}{2\pi}\int_{\mathbb R}dx\,
    A(x,\theta),
    \\
    A(x,\theta)\equiv \overline{\hat o(x,\theta)}\,e^{i(m-n)\theta}\psi_m(x)\psi_n(x).
\end{gathered}\end{equation}
Null estimators are the functions that make this inner product zero.
Consider then the estimators
\begin{equation}
    \hat o_{k,\ell}(x,\theta)
    = e^{ik\theta} x^{\ell},
\end{equation}
with $k+\ell\equiv0\bmod 2$ and $k\in\mathbb{Z}$.
These satisfy the correct parity constraints, $\hat o_{k,\ell}(x,\theta+\pi)=\hat o_{k,\ell}(-x,\theta)$, and give
\begin{equation}\begin{gathered}
    \langle \sqrt{p_\sigma}\hat o_{k,\ell},\ T_{\mathrm{hom},\sigma}E_{mn}\rangle
    =
    c_{mn}
    \delta_{m-n,k} A_{m,n,\ell}
    \\
    A_{m,n,\ell}\equiv
    \int_\mathbb{R} \rmd x\,x^{\ell} \psi_m(x)\psi_n(x)
    = \langle m| \hat x^{\ell}|n\rangle.
\end{gathered}\end{equation}
Hence $A_{m,n,\ell}=0$ if $|m-n|=|k|>\ell$.
We conclude that all $\hat o_{k,\ell}$ with $|k|>\ell$ and $k+\ell$ even, as well as their linear combinations, are null estimators, conditionally on them also belonging to the $L^2(\sqrt{p_\sigma}\rmd x\rmd\theta)$ space, i.e. whenever the tail decay of the prior $\sigma$ is fast enough to have $\int \rmd x \int_0^\pi\rmd\theta |\hat o_{k,\ell}(x,\theta)|^2 p_\sigma(x,\theta)<\infty$.
Considering the action on the Hermitian counterparts of the basis operators $E_{mn}$ the argument is unchanged but we get $\cos$ and $\sin$ instead of the complex exponential $e^{ik\theta}$.
Explicit examples of such null estimators are then 
\begin{equation}\label{eq:homodyne_nullestimatorsExamples}
    \cos(2\theta),
    \quad
    \sin(2\theta),
    \quad
    x\cos(3\theta),
    \quad
    x^2\sin(4\theta).
\end{equation}
This class of null estimators is compatible with what was previously found, using a different strategy, in~\cite{dariano2010renormalized}.

\subsection{Canonical estimators}%
\label{sec:homodyne_est}

An unbiased estimator for an observable $\calO$ for a homodyne detection experiment with uniformly random angles is a function $\hat o_\calO(x,\theta)$, defined on each detection event $(x,\theta)$, that gives the reconstruction formula,
\begin{equation}
    \calO
    =
    \int_0^\pi\frac{\rmd\theta}{\pi}
    \int_{\mathbb R}\rmd x\,
    \hat o_\calO(x,\theta)\PP(x_\theta),
    \label{eq:homodyne_obsRecon_weak}
\end{equation}
understood weakly in $\scrLcompletion$, or equivalently
\begin{equation}
    \trace[\calO X]
    =
    \int_0^\pi\frac{\rmd\theta}{\pi}
    \int_{\mathbb R}\rmd x\,
    \hat o_\calO(x,\theta)
    \langle x,\theta|X|x,\theta\rangle,
    \label{eq:homodyne_obsRecon_expval}
\end{equation}
for all $X\in\scrLcompletion$ for which there is some $D_\rho\in\scrLcompletion$ such that $\trace[X Y]=\langle D_\rho,Y\rangle_\sigma$ for all $Y\in\scrLcompletion$.

When $\hat o_\calO$ exists, its variance equals $\Var[\hat o_\calO|\rho]=\mathbb{E}[\hat o_\calO^2|\rho]-\langle\calO,\rho\rangle^2$, where
\begin{equation}\label{eq:homodyne_variance_as_obs}
\begin{gathered}
    \EE[\hat o_\calO^2|\rho] = \trace[V_\calO \rho],
    \\
    V_\calO = \int_0^{2\pi}\frac{\rmd\theta}{2\pi}
    \int_{\mathbb{R}} \rmd x \, \hat o_\calO(x,\theta)^2 
    \PP_{x_\theta}.
\end{gathered}
\end{equation}

\subsubsection{Canonical HS derivation}%
\label{sec:homodyne_canonicalEst}

Specializing the results of~\cref{sec:covariant_estimators_randomized} to the infinite-squeezing limit, and working in the non-rescaled Hilbert-Schmidt construction, the homodyne frame operator is
\begin{equation}\label{eq:homodyne_framesuperop}
\begin{aligned}
    \frameOpHom
    &\equiv
    \int_0^{2\pi} \frac{\rmd\theta}{2\pi}
    \int_{\mathbb{R}} \rmd x\,\PP(\PP_{x_\theta}),
    \\
    \frameOpHom(X)
    &=
    \int_0^{2\pi} \frac{\rmd\theta}{2\pi}
    \int_{\mathbb{R}} \rmd x\,
    \langle x,\theta|X|x,\theta\rangle \PP_{x_\theta}.
\end{aligned}
\end{equation}
To diagonalize this operator, we evaluate it on the Weyl modes
$e^{ik\hat x_\phi}$. Using~\cref{eq:expval_expikxtheta_on_xtheta}, with
$c=\cos(\phi-\theta)$ and $s=\sin(\phi-\theta)$, we have
\begin{equation}\label{eq:homodyne_frameop_expikxtheta_intermediate}
\begin{aligned}
    \frameOpHom(e^{ik\hat x_\phi})
    &=
    \int_0^{2\pi} \frac{\rmd\theta}{2\pi}
    \int_{\mathbb{R}} \rmd x\,
    e^{ikcx}\frac{\delta(s)}{|k|}\PP_{x_\theta}.
\end{aligned}
\end{equation}
For $\phi\in(0,2\pi)$, the roots of $\sin(\phi-\theta)$ in
$\theta\in[0,2\pi]$ are $\theta=\phi$ and one of
$\theta=\phi\pm\pi$. The two corresponding contributions are equal:
at $\theta=\phi$ one obtains $e^{ikx}\PP_{x_\phi}$, while at
$\theta=\phi\pm\pi$ one has $\hat x_{\phi\pm\pi}=-\hat x_\phi$, so the
change of variable $x\mapsto -x$ gives the same integral. Hence
\begin{equation}\label{eq:homodyne_frameop_expikxtheta}
\begin{aligned}
    \frameOpHom(e^{ik\hat x_\phi})
    &=
    \frac{1}{\pi |k|}
    \int_{\mathbb{R}} \rmd x\, e^{ikx}\PP_{x_\phi}
    =
    \frac{1}{\pi |k|}e^{ik\hat x_\phi}.
\end{aligned}
\end{equation}
Equivalently, since $e^{ik\hat x_\phi}=D(ik e^{i\phi}/\sqrt2)$, the
frame operator is diagonal in the displacement-operator basis:
\begin{equation}%
\label{eq:homodyne_frameop_diagonal}
\begin{aligned}
    \frameOpHom
    &=
    \intdalphapi
    \frac{\PP(D(\alpha))}{\sqrt2\,\pi|\alpha|},
    \\
    \frameOpHomInv
    &=
    \sqrt2\int\rmd^2\alpha\,|\alpha|\,\PP(D(\alpha)).
\end{aligned}
\end{equation}
Thus the formal canonical Hilbert-Schmidt estimator associated with a
target observable $\calO$ is
\begin{equation}\label{eq:homodyne_definition_hato}
    \hat o_\calO(x,\theta)
    \equiv
    \langle \calO,\frameOpHomInv(\PP_{x_\theta})\rangle_2 .
\end{equation}
Using~\cref{eq:homodyne_frameop_diagonal,eq:expval_Dalpha_on_xtheta},
we obtain
\begin{equation}\label{eq:homodyne_inv_projector_characteristic}
    \chi_{\frameOpHomInv(\PP_{x_\theta})}(\alpha)
    =
    \pi e^{\sqrt2 i|\alpha|\sin(\Delta)x}\delta(\cos\Delta),
\end{equation}
where $\alpha e^{-i\theta}=|\alpha|e^{i\Delta}$.
Therefore, by~\cref{eq:innerproducs_with_chis},
\begin{equation}\label{eq:homodyne_est0}
\begin{gathered}
    \hat o_\calO(x,\theta)
    =
    \intdalphapi
    \overline{\chi_\calO(\alpha)}
    \chi_{\frameOpHomInv(\PP_{x_\theta})}(\alpha)
    \\
    =
    \int_0^\infty r\rmd r
    \int_0^{2\pi}\rmd\Delta\,
    \overline{\chi_\calO(re^{i(\theta+\Delta)})}
    e^{\sqrt2 i r\sin(\Delta)x}
    \delta(\cos\Delta).
\end{gathered}
\end{equation}
Since $\delta(\cos\Delta)
=\delta(\Delta-\pi/2)+\delta(\Delta-3\pi/2)$ on
$\Delta\in[0,2\pi]$, and since
$\overline{\chi_\calO(\alpha)}=\chi_\calO(-\alpha)$ for Hermitian
$\calO$, this gives
\begin{equation}\label{eq:homodyne_direct_estimator_hato}
\begin{aligned}
    \hat o_\calO(x,\theta)
    &=
    \frac12\int_{\mathbb R}\rmd k\,|k|\,e^{ikx}
    \chi_\calO\!\left(-\frac{ik}{\sqrt2}e^{i\theta}\right)
    \\
    &=
    \frac12\int_{\mathbb R}\rmd k\,|k|\,
    \trace[\calO e^{ik(x-\hat x_\theta)}].
\end{aligned}
\end{equation}
Equivalently, in Fourier-transform notation,
\begin{equation}\label{eq:homodyne_direct_estimator_hato_FT}
\begin{gathered}
    \hat o_\calO(x,\theta)
    =
    \pi \FT^{-1}_{\mathbb R}[\tilde\chi_\theta](x),
    \,\,
    \tilde\chi_\theta(k)
    \equiv
    |k|\chi_\calO\!\left(-\frac{ik}{\sqrt2}e^{i\theta}\right).
\end{gathered}
\end{equation}
This is the standard homodyne pattern-function structure~\cite{dariano2010renormalized,dariano2003QuantumTomography,lvovsky2009ContinuousvariableOpticalQuantum}. In Fourier space, homodyne reconstruction applies the $|k|$ filter associated with the inverse Radon transform. The possible singularity of the estimator is therefore controlled by whether the filtered characteristic function of the target observable is integrable, and whether the resulting pattern function has finite second moment under the relevant outcome distribution.

\subsubsection{Existence of estimator}

It follows that $\hat o_\calO(x,\theta)$ exists as a proper function only if $k\mapsto |k|\trace[\calO e^{ik\hat x_\theta}]$ is an integrable $L^1(\mathbb{R})$ function.
Equivalently, in terms of characteristic functions, a sufficient condition for the existence of $\hat o_\calO$ as a well-defined function is
\begin{equation}
    \int_{\mathbb{R}} \rmd k\, |k|
    \, \lvert\chi_\calO(-\tfrac{ik}{\sqrt2} e^{i\theta})\rvert <\infty.
\end{equation}
% and the unbiasedness condition~\cref{eq:homodyne_unbiasedness} in the Fourier domain is
% \begin{equation}\label{eq:homodyne_unbiasedness_fourier}
%     \frac12
%     \int_0^{2\pi}\frac{\rmd\theta}{2\pi}
%     \int_{\mathbb{R}}\mathrm dk
%     |k|
%     \trace[\calO e^{ik\hat x_\theta}]
%     \trace[\rho e^{-ik\hat x_\theta}]
%     = \trace[\rho \calO].
% \end{equation}
% \Cref{eq:homodyne_unbiasedness_fourier} can also be seen as a rewriting of~\cref{eq:expansion_rho_with_ordered_displacements} in polar coordinates.
The estimator expression in~\cref{eq:homodyne_direct_estimator_hato_FT} is consistent with the known expressions for pattern functions~\cite{dariano2010renormalized,dariano2003QuantumTomography,lvovsky2009ContinuousvariableOpticalQuantum}.

\Cref{eq:homodyne_direct_estimator_hato} coincides with the usual \emph{pattern function} used in homodyne tomography. Indeed, introducing the distributional kernel
\begin{equation}\label{eq:homodyne_kernel_def}
    K(s)\;\equiv\;\frac12\int_{\mathbb{R}}\rmd k\,|k|\,e^{iks},
\end{equation}
the estimator can be written compactly as
\begin{equation}\label{eq:homodyne_pattern_form}
    \hat o_\calO(x,\theta)
    =\trace\!\left[\calO\,K\!\left(x-\hat x_\theta\right)\right].
\end{equation}
The Fourier integral in~\cref{eq:homodyne_kernel_def} does not define an ordinary function but a tempered distribution, which can be given the standard principal value representation
% \begin{equation}\label{eq:homodyne_kernel_pv}
$K(s)= -\,\calP\!\left(\frac{1}{s^2}\right)$.
% \end{equation}
The estimator is therefore sometimes represented concisely as $-\calP(1/(x-\hat x_\theta)^2)$.
This expression can also be found without passing through the frame operator construction but instead expanding directly $\calO$ in terms of characteristic functions and suitably rearranging terms.
The above argument shows that this direct approach leads the result corresponding to the canonical non-rescaled frame in the HS product.
In particular, there is no guarantee that the estimator thus found is the one with lowest variance, especially because, as we showed before for homodyne detection, admissible null estimators exist.
We will discuss this further in~\cref{sec:homodyne_minvarEsts}.

\subsubsection{Examples}
\label{sec:homodyne_examples_estimators}

\parTitle{Estimating $\calO=\ketbra{n}{m}$}
Using $e^{ik\hat x_\theta}=D(\tfrac{ik}{\sqrt2}e^{i\theta})$, the Fock-basis matrix elements of the displacement operator read
\begin{equation}
    \langle m|e^{ik\hat x_\theta}|n\rangle
    =
    e^{-k^2/4}
    e^{i(m-n)\theta}
    (\tfrac{ik}{\sqrt2})^{|m-n|}
    p_{mn}(k),
    \label{eq:homodyne_fock_trace_general}
\end{equation}
where \(p_{mn}\) is a polynomial in \(k\) of degree \(2\min(n,m)\).
Explicitly,
\begin{equation}
    p_{mn}(k)
    =
    \begin{cases}
    \displaystyle
    \sqrt{\frac{n!}{m!}}
    L_n^{(m-n)}\!\left(\frac{k^2}{2}\right),
    & m\ge n,
    \\[1.2em]
    \displaystyle
    \sqrt{\frac{m!}{n!}}
    L_m^{(n-m)}\!\left(\frac{k^2}{2}\right),
    & n\ge m.
    \end{cases}
    \label{eq:homodyne_fock_trace_polynomial}
\end{equation}
The homodyne estimator is then obtained by inverse Fourier transforming $|k|\trace[\calO e^{ik\hat x_\theta}]$.
If $\calO$ has finite Fock support, then it is a finite linear combination of operators $\ketbra{n}{m}$.
By~\cref{eq:homodyne_fock_trace_general}, each
corresponding matrix element is $e^{-k^2/4}$ times a polynomial in $k$ whose coefficients may depend on $\theta$ through the factors
$e^{i(m-n)\theta}$.
Hence $\trace[\calO e^{ik\hat x_\theta}]=e^{-k^2/4}P_\theta(k)$ where $P_\theta(k)$ is a polynomial in $k$ with generally complex, $\theta$-dependent coefficients.
The integrand defining the estimator is therefore $|k|e^{-k^2/4}P_\theta(k)$, which is absolutely integrable.
Thus every observable with finite Fock support admits a well-defined unbiased homodyne estimator.

In particular,
\begin{equation}
    \trace[\PP_n e^{ik\hat x_\theta}]
    = e^{-k^2/4} L_n(k^2/2).
\end{equation}
The associated estimator does not depend on $\theta$ and reads
\begin{equation}
\begin{aligned}
    \hat o_{\ketbra{n}{n}}(x) &=
    \int_{\mathbb R}
    \frac{\mathrm dk}{2}
    |k| e^{-ikx}e^{-k^2/4}
    L_n(k^2/2)
    \\
    &=
    \sum_{\ell=0}^n \frac{1}{2^\ell \ell!}\binom{n}{\ell}
    \partial_x^{2\ell} K(x),
\end{aligned}
\end{equation}
where we defined
\begin{equation}
    K(x) = \int_{\mathbb{R}}\!\frac{\mathrm dk}{2}
    |k|
    e^{-\frac{k^2}{4}-ikx}
    = 2(1-\sqrt\pi xe^{-x^2} \on{erfi}(x)).
\end{equation}
For the off-diagonal elements, the estimator inherits the phase dependence of~\cref{eq:homodyne_fock_trace_general}: $\hat o_{\ketbra{n}{m}}(x,\theta)=e^{i(m-n)\theta} \hat o_{\ketbra{n}{m}}(x,0)$.
For example,
\begin{equation}\label{eq:homodyne_estimators_focks}
\begin{gathered}
    \hat o_{\ketbra{0}{0}}(x) = K(x),
    \quad
    \hat o_{\ketbra{1}{1}}(x) = K(x)+\frac12 K''(x),
    \\
    \hat o_{X_{01}} =
    2\sqrt2 \cos\theta(2x-(2x^2-1)\sqrt\pi e^{-x^2}\on{erfi}(x)),
\end{gathered}
\end{equation}
where $X_{01}\equiv \ketbra{0}{1}+\ketbra{1}{0}$.
These estimators are shown in~\cref{fig:homodyne_fock_estimators}.

\begin{figure}[tb]
    \centering
    \includegraphics[width=\linewidth]{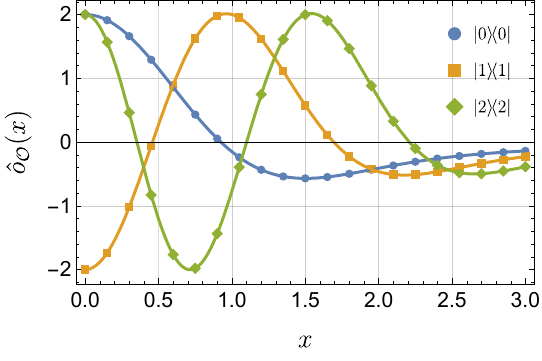}
    \caption{Estimators $\hat o_{\ketbra{n}{n}}(x)$ for $n=0,1,2$ as a function of $x\ge0$, as given by~\cref{eq:homodyne_estimators_focks}. All three estimators are even functions of $x$ and are independent of $\theta$.}
    \label{fig:homodyne_fock_estimators}
\end{figure}

\parTitle{Estimation of coherent states}
For $\calO=\PP_\gamma$, we have
\begin{equation}
\begin{gathered}
    \trace[\PP_\gamma e^{ik\hat x_\theta}]
    = e^{-k^2/4} e^{ik \mu_\gamma(\theta)},
    \\
    \mu_\gamma(\theta) \equiv
    \langle\gamma|\hat x_\theta|\gamma\rangle =
    \sqrt2\Re(\gamma e^{-i\theta}),
\end{gathered}
\end{equation}
and thus
% \begin{equation}
    $\hat o_{\mathbb{P}_\gamma}(x,\theta) =
    K(x-\mu_\gamma(\theta))$.
% \end{equation}

\parTitle{Estimation of Wigner function at a point}
Suppose that we want to estimate the Wigner function of the input state at a fixed phase-space point $\nu$. With the convention
$W_\rho(\nu,0) = \frac{2}{\pi} \trace[\rho D(\nu) \Pi_e D(-\nu)]$, where $\Pi_e=(-1)^{a^\dagger a}$ is the parity operator, estimating $W_\rho(\nu)$ amounts to estimating the expectation value of $\calO_\nu\equiv \frac{2}{\pi} D(\nu) \Pi_e D(-\nu)$.
This satisfies
\begin{equation}
\begin{gathered}
    \trace[\calO_\nu e^{ik\hat x_\theta}] =
    e^{ik \mu_\nu(\theta)}\trace[\Pi_e e^{ik\hat x_\theta}]
    = \frac12e^{ik \mu_\nu(\theta)},
\end{gathered}
\end{equation}
with $\mu_\nu(\theta) \equiv \sqrt2\Re (\nu e^{-i\theta})$.
Unlike the characteristic functions of finite-Fock-support observables, $\trace[\calO_\nu e^{ik\hat x_\theta}]$ does not decay as
$|k|\to\infty$. Consequently, the integral defining the corresponding unbiased estimator is not well defined as an ordinary function.
Formally,
\begin{equation}
    \hat o_{\calO_\nu}(x,\theta) =
    \frac14\int_{\mathbb{R}}\mathrm dk |k|e^{ik(\mu_\nu(\theta)-x)}
\end{equation}
which can be interpreted as a tempered distribution, but not as a regular function suitable for direct estimation.

\parTitle{Quadrature projectors}
Another observable that does not admit a regular unbiased homodyne estimator is a projector onto a quadrature eigenstate.
Indeed, if $\calO_{x_0,\theta_0}\equiv\PP(\ket{x_0,\theta_0})$, then
$\langle x_0,\theta_0|e^{ik\hat x_\theta}|x_0,\theta_0\rangle = \frac{\delta(k)}{\lvert\sin(\theta-\theta_0)\rvert}$,
hence the corresponding estimator is only defined as a distribution concentrated on $\theta=\theta_0$: $\hat o_{\calO_{x_0,\theta_0}}(x,\theta_0)=-\on{PV}(\frac{1}{(x-x_0)^2})$ and $\hat o_{\calO_{x_0,\theta_0}}(x,\theta)=0$ for $\theta\neq\theta_0$.
Because its nonzero contribution occurs only at a single homodyne phase and is itself singular in $x$, this expression does not define a regular estimator for continuously sampled phases and can only be interpreted distributionally.

\subsection{Minimum-norm unbiased estimators}%
\label{sec:homodyne_minvarEsts}

We now show how a reference state \(\sigma\) selects, among all unbiased estimators, the one with minimum second moment under \(\sigma\).
Since all unbiased estimators have the same mean under \(\sigma\), minimizing the second moment is equivalent to minimizing the variance, which, in turn, amounts to minimizing the weighted norm
\begin{equation}
    \mathbb E[\hat o^2 | \sigma]
    =
    \int_0^\pi \rmd\theta
    \int_{\mathbb R}\rmd x\,
    p_\sigma(x,\theta)\hat o(x,\theta)^2 .
\end{equation}
where \(p_\sigma(x,\theta)\) is the outcome probability density associated with \(\sigma\).

We can make the minimum variance condition more explicit in the case of a thermal reference state $\sigma=\tilde\tau_s$, $0<s<1$, using the notation outlined in~\cref{eq:thermalState_sParameter}.
In this case, the probability density is independent of the angle, and we write it as $p_s(x)=\langle x|\tilde\tau_s|x\rangle$.
For any estimator $\hat o$, define the weighted function $h(x,\theta)=\sqrt{p_s(x)} \hat o(x,\theta)$.
Its $L^2$ norm is then precisely the second moment of the estimator: $\mathbb{E}[\hat o^2|\tilde\tau_s]=\|h\|_{L^2}^2$.

\subsubsection{General recipe}

For the Fourier analysis, it is convenient to represent the outcome space on the doubled angular domain
\(\mathbb R\times[0,2\pi]\), subject to the identification
$
    f(x,\theta+\pi)
    =
    f(-x,\theta).
$
The weighted function \(h\) then admits the Fourier expansion
\begin{equation}
    h(x,\theta)
    =
    \sum_{k\in\mathbb Z}h_k(x)e^{ik\theta},
\end{equation}
and the identification condition becomes
$
    h_k(-x)
    =
    (-1)^k h_k(x).
$

For \(h\) to correspond to a minimum-variance estimator, it must be orthogonal to the kernel of
\(T_{\mathrm{hom},\tilde\tau_s}^*\).
Recall from~\cref{eq:studying_range_analysishom22} that, for
\(E_{mn}=|m\rangle\langle n|\),
\begin{equation}
    (T_{\mathrm{hom},\tilde\tau_s}E_{mn})(x,\theta)
    =
    c_{mn}e^{i(m-n)\theta}
    \frac{\psi_m(x)\psi_n(x)}{\sqrt{p_s(x)}}.
    \label{eq:homodyne_analysis_basis_recall}
\end{equation}
This gives an independent orthogonality condition in each Fourier sector:
\begin{equation}
    h\in(\ker T_{\mathrm{hom},\tilde\tau_s}^*)^\perp
    \quad\Longleftrightarrow\quad
    h_k\in\calR_k
    \quad\text{for every }k\in\mathbb Z,
\end{equation}
where \(\calR_k\) is the closure in \(L^2(\mathbb R,\rmd x)\) of the span
\begin{equation}
    \left\{
        \frac{\psi_{n+|k|}(x)\psi_n(x)}{\sqrt{p_s(x)}}:
        n\geq 0
    \right\}.
    \label{eq:homodyne_radial_range_sector}
\end{equation}

Let \(P_k\) denote the orthogonal projection onto \(\calR_k\).
If \(\hat o_0\) is any unbiased homodyne estimator for \(\calO\), and
\(h_0=\sqrt{p_s}\hat o_0\), then the minimum-norm, thus minimum-variance, solution is
\begin{equation}
\begin{gathered}
    h_{\min}(x,\theta)
    =
    \sum_{k\in\mathbb Z}
    e^{ik\theta}
    P_k h_{0,k}(x),
    \\
    \hat o_{\min}(x,\theta)
    =
    \frac{1}{\sqrt{p_s(x)}}
    \sum_{k\in\mathbb Z}
    e^{ik\theta}
    P_k h_{0,k}(x).
    \label{eq:homodyne_min_h_sector_projection}
\end{gathered}
\end{equation}

\subsubsection{Removing polynomial null-estimator components}

We now apply this projection criterion to the canonical estimator obtained in~\cref{sec:homodyne_canonicalEst}.
The idea is to remove from the canonical estimator $\hat o_0$ its components lying in the null space of the synthesis operator.

For each $k\in\mathbb Z$, define
\begin{equation}
    L_k
    \equiv
    \{\ell\in\mathbb N_0:\ \ell<|k|,\ \ell\equiv k \!\!\!\pmod 2\}.
    \label{eq:def_Lk_homodyne_null_degrees}
\end{equation}
A family of null estimators is then given by the functions $e^{ik\theta}x^\ell$ with $\ell\in L_k$, as per~\cref{eq:homodyne_nullestimatorsExamples}.
Projecting $h_0$ away from these null directions gives, in each angular sector,
\begin{equation}
    h_{\min,k}(x)
    =
    h_{0,k}(x)
    -
    \sqrt{p_s(x)}
    \sum_{\ell\in L_k}a_{k,\ell}x^\ell,
    \label{eq:homodyne_min_h_polynomial_projection}
\end{equation}
where the coefficients \(a_{k,\ell}\) are determined by imposing orthogonality to the null directions.
More explicitly, they are fixed by the orthogonality conditions
\begin{equation}
    \int_{\mathbb R}\rmd x\,
    \sqrt{p_s(x)}\,x^\ell h_{\min,k}(x)
    =
    0,
    \qquad
    \ell\in L_k.
    \label{eq:homodyne_min_orthogonality_moments}
\end{equation}
Equivalently, since $h_{0,k}=\sqrt{p_s}\hat o_{0,k}$, they solve the finite linear system
\begin{equation}
\begin{gathered}
    \sum_{\ell'\in L_k}
    G^{(k)}_{\ell,\ell'}a_{k,\ell'}
    =
    b^{(k)}_\ell,
    \qquad
    \ell\in L_k,
    \\
    G^{(k)}_{\ell,\ell'}
    \equiv
    \int_{\mathbb R}\rmd x\,
    p_s(x)x^{\ell+\ell'},
    \qquad
    b^{(k)}_\ell
    \equiv
    \int_{\mathbb R}\rmd x\,
    p_s(x)x^\ell\hat o_{0,k}(x).
\end{gathered}
\end{equation}
The full minimum-variance estimator is then recovered as
\begin{equation}
\begin{gathered}
    \hat o_{\min}(x,\theta)
    =
    \sum_{k\in\mathbb Z}
    e^{ik\theta}\hat o_{\min,k}(x),
    \\
    \hat o_{\min,k}(x)
    =
    \hat o_{0,k}(x)
    -
    \sum_{\ell\in L_k}a_{k,\ell}x^\ell
\end{gathered}
\end{equation}

% The conditions in \cref{eq:homodyne_min_orthogonality_moments} can also be written directly as
% \begin{equation}
%     \int_0^{2\pi}\frac{\rmd\theta}{2\pi}
%     \int_{\mathbb R}\rmd x\,
%     p_s(x)\,
%     \hat o_{\min}(x,\theta)
%     e^{-ik\theta}x^\ell
%     =
%     0,
%     \qquad
%     \ell<|k|,
%     \quad
%     \ell\equiv k\!\!\!\pmod 2.
%     \label{eq:homodyne_min_null_orthogonality_direct}
% \end{equation}
% This is the explicit form of the requirement
% $h_{\min}\in(\ker T_{\mathrm{hom},\tilde\tau_s}^*)^\perp$.
% In words, the minimum-variance estimator is obtained from any unbiased homodyne estimator by subtracting precisely the null-estimator components that are visible in the weighted second moment under $\tilde\tau_s$.

\subsubsection{Worked example}

We now apply this construction to derive the minimum-variance version of the canonical estimator for the observable
\begin{equation}
    \calO
    =
    E_{02}+E_{20}.
    \label{eq:homodyne_example_O_E02}
\end{equation}
Using the standard Fock-basis characteristic functions, we have
\begin{equation}
\begin{gathered}
    \chi_{\calO}(\beta)
    =
    \frac{\beta^2+\bar\beta^2}{\sqrt2}
    e^{-|\beta|^2/2},
    \\
    \chi_{\calO}\!\left(-\frac{ik}{\sqrt2}e^{i\theta}\right)
    =
    -\frac{k^2}{\sqrt2}
    e^{-k^2/4}
    \cos(2\theta).
\end{gathered}
\end{equation}
Therefore the canonical estimator is
\begin{equation}\label{eq:homodyne_example_fb_form}
\begin{gathered}
    \hat o_{\calO}(x,\theta)
    =
    F(x)\cos(2\theta),
    \\
    F(x)
    =
    -\frac{1}{2\sqrt2}
    \int_{\mathbb R}\rmd k\,
    |k|^3 e^{-k^2/4} e^{ikx}.
\end{gathered}
\end{equation}
Equivalently,
\begin{equation}%
\label{eq:homodyne_example_Fx}
    F(x)
    =
    -4\sqrt2\,
    {}_1F_1\!\left(
        2;\frac12;-x^2
    \right).
\end{equation}
% \simone{(what is $F_1$?) E' deinifito implicitamente nell'eq. sopra, ma è una funzione nota?}
Let the reference state be $\sigma=\tilde\tau_s$, with $0<s<1$, and write
\begin{equation}
    p_s(x)
    =
    \sqrt{\frac{a_s}{\pi}}e^{-a_sx^2},
    \qquad
    a_s=\frac{1-s}{1+s}.
    \label{eq:homodyne_example_ps}
\end{equation}
Notice that the estimator in~\cref{eq:homodyne_example_fb_form} lives only in the angular sectors $k=\pm2$, as its angular component is only $\cos(2\theta)$.
Thus the only null component that can be subtracted is a constant multiple of $\cos(2\theta)$, and the minimum-variance estimator has the form
\begin{equation}
    \hat o_{\calO}^{\min}(x,\theta)
    =
    \left[
        F(x)-b_s
    \right]\cos(2\theta),
    \label{eq:homodyne_example_min_form}
\end{equation}
where $b_s$ is chosen so that the corrected estimator is orthogonal to the null estimator $\cos(2\theta)$ in the $L^2(p_s\rmd x\,\rmd\theta)$ inner product, hence
\begin{equation}%
\label{eq:homodyne_example_bs}
    b_s
    =
    \int_{\mathbb R}\rmd x\,p_s(x)F(x)
    =
    -\frac{4\sqrt2\,a_s^2}{(1+a_s)^2}.
\end{equation}
Therefore the actual minimum-variance estimator is
\begin{equation}
    \hat o_{\calO}^{\min}(x,\theta)
    =
    (F(x)-b_s)
    \cos(2\theta),
    \label{eq:homodyne_example_min_estimator_hypergeom}
\end{equation}
with $F(x)$ and $b_s$ given by~\cref{eq:homodyne_example_Fx,eq:homodyne_example_bs}.
We note in particular how the canonical solution obtained from the HS construction does not depend on the prior, whereas the minimum-variance solution does depend on it, here via the $s$ parameter.

\section{Background and notation}%
\label{sec:background}

% \addLI{Add brief outline of what's treated here. Mention that this section is only to introduce basic formalism needed for the rest of the paper and is otherwise skippable.}

\subsection{Fourier transform conventions}

Given a complex function, we will often define its \textit{symplectic Fourier transform} as
\begin{equation}\label{eq:def_symplectic_FT}
    \FT[f](\nu)
    = \intdalphapi e^{\nu\bar\alpha-\bar\nu\alpha} f(\alpha).
\end{equation}
This operator satisfies $\FT^{-1}=\FT$ and $\FT^2=\on{Id}$.
If dealing instead with real functions $f:\mathbb{R}\to\mathbb{R}$, we will use the Fourier transform with $\FT_{\mathbb{R}}$, and use the convention
\begin{equation}\label{eq:fourier_transform_1d}
\begin{aligned}
    \FT_{\mathbb{R}}[f](x) &= \int_{\mathbb{R}} \mathrm dk\,e^{-ikx} f(k),
    \\
    \FT_{\mathbb{R}}^{-1}[g](k) &=
    \int_{\mathbb{R}} \frac{\rmd x}{2\pi}
    e^{ikx} g(x),
\end{aligned}
\end{equation}
We use the notation $\FT_{\mathbb{R}}$ to not confuse this with the symplectic Fourier transform defined in~\cref{eq:def_symplectic_FT}.

% \subsection{Superoperator formalism}

% \addLI{define the basic notation used to deal with superoperators, ie notation like $\ketbra{D(\alpha)}{D(\alpha)}$, $\PP(D(\alpha))$, ect}

\subsection{Basic CV formalism}

% The definition and results reviewed here will mostly be used in~\cref{sec:homodyne}.

% \subsubsection{Creation/annihilation operators}

% \subsubsection{Coherent states}

\subsubsection{Quadrature operators}

We define the generalized quadrature operators as
\begin{equation}\label{eq:def_quadratures_theta}
\begin{aligned}
    \hat x_\theta &= e^{i\theta \hat N}\hat x e^{-i\theta\hat N}
    = \frac{
    \hat a^\dagger e^{i\theta} + \hat ae^{-i\theta}
    }{\sqrt{2}},
    % \\
    % \hat p_\theta &\equiv \hat x_{\theta+\pi/2} =
    % \frac{i(e^{i\theta} a^\dagger-e^{-i\theta}a)}{\sqrt2}.
\end{aligned}
\end{equation}
and $\hat p_\theta \equiv \hat x_{\theta+\pi/2}$.
In this notation, we recover the standard position and momentum operators as $\hat x=\hat x_0$, $\hat p=\hat p_0$.
The canonical commutation relation in this notation read $[\hat x_\theta,\hat p_\theta]=i$.
The eigenstates $\ket{x,\theta}$ satisfy $\hat x_\theta\ket{x,\theta}=x\ket{x,\theta}$, and are related to the eigenstates of $\hat x$ via $\ket{x_\theta}=e^{i\theta\hat N}\ket x$.
The quadrature eigenstates satisfy
\begin{equation}\label{eq:homodyne_overlap_nxtheta}
\begin{gathered}
    \braket{n}{x_\theta}=\frac{e^{in\theta}}{\pi^{1/4}\sqrt{2^n n!}}H_n(x)e^{-x^2/2},
\end{gathered}
\end{equation}
with $H_n(x) = (-1)^n e^{x^2}D^n e^{-x^2}$ the Hermite polynomials.
In this notation the relation between quadratures and creation/annihilation operators is $\hat a^\dagger=\frac{\hat x+i\hat p}{\sqrt2}$.

\subsubsection{Displacement operators}

The $s$-ordered displacement operators are defined for $\alpha\in\mathbb{C}$ as
\begin{equation}\label{eq:Ds_vs_D}
\begin{gathered}
    D(\alpha,s)\equiv \exp(\alpha a^\dagger-\bar\alpha a +\frac s2|\alpha|^2)
    = D(\alpha)e^{\frac s2|\alpha|^2}.
\end{gathered}
\end{equation}
It will sometimes be convenient to use real variables instead, in which case we will write $\alpha=\frac{x+ip}{\sqrt2}$, $\mathbf r=(x,p)^T$, and
\begin{equation}
    D(\mathbf r,s) =
    e^{\frac s4\|\mathbf r\|^2}
    e^{-\frac{i}{\sqrt2}\mathbf r^T\Omega \hatbfr}
    =e^{\frac s4(x^2+p^2)}
    e^{-i(x\hat p-\hat xp)}.
\end{equation}
We will sometimes use for brevity the shorthand notation $D_s(\alpha)\equiv D(\alpha,s)$.
The product of displacement operators is another displacement operator, modulo phases:
\begin{equation}\label{eq:relations_displacement_ops}
\begin{gathered}
    D(\alpha)D(\beta) = D(\alpha+\beta) e^{\frac12(\alpha\bar\beta-\bar\alpha\beta)}, \\
    D(\nu)D(\alpha)D(-\nu) =
    D(\alpha) e^{\nu\bar\alpha-\bar\nu\alpha}, \\
    D(\nu)aD(-\nu) = a-\nu.
\end{gathered}
\end{equation}
% Another useful identity, derived via the BCH identity, is
% \begin{equation}\label{eq:expansion_displacementop_via_BCH}
%     D(\alpha) = e^{-|\alpha|^2/2} e^{\alpha a^\dagger} e^{-\bar\alpha a}
%     = e^{|\alpha|^2/2} e^{-\bar\alpha a} e^{\alpha a^\dagger}.
% \end{equation}
Operators can be expanded via displacement operators as
% \begin{equation}\label{eq:expansion_rho_with_ordered_displacements}
%     \rho = \int \frac{\rmd^2\alpha}{\pi} \tr(\rho D(\alpha)) D^\dagger(\alpha).
% \end{equation}
% One way to derive~\cref{eq:expansion_rho_with_ordered_displacements} is to first prove it for $\rho=\ketbra{0}{0}$, in which case it amounts to
% \begin{equation}\label{eq:expansion_00_with_displacements}
%     \ketbra0 = \int \frac{\rmd^2\alpha}{\pi} e^{-|\alpha|^2/2} D(\alpha),
% \end{equation}
% which can be verified directly as
% \begin{equation}
%     e^{-|\beta|^2}
%     = |\langle \beta|0\rangle|^2
%     = \int \frac{\rmd^2\alpha}{\pi} e^{-|\alpha|^2} e^{\alpha\bar\beta-\bar\alpha\beta}.
% \end{equation}
% \Cref{eq:expansion_00_with_displacements} is then easily extended to $\rho=\ketbra{\beta}{\gamma}=D(\beta)\ketbra0 D(-\gamma)$, and from there to a generic $\rho$.
% Furthermore, proving the expression for $\ketbra{\alpha}{\beta}$ is equivalent to proving it for $\rho=\ketbra0$, in which case~\cref{eq:expansion_rho_with_ordered_displacements} becomes
% \begin{equation}\label{eq:vacuum_in_terms_of_Nexp}
%     \ketbra0 = \sum_{p=0}^\infty \frac{(-1)^p}{p!} a^{\dagger p}a^p
%     = N(e^{-a^\dagger a}),
% \end{equation}
% where $N$ denotes normal ordering.
% A way to prove~\cref{eq:vacuum_in_terms_of_Nexp} is analyzing its action on coherent states, as shown in~\cite{fan2003operator}.
% \Cref{eq:expansion_rho_with_ordered_displacements} can be further generalised to
\begin{equation}\label{eq:expansion_rho_with_ordered_displacements}
    \rho
    % = \intdalphapi 
    % e^{-s |\alpha|^2} \tr(\rho D_s(\alpha)) D_s^\dagger(\alpha)
    = \intdalphapi 
    \trace[\rho  D_{-s}^\dagger(\alpha)]
   D_s(\alpha),
\end{equation}
which in superoperatorial notation amounts to
\begin{equation}
    |\rho\rangle
    = \intdalphapi \langle D_{-s}(\alpha)|\rho \rangle |D_s(\alpha)\rangle\,,
\end{equation}
and thus, substituting $|\rho\rangle$ with the identity superoperator:
\begin{equation}\label{eq:identity_for_Dsops}
    \intdalphapi
    |D_s(\alpha)\rangle\langle D_{-s}(\alpha)|
    = \on{Id}\,,
\end{equation}
meaning that the displacement operators form a complete basis of the operator space.

\subsubsection{Relation between \texorpdfstring{$\hat x_\theta$}{xtheta} and \texorpdfstring{$\hat x_\phi$}{xphi}}

We can decompose $\hat x_\phi$ with $\phi\neq\theta$ in terms of $\hat x_\theta$ and $\hat p_\theta$ as
\begin{equation}\label{eq:xphi_via_xtheta}
\begin{gathered}
    \hat x_\phi = \cos(\phi-\theta) \hat x_\theta+\sin(\phi-\theta) \hat p_\theta,
\end{gathered}
\end{equation}
Applying~\cref{eq:xphi_via_xtheta} and exploiting the canonical commutation relations we also have the useful identity
\begin{equation}\label{eq:expikxphi_decomp}
\begin{gathered}
    e^{ik\hat x_\phi} =
    e^{ikc \hat x_\theta}
    e^{iks \hat p_\theta}
    e^{ik^2 s c /2},
\end{gathered}
\end{equation}
where $c \equiv \cos(\phi-\theta)$,
$s \equiv \sin(\phi-\theta)$.
The operators $e^{ik\hat x_\phi}$ are nothing but reparametrized displacement operators:
\begin{equation}\label{eq:expikx_vs_displacements}
\begin{gathered}
    D(|\alpha| e^{i(\theta+\pi/2)}) =
    \exp[\sqrt2 i |\alpha| \hat x_\theta],
\end{gathered}
\end{equation}
and thus also $D\left(\tfrac{ik}{\sqrt2}e^{i\theta}\right) = e^{ik\hat x_\theta}$.
\Cref{eq:expikxphi_decomp,eq:expikx_vs_displacements} allow to compute the overlap between generic displacement or $e^{ik\hat x_\theta}$ operators on quadrature eigenstates $\ket{x,\theta}$:
\begin{align}
    % \langle x,\theta|D(\alpha)|x,\theta\rangle 
    % &=
    % e^{-i|\alpha|^2 sc} e^{\sqrt2 i |\alpha| sx} \delta(\sqrt2 |\alpha| c),
    % \\ &=
    % e^{\sqrt2 i|\alpha| sx}
    % \frac{\delta(c)}{\sqrt2|\alpha|},
    % \\
    \langle x,\theta|D(|\alpha|e^{i\phi})|x,\theta\rangle 
    &= e^{\sqrt2 i|\alpha| \sin(\phi-\theta)x}
    \frac{\delta(\cos(\phi-\theta))}{\sqrt2|\alpha|}, \label{eq:expval_Dalpha_on_xtheta}
    \\
    \langle x,\theta|e^{ik \hat x_\phi}|x_,\theta\rangle
    &= e^{ik^2sc/2} e^{ikcx} \delta(ks)
    = e^{ikcx} \frac{\delta(s)}{|k|}.\label{eq:expval_expikxtheta_on_xtheta}
\end{align}
where
$c\equiv \cos(\phi-\theta)$, $s\equiv\sin(\phi-\theta)$,  $\alpha= |\alpha| e^{i\phi}$.
% In terms of $e^{ik\hat x_\theta}$,
% \begin{equation}\label{eq:expval_expikxtheta_on_xtheta}
% \begin{aligned}
%     \langle x,\theta|e^{ik \hat x_\phi}|x_,\theta\rangle
%     &= e^{ik^2sc/2} e^{ikcx} \delta(ks)
%     = e^{ikcx} \frac{\delta(s)}{|k|}.
%     % \\
%     % &=
%     % \frac1{|k|} \sum_{j\in\mathbb{Z}}
%     % \delta(\phi-\theta-j\pi)e^{(-1)^j ikx},
% \end{aligned}
% \end{equation}
% where again $c\equiv \cos(\phi-\theta)$, $s\equiv \sin(\phi-\theta)$.

\subsubsection{Characteristic functions}
\label{sec:charfuns}

The $s$-ordered characteristic function of an operator $X$ is
\begin{equation}\label{eq:def_characteristic_functions}
    \chi_\rho(\alpha,s) \equiv  \trace(\rho D(\alpha,s)),
    \qquad
    \chi_\rho(\alpha)\equiv\chi_\rho(\alpha,0).
\end{equation}
\begin{equation}\label{eq:chiX_vs_QX}
\begin{aligned}
    \chi_X(\alpha) &= e^{|\alpha|^2/2}
    \int \frac{\rmd^2\beta}{\pi}
    e^{\alpha\bar\beta-\bar\alpha\beta} \langle\beta|X|\beta\rangle.
\end{aligned}
\end{equation}
A generalization of~\cref{eq:chiX_vs_QX}, is
\begin{equation}\label{eq:generalized_fourierexp_displacement}
    \int\frac{\rmd^2\beta}{\pi}
    e^{\alpha\bar\beta-\bar\alpha\beta}
    D(\beta)X D(-\beta) = \overline{\chi_X(\alpha)} D(\alpha).
\end{equation}

Some notable characteristic functions are
\begin{equation}\label{eq:charfun_basic_examples}
\begin{gathered}
    \chi_{\PP_\gamma}(\alpha) =
    e^{-|\alpha|^2/2} e^{\alpha\bar\gamma-\bar\alpha\gamma},
    \qquad
    \chi_{D(\gamma)}(\alpha) =
    \pi \delta^2(\alpha+\gamma),\\
    \chi_{\tau_{\bar n}}(\alpha) =
    % \frac{1}{1+\bar n}
    e^{-(\bar n+\frac12)|\alpha|^2},
\end{gathered}
\end{equation}
Here $\PP_\gamma = \ketbra{\gamma}{\gamma}$ is a coherent state, and $\tau_{\bar n}=\frac{e^{-\beta\hat H}}{\calZ}$ with $\calZ = \trace[e^{-\beta\hat H}]$ a thermal state with average number $\bar n=\frac{e^{-\beta}}{1-e^{-\beta}}$.
It will be convenient to express thermal states in terms of the parameter $s\equiv \frac{\bar n}{1+\bar n}$ instead of the average photon number. In these cases we will use the notation $\tilde\tau_s\equiv \tau_{\bar n}$, so that
\begin{equation}\label{eq:thermalState_sParameter}
    \tilde\tau_s
    =
    \tau_{\frac{s}{1-s}}
    = (1-s)\sum_{n=0}^\infty s^n \PP_n.
\end{equation}

For quadrature and creation/annihilation operators, we have
\begin{equation}\label{eq:charfun_quadratures}
\begin{gathered}
    % \sqrt2\chi_{\hat x}(\alpha) =
    % \pi(\partial_{\alpha}-\partial_{\bar\alpha})\delta^2(\alpha),
    % \\
    % \sqrt2\chi_{\hat p}(\alpha) =
    % i\pi (\partial_{\alpha}+ \partial_{\bar\alpha})\delta^2(\alpha),
    % \\
    \chi_{\hat a}(\alpha) =
    -\pi \partial_{\bar \alpha}\delta^2(\alpha),
    \qquad
    \chi_{\hat a^\dagger}(\alpha) =
    \pi \partial_{\alpha}\delta^2(\alpha),
    \\
    \sqrt2\chi_{\hat x_\theta}(\alpha) =
    \pi (e^{i\theta} \partial_{\alpha}-e^{-i\theta}\partial_{\bar\alpha})\delta^2(\alpha).
\end{gathered}    
\end{equation}
More generally, the characteristic function of a displaced operator $X$ is closely related to the characteristic function of $X$ itself:
\begin{equation}\label{eq:charfun_displaced_operator}
    \chi_{D(\beta)XD(\beta)^\dagger}(\alpha) =
    e^{\alpha\bar\beta-\bar\alpha\beta}
    \chi_X(\alpha).
\end{equation}
We will discuss in~\cref{sec:gaussian_states} how~\cref{eq:charfun_displaced_operator} further generalizes to Gaussian operations and states.

\Cref{eq:expansion_rho_with_ordered_displacements} also implies that for any pair of Hilbert-Schmidt (i.e. bounded and compact) operators $X,Y\in \hsOps$,
\begin{equation}\label{eq:innerproducs_with_chis}
\begin{gathered}
    \langle X,Y\rangle_{\rm HS}=\trace(X^\dagger Y)
    = \int \frac{\rmd^2\alpha}{\pi} \overline{\chi_X(\alpha)}\chi_Y(\alpha),
    \\
    \|X\|_2^2=\trace(X^\dagger X)
    = \intdalphapi \lvert\chi_X(\alpha)\rvert^2.
\end{gathered}
\end{equation}
This means that the mapping $X\mapsto \chi_X$ is a unitary isomorphism $\hsOps\simeq L^2(\mathbb{C},\frac{\rmd^2\alpha}{\pi})$ between the Hilbert-Schmidt space and the space of square-integrable characteristic functions.

\subsection{Gaussian states and operations}
\label{sec:gaussian_states}

% \cite{ferraro2005GaussianStatesContinuous,serafini2017QuantumContinuousVariables}

A single-mode Gaussian unitary $U_{S,\mathbf d}$ acts on the quadrature operators according to
\begin{equation}\label{eq:gaussian_unitary_on_R}
    U_{S,\mathbf d}^\dagger \hatbfr U_{S,\mathbf d}
    = S\hatbfr + \mathbf d,
\end{equation}
for some real symplectic matrix $S\in\mathbf{Sp}(2,\mathbb{R})$ and displacement vector $\mathbf d\in\mathbb{R}^2$. Here, $\hatbfr\equiv(\hat x,\hat p)^T$.
This notation is shorthand for
$U_{S,\mathbf d}^\dagger \hat r_i U_{S,\mathbf d}
=\sum_j S_{ij}\hat r_j+d_i$
for all $i$.
A \textit{symplectic unitary} is the special case of a Gaussian unitary $U_{S,\mathbf d}$ with no displacement, i.e., $\mathbf d=0$.

The action of Gaussian unitaries on displacement operators, expressed in real coordinates for convenience, reads:
\begin{equation}\label{eq:charfun_gaussianed_operator}
\begin{aligned}
    U_{S,\mathbf d}^\dagger D(\mathbf r)U_{S,\mathbf d}
    &=
    e^{-i\mathbf r^T\Omega\mathbf d}
    D(S^{-1}\mathbf r),
    \\
    \chi_{U_{S,\mathbf d} X U_{S,\mathbf d}^\dagger}(\mathbf r)
    &=
    e^{-i\mathbf r^T\Omega\mathbf d}
    \chi_X(S^{-1}\mathbf r).
\end{aligned}
\end{equation}
The characteristic function of a generic Gaussian state $\rho_G$ with first moments $\mathbf t=\langle\hatbfr\rangle$ and covariance matrix $V$ can be obtained by applying a Gaussian unitary $U_{S,\mathbf d}$ to a thermal state $\tau_{\bar n}$ and using~\cref{eq:charfun_basic_examples,eq:charfun_gaussianed_operator}:
\begin{equation}\label{eq:charfun_gaussian_state}
    \chi_{\rho_G}(\mathbf r)
    =
    \exp\left(
        -i\mathbf r^T\Omega\mathbf t
        -\frac{1}{2}\mathbf r^T\Omega V\Omega^T\mathbf r
    \right).
\end{equation}

\subsection{Quasiprobability distributions}
\label{sec:quasiprobs_as_observables}

In this section we review basic facts about quasiprobability distributions. For more details on this topic we refer the reader to~\cite{cahill1969density,cahill1969ordered,dariano1997trace,dariano2003QuantumTomography,dariano20042QuantumTomographic}

\subsubsection{\texorpdfstring{$s$}{s}-ordered quasiprobability distributions}
\label{sec:quasiprobabilities}

The standard quasiprobability distributions, $P$, $Q$, and $W$, are all linear function of the density matrix, and their values can therefore be expressed as the expectation value of some observable.
More precisely, we remember the standard relations
\begin{equation}\label{eq:quasiprobs_vs_characteristic_functions}
\begin{gathered}
    \pi W_\rho(\nu,s) =
    \int \frac{\rmd^2\alpha}{\pi}
    e^{\nu\bar\alpha-\bar\nu\alpha}
    \chi_\rho(\alpha,s),
    \\
    \chi_\rho(\alpha,s) = \int \frac{\rmd^2\nu }{\pi} e^{\alpha\bar\nu-\bar\alpha\nu} \pi W_\rho(\nu,s),
\end{gathered}
\end{equation}
These relations reproduce $P, W, Q$, choosing $s=1,0,-1$, respectively.
% \addLI{Add that for $s=1$ we recover the usual recostruction formula for states/operators in terms of the $P$ function.}
We also have $\pi W_\rho(\nu,s)=\trace(\rho T(\nu,s))$, formally defining the operator
\begin{equation}\label{eq:Ts_vs_Ds}
    T(\nu,s) =
    \int \frac{\rmd^2\beta}{\pi}
    e^{\nu\bar\beta-\bar\nu\beta} D(\beta,s)
    = \FT[D(\bullet,s)](\nu),
\end{equation}
using the symplectic Fourier transform as defined in~\cref{eq:def_symplectic_FT}.
% \addLI{Mention basic facts given in Cahill about properties of this operator for various $s$}
This allows to write the relations between quasiprobabilities and characteristic functions as
\begin{equation}\label{eq:fourier_relations_chi_Ws}
\begin{gathered}
    % T(\bullet,s) = \FT[D(\bullet,s)],
    % \\
    \pi W_\rho(\nu,s) = \trace[\rho T(\nu,s)] = \FT[\chi_\rho(\bullet,s)](\nu).
\end{gathered}
\end{equation}
We will sometimes for brevity put the $s$ argument as a subscript to highlight the dependence on the $\alpha$ parameter, writing for example $T_s(\nu)\equiv T(\nu,s)$.
A formal solution of the integral in~\cref{eq:Ts_vs_Ds} gives~\cite{cahill1969ordered}:
\begin{equation}
    T(\nu,s) = \frac2{1-s}
    D(\nu) \left(\frac{s+1}{s-1} \right)^{a^\dagger a} D^\dagger(\nu).
\end{equation}
This expression presents obvious divergences, and must be understood as a formal expression acquiring meaning upon taking appropriate limits; see chapter 4 of~\cite{cahill1969ordered}.
We refer to~\cref{sec:Os_as_thermal_ops} for a more detailed description of this operator.
Other useful properties of $T_s(\nu)$ include
\begin{equation}
\begin{gathered}
    T_s(\nu) = 
    D_s(\nu) T_s(0) D_s(-\nu), \\
    \mel{\beta}{T_s(\nu)}{\beta}
    = \frac{2}{1-s} e^{-\frac{2}{1-s} |\nu-\beta|^2}.
\end{gathered}
\end{equation}
For $\Re(s)<0$ the operator $T(\nu,s)$ is bounded and trace-class.

\subsubsection{Operator expansions with quasiprobability distributions}

% \paragraph{Expanding states with quasiprobabilities}
The standard expansion of states in terms of the $P$ function reads $\rho=\int \rmd^2\alpha P_\rho(\alpha)\PP_\alpha$.
This can be generalised to expand states in terms of the other standard quasiprobability distributions, passing through~\cref{eq:expansion_rho_with_ordered_displacements} and~\cref{eq:quasiprobs_vs_characteristic_functions} to get
\begin{equation}
\begin{gathered}
    \rho = \int\rmd^2\eta \, W_\rho(\eta,s)
    \left[
    \intdalphapi D_{-s}(-\alpha) e^{\alpha\bar\eta-\bar\alpha\eta}
    \right]
    \\ =
    \int \rmd^2\eta \, W_\rho(\eta,s) T_{-s}(\eta)^\dagger.
\end{gathered}
\end{equation}
In summary, we showed the general completeness relations
\begin{equation}\label{eq:expansion_rho_with_Os}
\begin{gathered}
    \rho = 
    \int \frac{\rmd^2\eta}{\pi} \trace[\rho T_s(\eta)] T_{-s}(\eta)^\dagger, \\
    \int \frac{\rmd^2\eta}{\pi}
    \ketbra{T_s(\eta)}{T_{-s}(\eta)} = \on{Id}.
\end{gathered}
\end{equation}
These relations mean that the set of operators $\{T_s(\eta)\}_{\eta\in\mathbb{C}}$ is a frame, and that $\{T_s(\eta)\}_{\eta\in\mathbb{C}}$ and $\{T_{-s}(\eta)\}_{\eta\in\mathbb{C}}$ are dual frames.
Note that $T_s(\nu)$ are Hermitian: $T_s(\nu)^\dagger=T_s(\nu)$.
For $s=0$ we get a self-dual frame.
The unitarity of the Fourier transform then immediately maps~\cref{eq:expansion_rho_with_ordered_displacements} to~\cref{eq:expansion_rho_with_Os}, and vice versa.

Displacement operators are ``orthogonal'', meaning that
\begin{equation}\label{eq:orthogonality_Ds_operators}
\begin{gathered}
    \langle D_s(\alpha),D_t(\beta)\rangle
    = \pi \delta^2(\alpha-\beta) e^{\frac{s+t}2|\alpha|^2}.
\end{gathered}
\end{equation}
On the other hand, for $T_s$ operators we get
\begin{equation}\label{eq:inner_product_Ts_ops}
\begin{gathered}
    \langle T_s(\alpha),T_t(\beta)\rangle
    = \int \frac{\rmd^2\gamma}{\pi}
    e^{\frac{s+t}{2}|\gamma|^2}
    e^{\gamma(\alpha-\beta)^*-\bar\gamma(\alpha-\beta)}
    .
\end{gathered}
\end{equation}
The integral converges iff $s+t< 0$, in which case
\begin{equation}\label{eq:overlap_Ts_Tt}
    \langle T_s(\alpha),T_t(\beta)\rangle =
    \frac{-2}{s+t}\exp\left(\frac{2|\alpha-\beta|^2}{s+t}\right).
\end{equation}
For $s+t\ge 0$ the integral does not converge and the expression can only be understood in a distributional sense. In particular for $s+t=0$ we have
\begin{equation}\label{eq:orthogonality_for_Ts}
    \langle T_s(\alpha),T_{-s}(\beta)\rangle
    = \pi \delta^2(\alpha-\beta).
\end{equation}
See also Eq. (6.40) in~\cite{cahill1969ordered}.

\subsubsection{Example: displaced thermal states}
\label{sec:example_thermal_states}

A thermal state with inverse temperature $\beta$, as defined in~\cref{eq:thermalState_sParameter}, is denoted by $\tilde\tau_x$ with $x=e^{-\beta}$ and $\PP_k\equiv |k\rangle\langle k|$.
The average occupation number $\bar n$ is related to $x$ by
$x = \frac{\bar n}{\bar n+1}$,
$\bar n=\frac{x}{1-x}$.
Displaced thermal states are defined as $\tilde\tau_{x,\alpha} \equiv D(\alpha)\tilde\tau_x D(-\alpha)$.
Explicitly, these read
\begin{equation}\label{eq:def_displaced_thermal_state}
\begin{gathered}
    \tilde\tau_{x,\alpha} = (1-x)x^{(a^\dagger-\bar\alpha)(a-\alpha)}.
\end{gathered}
\end{equation}
Denoting the $s$-ordered characteristic function of $\tilde\tau_{x,\beta}$ with $\chi_{(x,\beta)}(\cdot,s)$, we derive from the basic properties in~\cref{sec:charfuns} that
\begin{equation}\label{eq:chis_displacedthermalstates}
    \chi_{\tilde\tau_{x,\alpha}}(\alpha,s) = 
    e^{\frac{s}{2}|\alpha|^2}
    e^{\alpha\bar\beta-\bar\alpha\beta}
    % \frac{1}{1+\bar n}
    e^{-(\bar n+\frac12) |\alpha|^2}.
\end{equation}
The corresponding $s$-ordered quasiprobability distribution is then, using~\cref{eq:inner_product_Ts_ops},
\begin{equation}\label{eq:Ws_displacedthermalstates}
\begin{aligned}
    W_{\tilde\tau_{x,\alpha}}(\nu,s)
    % &=
    % \frac1\pi\trace[T_s(\nu)\tilde\tau_{x,\beta}]
    % \\
    % &= \frac1\pi \trace[T_s(\nu) T_t(\beta)]
    % \\
    =
    \frac{-2}{\pi(s+t)}\exp\left(\frac{2|\nu-\alpha|^2}{s+t}\right),
\end{aligned}
\end{equation}
with $t\equiv \frac{x+1}{x-1}$ and $t+s<0$.
% .
% Using the definitions in~\cref{eq:def_displaced_thermal_state}, we thus want to compute
% $\trace[T_s(\nu)\tilde\tau_{x,\alpha}]$. But via the identification of $T_s$ operators and thermal states given in~\cref{eq:Os_as_displaced_thermal_explicit}, together with~\cref{eq:inner_product_Ts_ops}, we get
% \begin{equation}
% \begin{gathered}
%     \trace[T_s(\nu)\tilde\tau_{x,\alpha}] =
%     \trace[T_s(\nu)T_t(\alpha)]
%     \\ = \frac{-2}{s+t}\exp\left(\frac{2|\nu-\alpha|^2}{s+t}\right),
% \end{gathered}
% \end{equation}
In the limit $s+t\to0^-$ the expression becomes $\pi\delta^2(\nu-\beta)$.

If $\tilde\tau_{x,\beta}$ is a physical thermal state, meaning $0\le x<1$, i.e. $t\le -1$, then we get well-defined quasiprobability distributions for $1\le s \le \frac{1+x}{1-x}$.
In particular, we get the $P$ function setting $s=1$:
\begin{equation}
    P_{\tilde\tau_{x,\alpha}}(\nu)
    = \frac{1}{\pi} C e^{-C|\nu-\alpha|^2},
    \quad C\equiv \frac{1-x}{x}=\frac{1}{\bar n}.
\end{equation}
As expected, we recover $\delta(\alpha-\nu)$ in the limit $x\to0$, where thermal states become coherent ones.

\subsubsection{Explicit expressions for \texorpdfstring{$T_s$}{Ts} operators}
\label{sec:Os_as_thermal_ops}

The goal of this section is to show that $T_s(\nu)$ are displaced thermal states, albeit with possibly unphysical temperature.

% \paragraph{Normally-ordered formula for $\lambda^{a^\dagger a}$}
A classic result~\cite{cahill1969ordered} is the identity 
\begin{equation}\label{eq:normal_ordering_for_lambdaadaggera}
    F(\lambda)\equiv \lambda^{a^\dagger a} \equiv e^{a^\dagger a\ln \lambda}
    =
    N(e^{(\lambda-1)a^\dagger a}),
\end{equation}
% As shown in~\cite{cahill1969ordered} (see Equations (4.31) and (4.35)), this also equals
% \begin{equation}\label{eq:normal_ordering_for_lambdaadaggera}
%     \lambda^{a^\dagger a} = N(e^{(\lambda-1)a^\dagger a}),
% \end{equation}
with $N$ denoting normal ordering.
This can be derived considering the expectation value over coherent states:
\begin{equation}
\begin{gathered}
    \mel{\alpha}{\lambda^{a^\dagger a}}{\alpha}
    = e^{-|\alpha|^2}
    \sum_{n=0}^\infty 
    \frac{|\alpha|^{2n}}{n!} \lambda^n
    = e^{(\lambda-1)|\alpha|^2}.
\end{gathered}
\end{equation}
% which is also clearly equal to the expectation value computed on the RHS of ~\cref{eq:normal_ordering_for_lambdaadaggera}.
The normally ordered expression for $\lambda^{a^\dagger a}$ makes it easy to compute $\trace[\lambda^{a^\dagger a}D(\alpha)]$, which gives
\begin{equation}\label{eq:trace_lambdaaa_Dalpha}
\begin{gathered}
    \trace[\lambda^{a^\dagger a} D(\alpha)]
    = e^{|\alpha|^2/2}
    \int\frac{\rmd^2\beta}{\pi}
    e^{\alpha\bar\beta-\bar\alpha\beta}
    e^{(\lambda-1)|\beta|^2}
    \\
    = \frac{1}{1-\lambda}
    \exp\left[\frac{\lambda+1}{2(\lambda-1)}|\alpha|^2\right].
\end{gathered}
\end{equation}
Convergence in this integral is ensured for $\lambda<1$.
As a minor generalisation, we can also see that
\begin{equation}\label{eq:normal_ordering_displaced_lambdaadaggera}
    \trace[\lambda^{(a^\dagger-\bar\nu)(a-\nu)} D(\alpha)]
    = e^{\alpha\bar\nu-\bar\alpha\nu}
    \trace[\lambda^{a^\dagger a}D(\alpha)],
\end{equation}
which follows from~\cref{eq:relations_displacement_ops}.

% \paragraph{Characteristic function of $T_s$}
We can similarly compute the symmetric characteristic function of $T_s(\nu)$ operators:
\begin{equation}\label{eq:charfun_Os}
\begin{gathered}
    \trace[T_s(\nu)D(\alpha)]
    =
    \int\frac{\rmd^2\beta}{\pi}
    e^{\nu\bar\beta-\bar\nu\beta}
    \trace[D_s(\beta)D(\alpha)]
    \\
    = e^{\alpha\bar\nu-\bar\alpha\nu} e^{\frac s2|\alpha|^2}.
\end{gathered}
\end{equation}
Comparing~\cref{eq:normal_ordering_displaced_lambdaadaggera} and~\cref{eq:charfun_Os}, and setting $\lambda=(s+1)/(s-1)$, we get
\begin{equation}\label{eq:Os_as_thermal_state_fixed}
    T_s(\nu) =
    \frac{2}{(1-s)}
    \left(\frac{s+1}{s-1}\right)^{(a^\dagger-\bar\nu) (a-\nu)}.
\end{equation}
Note that the integral in~\cref{eq:trace_lambdaaa_Dalpha} converges only for $\lambda<1$, that is, $s<1$.
Comparing~\cref{eq:def_displaced_thermal_state,eq:Os_as_thermal_state_fixed}, we conclude that
\begin{equation}\label{eq:Os_as_displaced_thermal_explicit}
    T_s(\nu) =
    % \rho_{\rm th}\left(\nu,\frac{s+1}{s-1}\right)
    \tilde\tau_{\frac{s+1}{s-1},\nu}.
    % = \rho_{\rm th}\left(\nu,\frac{\bar n}{\bar n+1}\right).
\end{equation}
Thus $T_s(\nu)$ are displaced thermal states with
$% \begin{equation}
    \beta = \log\left(\frac{s-1}{s+1}\right)
$
and average occupation number $\bar n=-\frac{s+1}{2}$.
% \end{equation}
% As noted in~\cite{cahill1969ordered} (see Eq. 5.28), the exponential in~\cref{eq:Os_as_thermal_state_fixed} converges (or so they say) when
% \begin{equation}
%     s > 1 + \frac{\Re(\nu)-1}{|\nu-1|^2}.
% \end{equation}

% \begin{figure}[tb]
%     \centering
%     \includegraphics[width=\linewidth]{figures/svsx.pdf}
%     \caption{Plot of $x=\frac{s+1}{s-1}$ and $\beta=\log(\frac{s-1}{s+1})$ as a function of $s$.}
%     \label{fig:svsx}
% \end{figure}

Thus we can interpret all $s$-ordered quasiprobability distributions, for $s\le-1$, as expectation values of $\rho$ with respect to displaced thermal states at temperature $\beta=\log(\frac{s-1}{s+1})\ge0$.
For $s=-1$, this amounts to $\beta=\infty$, in which case the displace thermal states collapse to the standard coherent states.

For $-1<s<1$, $T_s$ is well-defined as an operator, but is a displaced ``thermal state'' with imaginary temperature. The shadow tomography formalism can be used to derive estimators just fine, but the frames employed do not correspond to a direct physical measurement.
% (or at least, not obviously so).

\printbibliography

% \end{multicols}

\end{document}